\documentclass[
  aps,            % American Physical Society
  prx,            % Phys. Rev. X
  reprint,        % two-column, journal-like output (use "preprint" for a draft)
  superscriptaddress,
  amsmath,amssymb,
  footinbib,      % Place text footnotes in the reference list
  longbibliography
]{revtex4-2}

\usepackage{graphicx}   % figures
\usepackage{tikz}       % diagrams
\usepackage{tikz-cd}
\usepackage{placeins}   % keep SI figures within their owning subsections
\usepackage{bbm} % blackboard

\usetikzlibrary{arrows.meta,positioning,calc,decorations.pathreplacing,fit,shapes.geometric,backgrounds,matrix,tikzmark}

\definecolor{basecol}{HTML}{1F4E79}   % base code C_*
\definecolor{anccol}{HTML}{B03A2E}    % ancilla system E_*
\definecolor{mapcol}{HTML}{1E7A4D}    % chain map gamma
\definecolor{logcol}{HTML}{6C3483}    % logical space L_*
\definecolor{infocol}{HTML}{D5B443}   % information orbits (Figs. 1, 2 and the standard form in Sec. 3)
\definecolor{pivotcol}{HTML}{5D8EB8}  % Z-stabilizer pivot orbits
\definecolor{zerocol}{HTML}{D77870}   % X-stabilizer pivot orbits

\tikzset{
  cxdiag/.style={
    >={Stealth[length=2.2mm,width=1.6mm]},
    node distance=15mm and 26mm,
    obj/.style     ={rounded corners=2pt, minimum width=12mm,
                     minimum height=9mm, inner sep=2pt},
    base/.style    ={obj, draw=basecol!65, fill=basecol!7},
    anc/.style     ={obj, draw=anccol!65,  fill=anccol!7},
    log/.style     ={obj, draw=logcol!65,  fill=logcol!7, dashed},
    meas/.style    ={obj, draw=anccol!65,  fill=anccol!7},
    ar/.style      ={->, semithick, shorten >=1.5pt, shorten <=1.5pt},
    dar/.style     ={ar, dashed},
    lab/.style     ={font=\small, inner sep=1.5pt},
    diaglab/.style ={lab, pos=0.5, anchor=south east, xshift=2pt, yshift=-1pt},
    rowlab/.style  ={font=\small\itshape, align=right, inner sep=0pt},
  },
}

\definecolor{shiftcol}{HTML}{B7791F}   % the translated check g -> gx
\definecolor{axiscol}{HTML}{2B2B2B}    % off-black, for the group-action axis
\colorlet{xcol}{anccol}                % X-checks
\colorlet{qcol}{basecol}               % qubits
\colorlet{zcol}{mapcol}                % Z-checks

\newcommand{\symtsz}{0.70}   % tile size (cm)
\newcommand{\symdx}{0.22}    % depth offset, x
\newcommand{\symdy}{0.26}    % depth offset, y

\tikzset{
  symtile/.style={rounded corners=1.6pt, draw=#1!50, fill=#1!7,
                  minimum width=\symtsz cm, minimum height=\symtsz cm,
                  inner sep=0pt},
  symfront/.style={draw=#1, fill=#1!25, line width=0.8pt},
  symlnk/.style={-{Stealth[length=1.9mm,width=1.4mm]}, line width=0.7pt,
                 shorten >=0pt, shorten <=0pt},
  symtlnk/.style={symlnk, line width=0.55pt, opacity=0.6},
  symlab/.style={font=\small, inner sep=1pt, fill=white},
  symblab/.style={font=\small},
  symmod/.style={rounded corners=2pt, draw=#1!65, fill=#1!8, text=#1,
                 minimum width=11mm, minimum height=8mm, inner sep=2pt},
}

\newcommand{\symstack}[4]{%
  \foreach \i [evaluate=\i as \j using {int(7-\i)}] in {0,...,7}{%
    \node[symtile=#4] (#1-\j) at ({#2 + \j*\symdx}, {#3 + \j*\symdy}) {};
  }%
}
\usepackage{bm}         % bold math
\usepackage{braket}     % Dirac notation
\usepackage[colorlinks=true,allcolors=blue]{hyperref}

\usepackage{amsthm}
\newtheorem{theorem}{Theorem}
\newtheorem{lemma}[theorem]{Lemma}
\newtheorem{proposition}[theorem]{Proposition}
\newtheorem{corollary}[theorem]{Corollary}
\theoremstyle{definition}
\newtheorem{definition}{Definition}

\theoremstyle{remark}

\usepackage[capitalize,noabbrev]{cleveref}
\crefname{definition}{Definition}{Definitions}
\Crefname{definition}{Definition}{Definitions}

\usepackage{soul}
\usepackage[normalem]{ulem}

\definecolor{ehcol}{HTML}{2B2B2B}   % internal differential, horizontal
\definecolor{egcol}{HTML}{1E7A4D}   % gamma, the short diagonal within a layer
\definecolor{etcol}{HTML}{B7791F}   % coupling to either adjacent round

\tikzset{
  stobj/.style={rounded corners=2pt, minimum width=8mm, minimum height=6mm,
                inner sep=1pt, font=\footnotesize},
  stbase/.style={stobj, draw=basecol!65, fill=basecol!8},
  stanc/.style ={stobj, draw=anccol!65,  fill=anccol!8},
  star/.style  ={-{Stealth[length=1.7mm,width=1.25mm]}, line width=0.6pt,
                 shorten >=1.5pt, shorten <=1.5pt},
  stvirt/.style={star, densely dotted, line width=0.7pt},
  sttl/.style  ={star, line width=0.5pt, opacity=0.75},
  sttlv/.style ={stvirt, line width=0.6pt, opacity=0.75},
  stlab/.style ={font=\scriptsize, inner sep=1pt},
  stbrace/.style={decorate, decoration={brace, amplitude=4pt}, black!45,
                  line width=0.6pt},
  stport/.style={dashed, dash pattern=on 2pt off 1.6pt, line width=0.7pt,
                 black!50, shorten <=1.5pt},
  stbox/.style ={rounded corners=3pt, draw=black!35, dashed,
                 dash pattern=on 2.5pt off 2pt, line width=0.6pt},
  stincl/.style ={-{Stealth[length=2.1mm,width=1.7mm]}, basecol, densely dotted,
                  line width=1.0pt, shorten >=2pt, shorten <=2pt, opacity=0.8},
  stinclm/.style={-{Stealth[length=2.8mm,width=2.2mm]}, basecol,
                  line width=1.8pt, shorten >=2pt, shorten <=2pt},
}

\newcommand{\Ftwo}{\mathbb{F}_2}
\newcommand{\group}{G}
\newcommand{\code}[3]{[[#1,#2,#3]]}
\newcommand{\bdry}{\partial}

\DeclareMathOperator{\im}{im}
\begin{document}

\ifdefined\appendixonly
  % The appendix-only driver skips the title and main text, and begins directly
  % with the single-column appendix material below.
  \onecolumngrid
\else

\title{Efficient Logic with Ultra-High-Rate Quantum Codes}
\author{Nishad Maskara}
\email{nishadma@mit.edu}
\affiliation{Center for Theoretical Physics---a Leinweber Institute, Massachusetts Institute of Technology, Cambridge, MA 02139, USA}

\author{Rohan Mehta}
\affiliation{Department of Physics, Harvard University, Cambridge, MA 02138, USA}

\author{Zhiyang He}
\affiliation{Department of Mathematics, Massachusetts Institute of Technology, Cambridge, MA 02139, USA}

\author{Varun Menon}
\affiliation{Department of Physics, Harvard University, Cambridge, MA 02138, USA}

\author{J. Pablo Bonilla Ataides}
\affiliation{Department of Physics, Harvard University, Cambridge, MA 02138, USA}

\author{Mikhail D. Lukin}
\affiliation{Department of Physics, Harvard University, Cambridge, MA 02138, USA}

\author{Hengyun Zhou}
\affiliation{Department of Electrical Engineering and Computer Science, Massachusetts Institute of Technology, Cambridge, MA 02139, USA}

\begin{abstract}
Quantum low-density parity-check (qLDPC) codes offer a promising route to reducing the qubit overhead of fault tolerance, with recent ultra-high-rate constructions achieving encoding rates above one half.
In this work, we develop fast and parallel logical computation methods for ultra-high-rate codes. 
We first utilize the group symmetries of the codes to partition the logical qubits into orbits, and then apply parallel surgery methods to implement group-invariant collections of logical Pauli-product measurements.
These measurements are highly programmable; we may arbitrarily couple logical orbits as long as each measurement is repeated across the whole orbit(s).
To reduce the space and time overheads, we develop a technique called \textit{chain-map convolution}, which uses the group structure to boost both the spatial and temporal distances of our protocol in a highly efficient manner.
As case studies, we construct distance-preserving parallel measurement gadgets for the [[256, 68, 12]], [[576, 148, 18]] pair-partition codes, and a [[1152, 580, 12]] rate-$1/2$ code, with only $\sim 2.5\times$ space overhead and $3-4$ rounds of syndrome extraction each.
The same methods also enable parallel injection of cultivated magic states into high-rate codes.
Finally, we extend programmability through subgroup restrictions, revealing a tradeoff between logical control and the spacetime resources required for fault tolerance and highlighting the crucial role structured parallelism plays in reducing overhead.
These results provide new methods for designing qLDPC logical operations with low spacetime overhead, opening a route toward more efficient fault-tolerant quantum architectures.
\end{abstract}

\maketitle

% ==========================================================================
% Main text
% ==========================================================================

\section{Introduction}
\label{sec:intro}

\begin{figure*}[t]
\centering
\includegraphics[width=\textwidth]{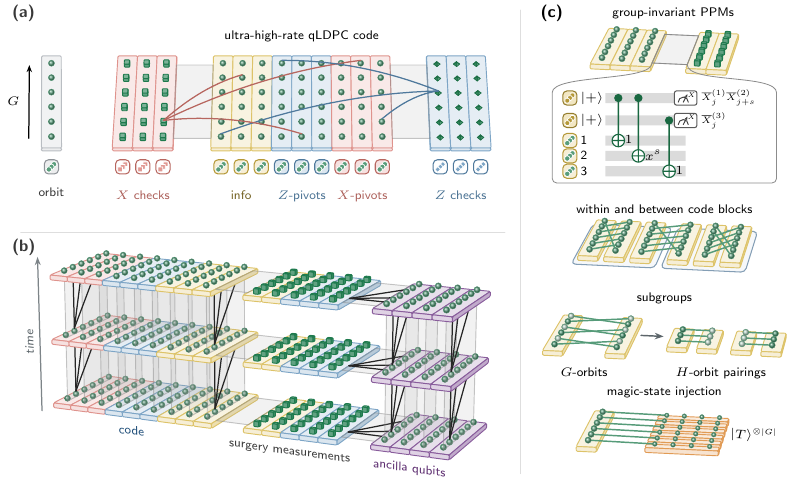}
\caption{\textbf{Efficient parallel logical operations with ultra-high-rate qLDPC codes.}
(a) Illustration of a qLDPC code organized by its group symmetry $G$, with each bar representing one orbit of qubits or checks. The data qubits are divided into three roles based on the logical standard form: the information coordinates (yellow), and $Z$- and $X$-stabilizer pivot coordinates (blue and red). Logical operators are specified by a single information coordinates and completed on the complementary pivot coordinates. Lines indicate the qubits involved in an example $X$ and $Z$ checks; most connections are omitted for clarity.
(b) Illustration of the spacetime surgery construction. Ancillary qubits and additional check measurements are introduced to measure the desired logical operators, with successive measurement rounds shown vertically. Chain-map convolution couples shifted positions along the symmetry orbits in space and time, increasing the distance with low overhead. Representative convolution connections are shown in black and appear along the transparent sheets.
(c) Examples of available logical operations. The top circuit measures pairs of logical $X$ operators from orbits 1 and 2 with a relative shift $s$, together with individual logical $X$ operators from orbit 3. The same measurement pattern is applied across each orbit in parallel.
Depicting PPMs as green edges between logical qubits, we can perform simultaneous pairwise measurements within and between code blocks.
Finer control is obtained by choosing a subgroup $H$ of $G$: the illustrated six-site orbits each split into two three-site orbits, and the pairwise measurements separate into two families that can be programmed independently. 
Finally, magic states can be injected by applying PPMs between a logical orbit and a stack of auxiliary magic-state blocks. 
}
\label{fig:overview}
\end{figure*}

Advances in quantum low-density parity-check (qLDPC) codes have substantially reduced the physical qubit cost of fault tolerance~\cite{panteleev2022asymptotically,leverrier2022quantum,bravyi2024high,breuckmann2022ldpc,xu2024constant,yoder2025tour,webster2026pinnacle,cain2026shor,he2026_qldpc_talk}, with recent ultra-high-rate constructions requiring only two physical qubits per logical qubit, while maintaining practically-relevant logical error rates and block sizes~\cite{zhao2026towards,kasai2026breaking}.
These constructions are commonly built out of group lifts of a smaller protograph, forming tunable families of code instances.
To reach the full potential of such ultra-high-rate constructions, we need fault-tolerant and efficient methods to operate on the densely encoded logical information.
Ideally, such fault-tolerant logical operations should combine
(i) \textit{low space overhead},
(ii) \textit{low time overhead},
(iii) \textit{high logical parallelism}, and
(iv) \textit{programmability suited to target algorithmic structures}~\footnote{We use ``programmability'' to describe the property that a logical computation scheme can implement flexible programs of logical gates and operations. A closely related term is ``addressability'', which captures the property that the scheme can address arbitrary logical qubit(s) or operator(s) with gates~\cite{he2025asymptotically}. As our schemes implements group-invariant operations that are flexible but not fully addressable, we use the term programmable throughout.}.

Nevertheless, these goals are often in tension with one another.
Consider for instance quantum code surgery, a general and flexible scheme for implementing fault-tolerant computation by measuring logical operators of quantum codes~\cite{yuan2026parsimonious,webster2025explicit,blue2026full,swaroop2024universal}.
Early studies showed how to measure arbitrary Pauli operators with low space overhead, but required $O(d)$ time overhead by rounds of repeated measurement and lacked parallelism~\cite{horsman2012surface,cohen2022low,cross2024improved,ide2025homological,williamson2026gauging,he2025extractors}.
Recent works have proposed parallel surgery methods~\cite{zhang2025time,cowtan2025parallel,zheng2025high,gu2026qgpu,zheng2026logical,bhardwaj2026high,cowtan2026near,hirasaki2026lifting}, which showcase a tradeoff between the complexity of measured operators and the space overhead. Similar to sequential surgery schemes, parallel surgery schemes generally use $O(d)$ rounds of repeated syndrome measurement to guarantee fault tolerance.
Fast surgery methods~\cite{cowtan2025fast,baspin2026,chang2026constant,berent2026lifted} are able to reduce this time overhead, but impose further constraints on the implementable set of measurements. 
Indeed, these challenges are particularly stark in the ultra-high-rate regime with $kd>n$, where many logical operators necessarily overlap. 
Applying existing methods directly to these codes will result in a significant space and time footprint as we perform a large number of operations. 

To address these challenges, we focus on an important class of operations motivated by both algorithms and hardware: those that respect a common symmetry.
Such operations are algorithmically useful, yet provide enough structure that facilitates low spacetime overhead and high parallelism.
Intuitively, these correspond to operations that are repeated multiple times in a circuit.
For example, in quantum simulation of lattice models, gates between adjacent sites are repeated, enabling a compilation into symmetric operations~\cite{xu2025batched,ismail2026transversal}.
In arithmetic, fundamental operations like ripple-carry adders~\cite{cuccaro2004new,xu2025fast} can similarly be written with repeated gate patterns across bit positions, allowing a symmetric compilation, with the only non-invariant part being the cascaded reactive measurements.
Symmetry is also central to proposals for hardware-efficient quantum error correction with neutral atoms and trapped ions~\cite{zhao2026towards,bhardwaj2026high,yang2026designer,tripier2026fault,xu2024constant,viszlai2023matching}, making it a promising ingredient to incorporate into practical realizations.

In this work, we introduce a general technique for performing efficient logic with ultra-high-rate codes, which achieves the aforementioned goals by generating symmetric logical operations and physical circuits.
Our techniques leverage the fact that state-of-the-art qLDPC code constructions
including lifted product codes~\cite{panteleev2022almostlinear,panteleev2022asymptotically,bhardwaj2026high,hong2026quantum,hirasaki2026lifting}, affine-permutation-matrix codes~\cite{kasai2026breaking,zhao2026towards,yang2026designer}, and pair-partition codes~\cite{okada2026pairpartition,lee2026design},
are naturally built from group algebras. 
By exploiting symmetries in the codes and logical operations, our constructions effectively amortize the cost of fault-tolerant quantum logic across group-translated copies of identical gates, similar to batched constructions~\cite{xu2025batched,golowich2025constant,chang2026constant,tan2025single} but acting within a code block.
As such, our constructions combine the space advantage of ultra-high-rate codes with low time overhead, high logical parallelism, and programmability at the level of group-invariant logical orbits. 

In more detail, for qLDPC codes built from a group algebra, we introduce a group-symmetry-adapted standard form~\cite{gottesman1997stabilizer} that organizes logical qubits into orbits transforming covariantly with the symmetry (Fig.~\ref{fig:overview}(a)).
Using this, we design parallel surgery gadgets for sets of Pauli-product measurements (PPM) that respect the group symmetry, i.e., each chosen Pauli-product measurement is performed together with all of its translations (Fig.~\ref{fig:overview}(c)). 
Our starting construction, while having low space footprint, has low spatial distance as well. 
From here, the standard approach is to repeat the construction in both space (with a technique called \textit{thickening}~\cite{he2025extractors,cohen2022low}) and time (by repeated syndrome measurements) to boost the respective distances, thereby resulting in large overheads.
We instead develop a new technique called \textit{chain-map convolution}, which exploits the group symmetry and replaces the repetition in space and time by convolutions (Fig.~\ref{fig:overview}(b)). 
This leads to a significant reduction in overheads, giving constructions that are parallel, programmable, and efficient.
The same methods also enable parallel injection of magic states or other resource states with lower overhead than prior approaches.

Our construction further offers the freedom to tune the group symmetry used. 
Instead of the full base group symmetry, we may use a subgroup symmetry and obtain smaller orbits, which enables more programmable logical control.
However, using a subgroup symmetry limits the effectiveness of chain-map convolution, which results in higher overheads.
This provides a systematic way to move along the frontier between structure, programmability, and spacetime overhead.

Using these methods, we construct distance-preserving parallel measurement gadgets for [[256, 68, 12]], [[576, 148, 18]] pair-partition codes~\cite{lee2026design}, and a [[1152, 580, 12]] rate-$1/2$ code~\cite{kasai2026breaking,zhao2026towards}, finding that despite measuring $k/2$ pairs of logical operators across the system, around $2.5\times$ space overhead and $3-4$ rounds of syndrome extraction suffice. This is a substantial reduction compared to prior protocols that are designed for general parallel product measurements~\cite{cowtan2025parallel,zheng2026logical}. 
We find that the majority of the saving comes from using the larger group symmetry, highlighting how \emph{structured} parallelism---rather than just parallelism---can enable more efficient gadgets.

The remainder of this paper is organized as follows. Section~\ref{sec:background} reviews code surgery and its description in terms of symmetric chain complexes. Section~\ref{sec:convolution} develops chain-map convolution for parallel logical measurements and presents its spatial and temporal protection and circuit-level performance. Section~\ref{sec:reducing_structure} examines the trade-off between programmability and overhead under subgroup control. Section~\ref{sec:magic-state-teleportation} applies the construction to parallel magic-state teleportation. We conclude in Section~\ref{sec:outlook} with a discussion of future directions.

\section{Code Surgery and Symmetries}
\label{sec:background}

\begin{figure*}[t]
\centering
\includegraphics[width=1.0\textwidth]{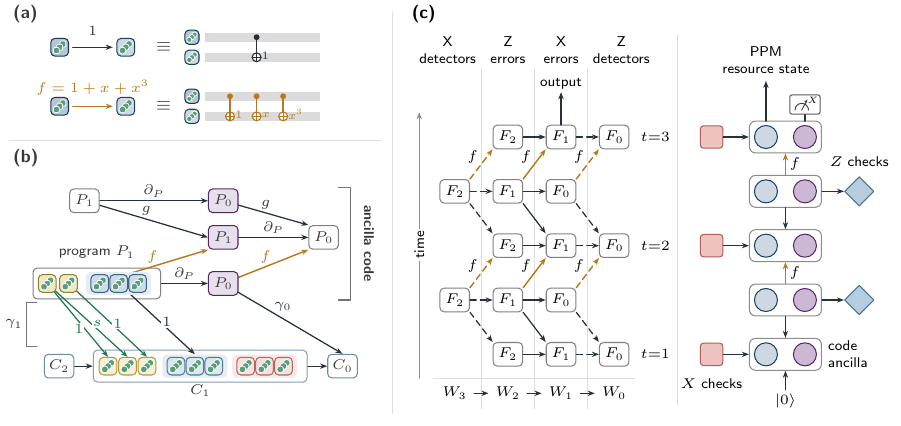}
\caption{\textbf{Programming and protecting parallel logical measurements.}
(a) Chain-map convolution between a check orbit and a qubit orbit. An identity connection corresponds to one layer of transversal CNOTs, and a non-trivial scalar like $f=1+x+x^3$ corresponds to three transversal CNOTs with relative cyclic shifts of zero, one, and three positions respectively.
(b) Ancilla system for parallel logical measurements. The complex $P_1\rightarrow P_0$ is attached to the data code through $\gamma_1$ and $\gamma_0$. Yellow orbits specify the measurements in Fig.~\ref{fig:overview}(c), while blue pivot coordinates complete the logical operators. Additional copies are introduced and coupled through $f$ and $g$ to increase the distance, with the paired maps commuting with $\partial_P$. Purple boxes denote ancillary data qubits.
(c) Temporal convolution and resource-state preparation. Left: fault complex for three rounds of the merged code, with time running upward. $F_2$, $F_1$, and $F_0$ denote $X$ checks, data qubits, and $Z$ checks of the static merged code, respectively. Solid connections specify physical couplings and dashed connections define detector parities. The second term $W_2$ encodes all of the ancilla qubits for $X$-check measurement, as well as MBQC data qubits that are sensitive to $Z$-errors, and vice versa for $W_1$. 
Shifted couplings $f$ (orange) between successive layers add protection against faults in time. Right: corresponding measurement-based quantum computation (MBQC) resource, with code and ancillary qubits in blue and purple. Edges represent CNOT couplings. Intermediate qubits are measured from bottom to top, followed by the remaining ancillary qubits, leaving the logic resource state on the output code qubits.
}
\label{fig:construction-pipeline}
\end{figure*}

Quantum code surgery implements logical Pauli-product measurements (PPMs) by extending the check structure of the original code. We can describe the data code by a chain complex
\begin{equation}\label{eq:Chain complex}
C_{\bullet} = C_2\xrightarrow{\,H_X^{\mathsf T}\,}C_1
\xrightarrow{\,H_Z\,}C_0.
\end{equation}
Physically, a set of $X$-checks is represented by a binary-valued vector $c \in C_2$, where $c_i = 1$ means the $i$-th $X$-check is part of the set. Analogously, sets of physical qubits and $Z$-checks are vectors in $C_1$ and $C_0$ respectively. 
Further, we will often refer to chain complexes using operational and physical terminology. Arrows between vector spaces correspond physically to CNOT circuits between collections of qubits (see Fig.~\ref{fig:construction-pipeline}(a)). For example, $C_2\xrightarrow{\,H_X^{\mathsf T}\,}C_1$ corresponds to an $X$-syndrome extraction circuit, and is implemented by applying a CNOT circuit with controls on $C_2$, targets on $C_1$, and connectivity set by $H_X$. 

To measure the target $X$-basis logical PPMs, the logical encoding is modified from the bare code $C_{\bullet}$ to a merged code $F_{\bullet}$, constructed by attaching an auxiliary complex $P_1\xrightarrow{\partial_P}P_0$ through maps $\gamma_1:P_1\to C_1$ and $\gamma_0:P_0\to C_0$, as illustrated in Fig.~\ref{fig:construction-pipeline}(b). In the resulting merged code, $P_0$ corresponds to ancilla data qubits and $P_1$ corresponds to ancilla $X$-checks spanning both the data and ancilla systems. The attachment also extends the original $Z$-checks onto the ancilla qubits. Commutation of the merged checks requires
\[
H_Z\gamma_1=\gamma_0\partial_P,
\]
which constrains how these couplings can be modified.
Products of the added checks $p \in P_1$ with $\partial_P p = 0$ (eq. $p\in\ker\partial_P$) cancel on the added data qubits, leaving an operator $\gamma_1p$ supported only on the original data qubits.  
Achieving high-logical parallelism corresponds to ensuring there are many products $p \in \ker{\partial_P}$, such that all of the target logical PPMs are included in the subspace generated by $\gamma_1 (\ket \partial_P)$, allowing their eigenvalues to be inferred from the check outcomes.

Constructing a valid merged code, however, is only the first step toward a fault-tolerant procedure. Errors in the check measurements can corrupt the inferred logical outcomes. Intuitively, the merged code distance only helps protect against data qubit errors, but not measurement errors.
The conventional approach addresses this by repeating syndrome extraction, adding protection to the check measurements by growing the distance in time, as illustrated in Fig.~\ref{fig:construction-pipeline}(c)~\cite{dennis2002topological}.
% Mathematically, the spacetime procedure is represented by a fault-complex $W_{\bullet}$ ~\cite{hillmann2025faultcomplex}, as illustrated in Fig.~\ref{fig:construction-pipeline}(c), that encodes the circuit structure of the full fault-tolerant procedure. 
% The distance of the full protocol is determined by the minimum number of faults that can change a logical measurement outcome or corrupt surviving logical information without detection, including faults that extend across both space and time.
% Intuitively, the distance of the X-checks is encoded in the ``code'' $W_3 \rightarrow W_2 \rightarrow W_1$, and by performing multiple rounds of measurement, the detectors $W_3 \rightarrow W_2$ form a repetition code between the merged code checks.
However, repetition requires $d$ rounds to increase the spacetime check-distance and match the spatial distance, generating a large spacetime overhead.

Efficient logical operations therefore pose two closely related design challenges: (i) implementing the desired logical measurements, and (ii) making them fault-tolerant with low space and time overhead. These challenges are particularly prominent for ultra-high-rate codes, which encode many logical qubits within a single code block. In the regime \(kd>n\), a full set of \(k\) nontrivial logical representatives must overlap, since each has weight at least \(d\). 
This makes it difficult to perform parallel logic using techniques that leverage disjoint low-weight logicals to build small surgery systems.
Instead, our method builds on constructions that share ancilla resources across overlapping logical measurements~\cite{zheng2025high,xu2025batched,chang2026constant,hirasaki2026lifting}.

Many recent high-rate qLDPC codes are built from objects defined over group algebras~\cite{panteleev2022almostlinear,xu2024constant,bhardwaj2026high,hong2026quantum,bravyi2024high,lin2023quantum, kasai2026breaking, zhao2026towards}, naturally imbuing the code with symmetric structures.
We exploit this structure by grouping checks, qubits, and operators into symmetry \textit{orbits} and expressing the code maps over the group algebra \(R=\mathbb F_2[G]\).
We refer to the entire code as a \textit{code block} and refer to collections of qubits or logical operators related by the symmetry as \textit{orbits}.

Here, we focus on Abelian groups built from circulant matrices of length $P$, as recent constructions based on them have been able to achieve state-of-the-art code parameters in pair-partition (PP) codes~\cite{lee2026design}.
For example, many of our constructions below are instantiated in a $[[576, 148, 18]]$ pair-partition code, which has a $G = \mathbb{Z}_{72}$ symmetry.
The corresponding vector spaces for this code have $C_2 = \mathbb{F}_2[G]^{3}$, $C_1 = \mathbb{F}_2[G]^{8}$, $C_0 = \mathbb{F}_2[G]^{3}$, since there are three orbits of X- and Z-checks and eight orbits of data qubits.
For a cyclic symmetry group $\mathbb{Z}_P$ of order $P$, the group algebra is equivalently the polynomial ring $\mathbb{F}_2[\mathbb{Z}_P] \cong \mathbb F_2[x]/(x^P-1)$ with $x$ representing a one-site cyclic shift.
Group elements define both positions within an orbit and also permutation operations. For example we can shift the $i$-th element with an orbit to position $i+1$ by multiplying by $x$, $x \cdot x^i = x^{i+1}$. Sums of group elements $f \in \mathbb F_2[G]$ describe multiple permutations, and can encode more complex symmetric circuits between orbits (see Fig.\ref{fig:construction-pipeline}(a)).
Working within the group algebra formalism ensures the operations preserve the physical symmetry, providing a compact representation to characterize the fault-tolerant gadget.

\section{Efficient Logic via Space and Time Convolution}
\label{sec:convolution}

We now describe the class of parallel logical measurements our approach supports, how to program them into an auxiliary complex, and protect them with low overhead in both space and time.
We first choose the measurements, construct a small ancilla system that implements them, and then modify its spatial and temporal couplings to improve protection.

\subsection{Programmable symmetric logical gates}

\paragraph*{Logical registers.}
In classical computers, registers of a fixed size provide a convenient unit for organizing information and executing parallel operations. Motivated by this structure, we consider quantum registers comprising multiple logical qubits, with operations applied in parallel across them.

For example, in a translation-invariant lattice Hamiltonian simulation, each Trotter layer applies the same gate to a set of equivalent, translated bonds. Encoding different sublattices into separate registers allows these interactions to be implemented by transversal gates together with relative cyclic shifts~\cite{xu2025batched,ismail2026fast}. Arithmetic circuits offer a related motivation: the repeated majority-function blocks of a ripple-carry adder suggest organizing qubits by their positions within these blocks, although carry propagation still imposes a sequential order~\cite{cuccaro2004new,xu2025fast}.

Recent proposals encode columns of lattice qubits into code blocks, and combine transversal gates with code automorphisms that act as simple logical translations across each register, restricting the choice of codes~\cite{xu2025batched,ismail2026fast}. For example, the bicycle chain codes of Ref.~\cite{xu2025batched,ismail2026fast} have a disjoint logical basis that bounds their encoding rate by $1/d$. To combine this algorithmic structure with the encoding advantage of ultra-high-rate codes, we instead use logical symmetry orbits within a larger code block as registers. Programming operations on selected orbits then provides control within the block while preserving the desired parallel logical action.

\paragraph*{A systematic logical basis.}
To support this programming, we generalize Gottesman's binary stabilizer standard form~\cite{gottesman1997stabilizer} to a logical basis that respects the cyclic symmetry. 
We work over the group algebra $R=\mathbb F_2[x]/(x^P-1)$, so that one coordinate represents an entire physical orbit. 
For the studied codes, we can obtain a systematic logical basis by row reduction and back-substitution over the group algebra, preserving the orbit structure (see Appendix~\ref{app:symmetry-logical-coordinates}).
Figure~\ref{fig:overview}(a) illustrates the construction: each logical basis generator has a single information bit and a completion on the appropriate stabilizer-pivot orbits. The $X$ and $Z$ generators are completed on the $Z$- and $X$-stabilizer pivot orbits, respectively. Their shared information bit and disjoint completion supports ensure the canonical logical commutation relations. Translating these representatives along the orbit columns in Fig.~\ref{fig:overview}(a) produces the remaining basis generators.
Many codes also have residual logical sectors~\cite{zheng2026logical,lee2026logical,lee2026design}, which lie outside the active full-orbit register; our constructions may also measure products confined to these residual sectors.
This standard form preserves the symmetry while providing explicit coordinates for logical programming. We optimize the pivot assignment to reduce the weight of the logical basis while preserving the information orbits. The systematic structure of this basis allows us to specify logical measurements through sparse couplings to the data, independent of the weight of the logical operators themselves in the high-parallelism regime.

\paragraph*{Programming logical measurements.}

Figure~\ref{fig:overview}(c) illustrates the logical operations enabled by symmetry.
For a cyclic group of order $P$, we label the logical operators by $X_{i,j}$ and $Z_{i,j}$, where $i$ identifies a full logical orbit and $j\in\mathbb Z_P$ identifies a qubit within that orbit~\footnote{Some logical qubits lie in symmetry-invariant sectors outside the full logical orbits considered here. These form the residual sector outside the active full-orbit register, and may undergo additional residual-sector measurements.}. Let $x$ denote the cyclic group generator, acting by $x\cdot X_{i,j}=X_{i,j+1}$ and $x\cdot Z_{i,j}=Z_{i,j+1}$ in every orbit, with indices understood modulo $P$. The dot denotes translation of the logical operator.
We choose seed Pauli products $M_1,\ldots,M_q$ and measure each seed together with all of its translations:
\begin{equation}
M_{\alpha,j}=x^j\cdot M_\alpha,
\qquad \alpha=1,\ldots,q,\quad j\in\mathbb Z_P.
\label{eq:cyclic-logical-measurements}
\end{equation}
Here $q$ is the number of seeds, $\alpha$ labels a seed, and $x^j$ shifts every factor in that seed by $j$ positions.
Each seed specifies the participating orbits and their relative shifts. 
Within each code block, all seeds use a common Pauli type, either $X$ or $Z$, while different blocks may use different types. 
All translated products therefore commute and define a measurement family $\mathcal M=\{M_{\alpha,j}\}$. 
By construction, all logical PPMs are repeated exactly $|G|$ times, so the group structure sets the degree of redundancy in the logical program.
In the following, we focus on one- and two-body logical measurements, which we implement by direct programming. Appendix~\ref{app:sparse-surgery-constructions} discusses more general constructions.

\subsection{Constructing the fault-tolerant circuit}

\begin{figure*}[t]
\centering
\includegraphics[width=\textwidth]{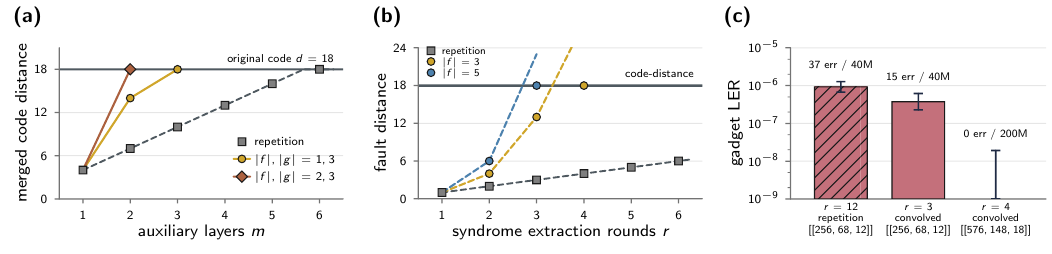}
\caption{\textbf{Simulation results.}
(a) Merged-code distance $d_X$ versus the number of auxiliary layers $m$ for the $[[576,148,18]]$ code. Each layer includes one primal–dual pair. Repetition (grey squares) results in spatial distance $d_X=\min(3m+1,18)$ (dashed). Convolution instead reaches distance with three layers for $(|f|,|g|)=(1,3)$ (yellow circles), or just two layers for $(2,3)$ (red diamond), where $f$ and $g$ are as labeled in Fig.~\ref{fig:construction-pipeline}(b).
(b) Temporal distance versus measurement rounds $r$ for readout of the same code. Repetition gives $d_Z=r$ (grey squares). Convolved profiles with temporal coupling weights $(1,3)$ and $(1,5)$ reach $d_Z=18$ in four and three rounds, respectively (yellow and blue circles). Both use shifts in $\mathbb Z_{36}\subset\mathbb Z_{72}$, with different scalars on successive edges. Matching dashed curves show the isolated classical temporal-code distances, which omit the spatial sectors and base-code boundary. The horizontal line marks the base-code distance.
(c) Circuit-level logical failure probability. The first two bars measure $32$ logical $X$ operators along an orbit and two orbit-invariant residual logical qubits on the $[[256,68,12]]$ code, while the last one measures $72$ logical $X$ operators on the $[[576,148,18]]$ code. Bars show the equal-weight average over $X$- and $Z$-basis experiments with 95\% confidence intervals.
}
\label{fig:convolution-performance}
\end{figure*}

\paragraph*{Constructing the ancilla system.}
For concreteness, throughout this section we use the seeds $M_u=X_{1,0}X_{2,s_0}$ and $M_v=X_{3,0}$, where $u$ and $v$ label the seeds and $s_0\in\mathbb Z_P$ is a chosen relative shift. 
The collection of measurements, after translation, include
$M_{u,j}=X_{1,j}X_{2,j+s_0}$ and $M_{v,j}=X_{3,j}$ for every $j\in\mathbb Z_P$. 
In other words, we measure shifted pairs from orbits 1 and 2 and, in parallel, the individual logical $X$ operators in orbit 3. The relative shift and participating orbits are programmable, providing finer control than a global transversal gate.
Later we also increase the degree of programmability by using subgroup symmetries to increase the number of logical orbits.

To construct a surgery ancilla system, we first build an auxiliary complex $P_1\xrightarrow{\partial_P}P_0$, and specify a chain map $\gamma$ which connects it to the base code complex. This complex specifies the measurements we perform.
We use two ancillary information orbits, labeled $u$ and $v$, to specify the two translated measurement families. Let the columns of $Q$ contain the logical coordinates of these seeds. Their coupling to the three data information orbits is
\begin{equation}
Q\begin{pmatrix}u\\v\end{pmatrix}
=\begin{pmatrix}1&0\\x^{s_0}&0\\0&1\end{pmatrix}
\begin{pmatrix}u\\v\end{pmatrix}
=\begin{pmatrix}u\\x^{s_0}u\\v\end{pmatrix}.
\label{eq:surgery-running-program}
\end{equation}
The first column sends $u$ to orbit 1 and its shift to orbit 2; the second sends $v$ to orbit 3, as shown in Figure~\ref{fig:construction-pipeline}(b). 
We include these two orbits in $P_1$, together with copies of the $Z$-stabilizer pivot coordinates needed to complete the selected logical operators.
Note that $P_1$ is the space of ancillary checks we use in surgery.
In $P_0$, we create a copy of each $Z$-check of the original code (elements of $C_0$) which are incident to the operators being measured.
In most cases we are acting on all the logical qubits and $P_0$ is a copy of $C_0$.
$\gamma_1$ maps $u, v$ from $P_1$ to the information coordinates in $C_1$, and maps the copied $Z$-stabilizer pivot coordinates to the respective coordinates in $C_1$ via identity. 
$\gamma_0$ is an identity map from $P_0$ to $C_0$.
The boundary $\partial_P$ follows the same connections as the data code, so that $H_Z\gamma_1=\gamma_0\partial_P$.
One can then verify that products of the corresponding merged checks cancel on the ancilla qubits and give precisely the translated products $M_{u,j}$ and $M_{v,j}$. 
Appendix~\ref{app:sparse-surgery-constructions} describes this construction in more detail and the conditions ensuring that no additional logical information is measured.

\paragraph*{Spatial convolution.}

This ancilla system implements the intended logical action, but the distance of the merged code can be substantially smaller than that of the original code.
A large logical operator in the original code can be deformed onto the ancilla system at much lower weight.
For example, for the $[[576,148,18]]$ instance in Fig.~\ref{fig:convolution-performance}(a), the $X$ spatial distance decreases from 18 to 4 with a single auxiliary layer. 
By coupling to the auxiliary complex, there are now undetectable error patterns of the form $(e, H_Z e)$ with $|e|=1$ and $|H_Z e| = 3$, that are supported on the original data qubits $C_1$ and ancilla qubits $P_0$ respectively.
These low-weight operators correspond to logical operators of the base code which are cleaned, or weight-reduced, by the new checks from $P_1$.

A conventional remedy is \textit{thickening}, which repeats the auxiliary complex in several coupled layers~\cite{cohen2022low}. 
By adding a layer, this introduces new $Z$-checks that add protection to the first ancillary $P_0$ space, forcing $X$ logical operators to extend further through the stack.
We characterize the rate of distance growth with repetition thickening alone in Fig.~\ref{fig:convolution-performance}(a), and observe the distance grows by 3 for every additional layer. 
Simply creating many layers of the auxiliary complex therefore introduces a significant space cost.
An effective modern method that boosts distance with lower overhead uses graph expansion~\cite{cross2024improved,williamson2026gauging,he2025extractors}.
However our ancilla complexes measure overlapping logical operators and correspond to hypergraphs, for which fewer efficiency guarantees exist~\cite{zheng2025high,cowtan2026near}. 
We therefore need new methods for efficient surgery constructions. %

Here, we use the group structure to improve distance. We replace the identity connections between adjacent layers by sums of translated connections that form a chain map, a procedure we call \emph{chain-map convolution}.
The lift group provides distinct shifts that can be combined to create additional connections between layers; this is in contrast to the binary field, where the only nonzero scalar is the identity.
For example, using $x$ for the one-position cyclic shift, we can replace an identity connection between two orbits by
\begin{equation}
1\quad\longrightarrow\quad f=1+x+x^3,
\label{eq:surgery-convolution-example}
\end{equation}
as illustrated in Fig.~\ref{fig:construction-pipeline}(a).
Physically, this grows the connectivity between checks and qubits, increasing their degree by two.
A coordinate $j$ now connects to $j$, $j+1$, and $j+3$ in the next layer.
When implementing as a syndrome extraction circuit, each term of $f$ corresponds to one transversal CNOT between the pair of check and qubit orbits.
Intuitively, these connections increase distance by requiring additional qubit support to cancel a detectable syndrome.
We call this \emph{scalar convolution}, as it applies the same group-algebra element $f$ independently to each existing connection between orbits; matrix-valued generalizations can also introduce new connections between different orbits, as discussed in Appendix~\ref{sec:chain-complex-convolution} and Fig.~\ref{fig:structure-overhead}.

The paired $f$ maps in Fig.~\ref{fig:construction-pipeline}(b) must still form a commuting square. For scalar convolution over the cyclic group algebra, applying the same polynomial to both sides of an interlayer square ensures
\begin{equation}
\partial_P(fI_{P_1})=(fI_{P_0})\partial_P.
\end{equation}
This follows from the fact that we are using an Abelian group, since all elements of $R$ commute. 
Further selecting unit scalars is sufficient to ensure the programmed measurements are preserved by this procedure.

We directly verify the code distance of the resulting merged codes using a branch-and-bound search for lower bounds~\cite{dumer2017distance} and witnesses for upper bounds.
Although there are auxiliary (or gauge) logical degrees of freedom, we still benchmark the distance of the entire code.

Notably, convolution increases the local degree, but the starting auxiliary complex often has fewer incident couplings than the original code, because it only includes the coordinates needed for the chosen measurements. This can leave room for several translated connections with only a small increase in the maximum degree of the merged complex. The construction also has a repeated coupling pattern across each orbit, whose cyclic structure may simplify implementations in reconfigurable neutral-atom or trapped-ion systems~\cite{zhao2026towards,tripier2026fault,xu2024constant,bhardwaj2026high,yang2026designer}, particularly compared to constructions based on complex graphs~\cite{williamson2026gauging}.

Figure~\ref{fig:convolution-performance}(a) compares repetition and convolved couplings for a fixed measurement family, showing the spatial distance attained at each auxiliary thickness. Here, we use the convention that a primal and dual layer count as one layer, so that the original construction in Ref.~\cite{cohen2022low} involves $d$ layers. For the $[[576,148,18]]$ code, a $(|f|,|g|)=(2,3)$ coupling raises the static $X$ distance from $4$ to the base-code distance $18$ with two auxiliary layers, while a $(1,3)$ coupling reaches $14$ at two layers and $18$ at three. Standard repetition has distance $7$ at two layers.

\paragraph*{Temporal convolution.}

Remarkably the same principles apply in the time direction as well. 
We represent the full spacetime procedure by a fault-complex $W_{\bullet}$ ~\cite{hillmann2025faultcomplex}, as illustrated in Fig.~\ref{fig:construction-pipeline}(c).
This complex directly maps onto a measurement-based quantum circuit for the fault-tolerant procedure, that prepares a resource state for the target PPM~\cite{bolt2016foliated}.
The four terms in this diagram from left to right are $W_3 \rightarrow W_2 \rightarrow W_1 \rightarrow W_0$. In order they correspond to
\begin{enumerate}
    \item $W_3$: $X$-basis detectors
    \item $W_2$: Qubits initialized in $\vert + \rangle$ and measured in $X$.
    \item $W_1$: Qubits initialized in $\vert 0 \rangle$ and measured in $Z$.
    \item $W_0$: $Z$-basis detectors.
\end{enumerate}
Solid lines in the diagram are physical CNOT gates applied between qubits. Dotted lines are parities computed from measurements as part of the classical decoding apparatus. 
While the fault-complex captures the connectivity of the CNOT circuits that comprise the spacetime procedure, it does not capture the schedule and order the gates are applied.
In principle, the entire circuit could be applied in a single time step, using a number of qubits proportional to the size of the entire complex.
However, throughout this work, we interpret the fault-complex as a procedure that is applied in layers, from bottom to top.
Since all connections are local in this orientation, the entire circuit can be executed with no more than $2n_{\text{merged}}$ qubits at any given time, where $n_{\text{merged}}$ is the number of data qubits in the merged code.

We now walk through the procedure in detail to prepare a surgery resource state. Starting from the bottom of Fig.~\ref{fig:construction-pipeline}(c), the first layer corresponds to initializing all qubits $F_1$ in $\vert 0 \rangle$, and then using an ancilla system $F_2$ to measure the $X$-stabilizers of the merged code. 
Moving from the first layer to the second layer involves introducing a second copy of $F_1$, applying a layer of transversal CNOTs (coupling $I$), and then measuring out the first copy, performing a standard quantum teleportation circuit~\cite{zhou2000methodology}.
Then in the second layer, the $Z$-stabilizers are measured using an ancilla system $F_0$.
Without convolution, the circuit would then proceed using another standard quantum teleportation circuit.

When convolution is applied to the fault-complex, it corresponds to multiplying all of the edges between two layers by a scalar $f \in \mathbb F_2[G]$. Unless otherwise stated, in our constructions we apply a minimal amount of temporal convolution, after every $Z$-stabilizer measurement round and using a weight-3 scalar. 
Physically, the only change in the circuit is a small modification to the quantum teleportation procedure.
For example, given $f = 1 + x + x^3$, we apply the transversal CNOT gate three times between the two copies of $F_1$, once in the standard position, once with the second copy shifted by $x$, and once with the second copy shifted by $x^3$.
This entire procedure constitutes one round of syndrome extraction (SE), and the effect of convolution is to increase the circuit depth by a modest $|f|-1$ layers of CNOTs per round.

After multiple rounds of SE such that the $X$-stabilizers have been fault-tolerantly measured, all logical operators in the merged code are in the logical $\vert 0 \rangle$ state, and all $X$-stabilizers have an established logical frame. 
The final step of the state-preparation procedure involves measuring the ancilla block in the $Z$-basis.
After this step, all that remains is the bare code $C_{\bullet}$. Assuming we apply Pauli corrections to correctly orient the logical frame, all unmeasured logical $Z$ operators are in the $\vert 0 \rangle$ state, while all measured logical $X$ operators are in the $\vert + \rangle$ state. This is a resource state for the PPM measurement. To see this, note that we can use it to perform a PPM on a copy of $C_{\bullet}$, simply by applying a CNOT gate from the resource state to the copy, and measuring the resource state in the $X$-basis. 
Therefore, preparing a resource state is sufficient for generating logical PPMs.

The distance of the full protocol is determined by the minimum number of faults that can change a logical measurement outcome or corrupt surviving logical information without detection, including faults that extend across both space and time. Although we discuss the physical implementation in terms of individual gates, the distances we compute are in the \textit{phenomenological} model, where faults only occur during state-preparation or readout.
Informally, the protected $X$ information sensitive to $Z$ errors is encoded in the ``quantum code'' $W_3 \rightarrow W_2 \rightarrow W_1$, and the protected $Z$ information sensitive to $X$ errors in $W_2 \rightarrow W_1 \rightarrow W_0$ (see \cref{app:efficient-fault-tolerance} for more details).

In Fig.~\ref{fig:convolution-performance}(b), we show the distance-boosting performance of temporal convolution on a $d=18$ pair-partition code. 
For these full fault processes, we count only undetectable patterns that change a measured logical outcome or the surviving logical frame
\cite{delfosse2023spacetime_v2,williamson2026gauging} (see Appendix~\ref{sec:ft-process-requirements} for more details).
The standard repeated syndrome extraction has a time-like fault distance that is equal to the number of measurement rounds, requiring $r=18$ rounds to achieve the code distance. Empirically, the degree of distance boosting grows exponentially at first as $|f|^r$. Therefore, with as few as $r=4$ measurement rounds, temporal convolution with $|f|=3$ connections can already reach $d=18$, and as few as $r=3$ for $|f|=5$ connections, yielding a significant reduction in spacetime volume.

\subsection{Performance and scaling}

\paragraph*{Example constructions.}
We construct explicit surgery gadgets for symmetric pairwise PPMs on three representative codes: $[[256,68,12]]$, $[[576,148,18]]$, and $[[1152,580,12]]$.
Table~\ref{tab:representative-merged-codes} summarizes the merged-code parameters and syndrome-extraction rounds required to generate a fault-tolerant procedure. The target PPM count includes only the intended pairwise logical products. Each construction also measures two parities of residual logicals. The total footprint is reported as the number of data qubits plus the number of checks.
The reported check weights and qubit degrees are the maxima across all elements, and the SE round counts refer to the number of layers of an $X/Z$ syndrome measurement plus one temporal convolution.
We see in the studied cases, a two-layer construction is sufficient to achieve the target distance, generating a rate-independent $2.5 \times$ overhead. 

\begin{table}[t]
\caption{\textbf{Examples of merged codes and time overheads.}
Code parameters only include data qubits. Total $N$ counts data qubits and checks. Target PPMs are fully parallel pairwise PPMs between orbits.}
\label{tab:representative-merged-codes}
\centering
\footnotesize
\setlength{\tabcolsep}{2pt}
\renewcommand{\arraystretch}{1.15}
\begin{tabular*}{\columnwidth}{@{\extracolsep{\fill}}lccc@{}}
\hline\hline
 & PP256 & PP576 & APM1152 \\
\hline
Base code & $[[256,68,12]]$ & $[[576,148,18]]$ & $[[1152,580,12]]$ \\
Merged code & $[[576,34,12]]$ & $[[1296,74,18]]$ & $[[2304,290,12]]$ \\
Target PPMs & $32$ & $72$ & $288$ \\
Check weight $X/Z$ & $9/9$ & $10/10$ & $12/13$ \\
Qubit degree $X/Z$ & $5/4$ & $5/6$ & $9/6$ \\
$N_{\mathrm{merge}}/N_{\mathrm{base}}$ & $2.5$ & $2.5$ & $2.5$ \\
SE rounds & $3$ & $4$ & $3$ \\
\hline\hline
\end{tabular*}
\end{table}

\paragraph*{Circuit-level performance.}
To test whether the time overhead reduction survives under circuit-level noise, we simulate block-readout experiments for the $[[256,68,12]]$ code using the repetition and convolved gadgets in Fig.~\ref{fig:convolution-performance}(c). These experiments serve as a proxy for surgery performance. We include two-qubit-gate and state-preparation-and-measurement (SPAM) errors at $p=0.1\%$, with no idle noise. The circuit corresponds to measurement-based quantum computation (MBQC), in which the SPAM errors also play the role of additional idling errors; therefore, the absolute logical error rates are not directly comparable with standard circuit-based simulations. Both gadgets use the same hierarchical decoder, with multiple tiers of belief propagation followed by a time-limited integer-programming fallback for unresolved shots or conflicting logical predictions.

With $20$ million shots in each of the $X$ and $Z$ bases, we observe $37$ failures for the $r=12$ repetition gadget and $15$ for the $r=3$ convolved gadget, including both memory and logical measurement failures. Averaging equally over the two bases gives gadget LERs of $(9\pm2)\times10^{-7}$ and $(4\pm1)\times10^{-7}$ per complete experiment, respectively, where the uncertainties are one standard error, calculated from the independent binomial samples. Thus, in this comparison, the convolved gadget uses four times fewer measurement rounds while achieving a lower observed gadget LER.
Figure~\ref{fig:convolution-performance}(c) also shows a larger instance $[[576,148,18]]$ with $r=4$ rounds, where we observed no failures in $100$ million shots per basis.

\paragraph*{Lifted-product ancilla codes.}

In addition to the practical finite-size performance we demonstrate numerically, we also show the structures generated by convolution support good asymptotic properties.
The standard repetition based approach to spatial thickening corresponds to taking the hypergraph product of the ancilla program $P_1 \rightarrow P_0$ with a distance $\Theta(d)$ repetition code. %
However, by organizing the objects into symmetry orbits, and applying convolution between repetition layers, the resulting structure becomes equivalent to taking the lifted product of the starter block $P_1 \rightarrow P_0$ with a quasi-cyclic code obtained by lifting a much shorter repetition code (\cref{app:sparse-surgery-constructions}).
Replacing the transverse factor by a lifted product of two Panteleev–Kalachev (PK) expander codes, for an abelian symmetry group of order $g = 2^k$, yields a merged code with distance $d$ and $O(\max\{1,\log^2 |G|,(d/|G|)^2\})$ spatial overhead.
For $|G|=\Omega(d)$ this reduces to $O(\log^2 |G|)$ overhead, compared to $O(d)$ for a repetition code. The construction and proof are given in Appendix~\ref{si-pk:sec:two-page-proof}.

To analyze the asymptotic spacetime cost of fault-tolerant surgery with symmetries, we also study more general time-like repetitions. In particular, connections between widely separated time-like layers are also possible~\cite{xu2025batched,xu2026framework}, and will be treated in more detail in \cite{mehta2026unifying}.
With this added flexibility, we show that there exist a class of quantum codes whose spacetime surgery complexes can be built from high-dimensional expanders~\cite{dinur2024cubical_doi, nguyen2025quantum}, giving $O(\log^4 |G|)$ spacetime overhead for symmetric logical PPMs. The precise guarantees and proof are given in Appendix~\ref{sec:pk-temporal-distance}.
A common feature of these asymptotic results is that the symmetry group $G$ shared by the base code $C_{\bullet}$ and starter surgery complex $P_{\bullet}$ needs to be sufficiently large to benefit from the asymptotic savings. 
In the following section, we explore the relationship between overhead and the size of the symmetry in practice.

\section{Increasing programmability and overhead tradeoff}
\label{sec:reducing_structure}

The preceding construction uses group symmetry to perform parallel logical measurements with low space and time overhead. 
We now consider restricting this symmetry to a subgroup, in order to generate logic gates with finer control.
This divides logical registers into smaller, independently addressable orbits, while reducing the structure available for distance boosting. We can therefore chart out a trade-off between programmability and the resources required for protection.

\paragraph*{Increasing programmability via subgroup symmetry.}
Restricting to a subgroup $H\subseteq G$ divides each full $G$-orbit into $|G|/|H|$ smaller individually addressable orbits.
Figure~\ref{fig:structure-overhead}(a) illustrates this control for $72$ parallel two-body Pauli product measurements on the regular $G=\mathbb Z_{72}$ orbits of the $[[576,148,18]]$ code. We consider random pairwise matchings compatible with the subgroup symmetry: each logical qubit belongs to exactly one pair, and the resulting $72$ PPMs are invariant as a family under $H$. 
We treat the cost of implementing these random matchings as an indicator for a typical high-parallelism circuit.
Since each logical operation is repeated exactly $|H|$ times, we refer to it as the \textit{batch size}. The total number of independently programmable blocks scales inversely with the batch size as $72 / |H|$. At $|H|=18$, there are four such families; reducing to $|H|=6$ gives twelve, while $|H|=1$ allows all $72$ measurements to be programmed independently.
However, when we build the ancilla system associated with a generic logical program, the chain complex symmetry is reduced to $H$, which makes scalar convolutional distance boosting less effective. 
Intuitively, $|H|$ sets the number of positions over which convolution can spread logical operators within an orbit. At the trivial subgroup, $|H|=1$, there are no nontrivial translations within an orbit, and we recover standard parallel surgery~\cite{zhang2025time,cowtan2025parallel,zheng2025high,cowtan2026near}.

Figures~\ref{fig:structure-overhead}(b,c) compare the resources required for these random pairwise matchings at different batch sizes. At each batch size, we sample three matchings and optimize complexes using scalar convolution alone (blue), or a heuristic matrix-convolution construction with targeted ancilla addition and check splitting (red); see Appendix~\ref{sec:chain-complex-convolution} for the matrix-valued coupling framework. Both constructions preserve the full $H$ symmetry throughout. For each method, we plot the largest overhead among the complexes obtained for the three matchings; the relative spread is generally small. Fig.~\ref{fig:structure-overhead}(b) shows the total physical-qubit footprint, including merged-code data qubits and one measurement ancilla per listed check, normalized by the $144$ participating logical qubits. For scalar convolution, this space cost decreases from $45.5$ physical qubits per participating logical qubit at $|H|=1$ to $17.5$ at $|H|=18$. Matrix convolution further reduces the space cost at several of the tested subgroup sizes, with the largest reduction at the trivial subgroup.

Figure~\ref{fig:structure-overhead}(c) estimates the spacetime cost $r n_{\mathrm{phys}}/144$, where $n_{\mathrm{phys}}$ is the total number of data and measurement-ancilla qubits and $r$ is the number of rounds required to reach $d=18$. The scalar-convolution spacetime cost decreases from $819$ physical-qubit rounds per participating logical qubit at $|H|=1$ to $70$ at $|H|=18$, over an order of magnitude reduction. For $|H|\geq4$, the estimates are approximately described by the fit $1258.1/|H|$. Matrix convolution lowers the estimated spacetime cost at several subgroup sizes, although the improvement is smaller where the scalar construction is already efficient. These results illustrate the resource savings available under a larger subgroup symmetry, at the cost of fewer independently programmable measurement families.

\begin{figure*}
\centering
\includegraphics[width=\textwidth]{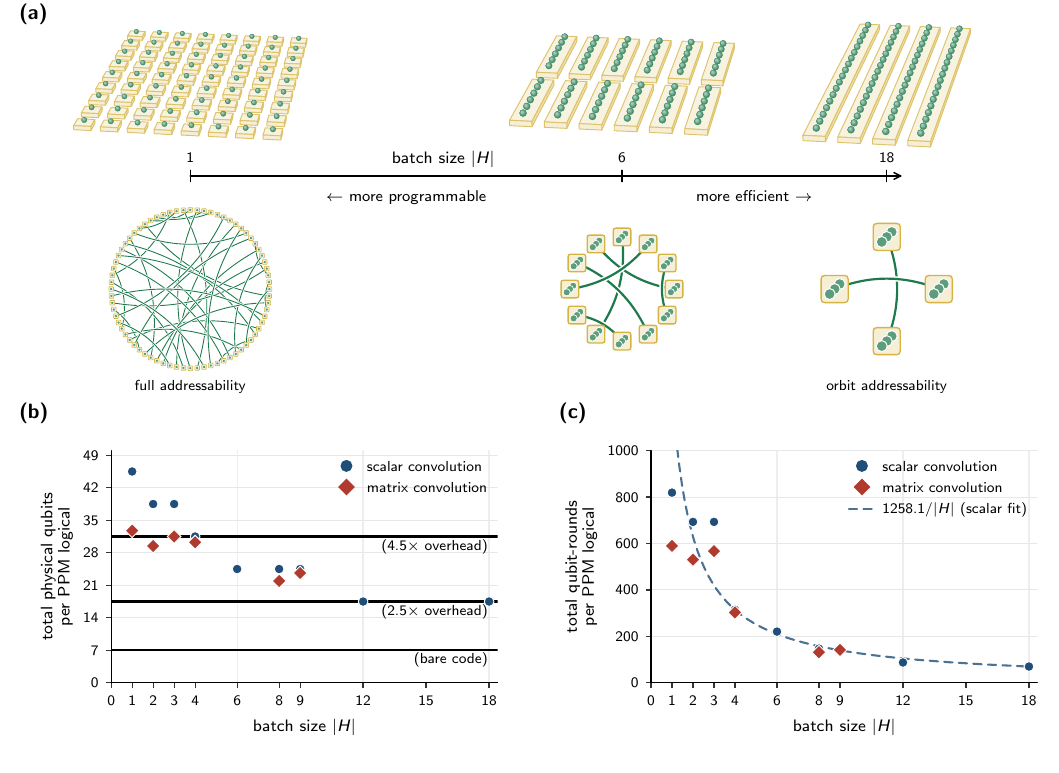}
\caption{\textbf{Trade-off between programmability and overhead.}
(a) Subgroup control for $72$ parallel pairwise Pauli-product measurements on $144$ logical qubits of the $[[576,148,18]]$ code. Each measurement is repeated across a subgroup orbit of size $|H|$, giving $72/|H|$ independently programmable pairwise measurements. Smaller subgroups provide finer orbit addressability, reaching full addressability with all measurements independently programmable at $|H|=1$.
(b) Space overhead per participating logical qubit for scalar convolution (blue circles) and matrix convolution with ancillary qubits and check-weight reduction (red diamonds). Both preserve the subgroup symmetry. Each point gives the largest overhead among the sampled matchings, using the best certified design found for each.
The cost $n_{\mathrm{phys}}/144$ includes merged-code data qubits and one measurement ancilla per check, including redundant checks.
(c) Estimated spacetime overhead per participating logical qubit,
$r n_{\mathrm{phys}}/144$. We observe a scaling in the cost reduction with batch size that is approximately $\approx 1/|H|$. To highlight this, we add an empirical fit of
$1258.1/|H|$
to the scalar-convolution results for $4\leq|H|\leq18$, that is extrapolated toward smaller batches. 
}
\label{fig:structure-overhead}
\end{figure*}

\paragraph*{Programmability and overhead.}
Our results expose an intriguing trade-off between the degree of programmability of logical measurements, and the spacetime overhead required to protect them. In particular, we observe that within our construction, increasing the batch size allows protection resources to be shared much more effectively, reducing the overhead per logical operation.

This perspective also connects to several recent and distinct techniques for performing parallel and low-overhead fault-tolerant logic. 
In Ref.~\cite{xu2025batched}, batching across code blocks was used to achieve provable constant spacetime overhead for identical Clifford operations (assuming fast classical decoding). The key idea was to implement a batch of Clifford gates by code-switching into a different representation, where such operations could be performed transversally. Therefore, to preserve constant overhead, the required batch size has to grow with the target distance, since the number of batched operations is directly related to the number of encoded logicals in an asymptotically code $[[n=\Theta(d), k=\Theta(d), d]]$ quantum code~\cite{panteleev2022asymptotically}.
The same batching of protection across code blocks also enables the low-overhead bulk state preparation and code-switching constructions of Ref.~\cite{golowich2025constant,nguyen2025quantum}.
Parallel surgery provides another setting in which multiple logical operations share protection resources.
The constructions from Refs.~\cite{gu2026qgpu,hirasaki2026lifting,zheng2025high,chang2026constant} all 
use shared ancilla systems for parallel logical measurements, while 
the construction from Ref.~\cite{chang2026constant} further implement $O(d)$ distinct parallel surgery operations in $O(d)$ time via constant-time surgery~\cite{cowtan2025fast, baspin2026}, achieving parallel surgery with amortized constant time overhead.
These constructions enable a form of in-block batching for product codes, where logical qubits are organized into columns which must perform identical operations. 

In that view, our construction extends these prior results to enable in-block batching with maximum parallelism, for ultra-high-rate codes based on group symmetries. Further, subgroup control allows the batch size to be varied within a single code, giving a tunable trade-off between programmability and overhead. 
Remarkably, we observe a simple scaling in the cost of fault-tolerant logic gates, with the amortized spacetime overhead decreasing approximately as $1 / |H|$. 
While it is clear that structure in the code and logic gates can be exploited to generate very low spacetime overheads, it remains an open question whether this tradeoff between logical programmability and spacetime overhead is fundamental, or merely a reflection of currently available techniques. 
Continued work is necessary to understand whether other structures can support more efficient sharing of error-correction resources without batching, i.e. while preserving full independent control.

\paragraph*{Resource comparison.}
We estimate the resource cost for each logical operation in units of physical-qubit syndrome rounds, and compare it against alternative schemes on high-rate codes with comparable check weight. We emphasize that a full spacetime comparison will require further matching the decoded logical error rates under a common noise model, accounting for the desired logical operation, and including physical implementation details such as the duration of a syndrome round, which we leave for future work.

We consider a perfect matching of the logical orbits under a subgroup of order $18$ for the $[[576,148,18]]$ code. Pairing the eight full subgroup orbits gives $72$ two-body logical Pauli-product measurements. We estimate a parallel measurement footprint of $2520$ physical qubits (including ancillas), and four syndrome-extraction rounds to achieve $d=18$, giving $140$ physical-qubit rounds per logical PPM.

An alternative is to encode registers in separate code blocks with shift automorphisms and transversal gates, as in bicycle-chain architectures~\cite{xu2025batched,ismail2026fast}. The smallest distance-$18$ bicycle-chain instance found in our search is $[[246,6,18]]$, with rate $1/41$. We estimate that with parallel transversal CNOTs, we require $24$ blocks and $11,808$ physical qubits to perform the $72$ logical gates, with roughly $160$ physical-qubit rounds per logical CNOT. The estimated spacetime costs per native logical operation are therefore comparable, but our method significantly reduces the spatial footprint.

We also compare against the parallel surgery gadgets for the $[[300,60,14]]$ mitten code, which measures $30$ independently addressable two-body logical Pauli products in parallel~\cite{bhardwaj2026high}. The gadget requires a total of $3118$ physical qubits; rescaling this footprint to 72 PPMs at fixed qubit cost per PPM gives approximately $7500$ qubits, about $3\times$ our space footprint. Assuming 14 syndrome extraction rounds, this gives approximately $1450$ physical-qubit rounds per PPM, about an order of magnitude above the preceding symmetric PPMs and comparable to the overhead we achieve in the fully addressable batch size $|H|=1$ limit (Fig.~\ref{fig:structure-overhead}(c)).

These estimates suggest that by leveraging structure in the logic gates, group-invariant surgery can achieve a qubit-round cost comparable to transversal gates between qLDPC codes with a smaller spatial footprint. The savings are even more striking when comparing with existing instantiations of arbitrary parallel surgery, highlighting how symmetry shared across simultaneous logical operations can further reduce the cost of protection.

\section{Parallel Magic State Teleportation}
\label{sec:magic-state-teleportation}

Universal quantum computation also requires non-Clifford resource states~\cite{bravyi2005magic}. One promising approach is to prepare individual magic states by cultivation in small code patches~\cite{gidney2024magic}, where postselection suppresses errors, and then transfer them into a high-rate code~\cite{zheng2026logical,bhardwaj2026high,xu2024constant,cain2026shor}.
Parallel transfer is particularly natural for algorithms that apply the same non-Clifford operation across many logical qubits~\cite{xu2025batched}.
However, conventional surgery-based approaches require $O(d)$ rounds for fault-tolerant transfer~\cite{cohen2022low}. Since the cultivated source patches dominate the loading footprint, this duration also contributes substantially to the spacetime cost.

Our construction provides a natural route for efficient loading.
Concretely, let us assume surface code patches are used as the cultivated source~\cite{sahay2025fold}.
Denote the surface code as $S_{\bullet}$, the high-rate code as $C_{\bullet}$, and let $G$ be a symmetry of the high-rate code.
We can then treat a stack of $|G|$ identical surface codes as a large code $S_{\bullet}^{\otimes |G|}$ with a full permutation symmetry, so regardless of what the symmetry $G$ is the stacked code is invariant under it.
Therefore, $G$ is a common symmetry of the joint system.

For chain-map convolution to be applied, it is also crucial that the logical action is symmetric under $G$. If we select a logical orbit in the high-rate data block, the pairwise measurements
\begin{equation}
    \overline Z^{\mathrm{LDPC}}_g
    \overline Z^{\mathrm{patch}}_g,
    \qquad g\in G,
\end{equation}
form a symmetry-invariant family that can be efficiently programmed and protected.
With the source patches prepared in $\ket{T}=T\ket{+}$, these joint measurements, followed by logical $X$ readout of the source patches and outcome-dependent Clifford corrections, implement $T$ gates directly on the logical data qubits in the high-rate block~\cite{zhou2000methodology}.

For resource-state loading, we consider parallel injection of $16$ magic states into a single logical orbit of the $[[256,68,12]]$ CPM-PP code. Table~\ref{tab:pp256-surface-injection-resources} also compares the space and time cost with the LP-transistor and mitten magic-port constructions~\cite{zheng2026logical,bhardwaj2026high}.
These schemes use an intermediate LP code block to mediate the transfer, whereas our example couples the source patches directly to the destination block through surgery bridges. 
Further, our construction does not rely on any product structure and accommodates rotated surface code patches. It also uses fewer interface qubits, reducing the allocated space per loaded state. Temporal convolution further generates a three-round loading schedule with sufficient fault-distance, compared with the fourteen-round baseline for the two-stage schemes.

While these estimates are promising and indicate the potential for lower spacetime overhead, in practice additional considerations may arise.
One consideration is that the selected basis for single orbit transfer is distinct from the basis for full-density parallel logic as considered in the rest of the paper.
Another concerns the implementation of temporal convolution, which is most natural when performed in the measurement-based setting and with resource-state preparation. 
We leave a detailed accounting of architecture and circuit design together with circuit-level simulations of the complete transfer process to future work.

\begin{table*}[t]
\caption{Comparison of parallel magic-state loading schemes at source distance seven. Qubit counts are per loaded state and include the source and destination blocks together with the loading interface, excluding syndrome-extraction ancillas and resource preparation. Syndrome rounds count the loading procedure only, excluding preparation and terminal readout. The two-stage injection is estimated at 14 syndrome rounds, corresponding to two fault-tolerant distance 7 surgeries~\cite{zheng2026logical,bhardwaj2026high}. Appendix~\ref{sec:magic-loading-accounting} gives the detailed resource accounting.}
\label{tab:pp256-surface-injection-resources}
\centering
\small
\begin{tabular}{lccc}
\hline\hline
 & Lifted direct injection (this work) & LP transistor~\cite{zheng2026logical} & Mitten magic port~\cite{bhardwaj2026high} \\
\hline
Destination code & $[[256,68,12]]$ & $[[1122,148,\leq20]]$ & $[[200,40,12]]$ \\
States loaded per batch & $16$ & $66$ & $40$ \\
Source patch & Rotated $[[49,1,7]]$ & Unrotated $[[85,1,7]]$ & Unrotated $[[85,1,7]]$ \\
Total allocated qubits per state & $89$ & $136$ & $117$ \\
Syndrome rounds for loading & $3$ & $14$ & $14$ \\
\hline\hline
\end{tabular}
\end{table*}

\section{Discussion and Outlook}
\label{sec:outlook}

We have introduced a framework for parallel logical measurements that uses the symmetries of high-rate quantum codes to achieve efficient logical operations with low spacetime overhead. A symmetry-adapted logical basis enables us to efficiently encode the measurement program, while chain-map convolution enables low-overhead spatial and temporal protection. Subgroup control provides tunable programmability, and parallel magic state transfer provides a low spacetime overhead route to achieving universality.

The highly-structured coupling patterns of the symmetric gadgets may also simplify their physical implementation. Recent proposals for reconfigurable atom arrays and trapped-ion systems use cyclic structure to organize qubit transport and syndrome extraction~\cite{zhao2026towards,tripier2026fault}. Our constructions suggest a way to extend these control strategies to logical operations. The ancilla system starts from a structured subcode and a simple repetition protograph, with convolution adding translated copies of its couplings. We expect this regularity in both the protograph and lift to simplify qubit transport, and we leave more detailed analysis of this to future work.

Our work opens many directions of future research and improvements. At the construction level, the repetition protographs used here are only one choice. Combining graph expansion techniques and randomized heuristics for code surgery~\cite{cross2024improved,williamson2026gauging,ide2025homological,zheng2025high, mehta2026unifying} with group lifts may yield stronger protection with fewer ancilla qubits. 
Similarly, our temporal convolutions connect only neighboring layers, allowing completed layers to be measured out promptly and the corresponding qubits to be recycled. Couplings between more widely separated layers enlarge the design space for spacetime protection~\cite{mehta2026unifying,xu2026framework,xu2025batched} and may further reduce the required temporal depth and spacetime volume, at the cost of a larger instantaneous space cost.

These results also motivate rethinking algorithm design and compilation strategies. Algorithms with repeated subroutines provide a natural setting for organizing logical measurements into symmetry-compatible batches. Determining which computational tasks can make effective use of this structure is an important direction for translating the savings of individual gadgets into lower overhead for complete algorithms.
In cases where such structure is absent, other approaches will likely be necessary to best facilitate low-overhead code surgery. For this reason, in upcoming work \cite{mehta2026unifying}, we develop a complementary toolkit for constructing fast, high-rate surgery gadgets that apply beyond the symmetric settings considered here.

Finally, the observed trade-off between programmability and overhead has interesting connections to recent asymptotic progress on low spacetime overhead computation~\cite{nguyen2025quantum}, and it would be interesting to understand to what extent these trade-offs are fundamental.

\begin{acknowledgments}
We thank J.~Blue, A.~Gu, K.~Kasai, N.~U.~Koyluoglu, J.~Y.~Lee, D.~B.~Tan, C.~Zhao for helpful discussions.
NM was supported by NSF grant PHY-2325080, the U.S. Department of Energy, Office of Science, National Quantum Information Science Research Centers, Quantum Systems Accelerator (Award No. DE-SCL0000121), and by the MIT Center for Theoretical Physics (report number MIT-CTP/6027).
This work was also supported by the National Science Foundation (CCF 2313084), the CUA Physics Frontiers Center (PHY-2317134) and NQVL (PHY-2410716), IARPA and the Army Research Office, under the Entangled Logical Qubits program (W911NF 23-2-0219), the U.S. Department of Energy C2QA program (DE-SC0012704), DARPA MeasQuIT (grant numbers HR0011-26-3-E122), the DARPA HARQ program, DOE Quantum Systems Accelerator Center, grant number DE- AC02- 05CH11231, QuEra Computing, and the MIT-IBM Watson AI Lab.

\textit{Competing interests.} N.M. is an employee and shareholder at Farfield Inc. M.D.L. is a co-founder, shareholder, and Chief Scientist at QuEra Computing. H.Z. is a shareholder of QuEra Computing.

\textit{AI use disclosure.} The authors acknowledge the use of AI tools. The main ideas and numerical experiments of the paper were conceived and designed by the authors. 
The AI coding assistants Codex and Claude Code, with various underlying models, were used to implement and assist with calculations, simulations, figure generation, proof writing, and editing of the manuscript. 
The proof strategies in \cref{app:pk-transverse-products} were developed primarily by AI.
The authors have reviewed all content and take full responsibility of the results.

\textit{Note added.} While completing this work, we became aware of several related but generally complementary results pertaining to improving code surgery, especially in the presence of group structure \cite{berent2026lifted,cowtan2026near,zhou2026parallel,hirasakiUnpublished}.

\end{acknowledgments}

% References follow the main text in two columns, before the appendices.
\makeatletter
\renewcommand{\bibsection}{%
  \begingroup
  \let\addcontentsline\@gobblethree
  \section*{\refname}%
  \endgroup
  \@nobreaktrue
}%
\makeatother
\bibliography{refs}

% ==========================================================================
% The appendices are typeset in a SINGLE column (the main text and references
% above stay two-column).  \onecolumngrid must come before \appendix.
% Under the "preprint" class option the whole document is already one column
% and this command is a harmless no-op.
% The page break also flushes this marker into the .toc before appendix entries.
\addtocontents{toc}{\protect\setcounter{tocdepth}{-1}}
\clearpage
\onecolumngrid
\fi

\appendix
% cleveref names \section after \appendix "Section" by default; alias the
% section counters onto an "appendix" type so cross-references read
% "Appendix C.3" rather than "Section C.3".
\crefname{appendix}{Appendix}{Appendices}
\Crefname{appendix}{Appendix}{Appendices}
\crefalias{section}{appendix}
\crefalias{subsection}{appendix}

% Give readers entering the technical appendices a map of their structure.
% Include subsections but omit the finer proof-level headings.
\ifdefined\appendixonly
  % No main-text entries exist in this job, so the ordinary contents command is
  % sufficient and produces a compact map of the included appendices.
  \setcounter{tocdepth}{2}
  \renewcommand{\tocname}{Appendix contents}
  \tableofcontents
\else
  \setcounter{tocdepth}{2}
  \addtocontents{toc}{\protect\setcounter{tocdepth}{2}}
  \begingroup
  \makeatletter
  % REVTeX does not use tocdepth while formatting its contents file, so hide
  % entries until the \appendix marker and restore only section/subsection rows.
  \let\appendixtocsection\l@section
  \let\appendixtocsubsection\l@subsection
  \let\l@section\@gobble@tw@
  \let\l@subsection\@gobble@tw@
  \let\l@subsubsection\@gobble@tw@
  \def\appendix@toc{%
    \let\l@section\appendixtocsection
    \let\l@subsection\appendixtocsubsection
  }%
  \makeatother
  \renewcommand{\tocname}{Appendix contents}
  \tableofcontents
  \endgroup
  \clearpage
\fi

% Technical appendices begin with Appendix A.

\section{Symmetry and logical coordinates}
\label{app:symmetry-logical-coordinates}

We describe symmetric CSS codes using chain complexes over the group algebra
and introduce symplectic logical coordinates organized into full symmetry
orbits. We obtain the standard form used by our surgery constructions by
identifying a compatible partition of physical qubit orbits. For cyclic
symmetry, we also prove
that a surplus of qubit orbits over check orbits guarantees a corresponding
free logical sector. Appendix~\ref{app:symmetric-logical-measurements} uses
these coordinates to specify logical measurements and their surgery realizations.

\subsection{Symmetric CSS complexes}
\label{sec:symmetric-css-complexes}

A symmetry lets us describe an entire orbit of qubits or checks by one
coordinate over a group algebra. We introduce this description here, keeping
track of the physical basis in which sparsity and distance are measured.

A CSS code is represented by the chain complex
\begin{equation}
  C_\bullet:\qquad
  C_2 \xrightarrow{\ \bdry^C_2=H_X^{\mathsf T}\ } C_1
      \xrightarrow{\ \bdry^C_1=H_Z\ } C_0,
  \qquad
  \bdry^C_1\bdry^C_2=0.
  \label{eq:app-a-base-complex}
\end{equation}
The bases of $C_2$, $C_1$, and $C_0$ label the $X$-checks, physical qubits,
and $Z$-checks, respectively; the chain condition is the CSS commutation
relation~\cite{gottesman1997stabilizer,breuckmann2022ldpc}. We call this a
\emph{based} complex to emphasize these physical coordinates. In our
convention, $H_1(C_\bullet)$ is the $X$-logical space and $H^1(C_\bullet)$
is the $Z$-logical space. Throughout Appendices~A--C, PPM means an $X$-type
logical Pauli-product measurement unless stated otherwise; exchanging
$X$ and $Z$ gives the dual construction.

A finite group $\group$ is a symmetry of this description if it acts by
permutations on qubits and checks of each type and satisfies
\begin{equation}
  \bdry^C_i(g\cdot c)=g\cdot\bdry^C_i(c),
  \qquad g\in\group,\ c\in C_i,\ i\in\{1,2\}.
  \label{eq:boundary-equivariance}
\end{equation}
The boundary maps are then \emph{equivariant}.
Extending the group action linearly makes each $C_i$ a module over
$R=\Ftwo[\group]$: the element $f=\sum_g f_g g$ acts by
$f\cdot c=\sum_g f_g(g\cdot c)$. Equation~\eqref{eq:boundary-equivariance}
then says that the boundary maps are $R$-linear. Thus group-algebra
coefficients describe sums of translated couplings.

We assume that the group acts freely on the physical qubit and check
bases, so every orbit has exactly $|\group|$ distinct elements.
Then, checks and qubits can be divided into orbits, and boundary maps can be represented by matrices over $R$~\cite{breuckmann2021balanced}.
If $b_i$ is the number of orbits in degree $i$, the code becomes
\begin{equation}
  C_\bullet:\qquad
  R^{b_2}\xrightarrow{\ \bdry_2^C\ }R^{b_1}
  \xrightarrow{\ \bdry_1^C\ }R^{b_0}.
  \label{eq:free-module-presentation}
\end{equation}
For cyclic $\group=\langle x\rangle\cong\mathbb Z_P$, we have
$R\cong\Ftwo[x]/(x^P-1)$, and the boundary maps are matrices over this
commutative ring. Multiplication by $x$ shifts one position along an orbit;
for example, $1+x+x^3$ specifies couplings at three relative shifts.
Figure~\ref{fig:symmetry-orbit-compression} illustrates this representation.
Expanding the group-algebra coordinates recovers the original binary
matrices and physical Hamming weights. A general change of module basis
need not preserve those weights.

\begin{figure}[htbp]
\centering
\begin{tikzpicture}
  \def\cA{0}
  \def\cB{5.0}
  \def\cC{10.0}
  \def\rU{2.3}
  \def\rL{-2.3}

  \symstack{QA}{\cB}{\rU}{qcol}
  \symstack{QB}{\cB}{\rL}{qcol}
  \symstack{X}{\cA}{0}{xcol}
  \symstack{Z}{\cC}{0}{zcol}

  \node[symtile=shiftcol, symfront=shiftcol] at (X-1) {};
  \node[symtile=xcol, symfront=xcol] at (X-0) {};
  \node[symtile=qcol, symfront=qcol] at (QA-0) {};
  \node[symtile=qcol, symfront=qcol] at (QB-0) {};

  \foreach \s in {0,1,3}{
    \pgfmathtruncatemacro{\t}{\s+1}
    \draw[symtlnk, shiftcol] (X-1.east) -- (QA-\t.north west);
  }
  \foreach \s in {0,2,5}{
    \pgfmathtruncatemacro{\t}{\s+1}
    \draw[symtlnk, shiftcol] (X-1.east) -- (QB-\t.north west);
  }

  \foreach \s in {0,1}{
    \draw[symlnk, xcol] (X-0.east) -- (QA-\s.north west);
  }
  \draw[symlnk, xcol] (X-0.east) --
    node[symlab, text=xcol, pos=0.5, above=6pt, sloped] {$1+x+x^3$}
    (QA-3.north west);
  \foreach \s/\lb/\p in {0/$1$/0.88, 2/$x^2$/0.88, 5/$x^5$/0.88}{
    \draw[symlnk, xcol] (X-0.east) --
      node[symlab, text=xcol, pos=\p] {\lb} (QB-\s.north west);
  }

  \foreach \s/\lb/\p in {0/$1$/0.55, 2/$x^2$/0.66, 5/$x^5$/0.77}{
    \draw[symlnk, zcol] (QA-0.east) --
      node[symlab, text=zcol, pos=\p] {\lb} (Z-\s.north west);
  }
  \foreach \s in {1,3}{
    \draw[symlnk, zcol] (QB-0.east) -- (Z-\s.north west);
  }
  \draw[symlnk, zcol] (QB-0.east) --
    node[symlab, text=zcol, pos=0.5, below=1pt, sloped] {$1+x+x^3$}
    (Z-0.north west);

  \draw[-{Stealth[length=2.2mm,width=1.6mm]}, axiscol, line width=0.9pt]
    ($(X-0.center)+(-0.80,0.62)$) --
    node[symlab, text=axiscol, sloped, above, inner sep=1.5pt] {$x$}
    ($(X-0.center)+({-0.80+7*\symdx},{0.62+7*\symdy})$);

  \node[symblab, xcol, below=3mm of X-0] {$X$-checks};
  \node[symblab, qcol, above=3mm of QA-7] {qubit orbit $A$};
  \node[symblab, qcol, below=3mm of QB-0] {qubit orbit $B$};
  \node[symblab, zcol, below=3mm of Z-0] {$Z$-checks};

  \def\yC{-5.3}
  \def\yM{-4.1}
  \def\yD{-6.1}
  \def\kX{0.98}
  \def\kQ{5.98}
  \def\kZ{10.98}

  \draw[xcol!45, dotted, line width=0.7pt]
    (\kX,-1.35) -- (\kX,{\yC+0.45});
  \draw[qcol!45, dotted, line width=0.7pt]
    (\kQ,-3.65) -- (\kQ,{\yC+0.45});
  \draw[zcol!45, dotted, line width=0.7pt]
    (\kZ,-1.35) -- (\kZ,{\yC+0.45});

  \node[symmod=xcol] (Rx) at (\kX,\yC) {$R$};
  \node[symmod=qcol] (Rq) at (\kQ,\yC) {$R^2$};
  \node[symmod=zcol] (Rz) at (\kZ,\yC) {$R$};

  \draw[symlnk, black!70] (Rx) --
    node[font=\small, above=1pt] {$\bdry_2$} (Rq);
  \draw[symlnk, black!70] (Rq) --
    node[font=\small, above=1pt] {$\bdry_1$} (Rz);

  \node[font=\small] at ({(\kX+\kQ)/2},\yM)
    {$\bdry_2=\begin{pmatrix}1+x+x^3\\[2pt]1+x^2+x^5\end{pmatrix}$};
  \node[font=\small] at ({(\kQ+\kZ)/2},\yM)
    {$\bdry_1=\begin{pmatrix}1+x^2+x^5&1+x+x^3\end{pmatrix}$};

  \node[font=\scriptsize, text=black!60] at (\kX,\yD)
    {$\dim_{\Ftwo}=8$};
  \node[font=\scriptsize, text=black!60] at (\kQ,\yD)
    {$\dim_{\Ftwo}=16$};
  \node[font=\scriptsize, text=black!60] at (\kZ,\yD)
    {$\dim_{\Ftwo}=8$};
  \node[font=\small, anchor=north] at (\kQ,{\yD-0.55})
    {$R=\Ftwo[\mathbb Z_8]=\Ftwo[x]/(x^8-1)$};
\end{tikzpicture}
\caption{A $\mathbb Z_8$-symmetric CSS complex before and after orbit
compression.  Each stack contains one basis element for every group element,
and multiplication by $x$ moves one step through a stack.  The highlighted
$X$-check and its translate have the same support pattern.  The lower row
represents the same $16$-qubit complex using one $R$-coordinate per orbit.}
\label{fig:symmetry-orbit-compression}
\end{figure}

The symmetry also acts on logical operators: it preserves cycles and
stabilizers, so it descends to their quotient
\begin{equation}
  H_1(C_\bullet)=\ker\bdry_1^C/\im\bdry_2^C.
\end{equation}
This logical space is therefore also an $R$-module.
However, it need not be free, even when the physical action is free.
Translating a logical operator may return to the same operator up to
stabilizers before traversing the full physical orbit. More generally,
distinct translates need not be linearly independent as logical operators.
For our principal examples, however, almost all logical qubits can be
organized into full, independent logical orbits. The next subsection
develops coordinates for these orbits, which we then use to program
symmetric logical measurements.

\par\medskip
\subsection{Symmetry-adapted logical basis}
\label{sec:symmetry-adapted-logical-coordinates}

For the cyclic symmetries used below, a small set of logical seeds describes
an entire family of translated logical qubits. Write
\begin{equation}
  \group=\langle x\rangle\cong\mathbb Z_P,
  \qquad R=\Ftwo[x]/(x^P-1),
  \label{eq:cyclic-group-algebra}
\end{equation}
and choose paired $X$- and $Z$-type logical seeds with translates
\begin{equation}
  X_{i,j}=x^j\cdot X_{i,0},
  \qquad
  Z_{i,j}=x^j\cdot Z_{i,0},
  \qquad i=1,\ldots,\ell,\quad j\in\mathbb Z_P.
  \label{eq:free-logical-coordinates}
\end{equation}
Here $i$ labels an orbit and $j$ a position within it; the dot denotes
translation by the physical qubit permutation. We require the symplectic
pairing
\begin{equation}
  \bigl\langle X_{i,j},
  Z_{i^{\prime},k}\bigr\rangle
  =\delta_{i i^{\prime}}\delta_{jk}.
  \label{eq:paired-logical-orbit-commutation}
\end{equation}
Thus each $X$ operator anticommutes with its paired $Z$ and commutes with
all the others.
Translation preserves this pairing, so it suffices to check each $X$ seed
against the translates of the $Z$ seeds.

We call these operators a \emph{symmetry-adapted symplectic logical basis}
for the sector they span. 
Each seed generates a \emph{full logical orbit}
of $P$ independent translates, and each Pauli sector is isomorphic to
$R^\ell$. These orbits account for $\ell P$ logical qubits.
We take these paired full orbits as the \emph{active} logical subsystem
and their complement as the \emph{residual}
subsystem, writing
\begin{equation}
  H_1(C_\bullet)=\mathcal L_X\oplus\mathcal R_X,
  \qquad
  H^1(C_\bullet)=\mathcal L_Z\oplus\mathcal R_Z.
  \label{eq:active-residual-logical-sectors}
\end{equation}
Here $\mathcal L_X,\mathcal L_Z$ are the active sectors and
$\mathcal R_X,\mathcal R_Z$ are the residual sectors: the intended operation may
include additional measurements confined to these sectors, while preserving
unmeasured active logicals.

\paragraph*{Symmetric standard form.}

Our surgery gadgets use a standard form with explicit information and pivot
coordinates. This requires a common partition of the physical qubit orbits,
\begin{equation}
  \{1,\ldots,b_1\}=S_X\sqcup S_Z\sqcup L,
  \label{eq:compatible-pivot-partition}
\end{equation}
into $X$-stabilizer pivots, $Z$-stabilizer pivots, and information orbits.
This extends the information/pivot decomposition of Gottesman's binary
stabilizer standard form~\cite{gottesman1997stabilizer} while keeping whole
symmetry orbits together. For a compatible partition, restriction to $S_X$
respectively $S_Z$ detects every nonzero stabilizer of that type, and the
logical-cycle equations admit representatives
\begin{equation}
  G_X^{\mathrm{free}}=\begin{pmatrix}0&C_X&I_\ell\end{pmatrix},
  \qquad
  G_Z^{\mathrm{free}}=\begin{pmatrix}C_Z&0&I_\ell\end{pmatrix},
  \qquad \ell=|L|,
  \label{eq:symmetry-adapted-standard-form}
\end{equation}
with columns ordered as $(S_X\mid S_Z\mid L)$. Each row is a seed: it has
one information coordinate and a completion on the opposite stabilizer-pivot
orbits.
To construct the standard form, we start by identifying a set of valid pivots, and then solve for the $C_X$ and $C_Z$ completions via back-substitution.

\par\medskip
\subsection{Existence of a large free logical sector}
\label{sec:large-free-logical-sector}

Remarkably for cyclic symmetries, the surplus of qubit orbits over check orbits guarantees a corresponding number of full logical orbits.
All of the ultra-high-rate codes we study in this work achieve their high-rate via this mechanism, and the following proposition partially explains why we are also able to find symmetry-adapted symplectic logical bases for them.

\begin{proposition}[Free logical orbits from a check deficit]
\label{prop:cyclic-free-logical-sector}
Suppose $\mathbb Z_P$ acts freely on the qubits and listed checks of a CSS
code, with $b_1$ qubit orbits and $b_2,b_0$ check orbits of $X,Z$ type.
If $\ell_0=b_1-b_2-b_0\ge0$, the code admits at least $\ell_0$ paired
full logical orbits, accounting for $\ell_0P=n-m_X-m_Z$ logical qubits.
The listed checks may be dependent, and $P$ may be even.
\end{proposition}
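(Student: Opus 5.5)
The plan is to work entirely in the orbit-compressed presentation \eqref{eq:free-module-presentation}, $R^{b_2}\xrightarrow{\bdry_2}R^{b_1}\xrightarrow{\bdry_1}R^{b_0}$ with $R=\Ftwo[x]/(x^P-1)$. The goal is to show that $H_1(C_\bullet)$ contains a free $R$-summand of rank at least $\ell_0$, and that the dual complex supplies matching $Z$-seeds satisfying \eqref{eq:paired-logical-orbit-commutation}. No independence of the checks is assumed: the rank bound must hold for arbitrary $\bdry_2,\bdry_1$, and the counts $m_X=b_2P$, $m_Z=b_0P$ refer to listed checks.

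\emph{Step 1 (reduction to local Frobenius rings).} Write $P=2^e m$ with $m$ odd and factor $x^m-1=\prod_i f_i$ into distinct irreducibles over $\Ftwo$. Then $x^P-1=\prod_i f_i^{2^e}$, and the Chinese remainder theorem gives $R\cong\prod_i R_i$ with $R_i=\Ftwo[x]/(f_i^{2^e})$. Each $R_i$ is an Artinian local principal ideal ring, hence Frobenius (self-injective), and has a one-dimensional socle generated by $s_i=f_i^{2^e-1}$. The complex splits as a direct sum of complexes over the $R_i$. A module that is free of rank $\ge \ell_0$ over every factor $R_i$ contains a free $R$-module of rank $\ell_0$, so it suffices to prove the statement over each $R_i$; this local argument is the same whether $P$ is odd or even. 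Write $k_i=R_i/\mathfrak m_i$ for the residue field.

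\emph{Step 2 (free rank via the socle).} Over a local self-injective ring, every nonzero ideal contains the socle. Hence $s_i$ annihilates any cyclic module $R_i/I$ with $I\neq0$, and more generally any module without free summands. It follows that for a finitely generated module $M=R_i^{a}\oplus N$, with $N$ having no free summand, $s_iM\cong k_i^{a}$, so the free rank of $M$ equals $\dim_{k_i}(s_iM)$. I would apply this in two stages.

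First, take $Q=\operatorname{coker}\bdry_2$. Then $s_iQ=s_iR_i^{b_1}/s_i\im\bdry_2$, and since $s_iR_i^{b}\cong k_i^{b}$ with $\bdry_2$ acting through its reduction modulo $\mathfrak m_i$, this gives $\dim s_iQ\ge b_1-b_2$. So $Q\cong R_i^{a}\oplus N$ with $a\ge b_1-b_2$.

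Second, pass to $H_1=\ker(\bar\bdry_1:Q\to R_i^{b_0})$. Here self-injectivity is what I would use: $\operatorname{Hom}(-,R_i)$ is exact, so the map $R_i^{a}\to R_i^{b_0}$ obtained by restricting $\bar\bdry_1$ to the free part can be analysed through its dual, a map of free modules $R_i^{b_0}\to R_i^{a}$. Its cokernel has free rank $\ge a-b_0$ by the same socle count. Over a Frobenius ring the dual of a free summand is a free summand, so I expect this to transfer to a free summand of rank $\ge a-b_0\ge\ell_0$ in $\ker\bar\bdry_1\subseteq H_1$. I expect this second stage to be the main obstacle. The concern is that $N$ might interfere: the kernel of $\bar\bdry_1$ on $R_i^a\oplus N$ need not split along the decomposition. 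If the direct restriction fails, my first fallback is to replace $\bdry_2$ by a minimal presentation of $\im\bdry_2$, so that dependent checks are discarded up front. My second fallback is to run the dual argument on $H^1$ symmetrically and use Step 3 to match the two free parts.

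\emph{Step 3 (pairing and counting).} Using the involution $x\mapsto x^{-1}$, the transposed complex $R^{b_0}\xrightarrow{\bar\bdry_1^{\mathsf T}}R^{b_1}\xrightarrow{\bar\bdry_2^{\mathsf T}}R^{b_2}$ computes $H^1$ and has the same deficit $\ell_0$. The $\Ftwo$-valued form $\langle a,b\rangle=\epsilon(a\bar b)$, where $\epsilon$ extracts the identity coefficient, induces an $R$-sesquilinear pairing $H_1\times H^1\to R$ whose $\epsilon$-composite is perfect; this is the usual CSS duality, made equivariant. A free summand $R^{\ell_0}\subseteq H_1$ therefore has a dual free summand in $H^1$. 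Choosing the $R$-dual basis yields seeds $X_{i,0},Z_{i,0}$ whose translates satisfy \eqref{eq:paired-logical-orbit-commutation}, because $\epsilon(x^{j-k}\delta_{ii'})=\delta_{ii'}\delta_{jk}$. Finally, the count of logical qubits is $\ell_0P=(b_1-b_2-b_0)P=n-m_X-m_Z$, which follows directly since each orbit has exactly $P$ elements under a free action.
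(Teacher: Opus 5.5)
Your overall architecture can be made to work: split $R$ by the Chinese remainder theorem into local rings $R_i$, bound the free rank of $H_1$ from below, then use the Frobenius form $\epsilon$ to produce the $Z$ partners. Your Step~3 and the final count are sound. Step~3 is in fact a more explicit justification of the pairing than the paper's one-line appeal to dual coordinate vectors.

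\textbf{The gap is the counting in Step~2.} The identity $s_iQ=s_iR_i^{b_1}/s_i\im\partial_2$ is false. In general $s_iQ\cong s_iR_i^{b_1}/(s_iR_i^{b_1}\cap\im\partial_2)$, and $s_iR_i^{b_1}\cap\im\partial_2=\mathrm{soc}(\im\partial_2)$ can be larger than $s_i\im\partial_2$.

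\emph{Cyclic counterexample.} Take $P=2$, so $R=\Ftwo[x]/((1+x)^2)$ and $s=1+x$, with $\partial_2=(1+x)$ and $b_1=b_2=1$. Then $Q\cong\Ftwo$ and $sQ=0$, whereas your formula gives $sR/s^2R\cong\Ftwo$.

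\emph{Why this matters.} Nothing in your socle argument uses that $R_i$ is a principal ideal ring, and that is exactly where cyclicity must enter. Run the same computation over the local Frobenius ring $\Ftwo[\mathbb Z_2\times\mathbb Z_2]\cong\Ftwo[u,v]/(u^2,v^2)$, with $\partial_2=(u,v)^{\mathsf T}$, $b_1=2$, $b_2=1$, $b_0=0$. Your formula predicts free rank $2$, since $s\im\partial_2=uv\im\partial_2=0$. But $\im\partial_2$ contains $(uv,0)$ and $(0,uv)$, so $sQ=0$ and $H_1=Q$ has no free summand at all, even though $\ell_0=1$. So your reasoning would prove a false statement.

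\emph{How to repair it.} The bound $\dim_{k_i}s_iQ\ge b_1-b_2$ is still true over your chain rings, but it needs the structure theorem. Every finitely generated $R_i$-module is a sum of cyclics $R_i/(f_i^{a})$. Hence $\dim_{k_i}\mathrm{soc}(\im\partial_2)$ equals the minimal number of generators of $\im\partial_2$, which is at most $b_2$. Your second stage reuses the same faulty count and needs the same repair.

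\textbf{The obstacle you anticipate in the second stage is not real.} Your dualization would work, but it and the fallbacks are unnecessary. By the structure theorem (equivalently, Smith form over $R_i$), the restricted map $R_i^a\to R_i^{b_0}$ has a free submodule of rank $a-b_0$ in its kernel. That submodule lies in $\ker\bar\partial_1=H_1$. Because $R_i$ is self-injective, every free submodule is injective, hence a direct summand of $H_1$, regardless of how $N$ behaves.

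\textbf{Comparison with the paper.} Once repaired, your argument is a localized version of the paper's proof. The paper applies Smith normal form directly over $R$: lift to the PID $\Ftwo[x]$, diagonalize, and reduce mod $x^P-1$, so no CRT splitting is needed. It diagonalizes $\partial_1$, then the zero-boundary rows of $\partial_2$. It then reads off at least $\ell_0$ basis coordinates with neither incoming nor outgoing boundary, each a direct summand $R$ of the complex. Your Step~3 then supplies the paired $Z$ seeds cleanly.
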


\begin{proof}
Matrices over $R=\Ftwo[x]/(x^P-1)$ admit invertible basis changes that
diagonalize them: apply Smith reduction over $\Ftwo[x]$ and reduce modulo
$x^P-1$~\cite{nicholson2022algebra}.
First diagonalize $\bdry_1^C$. Since it has $b_0$ rows, at least $b_1-b_0$
middle coordinates have zero boundary. Now diagonalize the corresponding
rows of $\bdry_2^C$, changing basis only within this block and among the
$X$ checks. Since $\bdry_2^C$ has $b_2$ columns, at least
$b_1-b_0-b_2=\ell_0$ coordinates have neither incoming nor outgoing
boundaries. Each is a free copy of $R$, giving $P$ independent
$X$ logicals. Their binary dual coordinate vectors give the paired
$Z$ logicals, and translation cyclically permutes both bases.
\end{proof}

\ifdefined\appendixaonly
\else

\section{Symmetric logical measurements and surgery}
\label{app:symmetric-logical-measurements}

Using the logical coordinates of Appendix~\ref{app:symmetry-logical-coordinates},
we specify symmetric families of Pauli-product measurements and describe
how an auxiliary complex realizes them by surgery. We identify which logical
information the merged code measures and which survives. We then state the
sparsity, overhead, and distance requirements that guide the constructions in
Appendix~\ref{app:sparse-surgery-constructions}. All content in this section are
definitions and background for the constructions in the following appendices,
and are not original to this work.

\subsection{Symmetric logical Pauli-product measurements}
\label{sec:symmetric-logical-ppm}

We express the seed measurements of the main text in the logical coordinates
of Appendix~\ref{app:symmetry-logical-coordinates}.

\begin{definition}[Target measured logical subspace]
\label{def:measured-logical-subspace}
Let $Q\in R^{\ell\times q}$ be a matrix.
Let column $\alpha$ of $Q$ contain the coordinates of the $X$-type seed
$M_\alpha$ on the $\ell$ active logical orbits. For $q$ seeds, the
\emph{target measured logical subspace} is
\begin{equation}
  M:=\operatorname{im}_R Q
  \leq\mathcal L_X\cong R^\ell.
  \label{eq:logical-measurement-presentation}
\end{equation}
\end{definition}

Taking the column span over $R$ includes all translations and products of the seeds.
The number of independent target measurements is $\dim_{\Ftwo}M$; redundant seeds
or translates do not add independent measurements.
For measurements between multiple code blocks, take
$C_\bullet=\bigoplus_\nu C_\bullet^{(\nu)}$ with a common cyclic symmetry
action, and allow columns of $Q$ to involve logical orbits from different blocks.

Mixed-type products between blocks, such as logical $X\otimes Z$, fit the
same framework by choosing a fixed Pauli type in each block for all seeds.
For a block participating through $Z$ operators, exchange its $X$- and
$Z$-checks and work with the dual CSS complex $C_\bullet^\vee$: its homology
$H_1(C_\bullet^\vee)\cong H^1(C_\bullet)$ describes the original $Z$ logicals.
Taking the direct sum of the chosen original and dual complexes therefore
reduces the mixed-type measurement to the $X$-type construction above, whose
operators are then interpreted in the original Pauli labels. All translated
products commute because every block uses a common Pauli type.

\par\medskip
\subsection{Auxiliary complex and the merged code}
\label{sec:auxiliary-complex-mapping-cone}

This section explains what conditions an auxiliary complex and merged code should satisfy
to realize a given target measured logical subspace $M$.
We work within the formalism of quantum code surgery.
Adding physical representatives of the desired logical operators as new
$X$-checks directly implements the measurement~\cite{zheng2025high,cowtan2025parallel},
but these checks can have large weight. To obtain sparse checks, we introduce
ancillas through the surgery construction of
Refs.~\cite{cowtan2024css,ide2025homological,cowtan2025parallel}.

\paragraph*{The merged code.}
Attach an auxiliary complex over $R$,
\begin{equation}
  E_\bullet:\qquad
  E_1\xrightarrow{\ \bdry^E_1\ }E_0
  \xrightarrow{\ \bdry^E_0\ }E_{-1},
  \qquad
  \bdry^E_0\bdry^E_1=0.
  \label{eq:auxiliary-complex}
\end{equation}
Here $E_1$ labels new $X$-checks, $E_0$ ancilla qubits, and $E_{-1}$ new
$Z$-checks. The coupling $\gamma_1:E_1\to C_1$ specifies the support of the
new $X$-checks on the base qubits, while $\gamma_0:E_0\to C_0$ extends the
old $Z$-checks onto the ancillas. 
In main text we have the special case
$E_\bullet=P_\bullet$ with $E_{-1}=0$ and $\bdry_1^E=\partial_P$;
$E_\bullet$ also describes the enlarged auxiliaries obtained by thickening.

\begin{figure}[htbp]
\centering
\begin{tikzpicture}[cxdiag]
  \node[base] (C2) {$C_2$};
  \node[base, right=of C2] (C1) {$C_1$};
  \node[base, right=of C1] (C0) {$C_0$};

  \node[anc, below=of C2] (E1) {$E_1$};
  \node[anc, right=of E1] (E0) {$E_0$};
  \node[anc, right=of E0] (Em1) {$E_{-1}$};

  \node[draw=black!28, densely dashed, rounded corners=3pt,
        fit=(C2)(E1), inner sep=5pt,
        label={[font=\small]below:$F_2=C_2\oplus E_1$}] (F2box) {};
  \node[draw=black!28, densely dashed, rounded corners=3pt,
        fit=(C1)(E0), inner sep=5pt,
        label={[font=\small]below:$F_1=C_1\oplus E_0$}] (F1box) {};
  \node[draw=black!28, densely dashed, rounded corners=3pt,
        fit=(C0)(Em1), inner sep=5pt,
        label={[font=\small]below:$F_0=C_0\oplus E_{-1}$}] (F0box) {};

  \draw[ar, basecol] (C2) -- node[lab, above] {$\bdry^C_2$} (C1);
  \draw[ar, basecol] (C1) -- node[lab, above] {$\bdry^C_1$} (C0);
  \draw[ar, anccol] (E1) -- node[lab, above] {$\bdry^E_1$} (E0);
  \draw[ar, anccol] (E0) -- node[lab, above] {$\bdry^E_0$} (Em1);
  \draw[ar, mapcol] (E1) -- node[diaglab] {$\gamma_1$} (C1.south west);
  \draw[ar, mapcol] (E0) -- node[diaglab] {$\gamma_0$} (C0.south west);

  \node[rowlab, left=8mm of C2, basecol] {base code};
  \node[rowlab, left=8mm of E1, anccol] {auxiliary complex};
\end{tikzpicture}
\caption{Constructing the merged code by attaching an auxiliary complex. Blue nodes
belong to the base code, red nodes to the auxiliary complex, and green arrows
give the coupling map. Each dashed column is one space of the merged code's
CSS complex.
Thus $E_1$ labels new $X$-checks, $E_0$ labels ancilla qubits, and
$E_{-1}$ labels new $Z$-checks.}
\label{fig:auxiliary-mapping-cone}
\end{figure}

Grouping the spaces as in \cref{fig:auxiliary-mapping-cone} gives the merged
code, represented by the mapping cone $F_\bullet$ with boundary maps
\begin{equation}
  \bdry^F_2=
  \begin{pmatrix}
    \bdry^C_2&\gamma_1\\
    0&\bdry^E_1
  \end{pmatrix},
  \qquad
  \bdry^F_1=
  \begin{pmatrix}
    \bdry^C_1&\gamma_0\\
    0&\bdry^E_0
  \end{pmatrix}.
  \label{eq:merged-complex-boundaries}
\end{equation}
The checks commute precisely when the coupling satisfies the chain-map condition
\begin{equation}
  \bdry^C_1\gamma_1=\gamma_0\bdry^E_1.
  \label{eq:auxiliary-chain-map-condition}
\end{equation}
Taking $\bdry_1^C=H_Z$ and $\bdry_1^E=\partial_P$ recovers the
main-text condition $H_Z\gamma_1=\gamma_0\partial_P$. 
Working over $R$ automatically preserves the symmetry of the base code.

\FloatBarrier
\paragraph*{The logical action.}
To assess logical correctness, we must identify what the construction measures
and which logical information survives. The following criterion imposes the
target measurements, permits additional measurements confined to the residual
sector, and excludes new logical qubits introduced by the ancillas.
Write $\gamma_{*,j}:H_j(E_\bullet)\to H_j(C_\bullet)$ for the induced
maps on homology. Physically, for $j=1$, a cycle $z\in E_1$ specifies a
product of new $X$-checks whose ancilla support cancels
($\partial_1^E z=0$). The map $\gamma_{*,1}$ returns the base-code logical
class $[\gamma_1z]$ measured by that product; its image
$J=\im\gamma_{*,1}$ is therefore the total measured subspace.

For $j=0$, a class $[a]\in H_0(E_\bullet)$ represents an ancilla $X$ operator
that commutes with the new $Z$-checks labeled by $E_{-1}$
($\partial_0^E a=0$), modulo the ancilla supports of new $X$-checks.
These are only one set of merged-code $Z$-checks: the extended base-code
checks, labeled by $C_0$, also touch the ancillas and can have nonzero syndrome.
The full syndrome of the ancilla-only operator is
\[
  \partial_1^F(0,a)=(\gamma_0a,\partial_0^E a)=(\gamma_0a,0)
  \in C_0\oplus E_{-1}.
\]
The map $\gamma_{*,0}[a]=[\gamma_0a]$ records the first component modulo
syndromes producible by $X$ operators on the base qubits. Thus
$[\gamma_0a]=0$ means that there exists a base-qubit $X$ operator $x\in C_1$
with $\partial_1^C x=\gamma_0a$. Its support need not be local or low-weight.

\begin{definition}[Logical correctness]
\label{def:logical-correctness}
We call the realization \emph{logically faithful} to the active target $M$ if
the following two requirements hold:
\begin{equation}
  \begin{array}{@{}r@{\quad}l@{}c@{\quad}l@{}}
    \text{(1)} & \text{Measured subspace} & :
      &J=M\oplus M_{\mathrm{res}}
        \quad\text{for some }M_{\mathrm{res}}\subseteq\mathcal R_X.\\
    \text{(2)} & \text{No new logical qubits} & :
      &\gamma_{*,0}\text{ is injective}.
  \end{array}
  \label{eq:auxiliary-realizes-measurement}
\end{equation}
\end{definition}

Requirement (1) ensures that every target outcome can be recovered by
multiplying outcomes of the new $X$-check measurements, accounting for any
known base-code stabilizer signs, and that additional measurements act only
on residual logicals.
In particular, measuring an active operator multiplied by an unknown residual
operator is insufficient unless the active outcome can be recovered from the
full set of measured products.

Requirement (2) ensures that every merged $X$ logical is equivalent to one
supported entirely on the base qubits. To see this, let $(x,a)$ represent a
merged logical. Its zero syndrome gives
\[
  \partial_1^C x+\gamma_0a=0,
  \qquad \partial_0^E a=0,
\]
so $[a]\in\ker\gamma_{*,0}$. If $\gamma_{*,0}$ is injective, then $[a]=0$,
meaning $a=\partial_1^E z$ for some $z\in E_1$. Multiplying by the new
$X$-checks specified by $z$ adds the stabilizer
$(\gamma_1z,\partial_1^E z)$ and leaves $(x+\gamma_1z,0)$, removing all
ancilla support.

Conversely, if requirement (2) fails, choose a nonzero
$[a]\in\ker\gamma_{*,0}$. As explained above, a base-qubit operator $x$
can cancel its syndrome, so $(x,a)$ commutes with all merged-code checks. But $[a]\ne0$
means $a\notin\im\partial_1^E$, whereas every $X$ stabilizer has ancilla
support in $\im\partial_1^E$. Thus no stabilizer product can remove $a$:
this logical is additional to those inherited from the base code.

Under the criterion, all merged $X$ logicals are therefore inherited from the
base code, and precisely the $X$ logicals in $J$ become stabilizers.
The surviving $X$-logical space is the quotient by $J$; its binary dual is
the space of base-code $Z$ logicals commuting with $J$. Hence
\begin{equation}
  \begin{aligned}
    H_1(F_\bullet)&\cong H_1(C_\bullet)/J
      \cong(\mathcal L_X/M)\oplus(\mathcal R_X/M_{\mathrm{res}}),\\
    H^1(F_\bullet)&\cong J^\perp.
  \end{aligned}
  \label{eq:logically-faithful-realization}
\end{equation}
Here $J^\perp$ consists of all $Z$ logicals commuting with the total measured
subspace, including the residual measurements. Their representatives in the
merged code may require ancilla support to commute with the individual new
$X$-checks; restricting back to the base qubits identifies their original
logical classes. For example, measuring $X_1X_2$ on two logical qubits makes
$X_1$ and $X_2$ equivalent and leaves $Z_1Z_2$ as their surviving $Z$ partner.
These conditions are sufficient and simplify analysis, and the constructions 
we consider satisfy them. Other surgery protocols have been developed which do not 
require criteria (2)~\cite{cross2024improved, ide2025homological, zheng2025high}. 

\par\medskip
\subsection{Sparsity, overhead, and distance}
\label{sec:sparsity-overhead-distance}

Logical faithfulness specifies which logical information survives the merged
code, but it does not control the physical cost of the realization.  A useful
surgery gadget must also have sparse checks, bounded qubit degree, modest
ancilla overhead, and sufficient distance.  We now state these quantitative
requirements.  They will serve as the targets for the constructions in the
following appendices.

\paragraph*{Sparsity.}
We report maximum check weights and qubit degrees separately for $X$ and $Z$:
the number of qubits in a check and the number of checks of the specified type
incident on a qubit, respectively. Weights are counted after binary expansion.
For an LDPC family, these quantities remain bounded
as the code grows. By \cref{eq:merged-complex-boundaries}, sparse auxiliary
and coupling maps preserve this property when the base code is LDPC.

\paragraph*{Ancilla overhead.}
Let $n_C=\dim_{\Ftwo}C_1$ and $n_{\mathrm{anc}}=\dim_{\Ftwo}E_0$.
The relative overhead in code qubits and the number of independent target
measurements per added code qubit are
\begin{equation}
  \eta_q:=\frac{n_{\mathrm{anc}}}{n_C},
  \qquad
  \rho_{\mathrm{meas}}:=\frac{\dim_{\Ftwo}M}{n_{\mathrm{anc}}}.
  \label{eq:ancilla-overhead-and-measurement-rate}
\end{equation}
We also count one syndrome-extraction ancilla per measured check, so that
both added code qubits and added checks contribute to the physical cost.
With $N_C=\sum_{j=0}^2\dim_{\Ftwo}C_j$ the corresponding base-code cost,
\begin{equation}
  N_{\mathrm{anc}}:=\sum_{j=-1}^{1}\dim_{\Ftwo}E_j,
  \qquad
  \eta_{\mathrm{tot}}:=\frac{N_{\mathrm{anc}}}{N_C}.
  \label{eq:total-ancilla-overhead}
\end{equation}
Thus the total qubit count including check ancillas is
$N_F=N_C+N_{\mathrm{anc}}$.
Check counts include redundant generators when they are measured, whereas
$\dim_{\Ftwo}M$ counts only independent target measurements.

\paragraph*{Distance.}
The static stabilizer-code distances $d_X(F_\bullet)$ and $d_Z(F_\bullet)$
are the minimum physical weights of nontrivial $X$ and $Z$ logical operators;
$d(F_\bullet)=\min\{d_X(F_\bullet),d_Z(F_\bullet)\}$.
These minima include all surviving active and residual logicals.
For a logically faithful realization, every nontrivial merged $Z$ logical
restricts to a nontrivial base-code logical with no greater weight, giving
\begin{equation}
  d_Z(F_\bullet)\geq d_Z(C_\bullet)
  \label{eq:automatic-z-distance-bound}
\end{equation}
whenever logical qubits survive. However, the $X$-distance requires additional
protection because the ancillas can provide lighter representatives of
surviving logicals.
When evaluating our constructions we check the $X$-distance of the merged code numerically.
Appendix~\ref{app:sparse-surgery-constructions} develops
constructions that improve this distance while controlling sparsity and
overhead. Measurement rounds and fault distance are treated separately in
Appendix~\ref{app:efficient-fault-tolerance}.

\section{Sparse surgery constructions and convolution}
\label{app:sparse-surgery-constructions}

Once the target logical measurement as been specified (Appendix~\ref{app:symmetric-logical-measurements}), the next step is to construct a sparse auxiliary complex that facilitates the operation.
We start by programming the measurements in a starter complex, and then protect the surviving
logical information by repetition thickening and chain-map convolution.
We give their physical maps and resource counts, a sufficient condition for
transferring the surviving base-code distance to the merged code, and
explicit upper bounds on the distance achievable by scalar convolution.
The asymptotic two-factor construction and its squared-logarithmic spatial
overhead guarantee are given in Appendix~\ref{si-pk:sec:two-page-proof}.

The starter and thickening constructions build on homological measurement
and qLDPC surgery~\cite{cowtan2024css,ide2025homological,cross2024improved,
cowtan2025parallel,zheng2025high}. Our main additional ingredient is
\emph{chain-map convolution}: replacing interlayer identity maps by sums
of symmetry-translated couplings, inspired by code lifting~\cite{panteleev2022almostlinear,
panteleev2022asymptotically,guemard2024lifts,hirasaki2026lifting}.
The chain-map condition ensures commuting checks. Logical correctness and distance require the additional conditions stated below.

\subsection{Sparse starter complexes}
\label{sec:sparse-starter-complexes}

We construct a two-term starter complex
$P_1\xrightarrow{\partial_P}P_0$, with $P_{-1}=0$, and coupling maps
$\gamma_1:P_1\to C_1$ and $\gamma_0:P_0\to C_0$ satisfying the logical
correctness criterion of \cref{def:logical-correctness}, before we add the layers that
protect their distance. Physically, $P_1$ labels new $X$-checks
and $P_0$ labels added qubits. 
We first give the more efficient direct programming
construction illustrated in Fig.~\ref{fig:construction-pipeline}, then describe information-bit merging as a route for programming larger products while maintaining sparsity.

\subsubsection{Direct information-bit programming}
\label{sec:direct-information-bit-programming}
Work over $R=\mathbb F_2[x]/(x^P-1)$ and use the compatible standard-form
partition from \cref{eq:compatible-pivot-partition},
\[
 C_1=R^{S_X}\oplus R^{S_Z}\oplus R^L,
 \qquad
 H_Z=\begin{pmatrix}H_{Z,S_X}&H_{Z,S_Z}&H_{Z,L}\end{pmatrix}.
\]
The columns of $Q\in R^{\ell\times q}$ specify the measurement seeds in
the active logical coordinates, so that $M=\im_R Q$. Introduce one
information-check orbit per column of $Q$, retain all $Z$-pivot orbits,
and copy the base-code $Z$-check coordinates into $P_0$:
\begin{equation}
 \begin{gathered}
 P_1=R^{S_Z}\oplus R^q,\qquad P_0=C_0,\\[2pt]
 \gamma_1=
 \begin{pmatrix}0&0\\ I&0\\0&Q\end{pmatrix},
 \qquad \gamma_0=I,
 \qquad
 \partial_P=\begin{pmatrix}H_{Z,S_Z}&H_{Z,L}Q\end{pmatrix}.
 \end{gathered}
 \label{eq:starter-direct-matrices}
\end{equation}
Thus pivot checks couple to their matching data coordinates, while each
program check couples to the information coordinates selected by its column
of $Q$. The ancilla support of every new check is its base-code $Z$ syndrome:
$\partial_P=H_Z\gamma_1$. This ensures that the new $X$-checks commute
with the extended base-code $Z$-checks.

For example, the single column
\begin{equation}
 Q=\mathbf v=e_a+x^s e_b+x^t e_c
 \label{eq:starter-running-triple}
\end{equation}
programs the translated products $X_{a,j}X_{b,j+s}X_{c,j+t}$.
Its information-check coordinate couples to three data qubits per translate;
the pivot checks supply the remaining support when these checks are
multiplied to recover the logical measurement.

\paragraph*{Logical correctness.}
A product of new checks is specified by $(p,u)\in R^{S_Z}\oplus R^q$.
Its ancilla support cancels precisely when
\begin{equation}
 H_{Z,S_Z}p+H_{Z,L}Qu=0.
 \label{eq:starter-cycle-condition}
\end{equation}
Standard form provides a pivot completion $p$ for every information vector
$Qu$. The resulting base operator is $(0,p,Qu)$, so every target product
in $M$ is measured. Conversely, every such product has active logical
coordinates $Qu$. Choosing the completion from the active logical basis
gives the target operator itself; any difference between completions has
support only on $S_Z$ and hence has zero pairing with every active
$Z$ logical. Additional measured operators therefore lie entirely in the
residual sector, giving $J=M\oplus M_{\mathrm{res}}$ as required in
\cref{def:logical-correctness}.

The same pivot coordinates also prevent new logical qubits from being
introduced by the ancillas. Compatibility of the pivot partition implies
that the columns of $H_{Z,S_Z}$ span the full syndrome image of $H_Z$.
Since these columns are retained in $\partial_P$,
\begin{equation}
 \im\partial_P=\im H_Z.
 \label{eq:starter-syndrome-image}
\end{equation}
With $\gamma_0=I$, the induced map
$C_0/\im\partial_P\to C_0/\im H_Z$ is therefore injective, which is
the second logical correctness requirement. This argument allows dependent
base-code checks; it does not require $\partial_P$ to be surjective.

\paragraph*{Sparsity and overhead.}
The direct starter adds $\dim_{\mathbb F_2}C_0$ code qubits and
$P(|S_Z|+q)$ listed $X$-checks, with no new $Z$-checks. These counts can
be combined using the total qubit-plus-check convention of \cref{sec:sparsity-overhead-distance}.
After binary expansion, a column weight of $Q$ counts the logical factors
in a seed, while a row weight counts how many translated seeds involve a
given information qubit. Both must remain bounded to keep the interface
sparse. Together with bounded base-code check weights and qubit degrees,
these bounds also keep $\partial_P$ sparse. In particular, the construction
uses the sparse base-code checks without placing the potentially dense
logical completion matrices into the coupling maps.
However, every logical factor in a seed adds a data coupling to the same
program check. A cap on that check's degree (its physical check weight)
therefore limits the weight of the logical PPM that can be programmed
directly, with its ancilla support consuming part of the same budget.
This motivates information-bit merging, which distributes a larger product
across several checks connected by pairwise relations.

\subsubsection{Information-bit merging}
\label{sec:information-bit-merging}

In order to reach higher-weight logical pauli products, while maintaining sparsity, we need an alternate approach.
In this section, we show how to keep separate check coordinates for the factors in a
product and connect them by pairwise relations. For the triple in
Eq.~\eqref{eq:starter-running-triple}, let $w=(u_a,u_b,u_c)$ and set
\begin{equation}
 B=\begin{pmatrix}x^s&1&0\\0&x^{t-s}&1\end{pmatrix},
 \qquad
 Bw=0\ \Longleftrightarrow\ w=(u,x^su,x^tu).
 \label{eq:starter-triple-merge-boundary}
\end{equation}
Figure~\ref{fig:information-bit-merging} shows how the two relations distribute
the product across three checks.
\begin{figure}[htbp]
\centering
\begin{tikzpicture}[
  font=\small,
  data/.style={circle,draw=blue!65!black,fill=blue!7,minimum size=10mm},
  check/.style={rectangle,draw=black!75,fill=black!5,minimum size=10mm},
  ancilla/.style={circle,draw=violet!75!black,fill=violet!10,minimum size=9mm},
  block/.style={rectangle,rounded corners=2pt,draw,align=center,
    minimum width=23mm,minimum height=10mm,font=\footnotesize},
  coupling/.style={thick,draw=blue!65!black},
  syndrome/.style={thin,draw=black!40},
  relation/.style={thick,draw=violet!75!black},
  edgelabel/.style={fill=white,inner sep=1pt,font=\footnotesize}
]
\draw[rounded corners=3pt,draw=blue!25,fill=blue!2]
  (-0.65,1.15) rectangle (9.4,2.4);
\draw[rounded corners=3pt,draw=black!25,fill=black!2]
  (-0.65,-0.65) rectangle (9.4,0.65);
\draw[rounded corners=3pt,draw=violet!30,fill=violet!3]
  (-0.65,-2.9) rectangle (9.4,-1.4);
\node[anchor=east,align=right] at (-0.85,1.8)
  {$C_1$\\[-1pt]\scriptsize Data qubits};
\node[anchor=east,align=right] at (-0.85,0)
  {$P_1$\\[-1pt]\scriptsize New $X$-checks};
\node[anchor=east,align=right] at (-0.85,-2.1)
  {$P_0$\\[-1pt]\scriptsize Added qubits};
\node[data] (a) at (0,1.8) {$a$};
\node[data] (b) at (3,1.8) {$b$};
\node[data] (c) at (6,1.8) {$c$};
\node[block,draw=blue!65!black,fill=blue!7] (sz) at (8,1.8)
  {$Z$-pivot qubits\\$R^{S_Z}$};
\node[check] (ua) at (0,0) {$u_a$};
\node[check] (ub) at (3,0) {$u_b$};
\node[check] (uc) at (6,0) {$u_c$};
\node[block,fill=black!5] (piv) at (8,0)
  {Pivot checks\\$R^{S_Z}$};
\node[ancilla] (rab) at (1.5,-2) {$r_1$};
\node[ancilla] (rbc) at (4.5,-2) {$r_2$};
\node[block,draw=violet!75!black,fill=violet!10] (c0) at (8,-2.1)
  {Copied check\\coordinates $C_0$};
\draw[syndrome] (ua.south) -- (c0.north west);
\draw[syndrome] (ub.south) -- (c0.north west);
\draw[syndrome] (uc.south) -- (c0.north west);
\draw[syndrome] (piv.south) -- (c0.north);
\draw[coupling] (ua) -- node[left] {$1$} (a);
\draw[coupling] (ub) -- node[left] {$1$} (b);
\draw[coupling] (uc) -- node[left] {$1$} (c);
\draw[coupling] (piv) -- node[left] {$I$} (sz);
\draw[relation] (ua) -- node[edgelabel,pos=0.55] {$x^s$} (rab);
\draw[relation] (ub) -- node[edgelabel,pos=0.55] {$1$} (rab);
\draw[relation] (ub) -- node[edgelabel,pos=0.55] {$x^{t-s}$} (rbc);
\draw[relation] (uc) -- node[edgelabel,pos=0.55] {$1$} (rbc);
\node[font=\footnotesize] at (1.5,-2.65) {$u_b=x^su_a$};
\node[font=\footnotesize] at (4.5,-2.65) {$u_c=x^{t-s}u_b$};
\node[align=center] at (4.4,-3.55)
  {$P_0=C_0\oplus Rr_1\oplus Rr_2$\\[4pt]
   Measured product: $X_{a,j}X_{b,j+s}X_{c,j+t}$};
\end{tikzpicture}
\caption{Information-bit merging, grouped by chain degree. The $P_1$ row
contains the new $X$-checks; the $P_0$ row contains the added qubits.
Merging enlarges $P_0$ beyond a copy of $C_0$ by adding the two
relation-qubit orbits $r_1,r_2$. Their incident edges impose the relations
in $B$, while the upper edges give the data coupling $\gamma_1$.
Circles and squares represent individual qubit and check orbits;
rounded boxes collect the indicated coordinates. Gray lines schematically
show the couplings to the copied $C_0$ coordinates: $H_{Z,L}T$ for the
information checks and $H_{Z,S_Z}$ for the pivot checks. Each gray line
represents the corresponding syndrome support, rather than a single edge
of unit weight. The untouched $S_X$ data sector is omitted.}
\label{fig:information-bit-merging}
\end{figure}
\par
Let $T:R^3\to R^L$ place these coordinates on information orbits $a,b,c$.
The full starter is
\begin{equation}
 \begin{gathered}
 P_1=R^{S_Z}\oplus R^3,\qquad P_0=C_0\oplus R^2,\\[2pt]
 \gamma_1(p,w)=(0,p,Tw),\qquad \gamma_0(a,b)=a,\\[2pt]
 \partial_P=
 \begin{pmatrix}
 H_{Z,S_Z}&H_{Z,L}T\\
 0&B
 \end{pmatrix}.
 \end{gathered}
 \label{eq:starter-merging-matrices}
\end{equation}
Thus the two relation-qubit orbits enlarge $P_0$ from $C_0$ to
$C_0\oplus R^2$ and impose the pairwise relations. A cycle must
satisfy both $Bw=0$ and the base-code cycle equation, so it measures the
same triple, again allowing only additional residual measurements.
The chain condition follows directly from Eq.~\eqref{eq:starter-merging-matrices}.
To check the absence of new logical qubits, consider a class in
$\ker\gamma_{*,0}$, represented by $(a,b)$ with $a\in\im H_Z$.
Since $B$ is surjective, adding a boundary from the information checks
can cancel $b$. The remaining first component still lies in $\im H_Z$
and can be canceled using the retained pivot checks, as in
Eq.~\eqref{eq:starter-syndrome-image}. The original class is therefore zero.

For a $k$-factor seed, the analogous path uses $k$ information-check orbits
and $k-1$ relation-qubit orbits, compared with one information-check orbit
for direct programming. It distributes the product across two-body
relations at the cost of these additional resources. Direct programming is
therefore a natural choice for the bounded-weight products used here;
merging provides an alternative when the support of an individual program
check would otherwise be too large. Both starters fix the logical action;
protecting the surviving logicals against short ancilla-supported errors
is the purpose of the subsequent constructions.

\FloatBarrier

\subsection{Repetition thickening}
\label{sec:thickening-and-ckbb}
\label{sec:repetition-thickening}
\label{sec:ckbb-thickening}
\label{sec:starter-low-weight-logicals}

The starter complex measures the desired products, but can shorten surviving logical
operators.
A standard approach to remedy this issue is \textit{thickening}~\cite{hastings2021quantum,cohen2022low,williamson2026gauging,he2025extractors}, which creates copies of the starter complex and couples them with identity interlayer maps.
We briefly review this technique, which is simple and effective but incurs a large space overhead.

\paragraph*{The short-error mechanism.}
For the direct starter complex with $P_0=C_0$ and $\gamma_0=I$, a base-qubit $X$ error
$e$ can cancel its syndrome using the ancilla pattern $H_Ze$:
\begin{equation}
  \partial_1^F(e,H_Ze)=H_Ze+H_Ze=0.
  \label{eq:starter-short-cycle}
\end{equation}
To see when this is a logical operator, take $e$ on an information coordinate which anti-commutes with an unmeasured logical $Z$ operator.
Multiplying the merged operator by the new $X$ checks removes its ancilla support and gives
$(\bar e,0)$, a logical $X$-operator. It is therefore nontrivial in the surviving quotient, although its weight is only $|e|+|H_Ze|$. This is the mechanism behind the short
representatives discussed in the main text.

\paragraph*{The layered construction.}
Let $A_m:R^m\to R^{m-1}$ be the repetition map,
$(A_m v)_i=v_i+v_{i+1}$, and place its domain and codomain in degrees $0,-1$.
The tensor product auxiliary $E_\bullet=P_\bullet\otimes_R\mathsf{Rep}^{(m)}_\bullet$
has groups
\begin{equation}
  E_1=P_1^m,\qquad
  E_0=P_0^m\oplus P_1^{m-1},\qquad
  E_{-1}=P_0^{m-1}.
  \label{eq:ckbb-graded-spaces}
\end{equation}
As before, $E_1$ labels new $X$-checks, $E_0$ added qubits, and $E_{-1}$ new
$Z$-checks. The spatial depth $m$ counts primal--dual layer pairs (with the
last intermediate layer omitted) and $m=1$ corresponds to just the starter complex.

Write $z=(z_i)\in E_1$ and $(a,b)=((a_i),(b_i))\in E_0$. The boundaries are
\begin{equation}
  \begin{aligned}
    \partial_1^E z
      &=\bigl((\partial_P z_i)_{i=1}^m,
               (z_i+z_{i+1})_{i=1}^{m-1}\bigr),\\
    \partial_0^E(a,b)
      &=\bigl(a_i+a_{i+1}+\partial_P b_i\bigr)_{i=1}^{m-1}.
  \end{aligned}
  \label{eq:ckbb-explicit-boundaries}
\end{equation}
Only the first layer attaches to the base:
$\gamma_1^{\mathrm{rep}}(z)=\gamma_1z_1$ and
$\gamma_0^{\mathrm{rep}}(a,b)=\gamma_0a_1$.
The equations make both the chain condition and the neighboring-layer
couplings explicit.

\paragraph*{Logical action and cost.}
A product of new $X$-checks with no ancilla support has
$z_1=\cdots=z_m$ and $\partial_Pz_1=0$, so it measures exactly a starter
cycle. Projection onto the attached layer also identifies
$H_0(E_\bullet)$ with $H_0(P_\bullet)$: the intermediate components can be
removed successively using the identity links. Both induced coupling maps
therefore agree with those of the starter, preserving the two requirements
of \cref{def:logical-correctness}, including when the listed checks are
dependent.

Put $n_P=\dim_{\mathbb F_2}P_0$ and $c_P=\dim_{\mathbb F_2}P_1$.
There are $mc_P$ new $X$-checks and $(m-1)n_P$ new $Z$-checks. The added
code-qubit count and the total added qubit-plus-check count are
\begin{equation}
  n_{\mathrm{anc}}=mn_P+(m-1)c_P,
  \qquad N_{\mathrm{anc}}=(2m-1)(n_P+c_P).
  \label{eq:repetition-resource-counts}
\end{equation}
Every layer meets only its neighbors, so check weights and qubit degrees
remain bounded as $m$ grows.

For the direct starter, the short logical above extends to
\begin{equation}
  \bigl(e;\underbrace{H_Ze,\ldots,H_Ze}_{m\text{ copies}};0\bigr),
  \qquad \text{weight}=|e|+m|H_Ze|,
  \label{eq:repetition-short-logical}
\end{equation}
where the final zero denotes the intermediate $P_1$ qubits. The neighboring
syndromes cancel, and multiplying by the pivot checks in every layer again
recovers $(\bar e,0)$. This is an upper-bound witness for the distance.
A lower bound must also control errors on the intermediate qubits, as shown
in \cref{sec:construction-distance-transfer}.

\subsection{Chain-map convolution}
\label{sec:chain-complex-convolution}

Convolution changes the coupling maps between layers.
For a fixed set of layers, this changes the checks without adding qubits.
The simplest modification replaces a repetition link $(1\;1)$ by
$(f\;g)$ with $f,g\in R$. In \cref{eq:ckbb-explicit-boundaries}, the two
corresponding terms become
\begin{equation}
  z_i+z_{i+1}\ \longrightarrow\ f_i z_i+g_i z_{i+1},
  \qquad
  a_i+a_{i+1}\ \longrightarrow\ f_i a_i+g_i a_{i+1}.
  \label{eq:scalar-convolution-links}
\end{equation}
Number the starter layers away from the base, beginning with the attached
layer $1$. Relation $i$ joins layers $i$ and $i+1$: $f_i$ multiplies the
couplings from layer $i$ into that relation, while $g_i$ multiplies those
from layer $i+1$. Each replacement acts on both the $P_1$ and $P_0$
coordinates. In Fig.~\ref{fig:construction-pipeline}(b), these are the paired
diagonal $f$ arrows from the nearer layer and the paired diagonal $g$ arrows
from the farther layer. The horizontal $P_1\xrightarrow{\partial_P}P_0$
maps remain unchanged. Using the same polynomial on the two diagonals of
each pair ensures commutation with $\partial_P$, so the modified boundaries
still compose to zero.

Each nonzero coefficient of $f_i$ or $g_i$ adds a translated coupling.
For example, $1+x+x^3$ connects positions shifted by $0,1,3$ within the
same orbit. The reported weights $(|f_i|,|g_i|)$ count these coefficients.
The qubit and check counts in \cref{eq:repetition-resource-counts} remain
unchanged, while the interlayer incidence grows with these weights. Bounded
polynomial weights therefore preserve LDPC sparsity, subject to the desired
physical degree cap.

\paragraph*{Preserving the measurement.}
Commuting checks are necessary but do not alone guarantee the intended
logical action. A useful sufficient condition for convolved repetition attached at its
first layer is that every outward coefficient $g_i$ is a unit in $R$.
Then every starter cycle extends uniquely through the layers by
\begin{equation}
  z_{i+1}=g_i^{-1}f_i z_i.
  \label{eq:convolution-cycle-extension}
\end{equation}
Successive elimination using $g_i^{-1}$ also identifies the degree-zero
homology with that of the starter. Evaluation at the first layer thus
preserves the total measured subspace $J$ and the injectivity of
$\gamma_{*,0}$. Here $g_i$ is a unit precisely when it is coprime to
$x^P-1$. Inverses are used only for this logical argument; the physical
couplings are the sparse polynomials $f_i,g_i$.
For other choices, the two requirements of \cref{def:logical-correctness}
must be verified directly. No invertibility condition by itself guarantees
an improvement in Hamming distance.

\paragraph*{Matrix-valued couplings.}
To mix different orbits, replace a scalar arm by maps $S_1:P_1\to P_1$
and $S_0:P_0\to P_0$ satisfying
\begin{equation}
  S_0\partial_P=\partial_P S_1.
  \label{eq:general-convolution-square}
\end{equation}
Use $S_1$ on the check coordinates and $S_0$ on the qubit coordinates of
the same arm. This again preserves commutation. Scalar convolution is the
special case $S_j=fI_{P_j}$. For a subgroup $H\subseteq G$, sparse matrices
over $\mathbb F_2[H]$ preserve the $H$ symmetry while mixing its orbit
coordinates; binary matrices permit couplings without a prescribed symmetry.
Logical correctness is checked for the assembled auxiliary, and row and
column weights set the physical incidence budgets. Ancilla addition or check
splitting, when used alongside these replacements, contributes separately
to the overhead.

\subsubsection{Scalar convolution as a lifted product}
\label{sec:scalar-convolution-lifted-product}
\label{sec:general-transverse-factors}

Scalar convolution has a direct interpretation in terms of a classical
quasi-cyclic (QC) code. Let $m$ be the number of starter layers and $s$
the number of transverse parity-check relations. Assemble their couplings
into $H(x)\in R^{s\times m}$, where $R=\mathbb F_2[x]/(x^P-1)$:
columns label layers and rows label relations, which need not be independent.
For repetition and its scalar-convolved version, $s=m-1$; a general transverse
QC factor can have a different number of relations. Replacing
$x$ by the cyclic shift $T:e_j\mapsto e_{j+1\bmod P}$ gives a binary
parity-check matrix $\widehat H=H(T)$ and a classical code
$\mathcal C_H=\ker\widehat H$ of length $mP$. For example, three layers
with scalar-convolved links give
\[
 H(x)=\begin{pmatrix}f_1&g_1&0\\0&f_2&g_2\end{pmatrix}.
\]
Repetition is the choice $H=A_m$; a general sparse QC matrix permits other
transverse relations within the same construction.

Set $D=\partial_P$ and regard $H$ as the two-term complex
$K_\bullet=(R^m\xrightarrow{H}R^s)$ in degrees $0,-1$. The auxiliary is
\begin{equation}
 E_\bullet=P_\bullet\otimes_R K_\bullet,
 \qquad E_1=P_1^m,\quad E_0=P_0^m\oplus P_1^s,\quad E_{-1}=P_0^s.
 \label{eq:general-transverse-total-complex}
\end{equation}
This is the QC lifted product of the starter and the classical check
complex~\cite[Secs.~III.D--III.E]{panteleev2022almostlinear}. The product is
formed over $R$ before binary expansion: both factors share one cyclic
coordinate, so its size contributes a factor $P$, rather than the $P^2$
of an ordinary binary tensor product of the separately expanded factors.
The diagram shows a small QC factor and recalls how its couplings act in
both starter degrees.

\begin{center}
\begin{tikzpicture}[
 font=\footnotesize,
 bit/.style={circle,draw=basecol,fill=basecol!7,minimum size=5mm,inner sep=1pt},
 parity/.style={rectangle,draw=mapcol,fill=mapcol!8,minimum size=4.5mm,inner sep=1pt},
 coord/.style={rounded corners=2pt,draw,align=center,
               minimum width=21mm,minimum height=9mm,inner sep=2pt},
 xcoord/.style={coord,draw=anccol,fill=anccol!7},
 qcoord/.style={coord,draw=basecol,fill=basecol!7},
 zcoord/.style={coord,draw=mapcol,fill=mapcol!7},
 arr/.style={-{Stealth[length=1.7mm]},semithick},
 lab/.style={fill=white,inner sep=1pt},
 plain/.style={draw=black!45},
 shifted/.style={draw=shiftcol,semithick,dashed}
]
\node[font=\small] at (2,2.05) {(a) QC factor: $H=(1+x\;\;1)$, $P=5$};
\foreach \j in {0,...,4}{
 \node[bit] (u\j) at (\j,0.9) {$u_{\j}$};
 \node[parity] (r\j) at (\j,0) {$r_{\j}$};
 \node[bit] (v\j) at (\j,-0.9) {$v_{\j}$};
 \draw[plain] (u\j) -- (r\j);
 \draw[plain] (v\j) -- (r\j);
}
\foreach \j [evaluate=\j as \k using {int(\j+1)}] in {0,...,3}{
 \draw[shifted] (u\j) -- (r\k);
}
\draw[shifted] (u4.north) to[out=115,in=150,looseness=1.15] (r0.west);
\node[align=center] at (2,-1.55)
 {Solid: $1$; dashed: $x$\\$r_j=u_j+u_{j-1}+v_j$ (indices mod $5$)};

\node[font=\small] at (9.4,2.05) {(b) Product with the starter $D$};
\node at (6.5,1.4) {$E_1=P_1^m$};
\node at (9.4,1.4) {$E_0=P_0^m\oplus P_1^s$};
\node at (12.3,1.4) {$E_{-1}=P_0^s$};
\node[xcoord] (z) at (6.5,0)
 {$z_i\in P_1$\\New $X$-checks};
\node[qcoord] (a) at (9.4,0.65)
 {$a_i\in P_0$\\Layer qubits};
\node[qcoord] (b) at (9.4,-0.65)
 {$b_\alpha\in P_1$\\Relation qubits};
\node[zcoord] (c) at (12.3,0)
 {$P_0$ at $\alpha$\\New $Z$-checks};
\draw[arr,draw=anccol] (z) -- node[lab,above] {$D$} (a);
\draw[arr,draw=anccol] (z) -- node[lab,below] {$H_{\alpha i}$} (b);
\draw[arr,draw=mapcol] (a) -- node[lab,above] {$H_{\alpha i}$} (c);
\draw[arr,draw=mapcol] (b) -- node[lab,below] {$D$} (c);
\node[align=center] at (9.4,-1.55)
 {Layer $i=1,\ldots,m$; relation $\alpha=1,\ldots,s$\\
 The same polynomial $H_{\alpha i}$ acts on both degrees.};
\end{tikzpicture}
\end{center}

Writing $z=(z_i)$ and $(a,b)=((a_i),(b_\alpha))$, the boundaries are
\begin{equation}
 \begin{aligned}
 \partial_1^E&=\begin{pmatrix}I_m\otimes D\\H\otimes I_{P_1}\end{pmatrix},
 &\partial_1^E z&=\bigl((Dz_i)_i,(\textstyle\sum_iH_{\alpha i}z_i)_\alpha\bigr),\\
 \partial_0^E&=\begin{pmatrix}H\otimes I_{P_0}&I_s\otimes D\end{pmatrix},
 &\partial_0^E(a,b)&=\bigl(\textstyle\sum_iH_{\alpha i}a_i+Db_\alpha\bigr)_\alpha.
 \end{aligned}
 \label{eq:general-transverse-boundaries}
\end{equation}
The two compositions cancel because $D$ commutes with the entries of $H$.
Thus scalar chain-map convolution implements the lifted product directly;
bounded binary row and column weights of $D$ and $H$ preserve sparsity.
There are $mn_P+sc_P$ added qubits. Attachment, for example
$\gamma_1^E(z)=\gamma_1z_1$ and $\gamma_0^E(a,b)=\gamma_0a_1$, must still
satisfy the logical correctness conditions. The classical distance of
$\mathcal C_H$ alone does not establish a distance bound for the merged code.
Crucially, although the scalar-convolved ancilla complex is a lifted-product
code, the bare code to which it attaches need not be, extending the use of
lifted-product ancillas beyond the lifted-product data-code setting of prior
surgery constructions~\cite{zheng2026logical,hirasaki2026lifting,zhou2026parallel}.

\subsection{Distance bounds and guarantees}
\label{sec:construction-distance-transfer}

The relevant distance is the static stabilizer-code distance of the complete
merged code, with binary Hamming weight on $C_1\oplus E_0$ and all surviving
active and residual logicals counted. Logical faithfulness already gives the
$Z$-distance bound in \cref{eq:automatic-z-distance-bound}. The remaining
issue is whether a surviving $X$ logical can become short by using the
ancillas.

Write $J=\im\gamma_{*,1}$ and define the surviving base-code distance by
\begin{equation}
 d_X(C_\bullet;J)
   =\min\{\,|x|:x\in\ker\partial_1^C,\ [x]\notin J\,\}.
 \label{eq:surviving-base-x-distance}
\end{equation}
This is at least $d_X(C_\bullet)$; the minimum is infinite if no logical
class survives. A lower bound follows if removing small ancilla support
requires only a controlled correction on the base qubits.

\begin{proposition}[Distance from bounded ancilla removal]
\label{prop:static-distance-repair}
Suppose the auxiliary is logically faithful. Assume that for every
$a\in\im\partial_1^E$ with $|a|<\Delta$, there are $z\in E_1$ and
$c\in C_2$ such that
\begin{equation}
 \partial_1^Ez=a,
 \qquad |\gamma_1z+\partial_2^Cc|\le\lambda|a|,
 \qquad \lambda\ge1.
 \label{eq:static-ancilla-removal}
\end{equation}
Then
\begin{equation}
 d_X(F_\bullet)\ge
 \min\left\{\Delta,\frac{d_X(C_\bullet;J)}{\lambda}\right\}.
 \label{eq:static-distance-transfer}
\end{equation}
\end{proposition}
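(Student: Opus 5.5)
We take a nontrivial merged $X$ logical $(x,a)\in C_1\oplus E_0$ and assume $|(x,a)|=|x|+|a|<\Delta$. Under that assumption we show its weight is at least $d_X(C_\bullet;J)/\lambda$. Otherwise $|(x,a)|\ge\Delta$ and there is nothing to prove. The strategy is to use the removal hypothesis to push the ancilla support onto the base qubits at cost $\lambda|a|$. The result is a base-code cycle that still represents a surviving class, so its weight is at least $d_X(C_\bullet;J)$.

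\textbf{Step 1: $a$ is a boundary.} Write the syndrome conditions as $\partial_1^C x+\gamma_0 a=0$ and $\partial_0^E a=0$. Then $[a]\in\ker\gamma_{*,0}$, and logical faithfulness (requirement (2) of \cref{def:logical-correctness}) gives $a\in\im\partial_1^E$, exactly as in the discussion following that definition. Since $|a|\le|(x,a)|<\Delta$, the hypothesis applies and yields $z\in E_1$ and $c\in C_2$ with $\partial_1^E z=a$ and $|\gamma_1 z+\partial_2^C c|\le\lambda|a|$.

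\textbf{Step 2: remove the ancilla support.} Add the merged stabilizer $\partial_2^F(c,z)=(\partial_2^C c+\gamma_1 z,\ \partial_1^E z)$ to $(x,a)$. This gives the equivalent representative $(x',0)$ with $x'=x+\gamma_1 z+\partial_2^C c$. It is a cycle of $F_\bullet$ with no ancilla part, so $x'\in\ker\partial_1^C$. The triangle inequality gives
\[
|x'|\le|x|+\lambda|a|\le\lambda(|x|+|a|),
\]
where the last step uses $\lambda\ge1$.

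\textbf{Step 3: $x'$ is a surviving class.} We need $[x']\notin J$ in $H_1(C_\bullet)$. Recall the isomorphism $H_1(F_\bullet)\cong H_1(C_\bullet)/J$ of \cref{eq:logically-faithful-realization}, induced by $(x',0)\mapsto[x']$ on base-supported representatives. Under it, $(x',0)$ is a merged stabilizer if and only if $[x']\in J$. To check this directly, suppose $(x',0)=\partial_2^F(c',z')$. Then $\partial_1^E z'=0$, so $z'$ is a cycle of $E_\bullet$, and $x'=\partial_2^C c'+\gamma_1 z'$ has class $\gamma_{*,1}[z']\in J$. Conversely, any $[x']\in J$ lifts to such a stabilizer. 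Since $(x,a)$ is nontrivial and equivalent to $(x',0)$, we conclude $[x']\notin J$. Hence $|x'|\ge d_X(C_\bullet;J)$, and therefore $|(x,a)|\ge d_X(C_\bullet;J)/\lambda$. Combining with the case $|(x,a)|\ge\Delta$ gives \cref{eq:static-distance-transfer}.

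The main care point is Step 3: identifying merged-stabilizer triviality with membership in $J$ via the mapping-cone description. Everything else is bookkeeping. The remaining subtlety is that the hypothesis only needs $|a|<\Delta$, which follows from the total weight bound.
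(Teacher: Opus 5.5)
Your proposal is correct and takes essentially the same route as the paper. The paper also uses logical faithfulness to put $a\in\im\partial_1^E$, adds the stabilizer $\partial_2^F(c,z)$ to obtain a base cycle $x'$ with $|x'|\le|x|+\lambda|a|\le\lambda(|x|+|a|)$, and concludes from $[x']\notin J$. Your Step~3 spells out, via the mapping cone, the fact that $(x',0)$ is a stabilizer exactly when $[x']\in J$, which the paper states in one sentence.
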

\begin{proof}
Let $(x,a)$ be a nontrivial merged-code $X$ logical of weight
$w=|x|+|a|<\Delta$. Logical faithfulness implies
$a\in\im\partial_1^E$. Adding the stabilizers specified by $z,c$ removes
the ancilla support, leaving the base cycle
$x'=x+\gamma_1z+\partial_2^Cc$. Its class lies outside $J$, since the
original logical was nontrivial. Consequently,
\[
 d_X(C_\bullet;J)\le |x'|\le |x|+\lambda|a|\le\lambda w.
\]
A nontrivial logical therefore has weight at least $\Delta$ or at least
$d_X(C_\bullet;J)/\lambda$.
\end{proof}

The condition in \cref{eq:static-ancilla-removal} bounds the physical cost
of cleaning the ancillas; it is often called a relative cofilling bound.
It permits dressing by base stabilizers and measured logicals through the
choices of $c$ and $z$. Logical faithfulness ensures that a removal exists,
but does not bound its weight. For convolution, establishing a useful
$\Delta$ and $\lambda$ is the additional distance argument; bounded degree
and a large classical kernel distance alone do not supply it.

\subsubsection{Distance lower bounds for repetition}
\label{sec:repetition-distance-lower-bounds}

\paragraph*{A lower bound for repetition.}
For the repeated construction, let $\lambda_\gamma$ be the larger of one
and the maximum binary column weight of the starter coupling $\gamma_1$.
The removal condition holds with $\Delta=m$ and
$\lambda=\lambda_\gamma$. Indeed, an ancilla pattern $(a,b)$ of total
weight less than $m$ has some empty $P_0$ layer, say $a_k=0$. Set $z_k=0$
and solve $z_i+z_{i+1}=b_i$ in both directions. The cycle equations in
\cref{eq:ckbb-explicit-boundaries} then give $\partial_Pz_i=a_i$, so these
checks remove the entire ancilla pattern. Their base action satisfies
\[
 |\gamma_1z_1|
   =\left|\gamma_1\sum_{i<k}b_i\right|
   \le\lambda_\gamma\sum_i|b_i|.
\]
Thus
\begin{equation}
 d_X(F_\bullet^{\mathrm{rep}})\ge
 \min\left\{m,\frac{d_X(C_\bullet;J)}{\lambda_\gamma}\right\}.
 \label{eq:repetition-distance-lower-bound}
\end{equation}
This conservative lower bound applies to either starter in
\cref{sec:sparse-starter-complexes}. The explicit
witness in \cref{eq:repetition-short-logical} supplies an upper bound for
the direct starter; the two need not coincide. More generally, every surviving
base-code $X$ logical is also a merged-code representative, giving
$d_X(F_\bullet)\le d_X(C_\bullet;J)$.

\subsubsection{Distance upper bounds for convolved complexes}
\label{sec:convolution-distance-upper-bounds}

As discussed in \cref{sec:scalar-convolution-lifted-product}, the
scalar-convolved ancilla complex is the lifted product
$E_\bullet=P_\bullet\otimes_R K_\bullet$ of the starter complex $P_\bullet$
and the classical QC check complex
$K_\bullet=(R^m\xrightarrow{H}R^s)$, with $m$ starter layers and $s$
transverse relations ($s=m-1$ for convolved repetition). Its binary code is
$\mathcal C_H=\ker\widehat H$. Here we show how a short logical operator
of the starter merged code can be spread across the ancilla layers using a
classical codeword, giving an upper bound on the distance boosting that
convolution can achieve.

The physical qubits of the merged code split into three groups:
\[
 F_1=
 \underbrace{C_1}_{\text{base qubits}}\ \oplus\
 \underbrace{P_0^m}_{\text{layer qubits}}\ \oplus\
 \underbrace{P_1^s}_{\text{relation qubits}}.
\]
We write their $X$-operator support as $(x;(a_i);(b_\alpha))$, with
$x\in C_1$, $a_i\in P_0$, and $b_\alpha\in P_1$. We attach the auxiliary
to the base at layer one and take all maps to be $R$-linear.
Write a classical codeword as
\[
 q=(q_1,\ldots,q_m)^{\mathsf T}\in K_0=R^m,
 \qquad q_i\in R,\qquad Hq=0.
\]
Each polynomial component $q_i$ specifies the cyclic shifts applied in
layer $i$; in particular, $q_1$ is the component at the layer attached to
the base. Expanding all $m$ components into their $P$ binary coefficients
gives a word in $\mathcal C_H$. Its total binary weight is
$|q|=\sum_{i=1}^m|q_i|$; all other weights below are also binary.

\begin{proposition}[A logical operator from a starter logical and a classical word]
\label{prop:convolution-product-upper-bound}
Let $(e,a)\in C_1\oplus P_0$ be a nontrivial $X$ logical operator of the
starter merged code. For any classical word $q$ with $Hq=0$ and $q_1=1$,
\[
 \omega=(e;q_1a,\ldots,q_ma;0)
       \in C_1\oplus P_0^m\oplus P_1^s
\]
is a nontrivial $X$ logical operator of the convolved merged code, and
\begin{equation}
 d_X(F_\bullet)\leq |\omega|
       =|e|+\sum_{i=1}^m|q_i a|\leq |e|+|q|\,|a|.
 \label{eq:convolution-product-upper-bound}
\end{equation}
\end{proposition}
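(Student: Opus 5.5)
The proof has two parts: checking that $\omega$ has zero syndrome in the convolved merged code, and checking that it is not a stabilizer. The weight bound is then immediate.

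\emph{Zero syndrome.} The starter merged code has $P_{-1}=0$, so a starter logical $(e,a)$ satisfies only the base $Z$-check condition $\partial_1^Ce+\gamma_0a=0$. In the convolved code there are two families of $Z$-checks, and I would evaluate $\partial_1^F$ on $\omega$ against each using \cref{eq:general-transverse-boundaries}.
\begin{itemize}
\item Only layer one attaches to the base, with $\gamma_0^E(a,b)=\gamma_0a_1$. The base component of the syndrome is therefore $\partial_1^Ce+\gamma_0(q_1a)=\partial_1^Ce+\gamma_0a=0$, using $q_1=1$.
\item On the new $Z$-checks $E_{-1}=P_0^s$, with all relation-qubit components $b_\alpha=0$, the syndrome is $\bigl(\sum_iH_{\alpha i}q_ia\bigr)_\alpha=\bigl((Hq)_\alpha a\bigr)_\alpha=0$.
\end{itemize}
The second computation uses that $R$ is commutative and that the scalar $H_{\alpha i}q_i$ acts on the $R$-module $P_0$. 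Note that no condition on $\partial_Pa$ is needed, because the relation qubits are set to zero.

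\emph{Nontriviality.} This is the step needing care. I would use the criterion established earlier: under logical faithfulness, every merged $X$ logical is equivalent to a base-supported one. Logical faithfulness of the starter is given, and faithfulness of the convolved code holds under the unit condition on the $g_i$ (\cref{eq:convolution-cycle-extension}), or must otherwise be assumed.

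Since $(e,a)$ is a starter logical, $a\in\ker\gamma_{*,0}$ in the starter, so faithfulness gives $a=\partial_Pz$ for some $z\in P_1$. Take the convolved stabilizer generated by $\tilde z=(q_1z,\ldots,q_mz)\in E_1$. Its boundary is
\[
\partial_1^E\tilde z=\bigl((q_ia)_i,\ (Hq)_\alpha z\bigr)=\bigl((q_ia)_i,\,0\bigr),
\]
and its base action is $\gamma_1(q_1z)=\gamma_1z$. Adding this stabilizer to $\omega$ therefore leaves $(e+\gamma_1z;0;0)$. The same $z$ reduces the starter logical $(e,a)$ to $(e+\gamma_1z,0)$, which is nontrivial, so its class lies outside the starter's $J$.

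It remains to compare measured subspaces. The convolved $J$ equals the starter's $J$, because cycles of the auxiliary evaluated at layer one are starter cycles. Indeed, any $E_1$-cycle $\tilde w$ satisfies $\partial_Pw_1=0$, so $\gamma_1w_1$ lies in the starter's $J$; conversely $(z,gz)$-type extensions realize every starter class. The inclusion $J_{\mathrm{conv}}\subseteq J_{\mathrm{starter}}$ is the only direction required. Hence $[e+\gamma_1z]\notin J_{\mathrm{conv}}$, and so $\omega$ is a nontrivial merged logical. The main obstacle is precisely this inclusion, together with the faithfulness hypothesis for the convolved code.

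\emph{Weight.} The three blocks of $\omega$ have disjoint supports, so $|\omega|=|e|+\sum_i|q_ia|$. Each $q_i$ is a sum of $|q_i|$ group elements, each acting by a weight-preserving permutation, so $|q_ia|\le|q_i||a|$ by subadditivity. Summing gives $|\omega|\le|e|+|q|\,|a|$, and $d_X(F_\bullet)\le|\omega|$.
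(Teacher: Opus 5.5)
Your proof is correct, but your nontriviality step takes a different route from the paper. The paper argues at the chain level. Projecting $C_1\oplus P_0^m\oplus P_1^s$ onto the base and first layer sends every convolved $X$-stabilizer $\partial_2^F(c,(z_i))$ to the starter stabilizer $(\partial_2^Cc+\gamma_1z_1,\partial_Pz_1)$. Since $\omega$ projects to the nontrivial starter logical $(e,a)$, it cannot be a stabilizer.

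You instead use faithfulness of the starter to write $a=\partial_Pz$. You then clean $\omega$ explicitly with the convolved stabilizer generated by $(q_iz)_i$, and finally compare measured subspaces via $J_{\mathrm{conv}}\subseteq J_{\mathrm{starter}}$. That inclusion is the homology-level form of the same layer-one projection.

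The trade-off is as follows.
\begin{itemize}
\item The paper's argument is shorter and uses no faithfulness hypothesis at all, so it holds for any starter.
\item Yours relies on starter faithfulness. That is a standing assumption for the starters built in the paper, so this is legitimate here, but it is an extra hypothesis. In return, your route produces an explicit base representative $(e+\gamma_1z;0;0)$, which shows that $\omega$ carries the same surviving logical class as the starter witness.
\end{itemize}

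The obstacle you flag about faithfulness of the convolved code is not real, and your argument never uses it. Directly from the form of $\partial_2^F$, a base-supported cycle $(x;0;0)$ lies in $\im\partial_2^F$ exactly when $[x]\in J_{\mathrm{conv}}$, with no faithfulness needed.

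Your asserted converse inclusion, that $(z,gz)$-type extensions realize every starter class, is also not needed. Rightly, you do not use it. It would generally require the unit condition on the $g_i$.
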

\begin{proof}
The base syndrome cancels because $(e,a)$ is a starter logical and $q_1=1$;
the transverse syndrome vanishes because $Hq=0$. To check nontriviality,
project onto the base and first layer, discarding all other qubits.
Every convolved $X$-stabilizer projects to a starter $X$-stabilizer: a
product of new checks with coefficients $(z_i)$ projects to the starter
check product $(\gamma_1z_1,\partial_Pz_1)$, and base checks are unchanged.
But $\omega$ projects to the nontrivial starter logical $(e,a)$, so it
cannot be a stabilizer. Finally, each $q_i a$ is a sum of cyclic shifts
of $a$, giving $|q_i a|\leq |q_i|\,|a|$.
\end{proof}

For the direct starter, use the short logical $(e,H_Ze)$ constructed in
\cref{sec:starter-low-weight-logicals}, where $e$ is a single
information-qubit error with a surviving logical completion.
More generally, when $q_1\ne1$, the same proof applies to
$(q_1e;q_1a,\ldots,q_ma;0)$ provided $q_1(e,a)$ is still a nontrivial
starter logical. In particular, this holds whenever $q_1$ is invertible
in $R$, and gives
\begin{equation}
 d_X(F_\bullet)\leq |q_1e|+\sum_i|q_i a|
                  \leq |q_1|\,|e|+|q|\,|a|.
 \label{eq:convolution-attached-word-upper-bound}
\end{equation}
Thus a lighter classical word gives a smaller upper bound, provided it
preserves the logical action at the attachment. An arbitrary minimum-weight
classical word need not do so. Throughout, the independent base-code ceiling
$d_X(F_\bullet)\leq d_X(C_\bullet;J)$ also applies.

\paragraph*{Convolving the repetition factor.}
Ordinary repetition copies the starter error $a$ into all $m$ layers,
giving $(e;a,\ldots,a;0)$ of weight $|e|+m|a|$. Scalar convolution replaces
the two identity couplings into each of the $m-1$ relations by $f_i$ and
$g_i$: $f_i$ acts from the nearer layer $i$, and $g_i$ from the farther
layer $i+1$. These are the paired diagonal arrows in
Fig.~\ref{fig:construction-pipeline}(b); the starter maps $\partial_P$
remain unchanged. The classical map $H:R^m\to R^{m-1}$ imposes
$f_iq_i+g_iq_{i+1}=0$. An explicit classical word $q\in R^m$ satisfying
these relations is
\begin{equation}
 q_i=(f_1\cdots f_{i-1})(g_i\cdots g_{m-1}),\qquad 1\leq i\leq m,
 \label{eq:convolution-two-arm-word}
\end{equation}
with products taken in $R$ and empty products equal to $1$. For each
$i=1,\ldots,m-1$,
\[
 f_iq_i=(f_1\cdots f_i)(g_i\cdots g_{m-1})=g_iq_{i+1},
 \qquad (Hq)_i=f_iq_i+g_iq_{i+1}=0.
\]
Thus $q$ satisfies every convolved repetition relation. This specializes the
classical QC kernel-word
construction of Ref.~\cite[Lemma~6]{smarandache2012qc} to the repetition
connectivity. If every $g_i$ is invertible in $R$, then so is
$q_1=g_1\cdots g_{m-1}$, and the general construction in
\cref{eq:convolution-attached-word-upper-bound} remains a nontrivial logical.
If each convolution polynomial has weight at most $w\geq1$, then
$|q_i|\leq w^{m-1}$ because $q_i$ contains $m-1$ factors. Hence
\begin{equation}
 \boxed{d_X(F_\bullet)\leq w^{m-1}\bigl(|e|+m|a|\bigr).}
 \label{eq:convolution-two-arm-upper-bound}
\end{equation}
Setting all $f_i=g_i=1$ recovers the repetition bound $|e|+m|a|$;
polynomial cancellations can only lower the witness weight. The general
bound assumes the repetition connectivity and a nontrivial starter logical. A general transverse QC matrix need not admit the same word, and
a noninvertible $q_1$ can destroy its logical nontriviality.

Therefore, we see in the upper bound, \textbf{convolution weight sets the base
of the exponential, and the
number of layers sets its exponent.} For a short starter witness,
the distance ceiling is $O(mw^{m-1})$: at most exponential in the layer
count $m$ for fixed $w>1$, and polynomial of degree $m-1$ in the convolution
weight $w$ at fixed depth.

\section{Efficient fault tolerance}
\label{app:efficient-fault-tolerance}

Appendix~\ref{app:symmetric-logical-measurements} identifies the logical action of a
surgery gadget, and Appendix~\ref{app:sparse-surgery-constructions} constructs
sparse static complexes with that action. 
However, in a physical implementation, errors occur during syndrome measurements too. This appendix explains the additional ingredients required to protect against both data and syndrome faults simultaneously.

We focus on the offline preparation of a parallel PPM resource state used
in the main text. We first specify the protected correlations, then describe
the resource state and its gate-teleportation readout.
The asymptotic joint spatial and temporal distance guarantees are given in
Appendix~\ref{sec:pk-temporal-distance}.
Appendix~\ref{app:family-level-scaling} discusses some of the additional conditions that are
required when readout gadgets are composed.

The fault-complex formalism represents fault locations, detectors, fault
equivalences, and protected logical correlations in a single chain complex
\cite{hillmann2025faultcomplex}.  Mapping telescopes and complementary temporal
boundaries extend this language to repeated syndrome extraction and homological
surgery processes
\cite{cowtan2024css,ide2025homological,cowtan2025fast}.  Our use of these tools
is relatively straightforward, we apply convolution in the time direction, keep track of its
logical transport, and state the process-level conditions needed by the
surgery process.

\subsection{Fault-tolerance requirements of the surgery process}
\label{sec:ft-process-requirements}

Let $J=M\oplus M_{\mathrm{res}}$ be the total measured subspace of \cref{sec:symmetric-logical-ppm}.  In the
absence of faults, the process should report the eigenvalues of an independent
generating set of $J$ and project the data into the corresponding joint
eigenspace.  In the convention fixed in
\cref{sec:symmetric-css-complexes}, the surviving logical sectors are
$H_1(C_\bullet)/J$ on the $X$ side and $J^\perp\subseteq H^1(C_\bullet)$ on
the $Z$ side.

Therefore, there are two kinds of failures that can occur during surgery:
(1) A fault that flips a reported PPM eigenvalue, or (2) a fault that flips a surviving logical operator~\cite{hillmann2025faultcomplex}.  
In practice, when numerically characterizing the distance of a surgery process, we also need to specify different boundary conditions depending on the basis of the relevant logical operators.
Therefore, (2) splits into separate $X$ and $Z$ components, and in practice we work with three different notations of distance.

Another subtlety when analyzing the fault-complex, is there are often products of faults that appear which are logically non-trivial.
We define the three distinct detector error models (DEMs) we study as $a\in\{\mathrm{PPM},X\text{-state},Z\text{-state}\}$.
Let $D_a$ and $B_a$ be the detector and fault-equivalence maps supplied by the fault complex, with $D_aB_a=0$. The space of undetected fault classes per DEM is
\begin{equation}
 \mathfrak F_a:=\ker D_a/\operatorname{im}B_a.
 \label{eq:semantic-fault-class-space}
\end{equation}
Let $L_a e$ record the flips of the protected logical correlations caused by a fault vector $e$, including the prescribed Pauli-frame updates. We require $L_aB_a=0$, so equivalent faults have the same protected effect. An undetected fault is harmful precisely when $L_a e\ne0$.

\begin{definition}[Process-distance]
\label{def:semantic-process-distance}
For a fixed finite surgery schedule, define
\begin{equation}
 d_a:=\min\bigl\{\operatorname{wt}(e):
     D_a e=0,\ L_a e\ne0\bigr\},
 \qquad
 a\in\{\mathrm{PPM},X\text{-state},Z\text{-state}\},
 \label{eq:semantic-distance-definition}
\end{equation}
where the minimum is over physical fault-location representatives and is
$\infty$ if no harmful undetected fault exists.
The process or fault distance is
\begin{equation}
 d_{\mathrm{proc}}
   :=\min\{d_{\mathrm{PPM}}^X,d_{\mathrm{state}}^X,
            d_{\mathrm{state}}^Z\}.
 \label{eq:process-distance}
\end{equation}
\end{definition}

Note that the a definition of distance based on only undetected and harmful faults has been discussed in both the fault-complex, surgery, and subsystem code settings~\cite{hillmann2025faultcomplex,cowtan2025fast, vuillot2019gaugefixing,cowtan2024ssip, delfosse2023spacetime_v2, williamson2026gauging}.
The three distances in \cref{eq:process-distance} and the name
\emph{process distance} are terminology used here to identify the operational
meaning of each protected correlation.

\paragraph*{Boundary completions for numerical verification.}

The three components of the process-distance profile need to be tested using
different preparation and readout settings, in order to ensure there are deterministic observeables which can detect faults during the circuit.
For an $X$-type PPM
process, Table~\ref{tab:process-distance-boundaries} specifies the
bases we use for certifying the fault-distance. 
Note in the MBQC setting, changing the input/output basis corresponds to appending an additional half-layer in the alternating fault-complex structure~\cite{bolt2016foliated}.

\begin{table}[htbp]
\caption{Boundary conditions for verifying an $X$-type PPM
process. Target labels identify the protected observables, whereas the
superscripts on the surviving-state distances identify the logical Pauli
errors that change those observables.}
\label{tab:process-distance-boundaries}
\centering
\begin{tabular}{lcccc}
\hline\hline
Protected target & Input preparation & Output readout & Fault sector & Distance\\
\hline
Measured $X$-PPM outcomes & $X$ & $Z$ & $Z$ & $d_{\mathrm{PPM}}^X$\\
All logical $X$ observables & $X$ & $X$ & $Z$ & $d_{\mathrm{state}}^Z$\\
Surviving $Z$ observables & $Z$ & $Z$ & $X$ & $d_{\mathrm{state}}^X$\\
\hline\hline
\end{tabular}
\end{table}

Note that when we discuss the fault-distance, we are measuring the phenomenological distance.
That means in the MBQC implementation of the fault-complex, and the corresponding detector error model, we only apply errors during qubit measurement.
We also always work with bounded-degree complexes, the setting in which the phenomenological distance is likely to reflect true circuit-level protection as well~\cite{christandl2025generalnoise}.

For our circuit-level numerics in Fig.~\ref{fig:convolution-performance}c, the same general approach is used to characterize logical successes and failures, but applying a two-qubit depolarizing error channel after every CNOT gate.

\FloatBarrier

\subsection{Offline resource preparation and readout}
\label{sec:offline-steane-surgery}

The factory prepares an encoded resource offline and couples it to the data
through a single transversal gate, implementing the PPM by gate teleportation~\cite{gottesman1999teleportation,huang2023homomorphic}
(see Fig.~\ref{fig:offline-steane-surgery}). Its temporal convolution need
not be undone on the data; the recorded signs are expressed in the desired
logical generator basis.

Let $M_X$ denote the programmed $X$-type PPM, possibly acting across several
logical orbits; a commuting family is treated in parallel.  A matching
encoded ancilla $A$, with $q$ logical qubits, begins in the logical $Z$-basis
state $\ket{0}_A^{\otimes q}$.  Measuring $M_X$ on this ancilla offline, with
outcome $\mu\in\{+1,-1\}$, prepares
\begin{equation}
 \ket{\Omega_{M_X,\mu}}_A
   \propto \Pi^{M_X}_{\mu,A}\ket{0}_A^{\otimes q}.
 \label{eq:offline-ppm-ancilla-resource}
\end{equation}
For the full programmed family, the resource is fixed by the measured
$X$ products in $J$, with their recorded signs, and the logical $Z$
stabilizers in $J^\perp$ inherited from $\ket{0}^{\otimes q}$.
The measured subspace $J$ is closed under the symmetry action, so unit
scalar convolution preserves $J$.  Its inverse-transpose action on $Z$ operators therefore preserves
$J^\perp$.  These latter stabilizers all have sign $+1$, which is unchanged
by taking products.  Thus convolution preserves the resource-state family:
it can mix the measured generators and their recorded signs, but the signs
in the desired generator basis are recovered by classical processing.
Below, $\mu$ denotes the outcome in that basis.

The unmeasured factory logicals are initialized in a fixed logical
$Z$-basis state. Unit scalar convolution therefore needs no additional
constraint that the accumulated convolution act as the identity on those logicals.
The terminal resource implements the desired measurement through the
circuit in \cref{fig:offline-steane-surgery}.

\begin{figure}[htbp]
\centering
\begin{tikzpicture}[
  >={Stealth[length=1.9mm,width=1.35mm]},
  wire/.style={draw=black!72, line width=0.72pt},
  cwire/.style={-{Stealth[length=1.55mm,width=1.05mm]}, densely dashed,
                draw=black!58, line width=0.55pt},
  opflow/.style={-{Stealth[length=1.55mm,width=1.10mm]},
                 draw=mapcol!85, line width=0.85pt},
  gate/.style={rounded corners=2.5pt, draw=shiftcol!80, fill=shiftcol!9,
               minimum width=14mm, minimum height=9mm, inner sep=2pt,
               font=\small},
  measure/.style={rounded corners=2.5pt, draw=mapcol!75, fill=mapcol!8,
                  minimum width=18mm, minimum height=9mm, inner sep=2pt,
                  font=\small, align=center},
  lab/.style={font=\footnotesize, inner sep=0.5pt, fill=white},
]
  \node[font=\small\bfseries] at (6.0,3.20)
    {(a) direct online $X$-PPM};
  \node[font=\small,anchor=east] at (0.70,2.35) {$\ket\psi_D$};
  \node[gate] (directmx) at (6.15,2.35) {$M_X$};
  \draw[wire] (0.85,2.35) -- (directmx.west);
  \draw[wire] (directmx.east) -- (11.20,2.35)
    node[font=\small,anchor=west] {post-$M_X$ data};
  \draw[cwire] (directmx.south) -- ++(0,-0.50)
    node[lab,anchor=west,xshift=1.5pt] {$m$};

  \node[font=\small\bfseries] at (6.0,0.95)
    {(b) gate-teleported $X$-PPM};
  \begin{scope}[on background layer]
    \filldraw[rounded corners=4pt,fill=anccol!2,draw=anccol!25]
      (-0.55,-2.30) rectangle (4.75,-0.62);
    \filldraw[rounded corners=4pt,fill=basecol!2,draw=basecol!25]
      (5.25,-2.30) rectangle (9.95,0.58);
  \end{scope}
  \node[font=\footnotesize,text=anccol,anchor=north] at (2.10,-0.70)
    {offline convolved factory};
  \node[font=\footnotesize,text=basecol,anchor=north] at (7.80,0.45)
    {online interface};

  \node[font=\small,anchor=east] at (0.75,-0.10) {$\ket\psi_D$};
  \draw[wire] (0.85,-0.10) -- (11.20,-0.10)
    node[font=\small,anchor=west] {post-$M_X$ data};

  \node[font=\small,anchor=east] at (0.75,-1.55)
    {$\ket{0}_A^{\otimes q}$};
  \node[gate] (offmx) at (3.05,-1.55) {$M_X$};
  \node[measure] (xmeas) at (8.75,-1.55) {$X$ measure};
  \draw[wire] (0.85,-1.55) -- (offmx.west);
  \draw[wire] (offmx.east) -- (xmeas.west);
  \draw[cwire] (offmx.south) -- ++(0,-0.58)
    node[lab,anchor=west,xshift=1.5pt] {$\mu$};

  \draw[opflow] (xmeas.west) -- (4.75,-1.55)
    node[font=\scriptsize,text=mapcol!85,above=2pt,pos=0.78] {$X_A$};
  \draw[opflow] (6.15,-0.10) -- (4.75,-0.10)
    node[font=\scriptsize,text=mapcol!85,above=2pt,pos=0.52] {$X_D$};

  \draw[wire] (6.15,-1.55) -- (6.15,-0.10);
  \fill[black!72] (6.15,-1.55) circle (2.1pt);
  \draw[black!72,line width=0.72pt] (6.15,-0.10) circle (4.5pt);
  \draw[black!72,line width=0.72pt] (6.15,-0.275) -- (6.15,0.075);
  \draw[black!72,line width=0.72pt] (5.975,-0.10) -- (6.325,-0.10);

  \draw[cwire] (xmeas.south) -- ++(0,-0.58)
    node[lab,anchor=west,xshift=1.5pt] {$x$};

  \node[font=\footnotesize] at (6.0,-2.94)
    {$m=\mu x$\,,\qquad $x=$ product of the ancilla $X$ readouts};
\end{tikzpicture}
\caption{Offline surgery as gate teleportation.  Panel~(a) is the desired
online $X$-type PPM instrument.  In panel~(b), $M_X$ is first measured on a
logical $\ket{0}$ ancilla in the offline factory, producing the stored outcome
$\mu$.  The online CNOTs are controlled by the ancilla and target the main-line
data.  Measuring the ancilla in the $X$ basis then projects the data by the same
$M_X$, with outcome $m=\mu x$, where $x$ is the product of the relevant
ancilla readouts.  This circuit identity is standard~\cite{huang2023homomorphic}; convolution changes how
the ancilla resource is protected and prepared, not its online logical action.
Green arrows show backward operator propagation through the CNOT:
$X_A^{\mathrm{out}}\mapsto X_A^{\mathrm{in}}X_D^{\mathrm{in}}$.}
\label{fig:offline-steane-surgery}
\end{figure}

The circuit cartoon suppresses two complete fault-complex histories.
The online computation is a mapping telescope of copies of $C_\bullet$, with
shift-$0$ rounds alternating with shift-$1$ intervals.  The resource factory
is a separate telescope of copies of the surgery complex $F_\bullet$.  Only
their terminal interface is shared.  This organization is shown in
\cref{fig:offline-online-fault-complex}.

\begin{figure}[htbp]
\centering
\begin{tikzpicture}[
  >={Stealth[length=1.8mm,width=1.25mm]},
  arch/.style={rounded corners=2.5pt, minimum width=15mm,
               minimum height=7mm, inner sep=1.5pt, font=\scriptsize},
  cround/.style={arch,draw=basecol!65,fill=basecol!7},
  fround/.style={arch,draw=anccol!65,fill=anccol!7},
  readout/.style={arch,draw=mapcol!70,fill=mapcol!8,
                  minimum width=21mm},
  incidence/.style={-{Stealth[length=1.65mm,width=1.15mm]},
                     draw=shiftcol!85,line width=0.65pt,
                     shorten >=1.5pt,shorten <=1.5pt},
  factoryleg/.style={-{Stealth[length=1.75mm,width=1.2mm]},
                      draw=anccol!80,line width=0.8pt,
                      shorten >=1.5pt,shorten <=1.5pt},
  dataleg/.style={-{Stealth[length=1.75mm,width=1.2mm]},
                   draw=basecol!80,line width=0.8pt,
                   shorten >=1.5pt,shorten <=1.5pt},
  boundary/.style={-{Stealth[length=1.45mm,width=1.0mm]},
                   draw=black!65,line width=0.65pt,
                   shorten >=1.5pt,shorten <=1.5pt},
  gamma/.style={-{Stealth[length=1.45mm,width=1.0mm]},
                draw=mapcol!80,line width=0.7pt,
                shorten >=1.5pt,shorten <=1.5pt},
  maplab/.style={font=\scriptsize,inner sep=0.8pt,fill=white},
  groupbox/.style={rounded corners=4pt,draw=black!15,fill=black!1},
]
  \begin{scope}[on background layer]
    \filldraw[rounded corners=4pt,fill=basecol!2,draw=basecol!22]
      (-0.45,0.62) rectangle (10.55,2.86);
    \filldraw[rounded corners=4pt,fill=anccol!2,draw=anccol!22]
      (-0.45,-2.28) rectangle (7.55,0.02);
  \end{scope}
  \node[font=\footnotesize,text=basecol,anchor=north west]
    at (-0.20,2.75) {online $C_\bullet$};
  \node[font=\footnotesize,text=anccol,anchor=north west]
    at (-0.20,-0.10) {offline $F_\bullet$};

  \node[cround] (c0) at (0.55,1.05) {$C_\bullet^{(t-1)}$};
  \node[cround] (c1) at (5.20,1.05) {$C_\bullet^{(t)}$};
  \node[cround] (c2) at (9.85,1.05) {$C_\bullet^{(t+1)}$};
  \node[cround] (ce0) at (2.875,1.95) {$C_\bullet[1]$};
  \node[cround] (ce1) at (7.525,1.95) {$C_\bullet[1]$};
  \draw[incidence] (ce0) -- (c0);
  \draw[incidence] (ce0) -- (c1);
  \draw[incidence] (ce1) -- (c1);
  \draw[incidence] (ce1) -- (c2);

  \node[fround] (f0) at (0.55,-1.85) {$F_\bullet^{(0)}$};
  \node[fround] (f1) at (3.65,-1.85) {$F_\bullet^{(1)}$};
  \node[fround] (fr) at (6.75,-1.85) {$F_\bullet^{(r)}$};
  \node[fround] (fe0) at (2.10,-0.95) {$F_\bullet[1]$};
  \node[fround] (fe1) at (5.20,-0.95) {$F_\bullet[1]$};
  \draw[incidence] (fe0) -- (f0);
  \draw[incidence] (fe0) -- (f1);
  \draw[incidence] (fe1) -- (f1);
  \draw[incidence] (fe1) -- (fr);

  \node[readout] (ro) at (9.15,-0.95)
    {$C_\bullet^A[1]$ readout};
  \draw[factoryleg] (ro) -- (fr);
  \draw[dataleg] (ro) -- (c1);

\end{tikzpicture}
\caption{Independent online and offline histories joined at the terminal
readout.  The online data use copies of $C_\bullet$, while the resource
factory uses copies of $F_\bullet$.  After the ancilla-control CNOT, the
shifted ancilla readout copy attaches both to the final factory round and
to the online data round.  The factory attachment includes its base-code
coordinates; the data attachment is the identity contribution from
$X_A\mapsto X_AX_D$.  All arrows run from shifted intervals to rounds.}
\label{fig:offline-online-fault-complex}
\end{figure}

The terminal readout copy represents the prepared resource state.
Its attachments are the base-code
inclusion into $F_\bullet$ and the identity into the online $C_\bullet$.
Both attachments commute with the internal differentials. Consequently,
the off-diagonal terms in the square of the joined differential cancel,
and the interface defines a chain complex.

Logical correctness follows directly from the circuit identity.  Let
$\ket{x}_A$ denote an $X$-basis readout string and let
$x\in\{+1,-1\}$ be the product of its entries on the support of $M_X$.  The
ancilla-control CNOT gives
\begin{equation}
 {}_A\!\bra{x}\,\operatorname{CNOT}_{A\to D}
       \ket{\Omega_{M_X,\mu}}_A
   \propto \Pi^{M_X}_{\mu x,D}.
 \label{eq:ancilla-mediated-ppm-identity}
\end{equation}
Thus the main line is projected by $M_X$, while the offline outcome $\mu$ and
the terminal readout parity $x$ multiply to give the reported outcome.  The
data never pass through the preparation circuit. The same identity applies
componentwise to a commuting family of PPMs.

Unit scalar convolution therefore preserves the logical action of the
offline resource protocol. Establishing its full process distance additionally
requires control of spatial, temporal, mixed, and boundary faults.

\FloatBarrier

\section{Composition of logical readout gadgets}
\label{app:family-level-scaling}

We now consider successive logical measurements on the same code block.
The surgery constructions of
Appendices~\ref{app:sparse-surgery-constructions} and
\ref{app:efficient-fault-tolerance} can be inserted into the alternating
syndrome-extraction schedule of the base code. When the measurements act on
overlapping qubits, however, faults can extend across multiple gadgets, so
the distance of the combined process must be evaluated. We describe the
schedule and numerically verify distance preservation for an example with
two overlapping readouts that retain unmeasured logical information.

\subsection{Scheduling successive logical readouts}
\label{sec:readout-gadget-composition}

In the measurement-based implementation, syndrome extraction on the base
code alternates between $X$- and $Z$-check measurements. An $X$-type surgery
gadget performs fault-tolerant $X$-syndrome extraction together with the
desired logical $X$ measurements. We can therefore insert the gadget by
replacing an $X$-check round with an $X$-surgery round, retaining the adjacent
$Z$-check measurements. For two successive readouts, this gives
\begin{equation}
 \cdots\;\longrightarrow\;
 Z\text{ check}\;\longrightarrow\;
 X\text{ surgery }(M_1)\;\longrightarrow\;
 Z\text{ check}\;\longrightarrow\;
 X\text{ surgery }(M_2)\;\longrightarrow\;
 Z\text{ check}\;\longrightarrow\;\cdots .
 \label{eq:readout-mainline-schedule}
\end{equation}
The $X$ and $Z$ steps together form a full syndrome-extraction cycle.
In the fault complex of Appendix~\ref{app:efficient-fault-tolerance}, they
correspond to shift-$0$ rounds and shift-$1$ intervals, respectively.
Inserting a surgery gadget replaces a base-code round $C_\bullet$ by the
final round $F_\bullet$ of the resource-state preparation.

Figure~\ref{fig:two-readout-fault-complex} illustrates this schedule for the
two-readout example analyzed below. The
blue blocks describe $Z$-check measurements on the base code, while round 3
of each colored resource factory implements an $X$-surgery round. The
preceding rounds prepare the resource state offline, with time running
upward. The two resources can be prepared in parallel and then coupled to
the data in sequence, applying the appropriate frame updates and retaining
both measurement records. 

At the logical level, let $M_1,M_2\subseteq H_1(C_\bullet)$ denote the
measured subspaces in a common logical frame, including any measured
residual observables. If each readout preserves the remaining logical
information after its frame update, the two commuting readouts measure
$M_1+M_2$, leaving $k-\dim_{\mathbb F_2}(M_1+M_2)$ logical qubits. Repeated
measurements of the same observable give the same ideal eigenvalue, while
comparisons of their records can detect faults.

\begin{figure*}[t]
\centering
\resizebox{\textwidth}{!}{\begin{tikzpicture}[x=0.95cm,y=0.82cm]
  \tikzset{
    fcchain/.style={rounded corners=1.5pt, minimum width=6.6mm,
                    minimum height=5.5mm, inner sep=1pt,
                    font=\scriptsize},
    fccchainv/.style={fcchain, draw=basecol!68, fill=basecol!6},
    fccchaine/.style={fccchainv, dashed, dash pattern=on 2.2pt off 1.4pt},
    fcfachainv/.style={fcchain, draw=anccol!70, fill=anccol!6},
    fcfachaine/.style={fcfachainv, dashed, dash pattern=on 2.2pt off 1.4pt},
    fcfbchainv/.style={fcchain, draw=logcol!70, fill=logcol!6},
    fcfbchaine/.style={fcfbchainv, dashed, dash pattern=on 2.2pt off 1.4pt},
    fcphysical/.style={solid, line width=0.62pt},
    fcclassical/.style={dash pattern=on 0pt off 1.35pt, line cap=round,
                        line width=0.72pt},
    fcleg/.style={-{Stealth[length=1.6mm,width=1.2mm]}, black!58,
                  line width=0.58pt, shorten >=1.4pt, shorten <=1.4pt},
    fcvirt/.style={fcleg, densely dotted, line width=0.65pt},
    fcinternal/.style={-{Stealth[length=1.45mm,width=1.05mm]}, black!68,
                       line width=0.66pt, shorten >=0.7pt, shorten <=0.7pt},
    fcinternalvirt/.style={fcinternal, dash pattern=on 0pt off 1.25pt,
                           line cap=round, line width=0.72pt},
    fctimea/.style={fcleg, anccol!70, line width=0.52pt, opacity=0.86},
    fctimeav/.style={fctimea, densely dotted, line width=0.62pt},
    fctimeb/.style={fcleg, logcol!70, line width=0.52pt, opacity=0.86},
    fctimebv/.style={fctimeb, densely dotted, line width=0.62pt},
    fcport/.style={black!42, densely dotted, line width=0.65pt},
    fcportin/.style={fcport, -{Stealth[length=1.45mm,width=1.05mm]},
                     shorten <=3pt, shorten >=1.4pt},
    fcportout/.style={fcport, -{Stealth[length=1.45mm,width=1.05mm]},
                      shorten <=1.4pt, shorten >=3pt},
    fctiny/.style={font=\tiny, inner sep=0.8pt, text=black!58},
  }

  \newcommand{\fcResolvedZero}[6]{%
    \node[#2,fcphysical]  (#1two)  at ({#4-1.16},#5) {$#3_2$};
    \node[#2,fcphysical]  (#1one)  at (#4,#5)        {$#3_1$};
    \node[#2,fcclassical] (#1zero) at ({#4+1.16},#5) {$#3_0$};
    \draw[fcinternal,#6]     (#1two.east) -- (#1one.west);
    \draw[fcinternalvirt,#6] (#1one.east) -- (#1zero.west);
    \node[inner sep=0pt, fit=(#1two)(#1zero)] (#1) {};
  }
  \newcommand{\fcResolvedOne}[6]{%
    \node[#2,fcclassical] (#1two)  at ({#4-1.16},#5) {$#3_2$};
    \node[#2,fcphysical]  (#1one)  at (#4,#5)        {$#3_1$};
    \node[#2,fcphysical]  (#1zero) at ({#4+1.16},#5) {$#3_0$};
    \draw[fcinternalvirt,#6] (#1two.east) -- (#1one.west);
    \draw[fcinternal,#6]     (#1one.east) -- (#1zero.west);
    \node[inner sep=0pt, fit=(#1two)(#1zero)] (#1) {};
  }

  \newcommand{\fcTimeAUp}[2]{%
    \draw[fctimeav] (#1two.north)  -- (#2two.south);
    \draw[fctimea]  (#1one.north)  -- (#2one.south);
    \draw[fctimeav] (#1zero.north) -- (#2zero.south);
  }
  \newcommand{\fcTimeADown}[2]{%
    \draw[fctimeav] (#1two.south)  -- (#2two.north);
    \draw[fctimea]  (#1one.south)  -- (#2one.north);
    \draw[fctimeav] (#1zero.south) -- (#2zero.north);
  }
  \newcommand{\fcTimeBUp}[2]{%
    \draw[fctimebv] (#1two.north)  -- (#2two.south);
    \draw[fctimeb]  (#1one.north)  -- (#2one.south);
    \draw[fctimebv] (#1zero.north) -- (#2zero.south);
  }
  \newcommand{\fcTimeBDown}[2]{%
    \draw[fctimebv] (#1two.south)  -- (#2two.north);
    \draw[fctimeb]  (#1one.south)  -- (#2one.north);
    \draw[fctimebv] (#1zero.south) -- (#2zero.north);
  }

  \newcommand{\fcBackbone}[2]{%
    \draw[fcvirt] (#1two.south)  -- (#2two.north);
    \draw[fcleg]  (#1one.south)  -- (#2one.north);
    \draw[fcvirt] (#1zero.south) -- (#2zero.north);
  }

  \draw[-{Stealth[length=1.6mm,width=1.15mm]}, black!55, line width=0.6pt]
    (0.50,7.20) -- (14.00,7.20)
    node[midway,above=2pt,font=\scriptsize,text=black!65] {main-line schedule};
  \foreach \x in {0.50,7.25,14.00}{
    \node[font=\scriptsize,text=basecol!85] at (\x,6.55) {$Z$ check};}
  \node[font=\scriptsize,text=anccol!85] at (4.00,6.55)
    {$X$ surgery $(M_1)$};
  \node[font=\scriptsize,text=logcol!85] at (10.50,6.55)
    {$X$ surgery $(M_2)$};

  \fcResolvedOne{Cin}{fccchaine}{C}{0.50}{5.88}{black!68}
  \fcResolvedZero{F10}{fcfachainv}{F^{(1)}}{4.00}{4.18}{anccol!76}
  \fcResolvedOne{Cmid}{fccchaine}{C}{7.25}{5.88}{black!68}
  \fcResolvedZero{F20}{fcfbchainv}{F^{(2)}}{10.50}{4.18}{logcol!76}
  \fcResolvedOne{Cout}{fccchaine}{C}{14.00}{5.88}{black!68}

  \fcBackbone{Cin}{F10}
  \fcBackbone{Cmid}{F10}
  \fcBackbone{Cmid}{F20}
  \fcBackbone{Cout}{F20}

  \coordinate (pin) at ($(Cintwo.west)+(-0.58,-0.62)$);
  \draw[fcportout] (Cintwo.west) -- (pin);
  \fill[white] (pin) circle (1.8pt);
  \draw[black!42,line width=0.6pt] (pin) circle (1.8pt);
  \coordinate (pout) at ($(Coutzero.east)+(0.58,-0.68)$);
  \draw[fcportout] (Coutzero.east) -- (pout);
  \fill[white] (pout) circle (1.8pt);
  \draw[black!42,line width=0.6pt] (pout) circle (1.8pt);

  \fcResolvedOne{F1e0}{fcfachaine}{F^{(1)}}{2.92}{2.68}{anccol!76}
  \fcResolvedZero{F11}{fcfachainv}{F^{(1)}}{4.00}{1.18}{anccol!76}
  \fcResolvedOne{F1e1}{fcfachaine}{F^{(1)}}{2.92}{-0.32}{anccol!76}
  \fcResolvedZero{F12}{fcfachainv}{F^{(1)}}{4.00}{-1.82}{anccol!76}
  \fcTimeAUp{F1e0}{F10}
  \fcTimeADown{F1e0}{F11}
  \fcTimeAUp{F1e1}{F11}
  \fcTimeADown{F1e1}{F12}

  \fcResolvedOne{F2e0}{fcfbchaine}{F^{(2)}}{9.42}{2.68}{logcol!76}
  \fcResolvedZero{F21}{fcfbchainv}{F^{(2)}}{10.50}{1.18}{logcol!76}
  \fcResolvedOne{F2e1}{fcfbchaine}{F^{(2)}}{9.42}{-0.32}{logcol!76}
  \fcResolvedZero{F22}{fcfbchainv}{F^{(2)}}{10.50}{-1.82}{logcol!76}
  \fcTimeBUp{F2e0}{F20}
  \fcTimeBDown{F2e0}{F21}
  \fcTimeBUp{F2e1}{F21}
  \fcTimeBDown{F2e1}{F22}

  \draw[-{Stealth[length=1.6mm,width=1.15mm]}, anccol!62, line width=0.6pt]
    (0.75,-1.55) -- (0.75,3.78);
  \draw[-{Stealth[length=1.6mm,width=1.15mm]}, logcol!62, line width=0.6pt]
    (13.45,-1.55) -- (13.45,3.78);

  \foreach \y/\lab in {-1.82/1,1.18/2,4.18/3}{
    \node[fctiny, anchor=west, text=black!48] at (5.52,\y) {round $\lab$};}
  \foreach \y/\lab in {-1.82/1,1.18/2,4.18/3}{
    \node[fctiny, anchor=west, text=black!48] at (12.02,\y) {round $\lab$};}

  \foreach \x/\d in {1.84/3,2.92/2,4.00/1,5.08/0}{
    \node[fctiny, text=black!45] at (\x,-2.38) {$W_{\d}$};}
  \node[fctiny, text=black!45, anchor=east] at (1.40,-2.38) {space};
  \foreach \x/\d in {8.34/3,9.42/2,10.50/1,11.58/0}{
    \node[fctiny, text=black!45] at (\x,-2.38) {$W_{\d}$};}
  \node[font=\small,text=anccol!85] at (4.00,-3.00)
    {$M_1:\ \overline X_{a,q}\overline X_{b,q+7}$};
  \node[font=\small,text=logcol!85] at (10.50,-3.00)
    {$M_2:\ \overline X_{b,q}\overline X_{c,q+5}$};
  \node[font=\small,text=black!65] at (7.25,-3.65)
    {Fixed $H=\mathbb Z_{18}$, $q\in\mathbb Z_{18}$; $r_1=r_2=3$};
\end{tikzpicture}}
\caption{\textbf{Scheduling two logical readouts on the same code block.}
The main line alternates between $Z$-check measurements on the base code
(blue) and $X$-surgery rounds implementing $M_1$ and $M_2$ from
\cref{eq:fixed-subgroup-two-readouts} (red and purple). Each surgery
replaces an ordinary $X$-check round. Both adjacent base-code intervals
connect to the final round 3 of the corresponding resource factory. The
factories can be prepared in parallel, with time running upward; each has
three $X$ rounds and uses the same ordered pair of weight-three temporal
convolutions over $\mathbb Z_{18}$, specified in
\cref{eq:two-readout-temporal-couplings}. The remaining factory layers extend below the main line
without a lower cap. Columns labeled $W_j$ indicate chain degree. Solid
boxes denote qubit spaces: $F_2$ in rounds 1--3 is used for $X$-check
measurement, while $F_0$ on the intervals and $C_0$ on the main line are
used for $Z$-check measurement. Dotted boxes denote detector spaces. Solid
arrows connect physical spaces, and dotted arrows define detector parities.
External boundary completions for the three distance calculations in
\cref{eq:fixed-subgroup-composition-bounds} are omitted.}
\label{fig:two-readout-fault-complex}
\label{fig:two-readout-composition}
\end{figure*}

\FloatBarrier

\subsection{Process distance under composition}
\label{sec:two-readout-joint-protection}

Although each readout may individually preserve the code distance, faults
supported across both gadgets can have cancelling detector syndromes. Such
a fault can change a measurement outcome or a surviving logical observable
without being detected. Figure~\ref{fig:two-readout-mixed-fault} illustrates
the possible supports.

We analyze each fault sector using the boundary conditions and logical
observables defined in Appendix~\ref{app:efficient-fault-tolerance}. Let
$H_W$ be the binary detector matrix of the combined process, with columns
partitioned between the first resource, the shared data and interfaces,
and the second resource. An undetected fault satisfies
\begin{equation}
 H_W e=H_1e_1+H_Ce_C+H_2e_2=0.
 \label{eq:composed-detector-cancellation}
\end{equation}
Here $H_1,H_C,H_2$ are column blocks of $H_W$, rather than chain-degree
labels. A fault satisfying \cref{eq:composed-detector-cancellation}
contributes to the process distance if it also changes one of the logical
observables being protected.

\begin{figure}[htbp]
\centering
\resizebox{0.96\textwidth}{!}{\begin{tikzpicture}[x=0.82cm,y=0.82cm,
  box/.style={rounded corners=2pt,minimum width=17mm,minimum height=7mm,
              inner sep=2pt,font=\small},
  data/.style={box,draw=basecol!65,fill=basecol!7},
  fa/.style={box,draw=anccol!65,fill=anccol!7},
  fb/.style={box,draw=logcol!65,fill=logcol!7},
  link/.style={line width=0.9pt,black!40},
  fault/.style={draw=orange!80!black,fill=orange!25,diamond,
                inner sep=1.6pt,minimum size=4mm},
  lab/.style={font=\small,align=center}]
\foreach \offset/\panel in {0/a,9.3/b}{
 \begin{scope}[xshift=\offset cm]
  \node[lab,anchor=west] at (-0.65,2.0) {\textbf{(\panel)}};
  \draw[-{Stealth[length=2.2mm,width=1.5mm]},basecol!80,line width=0.9pt]
    (0,1.55)--node[above,font=\small,text=basecol!90]
    {data time / readout order} (8,1.55);
  \node[font=\scriptsize,text=black!70] at (0,-0.65) {before readout 1};
  \node[font=\scriptsize,text=black!70] at (4,-0.65) {after readout 1};
  \node[font=\scriptsize,text=black!70] at (8,-0.65) {after readout 2};
  \node[data] (\panel c0) at (0,0) {$C^{(0)}$};
  \node[data] (\panel c1) at (4,0) {$C^{(1)}$};
  \node[data] (\panel c2) at (8,0) {$C^{(2)}$};
  \node[fa] (\panel f1) at (2,-2) {factory 1};
  \node[fb] (\panel f2) at (6,-2) {factory 2};
  \coordinate (\panel p1) at (2,0);
  \coordinate (\panel p2) at (6,0);
  \draw[link] (\panel c0)--(\panel c1)--(\panel c2);
  \draw[link] (\panel p1)--(\panel f1);
  \draw[link] (\panel p2)--(\panel f2);
 \end{scope}
}
\node[fault] at (2,-0.75) {};
\node[fault] at (1.5,0) {};
\node[lab] at (3.2,-3) {Support within one readout};
\begin{scope}[xshift=9.3cm]
 \draw[orange!80!black,line width=1.5pt,dashed]
   (2,-0.75)--(2,0)--(bc1.west);
 \draw[orange!80!black,line width=1.5pt,dashed]
   (bc1.east)--(6,0)--(6,-0.75);
 \node[fault] at (2,-0.75) {};
 \node[fault] at (2.6,0) {};
 \node[fault] at (6,-0.75) {};
 \node[lab,anchor=east] at (1.6,-1.15) {$e_1$};
 \node[lab,anchor=south] at (2.6,0.5) {$e_C$};
 \node[lab,anchor=west] at (6.4,-1.15) {$e_2$};
 \node[lab] at (4,-3) {Support spanning both readouts};
\end{scope}
\end{tikzpicture}}
\caption{\textbf{Fault supports across successive readouts.}
Blue denotes the shared data and red and purple the two resource factories.
Arrows indicate the order of operations on the data, from left to right.
The labels $C^{(0)},C^{(1)},C^{(2)}$ identify the data before, between, and
after the two readouts; factory preparation can proceed in parallel.
Orange indicates possible fault support. (a) A fault confined to one
readout, with boundary conditions inherited from the combined process.
(b) A fault supported on both readouts and the shared data, whose detector
syndromes may cancel as in \cref{eq:composed-detector-cancellation}.
The orange dashed line groups the support components. These sketches
illustrate the supports to be considered in the distance calculation;
they do not specify undetected faults of the construction in
Fig.~\ref{fig:two-readout-fault-complex}.}
\label{fig:two-readout-mixed-fault}
\end{figure}

Individual readout distances alone do not give a lower bound on the
process distance. Even faults confined to one readout must be evaluated
with the boundary conditions and protected logical observables of the
combined process. We therefore check the combined detector constraints
and logical action directly in the numerical example below, allowing
faults on both resources, the shared data, and their interfaces.

\FloatBarrier

\subsection{Numerical study of two-readout composition}
\label{sec:two-readout-finite-example}

We now evaluate the process distance for the two successive readouts in
Fig.~\ref{fig:two-readout-fault-complex}. Both act on a
$\code{288}{76}{12}$ base code and use the same subgroup
$H=\langle t^2\rangle\cong\mathbb Z_{18}$ of its $\mathbb Z_{36}$ symmetry.
We choose three active $H$-orbits $a,b,c$, each containing 18 logical qubits,
and measure the shifted pairs
\begin{equation}
 \begin{aligned}
 M_1&=\operatorname{span}_{\mathbb F_2}
       \{\overline X_{a,q}\overline X_{b,q+7}:q\in\mathbb Z_{18}\},\\
 M_2&=\operatorname{span}_{\mathbb F_2}
       \{\overline X_{b,q}\overline X_{c,q+5}:q\in\mathbb Z_{18}\}.
 \end{aligned}
 \label{eq:fixed-subgroup-two-readouts}
\end{equation}
In \cref{eq:fixed-subgroup-two-readouts}, $M_i$ denotes the active
measurement subspace. The two programs share orbit $b$, but their observables are
independent: the first program pairs $b$ with $a$, and the second pairs $b$
with $c$. Each has rank 18, giving 36 independent commuting measurements.

Only the measurement program changes between readouts; the subgroup and
physical coordinates remain fixed. Direct checks of the logical action
confirm that the combined construction measures the 36 active observables
and two residual observables, leaving 38 logical qubits. This example
therefore tests composition on overlapping logical orbits while retaining
unmeasured quantum information.
Both resource factories have three temporal
layers, with inhomogeneous weight-three temporal convolutions
\begin{equation}
 g_0=1+u+u^6,\qquad g_1=1+u^3+u^{10},\qquad u=t^2.
 \label{eq:two-readout-temporal-couplings}
\end{equation}
The two temporal edges have legs $(I,g_0)$ and $(I,g_1)$, respectively,
ordered from the layer attached to the data toward the free end of the
factory. Both polynomials are units in $\mathbb F_2[\mathbb Z_{18}]$,
and the same ordered pair is used in both factories.

To verify the process distance, we consider the measured $X$-PPM outcomes,
all logical $X$ observables, and the surviving logical $Z$ observables separately. Each family
uses the same interior fault complex, with boundary conditions and logical
observables chosen for that family. Let $H_T$ and $L_T$ be the sparse
detector and logical-observable matrices for family $T$. The protected
distance is
\begin{equation}
 d_T=\min\{\operatorname{wt}(e):H_Te=0,\ L_Te\ne0\}.
 \label{eq:two-readout-semantic-distance}
\end{equation}
The condition $L_Te\ne0$ selects faults that change the protected logical
information, excluding gauge degrees of freedom and logically trivial
spacetime faults~\cite{cowtan2024ssip,delfosse2023spacetime_v2,williamson2026gauging}.
For the composed process, the measured-outcome and surviving-$Z$ maps each
have 38 rows. The all-logical-$X$ map has 76 independent rows and tests
input--output correlations for the full input logical $X$ space.

We use the boundary completions in
Table~\ref{tab:process-distance-boundaries}. The measured-PPM calculation
uses $X$-basis preparation at the input and $Z$-basis readout at the output.
For all logical $X$ observables, both preparation and readout are in the $X$
basis; for surviving $Z$ observables, both are in the $Z$ basis.
We check all three target families for each readout separately and for
the composed process, using their respective external bare-code boundaries
and logical frames. Each individual readout measures $18+2$ active and
residual observables and preserves $54+2$ logical qubits. The
$d_{\mathrm{state}}^Z$ test protects all 76 logical $X$ observables in each
case.

\begin{table}[htbp]
\caption{Exact protected distances for each readout in
Fig.~\ref{fig:two-readout-fault-complex} separately and for their composition.
The combined process uses the alternating schedule of
\cref{eq:readout-mainline-schedule}. Each column uses its own boundary
completions and logical targets, as described in the text. Distances count
binary fault-complex cells; circuit-level distances require a physical fault model.}
\label{tab:two-readout-distance}
\centering
\begin{tabular}{lcccc}
\hline\hline
Protected target & Fault sector & First readout & Second readout & Composition\\
\hline
Measured $X$-PPM outcomes & $Z$ & $12$ & $12$ & $12$\\
All logical $X$ observables & $Z$ & $12$ & $12$ & $12$\\
Surviving $Z$ observables & $X$ & $12$ & $12$ & $12$\\
\hline\hline
\end{tabular}
\end{table}

Exhaustive searches exclude harmful undetected faults of weight below 12
in all nine tests. Bare-code operators give matching upper bounds. In
particular, using the distance notation of
Appendix~\ref{app:efficient-fault-tolerance}, the composed process has
\begin{equation}
 d_{\mathrm{PPM}}^X=d_{\mathrm{state}}^Z=d_{\mathrm{state}}^X=12.
 \label{eq:fixed-subgroup-composition-bounds}
\end{equation}

The closed $X$-fault complex also contains two interface homology classes
with weight-11 representatives in $W_1$. Neither changes any of the 38
surviving $Z$ observables, including the residual observables. Their
$X$-type action also commutes with all logical $X$ observables, including
the measured $X$-PPMs. These faults therefore act trivially on all three families
of protected observables, despite representing nonzero homology classes.
This is consistent with the treatment of additional surgery logicals as
an unprotected sector~\cite[Sec.~2.3.1]{cowtan2024ssip}, and illustrates why
the condition $L_Te\ne0$ is needed in
\cref{eq:two-readout-semantic-distance}. The two-readout construction
preserves distance 12 in the binary-cell metric without additional
syndrome-extraction rounds between readouts. Protection of arbitrarily long
sequences remains a separate question.

\FloatBarrier

\section{Practical design tradeoffs and numerical performance}
\label{app:practical-numerics}

We study how convolution affects distance and local degree in finite-size
surgery constructions. We first give one example construction for each code in
Table~\ref{tab:representative-merged-codes}, then examine spatial convolution for a fixed
$[[576,148,18]]$ code, where scalar choices give different distances at
the same qubit count. We then consider temporal convolution and its
dependence on the symmetry subgroup. We also give the circuit-sampling
methods for Fig.~\ref{fig:convolution-performance}(c) and the resource
accounting for Fig.~\ref{fig:structure-overhead}, before comparing the
resources for magic-state loading. The distance calculations below refer to the specified
static codes or fault sectors.

\subsection{Example constructions for the three base codes}
\label{sec:reproducing-numerical-instances}

We give one example construction for each of the three base codes in
Table~\ref{tab:representative-merged-codes}. The base-check matrices and
pairings below, together with the pivot assignments and convolution
polynomials in Table~\ref{tab:numerical-instance-parameters}, specify these
examples using the constructions of
Appendices~\ref{app:symmetry-logical-coordinates}
and~\ref{app:sparse-surgery-constructions}.

\paragraph*{Circulant convention.}
All indices start at zero. Write $C_q$ for the $q\times q$ cyclic shift
$C_qe_j=e_{j+1\bmod q}$. An exponent $s$ therefore represents the block
$C_q^s$, whose $(r,c)$ entry is one precisely when $r-c=s\pmod q$.
In the cyclic-ring notation, $x$ is represented by $C_P$, and binary
coordinate $bP+j$ is position $j$ in block $b$.

\paragraph*{CPM--PP codes.}
PP256 and PP576 each have $3\times8$ arrays of circulant permutation
blocks. Define their shift matrices $E_X^{(n)}$ and $E_Z^{(n)}$ by
$(H_X)_{ab}=C_P^{(E_X^{(n)})_{ab}}$ and
$(H_Z)_{ab}=C_P^{(E_Z^{(n)})_{ab}}$.
For PP256, $P=32$ and
\begin{equation}
\begin{aligned}
 E_X^{(256)}&=\begin{pmatrix}
 9&0&15&11&16&14&29&6\\
 10&18&23&14&6&18&26&23\\
 23&11&3&12&16&22&3&10
 \end{pmatrix},
 &E_Z^{(256)}&=\begin{pmatrix}
 23&30&3&6&30&9&17&4\\
 11&25&17&3&7&7&20&30\\
 31&29&12&19&24&29&12&28
 \end{pmatrix}.
\end{aligned}
\label{eq:pp256-check-shifts}
\end{equation}
For PP576, $P=72$ and
\begin{equation}
\begin{aligned}
 E_X^{(576)}&=\begin{pmatrix}
 50&69&70&40&5&39&5&0\\
 32&62&69&51&61&18&6&25\\
 33&37&30&47&46&63&23&23
 \end{pmatrix},
 &E_Z^{(576)}&=\begin{pmatrix}
 32&22&29&49&59&48&36&25\\
 40&67&60&70&69&37&69&30\\
 53&17&18&29&66&45&11&3
 \end{pmatrix}.
\end{aligned}
\label{eq:pp576-check-shifts}
\end{equation}
Every entry, including zero, specifies a permutation block; a zero
exponent denotes the identity matrix.

\paragraph*{APM code (APM1152).}
This code has $3\times12$ arrays of $96\times96$ permutation blocks.
In native coordinates, define the affine permutation matrix
\begin{equation}
 A(a,b)_{r,c}=\begin{cases}
 1,&c=ar+b\pmod{96},\\
 0,&\text{otherwise}.
 \end{cases}
 \label{eq:kasai-native-apm}
\end{equation}
The two families of native blocks are $A(a_i,b_i)$ and $A(c_i,d_i)$,
with parameters
\begin{center}
\begin{tabular}{c|cccccc}
 $i$&$0$&$1$&$2$&$3$&$4$&$5$\\ \hline
 $(a_i,b_i)$&$(61,15)$&$(1,24)$&$(89,62)$&$(25,22)$&$(85,93)$&$(25,78)$\\
 $(c_i,d_i)$&$(5,41)$&$(85,77)$&$(73,66)$&$(1,0)$&$(1,72)$&$(37,9)$
\end{tabular}
\end{center}
Their matrices after the common row and column reordering below are
$F_i$ and $G_i$, respectively. Each then factors as a $3\times3$
permutation tensored with a $32\times32$ circulant.
To specify this basis, write a native position
$z\in\mathbb Z_{96}$ in Chinese-remainder coordinates
$(a,y)=(z\bmod3,z\bmod32)$, and number the second coordinate by
\begin{equation}
 y_0=0,\qquad y_{j+1}=13y_j+3\pmod{32}.
 \label{eq:kasai-orbit-order}
\end{equation}
The $32$ values are distinct. Order each parent block by $(a,j)$,
with index $32a+j$, and use the same reordering for check rows.
Thus parent block $b$ splits into the three cyclic blocks $3b+a$ used
in Table~\ref{tab:numerical-instance-parameters}, with $x$ shifting $j$.

Let $J_3e_a=e_{2-a\bmod3}$ be the reflection of the three coordinates.
For example, $A(61,15)$ becomes $F_0=I_3\otimes C_{32}^{3}$ in this
basis. The twelve $96\times96$ blocks are
\begin{equation}
\begin{array}{c|cccccc}
 i&0&1&2&3&4&5\\ \hline
 F_i&I_3\otimes C_{32}^{3}&I_3\otimes C_{32}^{8}
 &J_3\otimes C_{32}^{10}&C_3^{2}\otimes C_{32}^{18}
 &I_3\otimes C_{32}^{5}&I_3\otimes C_{32}^{26}\\[2pt]
 G_i&J_3\otimes C_{32}^{9}&C_3\otimes C_{32}^{21}
 &I_3\otimes C_{32}^{6}&I_3\otimes I_{32}
 &I_3\otimes C_{32}^{24}&I_3\otimes C_{32}^{9}
\end{array}
\label{eq:kasai-tensor-blocks}
\end{equation}
With these blocks, the check matrices are
\begin{equation}
 H_X=\left(\begin{array}{cccccc|cccccc}
 F_0&F_1&F_2&F_3&F_4&F_5&G_0&G_1&G_2&G_3&G_4&G_5\\
 F_5&F_0&F_1&F_2&F_3&F_4&G_5&G_0&G_1&G_2&G_3&G_4\\
 F_4&F_5&F_0&F_1&F_2&F_3&G_4&G_5&G_0&G_1&G_2&G_3
 \end{array}\right),
 \label{eq:kasai-check-X}
\end{equation}
\begin{equation}
 H_Z=\left(\begin{array}{cccccc|cccccc}
 G_0^T&G_5^T&G_4^T&G_3^T&G_2^T&G_1^T&F_0^T&F_5^T&F_4^T&F_3^T&F_2^T&F_1^T\\
 G_1^T&G_0^T&G_5^T&G_4^T&G_3^T&G_2^T&F_1^T&F_0^T&F_5^T&F_4^T&F_3^T&F_2^T\\
 G_2^T&G_1^T&G_0^T&G_5^T&G_4^T&G_3^T&F_2^T&F_1^T&F_0^T&F_5^T&F_4^T&F_3^T
 \end{array}\right).
 \label{eq:kasai-check-Z}
\end{equation}

\begin{table}[htbp]
\centering
\small
\setlength{\tabcolsep}{8pt}
\renewcommand{\arraystretch}{1.2}
\caption{Pivot assignments and surgery parameters for one example construction
per base code. A range $a{:}b$
includes both endpoints. The first three coordinate groups partition the
base qubits into $X$-pivot, $Z$-pivot, and information
blocks.
Each pair-program triple $(a,b,s)$ pairs blocks $L_a$ and $L_b$ with
relative shift $s$, including all $P$ cyclic translations.
Polynomial exponents refer to
the convolution cyclic order listed below. Spatial coefficients use the
orientation $f z_1+g z_2$ of Appendix~C.3. Temporal legs are ordered as
(earlier round, later round), and $h_i$ is the polynomial on interval $i$.}
\label{tab:numerical-instance-parameters}
\begin{tabular}{lccc}
\hline\hline
 & PP256 & PP576 & APM1152 \\
\hline
Base code & $[[256,68,12]]$ & $[[576,148,18]]$ & $[[1152,580,12]]$ \\
Base cyclic block size $P$ & $32$ & $72$ & $32$ \\
Base check-array shape over $R$ & $3\times8$ & $3\times8$ & $9\times36$ \\
$X$-pivot blocks $S_X$ & $\{0{:}2\}$ & $\{0,1,7\}$ & $\{9{:}17\}$ \\
$Z$-pivot blocks $S_Z$ & $\{3{:}5\}$ & $\{3,4,5\}$ & $\{21{:}23,30{:}35\}$ \\
Information blocks $L$ & $\{6,7\}$ & $\{2,6\}$ & $\{0{:}8,18{:}20,24{:}29\}$ \\
Convolution cyclic order & $32$ & $72$ & $32$ \\
\hline
Pair program $(a,b,s)$ & $(0,1,13)$ & $(0,1,0)$ &
$\begin{gathered}
(6,11,10),\ (14,0,5)\\
(10,8,28),\ (3,9,23)\\
(16,4,12),\ (7,5,30)\\
(1,2,20),\ (13,15,15)\\
(17,12,25)
\end{gathered}$ \\
Spatial layers $m$ & $2$ & $2$ & $2$ \\
Spatial $f$ & $1+x+x^3$ & $1+x^4$ & $1$ \\
Spatial $g$ & $1$ & $1+x+x^3$ & $1+x^3+x^{15}$ \\
Merged code & $[[576,34,12]]$ & $[[1296,74,18]]$ & $[[2304,290,12]]$ \\
Listed merged checks $(m_X,m_Z)$ & $(352,192)$ & $(792,432)$ & $(1440,576)$ \\
Temporal rounds $r$ & $3$ & $4$ & $3$ \\
Temporal legs on interval $i$ & $(h_i,1)$ & $(h_i,1)$ & $(h_i,1)$ \\
$h_1$ & $1+x^{17}+x^{29}$ & $1+x^3+x^4$ & $1+x^{17}+x^{29}$ \\
$h_2$ & $1+x^{21}+x^{25}$ & $1+x^2+x^5$ & $1+x^{21}+x^{25}$ \\
$h_3$ & --- & $1+x+x^6$ & --- \\
\hline\hline
\end{tabular}
\end{table}

\paragraph*{Example pairings.}
Let $L$ list the information blocks in increasing order, as in the table,
and let $e_b$ denote a unit column at block $b$ over the cyclic ring.
A triple $(a,b,s)$ defines the pair-program column
$e_{L_a}+x^s e_{L_b}$, with all $P$
translations included. The attachment $\gamma_1$ consists of the $Z$-pivot unit columns
$e_b$ for $b\in S_Z$ and the listed pair-program columns. We take
$P_0=C_0$, $\gamma_0=I$, and $\partial_P=H_Z\gamma_1$.

\FloatBarrier

\subsection{Spatial convolution: distance and local-degree cost}
\label{sec:numerical-degree-distance}

We compare ordinary and convolved couplings for the two-layer PP576
construction in Fig.~\ref{fig:convolution-performance}(a), holding the
measurement program and the $1296$ physical qubits fixed. In addition to
the static $X$ distance, Table~\ref{tab:spatial-degree-comparison} reports
the maximum check weight and the maximum number of checks incident on a
qubit in each CSS sector.

\begin{table}[!htbp]
 \centering
 \small
 \setlength{\tabcolsep}{4pt}
 \renewcommand{\arraystretch}{1.2}
 \caption{Spatial distance and local-degree cost for two-layer PP576
 constructions, all with $n=1296$ and $k=74$.
 Overall maximum check weights and incidences ($X/Z$) appear to the
 left of the double rule; the nine component families are grouped by
 their role in the merged code to its right. Check columns give maximum weights; qubit columns give separate
 maximum $X/Z$ incidences. Superscripts $(1)$ and $(2)$ denote the
 attached and outer layers, respectively; unsuperscripted auxiliary
 components lie between them. Bold component entries attain the overall maximum
 check weight or qubit incidence in their respective CSS sector. Distances are attained static $X$ distances
 for selected scalars, not optima over the scalar classes.}
 \label{tab:spatial-degree-comparison}
 \begin{tabular}{cccc||ccc|cc|cccc}
  \hline\hline
  & & \multicolumn{2}{c||}{Overall maxima ($X/Z$)}
  & \multicolumn{3}{c|}{$X$-check weight}
  & \multicolumn{2}{c|}{$Z$-check weight}
  & \multicolumn{4}{c}{Qubit incidence ($X/Z$)} \\
  \cline{3-4}\cline{5-7}\cline{8-9}\cline{10-13}
  $(|f|,|g|)$ & $d_X$ & Check weight & Incidence
  & $C_2$ & $P_1^{(1)}$ & $P_1^{(2)}$
  & $C_0$ & $P_0$
  & $C_1$ & $P_0^{(1)}$ & $P_1$ & $P_0^{(2)}$ \\
  \hline
  $(1,1)$ & $7$ & $9/9$ & $5/6$
  & $8$ & $\mathbf{9}$ & $7$
  & $\mathbf{9}$ & $7$ & $4/3$ & $\mathbf{5}/2$ & $2/\mathbf{6}$ & $\mathbf{5}/1$ \\
  $(1,3)$ & $14$ & $9/9$ & $5/6$
  & $8$ & $\mathbf{9}$ & $\mathbf{9}$
  & $\mathbf{9}$ & $\mathbf{9}$ & $4/3$ & $\mathbf{5}/2$ & $4/\mathbf{6}$ & $\mathbf{5}/3$ \\
  $(2,3)$ & $18$ & $10/10$ & $5/6$
  & $8$ & $\mathbf{10}$ & $9$
  & $9$ & $\mathbf{10}$ & $4/3$ & $\mathbf{5}/3$ & $\mathbf{5}/\mathbf{6}$ & $\mathbf{5}/3$ \\
  \hline\hline
 \end{tabular}
\end{table}
\FloatBarrier

Table~\ref{tab:spatial-degree-comparison} shows that $(1,3)$ convolution
raises $d_X$ from $7$ to $14$ without increasing the qubit count, maximum
check weight, or maximum qubit incidence. 
The outer $P_1^{(2)}$ checks and interlayer $P_0$ checks both grow from
weight seven to nine, filling the slack below the ordinary check-weight
ceiling. On the qubits, the interlayer $P_1$ maximum $X$ incidence grows
from two to four and the outer $P_0^{(2)}$ maximum $Z$ incidence from one
to three, while the overall maxima remain five and six. Thus adding
translated couplings need not raise the maximum degree, although it
increases the total number of couplings.
Allowing maximum check weight ten with $(2,3)$ convolution then reaches
the base-distance ceiling $18$, still at the ordinary qubit-incidence
ceiling. 
These comparisons demonstrate the ability of convolution to dramatically 
grow the distance
with only a modest increase in local degree.

\FloatBarrier
\subsection{Distance growth in convolved repetition}
\label{sec:classical-qc-round-scaling}

Temporal convolution can build protection much faster than ordinary
repetition over the first few rounds. We study this effect in the classical
temporal detector code defined below, which
produces the dashed curves in Fig.~\ref{fig:convolution-performance}(b).
The same calculation explains why the available subgroup size limits this
advantage, connecting temporal protection to the programmability tradeoff
in Sec.~\ref{sec:reducing_structure}. Here we isolate the classical code;
the spatial sectors and base-code boundary enter the full-complex distance
calculations separately.

We start by explaining the precise classical distance calculation for the two examples in
Fig.~\ref{fig:convolution-performance}(b). Here, the
couplings lie in $H\cong\mathbb Z_{36}\subset\mathbb Z_{72}$. We take the interval-to-round attachments to be $(1,f_i)$, with $w=|f_i|$ and
\begin{equation}
\begin{array}{c|ccc}
 w & f_1 & f_2 & f_3\\ \hline
 3 & 1+x^{12}+x^{23} & 1+x^8+x^{17} & 1+x^7+x^{13}\\
 5 & 1+x^2+x^9+x^{12}+x^{17} & 1+x+x^{11}+x^{14}+x^{21} & ---
\end{array}
\label{eq:temporal-main-figure-polynomials}
\end{equation}
Here $x$ generates the order-$36$ subgroup, and each $r$-round point uses the first
$r-1$ intervals. 
The exact classical distances are $1,4,13,28$ for $w=3$ and $1,6,23$ for
$w=5$, established by explicit codewords and completed exclusions of all
lower weights. 

\paragraph{Rapid initial growth.}
More generally, we expect suitably chosen convolutions to give an initial
regime of approximately exponential distance growth, with the convolution
weight $|f|=w$ setting the base, as $w^r$. The intuition is that each
additional round can spread an error over roughly $w$ times as many
positions, before cancellations and the finite orbit size limit this growth.

To make this intuition precise, let $H$ be the cyclic subgroup supporting
the couplings, with order $P$, and write
$R=\mathbb F_2[x]/(x^P-1)$. In the forward recurrence convention, an undetected record-error
pattern obeys $u_t=f_tu_{t-1}$ across $r$ rounds, with $r-1$
invertible couplings $f_t\in R$. Its minimum binary Hamming weight is
\begin{equation}
 d_{\mathrm{cl}}(r)=\min_{u_0\ne0}\sum_{t=0}^{r-1}|u_t|,
 \qquad u_t=\left(\prod_{j=1}^{t}f_j\right)u_0.
 \label{eq:classical-qc-distance-definition}
\end{equation}
Ordinary repetition copies the same pattern
into every round and has $d_{\mathrm{cl}}(r)=r$. Convolution instead changes
its support from round to round, potentially increasing the total weight.

If every coupling has weight $w>1$, one occupied position can spread to
at most $w$ positions in the next round. Starting with $u_0=1$ therefore
gives the explicit word $(1,f_1,f_2f_1,\ldots)$ and the upper bound
\begin{equation}
 d_{\mathrm{cl}}(r)
 \leq\sum_{t=0}^{r-1}\left|\prod_{j=1}^{t}f_j\right|
 \leq\frac{w^r-1}{w-1}.
 \label{eq:classical-qc-geometric-envelope}
\end{equation}
This is the temporal counterpart of the classical-word construction in
Appendix~\ref{sec:convolution-distance-upper-bounds}. 

Different intervals use different polynomials for a reason. Reusing one
polynomial gives $f^2(x)=f(x^2)$ over $\mathbb F_2$, so two applications
have weight at most $w$, rather than $w^2$. At three rounds the same
single-position word then gives $d_{\mathrm{cl}}\leq1+2w$.

\paragraph{The limit imposed by subgroup size.}
The relevant orbit size is $P=|H|$, since short logicals are only spread to a single subgroup orbit. 
Since each round contains only $P$ positions, combining this observation
with the geometric bound immediately gives
\begin{equation}
 d_{\mathrm{cl}}(r)\leq
 \min\left\{\frac{w^r-1}{w-1},\,rP\right\}.
 \label{eq:classical-qc-group-ceiling}
\end{equation}
At fixed subgroup size, exponential growth cannot persist as rounds are
added: distance is at most linear in $r$. Smaller subgroups limit the
space available for spreading, explaining why finer programmability can
require more rounds of protection. Sharper linear bounds depend on the
cyclic group, as illustrated by the reference lines below.

\paragraph{Numerical evidence for the crossover.}
Figure~\ref{fig:classical-qc-crossover} follows random sequences of
weight-three or weight-five unit polynomials for up to $128$ rounds.
We sample distinct-exponent supports uniformly, conditioned on
$\gcd(f_t,x^P-1)=1$, independently between intervals; shorter sequences
are prefixes of the same longer sequence. For each sequence, we enumerate
all $2^P-1$ nonzero initial patterns, giving exact distances. These samples
are not selected for high distance.

The figure plots $d_{\mathrm{cl}}(r)/(rP)$, so a plateau corresponds to
linear distance growth. The curves initially rise and then approach
group-dependent ceilings. In these samples, increasing the convolution
weight accelerates the initial growth and the approach to this regime,
without removing the linear ceiling.
The practical advantage comes from reaching the target distance
within the first few rounds, while the subgroup orbit still provides room
for rapid spreading, and using a sufficiently large subgroup. This is the regime used by the short constructions
in the main text.

\begin{figure*}[t]
 \centering
 \includegraphics[width=\textwidth]{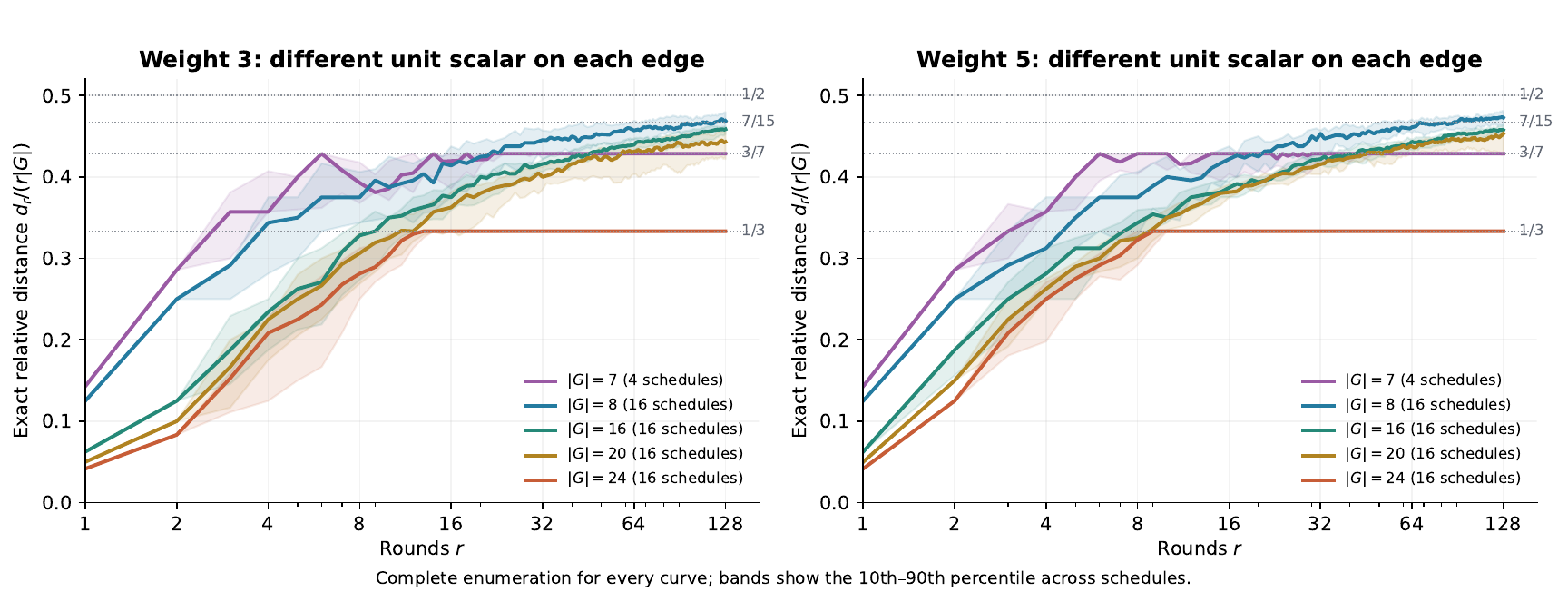}
 \caption{\textbf{Distance growth in convolved repetition.}
 Exact relative distance $d_{\mathrm{cl}}(r)/(rP)$ for random sequences of
 invertible convolutions of weight three (left) and five (right).
 Curves show medians and bands the 10th--90th percentiles over 16 sequences
 per group and weight, except $P=7$, which uses four. Every distance is
 obtained by enumerating all nonzero initial patterns. A plateau means
 linear growth with rounds. Horizontal lines give sharper upper bounds:
 $1/2$ for $P=8,16$, $3/7$ for $P=7$, $7/15$ for $P=20$, and $1/3$ for
 $P=24$. These follow by averaging over a nonzero-word orbit under all
 invertible convolutions: each round permutes this set, so the average
 total weight is $r$ times its mean word weight and bounds the minimum
 from above. The lines use the smallest such mean, divided by $P$.
 These are classical-code distances, without spatial sectors or
 base-code boundaries.}
 \label{fig:classical-qc-crossover}
\end{figure*}

\FloatBarrier

\subsection{Circuit-derived logical-error estimates}
\label{sec:circuit-numerical-methods}

Figure~\ref{fig:convolution-performance}(c) compares complete circuit-level
simulations in the measurement-based implementation. These experiments
exercise logical measurement and preservation of the unmeasured logical
sector, providing a circuit-level performance test of the construction.
They use block readout rather than the pairwise measurement programs of
Table~\ref{tab:numerical-instance-parameters}.

\paragraph*{Constructions and noise model.}
For PP256, both experiments use the same two-layer spatial gadget: the
attachment selects physical blocks $3,4,5,6$, with blocks $3,4,5$ supplying
the pivot coordinates and block $6$ the readout orbit. The spatial
coefficients are $f=1+x+x^3$ and $g=1$ over $\mathbb F_2[\mathbb Z_{32}]$.
The repetition experiment uses twelve merged rounds with identity temporal
couplings. The convolved experiment uses three rounds, with temporal legs
$(1+x^{17}+x^{29},1)$ and $(1+x^{21}+x^{25},1)$, ordered as
(earlier round, later round). Both circuits include the bare-code boundary
records, preparation and terminal readout, and the corresponding input and
output checks. Their CNOT schedules are fixed before the reported sampling.

The PP576 experiment reads out one full orbit of $72$ logicals while
preserving the other orbit. It uses a two-layer spatial gadget with
$f=1+x$ and $g=1+x^{10}+x^{24}$ over $\mathbb F_2[\mathbb Z_{72}]$.
Its four-round temporal complex has legs $(1,h_i)$ with
$h_1=h_3=1+x^4+x^{24}$ and $h_2=1+x^{12}+x^{28}$.
This is a separate construction from the PP576 examples in
Fig.~\ref{fig:convolution-performance}(a,b).

After each CNOT, a two-qubit depolarizing channel applies a uniformly
chosen nonidentity two-qubit Pauli with total probability $p_{2q}=10^{-3}$.
Preparation and measurement faults have probability $p_{\mathrm{SPAM}}=10^{-3}$:
a $Z$ error follows an $X$-basis preparation or precedes an $X$ measurement,
and an $X$ error is used for the corresponding $Z$-basis operations.
Although no explicit idle-error channel is included, preparation and
measurement faults on the resource qubits contribute effective data noise
in the circuit interpretation of MBQC. Here their combined outcome-flip
probability is approximately $2p_{\mathrm{SPAM}}=2\times10^{-3}$, before
including two-qubit gate faults.
We construct the detector error model
(DEM) from each noisy Stim circuit~\cite{gidney2021stim},
and sample detector events and observable flips from that DEM. 

\paragraph*{Decoding and failure definition.}
For our fast and reliable decoding, we use a three-tier belief-propagation-based~\cite{mueller2025improved,zhao2026towards} policy, with identical tuning for all three experiments.
The first tier uses a single-basis detector family only, while the second and third tiers use the joint detector model. The gurobi integer-programming solver~\cite{gurobi2026manual} is used for final decoding of shots that fail to converge.

A shot fails if the final recovery disagrees with any included logical
observable. Scoring includes the two additional measured residual parities
as well as the requested readouts and surviving logical information. Thus
these are complete-experiment failure rates. None of our reported shots remained unresolved.

\begin{table}[!htbp]
\centering
\small
\caption{Samples and final failures used in Fig.~\ref{fig:convolution-performance}(c).
The failure columns label the input basis of the experiment, not the
Pauli type of the logical error.}
\label{tab:circuit-sampling-counts}
\begin{ruledtabular}
\begin{tabular}{lrrrr}
Construction & Rounds & Shots per basis & $X$-basis failures & $Z$-basis failures \\
PP256 repetition & $12$ & $20\,000\,000$ & $3$ & $34$ \\
PP256 convolved & $3$ & $20\,000\,000$ & $1$ & $14$ \\
PP576 convolved & $4$ & $100\,000\,000$ & $0$ & $0$
\end{tabular}
\end{ruledtabular}
\end{table}
\FloatBarrier

A flipped logical $Z$ outcome signals an error with a logical $X$ component;
a flipped logical $X$ outcome signals an error with a logical $Z$ component.
The larger failure count in the $Z$-basis experiments therefore should not
be read as a predominance of logical $Z$ errors. Since each shot is scored
against all included observables, these aggregate counts do not resolve
which logical observable failed.

With equal sample counts in the two bases, the plotted rate
$(k_X+k_Z)/(N_X+N_Z)$ estimates the mean of the two basis-specific failure
probabilities. Figure error bars are two-sided $95\%$ Wilson intervals
on pooled counts, used as an approximate summary of the two equal-sized
strata. For zero observed failures, the estimate remains zero and only the
interval upper endpoint is shown. These sampling results do not establish
an exact circuit distance or an optimal-decoder failure rate.

\subsection{Subgroup programmability and resource estimates}
\label{sec:subgroup-numerical-methods}

Figure~\ref{fig:structure-overhead} compares the resources needed for
$72$ pairwise measurements on $144$ active logicals of PP576 as the
allowed symmetry subgroup $H\leq\mathbb Z_{72}$ is reduced. At each
subgroup order, the active logicals form $144/|H|$ subgroup orbits.
A sampled program pairs these orbits and assigns a relative cyclic shift
to each pair, giving $72/|H|$ translated measurement families. The total
number of target measurements is fixed; this study changes their pairing
and batch size rather than puncturing measurements from a fixed program.

\paragraph*{Spatial search and certification.}
Both search methods preserve $H$ symmetry and impose maximum
$(X\text{-check weight},X\text{ incidence},Z\text{-check weight},Z\text{ incidence})$
no larger than $(11,6,11,6)$. The scalar search varies the number of layers
and their scalar couplings, increasing the layer count when the tested
choices fail to reach the target. The search is random rather than an exhaustive optimization over
all couplings.

The matrix search also permits paired maps satisfying
Eq.~\eqref{eq:general-convolution-square}. Its homotopy modifications have
the form $S_1=fI+h\partial_P$, $S_0=fI+\partial_Ph$, with
$h:P_0\to P_1$, so the commuting square is preserved. Short logical
witnesses guide further modifications, including symmetry-preserving
ancilla additions and check splitting to satisfy the degree cap~\cite{zheng2025high}. The red
curve represents this combined design procedure.

These designs have verified distances $18$ with matching upper and lower bounds. 
We also verify commutation, the intended measured subspace,
and the surviving logical dimension $k=74$. 

\paragraph*{Resource accounting and sample coverage.}
For each matching and method, we use the smallest  merged code found. Each plotted point is the largest cost among the sampled
matchings at that subgroup order. The physical footprint is
\begin{equation}
 n_{\mathrm{phys}}=n_{\mathrm{merged}}+m_X+m_Z,
 \qquad \text{space per participating logical}=n_{\mathrm{phys}}/144,
\end{equation}
where $m_X,m_Z$ count listed checks, including dependent generators, with
one measurement ancilla per check and no ancilla reuse.

The series contains three matchings at each displayed subgroup
order. The matrix optimization is
stochastic, and poorly converged solutions with cost higher than the scalar value are omitted from the plotted curve. 

\FloatBarrier

\paragraph*{Temporal protection and spacetime accounting.}
Panel~(c) multiplies the spatial footprint by the number of syndrome
rounds $r$ needed to reach the target temporal distance $18$. Candidate
scalar convolutions are obtained using the detector code of
Appendix~\ref{sec:classical-qc-round-scaling}: the search samples up to
$5000$ schedules at each subgroup order and round count, with independent
weight-three unit couplings and verified $d \geq 18$ lower bounds.
The scalar and matrix designs use the same round count at fixed $|H|$,
as summarized in Table~\ref{tab:subgroup-round-counts}.

\begin{table}[!htbp]
\centering
\small
\caption{Syndrome rounds for the complexes in in Fig.~\ref{fig:structure-overhead}(c)
to reach temporal distance $18$. The counts are identical for both scalar and
matrix designs.}
\label{tab:subgroup-round-counts}
\renewcommand{\arraystretch}{1.15}
\setlength{\tabcolsep}{5pt}
\begin{ruledtabular}
\begin{tabular}{l*{9}{c}}
Subgroup order $|H|$ & $1$ & $2$ & $3$ & $4$ & $6$ & $8$ & $9$ & $12$ & $18$ \\
\hline
Syndrome rounds $r$ & $18$ & $18$ & $18$ & $10$ & $9$ & $6$ & $6$ & $5$ & $4$
\end{tabular}
\end{ruledtabular}
\end{table}
\FloatBarrier

For $|H|\leq3$, no weight-three unit exists, and identity repetition gives
$r=18$. 

The plotted spacetime cost is $r n_{\mathrm{phys}}/144$. The dashed
$a/|H|$ curve is an unweighted least-squares fit to the scalar costs at
$|H|=4,6,8,9,12,18$, giving $a=1258.1$. Its extension to smaller subgroups
is an extrapolation.

\FloatBarrier

\subsection{Magic-state loading resource accounting}
\label{sec:magic-loading-accounting}

Table~\ref{tab:pp256-surface-injection-resources} compares three concrete loading constructions at source distance seven. We count allocated code and attachment qubits, excluding check-measurement ancillas and resource preparation. For a batch of $B$ states, the per-state count is
\begin{equation}
 n_{\mathrm{load}}/B=(n_{\mathrm{source}}+n_{\mathrm{destination}}+n_{\mathrm{interface}})/B.
\end{equation}
The interface includes any intermediate code and all surgery bridges. Both bridges are allocated in the two-stage constructions, so these totals are not optimized peak hardware counts.

For our lifted direct injection construction, sixteen rotated $[[49,1,7]]$ patches couple to a $[[256,68,12]]$ block through $384$ attachment qubits. Thus
\begin{equation}
 n_{\mathrm{load}}=16\cdot49+256+384=1424,
 \qquad n_{\mathrm{load}}/16=49+16+24=89.
\end{equation}
The stated static construction has $d_X=7$, $d_Z\geq12$, maximum check weight $10$, and maximum total qubit degree $7$. 
These static-code parameters are distinct from the protection of the temporally convolved loading schedule discussed below.

For the LP transistor construction of Ref.~\cite{zheng2026logical} (Sec.~V.2), each batch loads $66$ states using two $33$-patch stacks, a $1122$-qubit destination, a $1749$-qubit transistor, and bridges of $396$ and $99$ qubits. Consequently,
\begin{equation}
 n_{\mathrm{load}}=66\cdot85+1122+1749+396+99=8976,
 \qquad n_{\mathrm{load}}/66=85+17+34=136.
\end{equation}
For the mitten construction of Ref.~\cite{bhardwaj2026high} (Appendix~E), the distance-seven source contributes $85$ qubits per state, the magic port contributes $3\cdot7-1=20$, and the bridges contribute $6+1$. Choosing the $[[200,40,12]]$ destination gives
\begin{equation}
 n_{\mathrm{load}}=40(85+20+6+1)+200=4680,
 \qquad n_{\mathrm{load}}/40=117.
\end{equation}
Both reference constructions use unrotated patches, whose product structure supports their LP interfaces. The patch geometry and fraction of the destination loaded differ from our example, so total-space ratios do not isolate the loading mechanism. The comparison also does not optimize direct attachments for the reference codes.

The time entries count syndrome rounds in the sequential joint-measurement stages, excluding preparation and terminal readout. Zheng et al. use two stages scaling with source distance (Sec.~V.2 and Appendix~D of Ref.~\cite{zheng2026logical}); seven rounds per stage gives the $14$-round baseline. We assume the same repetition schedule for the two switches in the mitten protocol, rather than infer a measured injection latency from Ref.~\cite{bhardwaj2026high}. 
The three-round value for our construction comes from optimizing $(1,f_i)$ temporal convolutions with $|f_i|=3$. Looking at the distance with respect to the measured $X$ (eq. $Z$)-PPMs, we confirm the fault-distance becomes limited by the input distance of 7. These round counts therefore compare joint-measurement stages rather than total loading latency.

\section{Asymptotic construction for symmetric surgery}
\label{app:pk-transverse-products}

In this section, we explore the implications of performing symmetric surgery
with symmetric codes through asymptotic constructions.

Our starting point is the Panteleev--Kalachev (PK) family of lifted-product
constructions, which combine expander graphs and local codes to obtain
sparse complexes with strong distance guarantees~\cite{panteleev2022almostlinear,panteleev2022asymptotically}.
Expansion also supplies short fillings: small undetectable ancillary
errors can be removed while adding at most a constant multiple of their
weight to the error on the base code.
For joint spatial and temporal protection, we use product-expanding local
codes and high-dimensional cubical expanders to extend this control to
errors supported across several directions~\cite{kalachev2025extendable_doi,dinur2024cubical_doi}.
These constructions let us prove that symmetric surgery can achieve a
distance proportional to the base-code distance with only polylogarithmic
overhead in both space and total spacetime resources, under the symmetry
and sparsity assumptions stated below.

Appendix~\ref{si-pk:sec:two-page-proof} first proves the
squared-logarithmic spatial-overhead bound for a general sparse starter.
Appendix~\ref{sec:pk-temporal-distance} then gives a jointly chosen base
code and protection factors that support flexible surgery programs with
polylogarithmic spatial and spacetime overhead.
Appendix~\ref{sec:asymptotic-supporting-results} collects the supporting
constructions and filling estimates used in both proofs.

The starter and attachment conventions are those of
Appendix~\ref{app:sparse-surgery-constructions}; the protected fault
sectors are those of Appendix~\ref{app:efficient-fault-tolerance}.
All products are balanced over the same commutative ring
$R=\mathbb F_2[G]$, with the group specified in each result, and all
weights count expanded binary coordinates. The bounds concern static
code distance and phenomenological fault-complex distance, respectively.

\subsection{Distance guarantee from two transverse factors}
\label{si-pk:sec:two-page-proof}

In this section, we present a construction of a merged code
with an asymptotic distance guarantee, that leverages the symmetry structure of the
base code and logical operator, in order to achieve sub-linearly scaling
space overhead.
The key connection to our primary construction is that in both cases
the ancilla complex is attained by a lifted product of the starter complex
$P_1 \rightarrow P_0$ and a transverse classical code.
For our practical constructions, we have used lifts of short repetition codes
to achieve the target distance. That is not what is proven here.
The purpose of this proof is simply to show that symmetries are also useful for better scaling bounds than bare repetition alone.
The construction below is not practical.
Nevertheless, it helps to justify that symmetries and the general structure we study is also asymptotically efficient.
This also provides confidence that the procedure will continue to support good performance
at larger distances. In particular, we show below how the Panteleev-Kalachev construction
can be applied to build a different complex $V$, such that the merged code
with $C$ attached to $P \otimes_{R} V$
has a distance guarantee.
The spatial overhead required
is only $O(\log^2 d)$ in the limit of a sufficiently large group
$|G| = \Theta(d)$, and furthermore this assumes no expansion or additional structure in the base code or starter complex
beyond the group symmetry.

\begin{theorem}[Squared-logarithmic spatial overhead]
\label{si-pk:thm:two-factor-spatial-distance}
Let $C_2\to C_1\to C_0$ have a logically faithful starter
$D=\partial_P:P_1\to P_0$ with attachment $\gamma$, and set
$d=d_X(C_\bullet;J)<\infty$ as in
Eq.~\eqref{eq:surviving-base-x-distance}. Assume based free modules over
$R=\mathbb F_2[G]$, with $G$ abelian of power-of-two order $g\ge2$,
and uniformly bounded physical row and column weights. Assume also
that $\operatorname{coker}D\to H_0(C)$ is injective.

There is a bounded-degree merged code, measuring the same logical
subspace, with $d_X=d$ and $O(m)$ times the starter size in added
physical qubits and checks, where
\[
 m=O\!\left(\max\{1,\log^2g,(d/g)^2\}\right).
\]
No expansion or classical-distance assumption on the starter is needed.
If $g=\Theta(d)$ and the starter has size $O(\text{base size})$, this
is $O(\log^2d)$ spatial overhead.
\end{theorem}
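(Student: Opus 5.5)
We plan to realize the auxiliary as a balanced (lifted) product $E_\bullet=P_\bullet\otimes_R V_\bullet$, where $V_\bullet$ is a bounded-degree complex over $R=\mathbb F_2[G]$ built from a lifted product of two Panteleev--Kalachev expander codes. We attach only at one distinguished coordinate of $V$, generalizing the repetition attachment $\gamma_1^E(z)=\gamma_1z_1$ of \cref{sec:scalar-convolution-lifted-product}. The distance will then follow from \cref{prop:static-distance-repair}, by exhibiting a removal bound with $\Delta\ge d$ and $\lambda=O(1)$ relative to the base; the upper bound $d_X\le d$ is automatic from $d_X(F_\bullet)\le d_X(C_\bullet;J)$.

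\textbf{Step 1: choose the transverse factor.} For abelian $G$ of order $g=2^k$, we take PK quasi-abelian classical codes $A,B$ over $R$: each has $O(m^{1/2})$ protograph coordinates, a bounded-degree lift, linear-in-length distance and small-set expansion (unique-neighbor or cofilling) of the expanded binary Tanner graph. Their lifted product $V_\bullet=A\otimes_R B$ is a three-term complex. We pick the number of coordinates $m$ so that its relevant systolic and cofilling radius, measured in $R$-coordinates times $g$, exceeds $d$. The quantitative input is that a PK lift over a group of order $g$ needs base length $\Omega(\log g)$ for expansion, since girth in the lifted Cayley structure must be $\Omega(\log g)$. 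This gives $m\gtrsim\log^2g$. Separately, to reach total spreading $\gtrsim d$ when $g<d$ we need $m\gtrsim (d/g)^2$. Together these yield $m=O(\max\{1,\log^2g,(d/g)^2\})$.

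\textbf{Step 2: logical faithfulness.} We show that the attached coordinate induces isomorphisms $H_1(E)\cong H_1(P)$ and $H_0(E)\cong H_0(P)$ on the relevant pieces, i.e.\ $V$ acts as a ``generalized repetition'' whose homology in the attachment degree is $R$ generated by the attachment vector. This uses a Künneth-type argument over $R$. It is available because $R$ is a local ring (a $2$-group algebra over $\mathbb F_2$), so free modules behave well and unit arguments as in \cref{eq:convolution-cycle-extension} apply. The injectivity of $\operatorname{coker}D\to H_0(C)$ then transfers to requirement (2) of \cref{def:logical-correctness}, and $J$ is unchanged.

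\textbf{Step 3: bounded ancilla removal (main obstacle).} Given an ancilla pattern $a\in\im\partial_1^E$ of weight $<d$, we must find $z$ with $\partial_1^E z=a$ and $|\gamma_1 z_{\mathrm{att}}|\le\lambda|a|$. We would first copy the repetition argument of \cref{sec:repetition-distance-lower-bounds}: locate an ``empty'' region of $V$, then solve outward. Expansion of $A$ and $B$ replaces emptiness of a layer by a linear cofilling inequality in each direction. The difficulty is that errors on $P\otimes V$ are not small in the $V$-direction alone: $P$ carries no expansion. We therefore work fiberwise over $P$-coordinates, treating each binary basis element of $P_j$ as indexing a copy of $V$ twisted by $G$. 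We then apply the PK cofilling bound on each fiber, where the fiber weights sum to $|a|$, and combine using bounded row and column weights of $D$ to control cross terms through the relation $\partial^E=D\otimes1+1\otimes\partial^V$. The point requiring care is that balanced products over $R$ mix fibers by group translations. We handle this by using $G$-equivariance of the PK cofilling maps, so the fiberwise fillings can be chosen $R$-linearly and commute with $D$. This yields $\lambda=O(1)$, depending on degrees and $\lambda_\gamma$, and $\Delta=\Omega(\text{systole of }V)\ge d$.

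\textbf{Step 4: resource count.} Sizes are $|E_j|=O(m)\cdot(|P_1|+|P_0|)$ with bounded degrees. For $g=\Theta(d)$ this gives the claimed $O(\log^2 d)$ overhead.
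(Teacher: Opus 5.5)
Your architecture matches the paper's: $E=P\otimes_R V$ with $V$ a lifted product of two PK factors, the base attached at one coordinate of $V$, and a removal bound fed into \cref{prop:static-distance-repair}. However, two steps do not go through as written.

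\textbf{Exact distance.} The conclusion $d_X=d$ does not follow from your Step~3. \cref{prop:static-distance-repair} gives $d_X(F_\bullet)\ge\min\{\Delta,d/\lambda\}$, and your $\lambda$ is a large constant: the PK filling cost $\rho=w$ times the column weight $w_\gamma$ of $\gamma_1$. So you only get $d_X\ge d/\lambda=\Omega(d)$. Enlarging $m$ raises $\Delta$ but never touches the $d/\lambda$ term, so merged logicals of weight between $d/\lambda$ and $d$ are not excluded.

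The missing idea is to make ancilla support count at least $\lambda$ times. Replace every ancillary qubit, in both $P_0\otimes_R V_0$ and $P_1\otimes_R V_{-1}$, by $k\ge\max\{1,w_\gamma\rho\}$ copies. Let each $X$ check act on all copies and each old $Z$ check on the first copy, and add adjacent-copy $Z$ equality checks. This preserves commutation, the logical quotient and bounded degree, and it forces undetectable errors to be constant on each block of copies. The cleaned bound then becomes $|e+\gamma_1z_1|\le|e|+w_\gamma\rho|a|\le|e|+k|a|<d$, which gives exactly $d_X=d$ at constant extra cost.

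\textbf{Homology of $V$.} Step~2 asserts that $V$ behaves like a repetition code with homology $R$ at the attachment vector. A lifted product of two PK codes does not. First, $\ker H_1$ contains all products of kernel words of the two factors. Second, unless both factors are surjective over $R$, there is also middle homology $H_{-1}(V)$.

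The middle homology, tensored with $H_1(P)=\ker D$ (nonzero whenever anything is measured), gives degree-zero classes of $E$ supported on $P_1\otimes_R V_{-1}$ that $\gamma_{*,0}$ cannot see. These are new logical qubits, violating requirement~(2) of \cref{def:logical-correctness}, and they also destroy the middle exactness your filling step relies on. The large $\ker H_1$ likewise creates new logical qubits through $\operatorname{coker}D\otimes H_0(V)$ whenever $\operatorname{coker}D\neq0$.

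Künneth gives $V\simeq R$ only once each factor is a split surjection with free rank-one kernel normalized at the attached coordinate. The paper achieves this by restricting each PK factor to an invertible maximal square coordinate minor plus one extra column. This is where your ``local ring'' remark must be cashed out: the augmentation ideal of $\mathbb F_2[G]$ is nilpotent, so a minor with nonsingular augmentation is a unit. The restriction also needs three further ingredients:
\begin{itemize}
\item a probabilistic choice of local coordinate permutations, so that such minors exist without spoiling the PK local properties;
\item a proof that shortening preserves the filling bound (a full-product filling lighter than $d_A,d_B$ vanishes outside the retained rectangle, since its outside columns and rows are short kernel words);
\item the kernel-distance bound $d(\ker H_1)\ge d_Ad_B/g$.
\end{itemize}

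\textbf{Step~3 with this $V$.} The removal is then a two-stage argument rather than an ``empty region'' one.
\begin{itemize}
\item The $Z$ checks on $P_1\otimes_R V_{-2}$ force $H_2a_1=0$ in every free $P_1$ coordinate.
\item PK's middle-degree product filling gives $H_1z=a_1$ with $|z|\le\rho|a_1|$. This filling comes from product expansion of the local codes, not from each factor's expansion separately.
\item After multiplying by the added $X$ checks indexed by $z$, the remainder satisfies $H_1(a_0+Dz)=Da_1+DH_1z=0$ automatically. It therefore vanishes once $|a_0+Dz|\le\max\{1,w_D\rho\}|a|<d(\ker H_1)$.
\end{itemize}
No $R$-linear choice of fillings is needed, and weight-bounded fillings generally cannot be chosen linearly.

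\textbf{Source of the $\log g$.} The $\log g$ in $m$ comes from needing $\Theta(\log g)$ base vertices for a random abelian $G$-lift to be a spectral expander (a union bound over characters). It does not come from a girth requirement.
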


\begin{proof}
We start by defining an ancilla complex $V$ built from a product of two
classical codes.
Then, we show how to remove the connecting ancillary error using a short transverse filling,
then use transverse distance to eliminate the remaining ancillary support.

\paragraph*{1. Build a transverse code with short fillings.}
Fix a sufficiently large Tanner graph degree $w$ and compatible binary
local codes of length $w$, independently of $g,m,d$.
Use two PK factors with $\Theta(\sqrt m)$ orbit
coordinates each. Consider an invertible maximal minor and one extra
column in each, obtaining split surjections $A',B'$ with rank-one
kernels and normalized information coordinates. Their lifted product is
\[
 V_0\xrightarrow{H_1}V_{-1}\xrightarrow{H_2}V_{-2},\qquad
 H_1z=(A'z,z(B')^{\mathsf T}),\quad
 H_2(u,v)=u(B')^{\mathsf T}+A'v.
\]
Each module has $O(m)$ orbit coordinates. The properties we use are
\begin{equation}
 \begin{gathered}
 \ker H_2=\im H_1,\qquad \ker H_1=Rq,\quad q_1=1,
 \qquad d(\ker H_1)=d_H=\Omega(gm),\\
 H_2b=0,\ |b|<\tau\ \Longrightarrow\
 \exists z:\ H_1z=b,\quad |z|\le\rho|b|,
 \qquad \tau=\Omega_w(g\sqrt m),\quad\rho=w.
 \end{gathered}
 \label{si-pk:eq:compact-transverse}
\end{equation}
PK's block filling estimate converts to $|z|\le w|b|$ in binary
weight; Lemma~\ref{si-pk:lem:explicit-rho} makes this conversion explicit
\cite[Lemma~1 and Proposition~1]{panteleev2022asymptotically}.
Shortening preserves this bound: a filling outside the retained
rectangle would contain a constituent kernel word below its distance.
Split surjectivity gives exactness; $d_H\ge d_A d_B/g$.
Proposition~\ref{prop:pk-transverse-package} below supplies these inputs
and the additional coordinate-minor requirement, with parameter
dependence explicit in Appendix~\ref{si-pk:sec:construction-parameters}.

\paragraph*{2. Couple the base to the first transverse coordinate.}
Form $E=P\otimes_R V$ and couple the base code only to the first
transverse coordinate, chosen so that $q_1=1$. An X error has coordinates
$(e,a_0,a_1)$ on
$C_1\oplus(P_0\otimes_R V_0)\oplus(P_1\otimes_R V_{-1})$.
The added X stabilizers and zero-syndrome equations are
\begin{equation}
 z\longmapsto(\gamma_1z_1,Dz,H_1z),\qquad
 \partial_1e=\gamma_0(a_0)_1,\quad
 H_1a_0=Da_1,\quad H_2a_1=0.
 \label{si-pk:eq:compact-cycles}
\end{equation}
The last equation is imposed by physical Z checks; both ancillary
qubit sectors are retained. Since $V$ is homotopy equivalent to $R$
through its first coordinate, the mapping-cone exact sequence
identifies the merged logical space with the intended bare quotient
\cite[Eq.~(8)]{ide2025homological}. The assumed injectivity
excludes additional logical classes.

\paragraph*{3. Fill the connecting error, then eliminate the remainder.}
Let $a=(a_0,a_1)$, let $w_D$ bound the column weight of $D$, and set
\[
 \tau_{\rm clean}=\min\{\tau,d_H/\max\{1,w_D\rho\}\}=\Omega(g\sqrt m).
\]
For an undetectable error with $|a|<\tau_{\rm clean}$, the equation $H_2a_1=0$
and \eqref{si-pk:eq:compact-transverse} give
$H_1z=a_1$ with $|z|\le\rho|a_1|$.
Apply the transverse filling separately to each free $P_1$ coordinate;
the physical weights add. Multiplying by the added X stabilizers
specified by $z$ gives
\[
 (e,a_0,a_1)\sim(e+\gamma_1z_1,\ a_0+Dz,\ 0).
\]
The remaining ancillary component is a transverse kernel word because
$H_1(a_0+Dz)=Da_1+DH_1z=0$. But
\[
 |a_0+Dz|\le |a_0|+w_D\rho|a_1|
 \le\max\{1,w_D\rho\}|a|<d_H.
\]
Thus it vanishes in every free $P_0$ coordinate. We have removed
\emph{all} ancillary support using at most $\rho|a|$ added generators.
This supplies Proposition~\ref{prop:static-distance-repair}'s removal
condition with $\lambda=\max\{1,w_\gamma\rho\}$, where $w_\gamma$
bounds the column weight of $\gamma_1$. Only the bounded column weight of $D$ entered the argument.

Before the optional replication in the next step, this gives
\[
 d_X(F)\ge\min\{\tau_{\rm clean},d_X(C_\bullet;J)/\lambda\},
 \qquad \lambda=\max\{1,w_\gamma\rho\}.
\]
The removal argument applies to any transverse complex with the stated
properties and any fixed filling cost $\rho$; the construction used
here has $\rho=w$.

\paragraph*{4. Preserve the full distance and count resources.}
Choose $\sqrt m$ a sufficiently large constant times
$\max\{1,\log g,d/g\}$, so that $\tau_{\rm clean}\ge d$.
To absorb the constant cleaning cost, replace every ancillary qubit
in both sectors by $k$ copies, for a fixed integer
$k\ge\max\{1,w_\gamma\rho\}$. Each X check acts on all its copies;
old Z checks act on first copies, and adjacent-copy Z equality checks
force an undetectable X error to be constant within each block.
This preserves commutation, the logical quotient, and bounded degrees.
The physical weight is now $|e|+k|a|$.

Every undetectable error below weight $d$ therefore cleans to a bare
representative satisfying
\[
 |e+\gamma_1z_1|
 \le |e|+w_\gamma\rho|a|
 \le |e|+k|a|<d.
\]
It is trivial in the surviving quotient. Conversely, a surviving bare
logical of weight $d$ remains nontrivial after attachment. Hence
$d_X=d$.

The modules $V_0,V_{-1},V_{-2}$ each have $O(m)$ orbit coordinates,
so the added qubits and checks number $O(m)$ times the starter size.
Tensoring over $R$ shares the group coordinate and introduces no extra
factor of $g$. Constant repetition changes only the prefactor.
The choice above gives the stated $m$ and completes the proof.
\end{proof}

With additional compatible product-expansion structure inherited from the
base code and preserved by programming, we expect one new transverse factor can
suffice, improving $O(\log^2d)$ to $O(\log d)$ spatial overhead.
Here we focus on the general construction: beyond logical correctness and
sparsity, it uses only the stated group structure of the code and target
logical measurements, without an expansion assumption on the base code
or starter. The one-factor improvement requires a separate joint filling
hypothesis; it does not follow from ordinary base-code expansion alone.

\subsection{Joint spatial and temporal distance guarantees}
\label{sec:pk-temporal-distance}

We now present a construction with a full spacetime distance guarantee.
This involves selecting a base code and the factors providing spatial and temporal
protection together. The surgery program is chosen afterward, and it needs
only to respect the symmetry, have bounded physical degrees, and implement
the intended logical measurements. The proof removes short ancillary
errors and reduces the remaining fault to one protected by the base code.

Use $E=P\otimes_R V$ and $F=\operatorname{Cone}(E\to C)$ as in
Appendix~\ref{app:sparse-surgery-constructions}. The spatial complex $V$
has degrees $0,-1,-2$ and the temporal complex $T$ has degrees $0,1,2$.
All products share the group coordinate. Write $N_X$ for the total number
of binary cells of a complex $X$. Distances in this subsection count
binary cells, rather than elementary circuit faults.

\begin{theorem}[Base codes supporting flexible surgery programs]
\label{thm:joint-pk-spacetime}
Let $G$ be a cyclic group and write $g=|G|$ for its order.
Fix bounds on the physical row and column weights of the starter
differential and attachment maps, and on the size ratio $N_P/N_C$.
For sufficiently large power-of-two values of $g$, the following
construction is possible. Write $\ell=\Theta(\log g)$ and
$d_*=g\ell$ for its distance scale.

\begin{enumerate}
\item \textbf{Choose the base code and protection factors.}
Choose a binary PK-derived base code $C$, a spatial complex $V$,
and a temporal complex $T$ jointly, all using the same cyclic lift
group $G$. Each of $V$ and $T$ is formed by
taking the lifted product of two sparse classical-code complexes over
$R=\mathbb F_2[G]$. The base code has
$N_C=\Theta(g\ell^2)$ cells and distance $\Omega(d_*)$.

\item \textbf{Program the measurements.}
Choose any logically faithful starter $P=(P_1\to P_0)$ with free
$G$-action on its physical coordinates, and an
attachment $\gamma:P\to C$ that are $R$-linear and satisfy the sparsity
and size bounds above. The chosen $C,V,T$ work for every such program;
the starter need not inherit their product structure.

\item \textbf{Add spatial and temporal protection.}
Form the spatial auxiliary, attach it to the base code, and then
thicken the merged complex in time:
\[
 E=P\otimes_R V,\qquad
 F=\operatorname{Cone}(E\to C),\qquad
 W^\circ=F\otimes_R T.
\]
Here $W^\circ$ is the fault complex before adding the boundary copies
that specify preparation and readout. Complete these boundaries using
the original base code $C$, as specified below.
\end{enumerate}

\noindent\textbf{Distance guarantees.}
The merged code has $d_X(F)=\Omega(d_*)$. The completed fault complexes
protect the measured $X$ outcomes, all logical $X$ observables, and
the surviving logical $Z$ observables, including residual logicals:
\[
 \begin{aligned}
 d_{\mathrm{PPM}}^X&=\Omega(d_*),
     &&\text{measured $X$ outcomes},\\
 d_{\mathrm{state}}^X&=\Omega(d_*),
     &&\text{surviving $Z$ observables},\\
 d_{\mathrm{state}}^Z&=\Omega(d_*),
     &&\text{all logical $X$ observables}.
 \end{aligned}
\]
Here the superscripts on the state distances denote the Pauli type
of the fault, while the superscript on the outcome distance denotes
the measured observable.

\medskip\noindent\textbf{Resource guarantees.}
Let $N_W$ be the largest cell count among the three boundary completions.
The cell counts obey
\[
 N_F=O(N_C+\ell^2N_P),\qquad
 N_W=O\bigl(\ell^2(N_C+\ell^2N_P)\bigr).
\]
Since $N_P=O(N_C)$, the spatial overhead and total spacetime overhead,
both measured in binary cells relative to the original base code, are
\[
 \underbrace{\frac{N_F}{N_C}}_{\text{spatial overhead}}
   =O(\log^2d_*),\qquad
 \underbrace{\frac{N_W}{N_C}}_{\text{total spacetime overhead}}
   =O(\log^4d_*).
\]
\end{theorem}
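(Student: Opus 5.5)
We reuse the architecture of Theorem~\ref{si-pk:thm:two-factor-spatial-distance} and add one temporal factor, treating space and time by the same filling-then-kernel argument. First we fix the ingredients. Take two PK lifted-product pairs over $R=\mathbb F_2[G]$ with $g$ a power of two and $\Theta(\ell)$ orbit coordinates per classical factor, $\ell=\Theta(\log g)$. One pair gives $V$ exactly as in step~1 of that proof. The other gives a temporal complex $T$ in degrees $0,1,2$ with the same package: exactness away from a rank-one end, a kernel word with unit first coordinate, kernel distance $\Omega(g\ell^2)\ge d_*$, and filling constant $\rho=w$ below a threshold $\tau=\Omega(g\ell)$. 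We take the base code $C$ to be PK-derived from the same factor family over the same $G$, with $N_C=\Theta(g\ell^2)$ and distance $\Omega(d_*)$. Choosing $C$ jointly with $V$ and $T$ is what removes the $(d/g)^2$ term: the surviving distance is $\Theta(g\ell)$, so $m=\Theta(\ell^2)$ layers of $V$ suffice. The cell counts then follow directly. $E=P\otimes_R V$ has $O(\ell^2N_P)$ cells, so $N_F=O(N_C+\ell^2N_P)$. Tensoring with $T$ over $R$ multiplies by $O(\ell^2)$ without an extra factor of $g$. The boundary copies of $C$ add only $O(N_C)$. With $N_P=O(N_C)$ and $\log d_*=\Theta(\ell)$, this gives the stated overheads.

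Next, the static distance $d_X(F)=\Omega(d_*)$ follows from Theorem~\ref{si-pk:thm:two-factor-spatial-distance}, steps 2--3, applied verbatim. Logical faithfulness transfers because $V\simeq R$ through its first coordinate. Proposition~\ref{prop:static-distance-repair} then applies with $\Delta=\tau_{\rm clean}=\Omega(g\ell)$ and $\lambda=O(1)$. Constant-factor losses are acceptable here, since only $\Omega(d_*)$ is claimed, so the replication trick of step~4 can be dropped.

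For the three process distances, take a harmful undetected fault $e$ in one of the boundary-completed versions of $W^\circ=F\otimes_R T$ with $|e|<c\,d_*$.
\begin{enumerate}
\item \textbf{Temporal cleaning.} Apply the same two-step argument in the $T$ direction. The detector equations force the component of $e$ in the top temporal degree to be a $T$-cycle. The filling bound moves it down at cost $\rho|e|$. The remaining degree-$0$ component is then a kernel word of $T$ with weight below $d(\ker)$, so it is constant along the kernel word $q$. Because $q_1=1$ and the boundary copies are attached at the first coordinate, the fault reduces to one supported on a single merged-code time slice together with the boundary copies of $C$, with total weight $O(|e|)$. This is where $\ker H=Rq$ with unit attaching coordinate is used. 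Equivalent faults have equal $L_a$, which keeps harmfulness invariant.
\item \textbf{Spatial cleaning and reduction to $C$.} On that slice, apply the spatial cleaning of step~3 to remove all $E$ support, again at constant-factor cost. What remains is a fault on copies of $C$ only, that is, a repetition-type history of the base code with a single time slice.
\item \textbf{Case split.} For $d^X_{\mathrm{state}}$, a nontrivial class in $J^\perp$ has weight at least $d_Z(C)$ by \eqref{eq:automatic-z-distance-bound}. For $d^Z_{\mathrm{state}}$, a surviving $X$ logical has weight at least $d_X(C;J)$. For $d^X_{\mathrm{PPM}}$, flipping a measured outcome requires the residual record error to be a $T$-kernel word. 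Its weight is at least $d(\ker)\ge d_*$, or, if it leaks through the boundary, at least $d(C)$.
\end{enumerate}

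We expect the main obstacle to be the \emph{mixed} faults, where spatial and temporal components cannot be cleaned independently. The filling bound for $T$ holds one $F$-cell at a time. The residue after temporal cleaning, however, can carry $F$-syndromes that only cancel together with $V$-direction components. Our first attempt will be to view $E\otimes_R T=P\otimes_R(V\otimes_R T)$ and to use a joint filling for the four-factor product $V\otimes_R T$ of classical codes. This requires product-expansion of the local codes, via the cubical high-dimensional expander results cited above. It is also the point where the base code must be chosen jointly, so that $C\otimes_R T$ enjoys the same filling. If the joint filling constant is $O(1)$ up to threshold $\Omega(g\ell)$, the argument above closes. The cost is a threshold loss rather than a logarithmic one, consistent with $d_*=g\ell$.
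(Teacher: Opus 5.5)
Your resource count and the static bound $d_X(F)=\Omega(d_*)$, via Step~3 without the replication step, match the paper. The process-distance argument, however, has a genuine gap at the temporal-cleaning step.

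That step transplants Step~3 of the two-factor proof. Step~3 worked only because the starter has two terms, so the top ancillary component obeys $H_2a_1=0$ on its own. In $W^\circ=F\otimes_R T$ the first factor $F=\operatorname{Cone}(E\to C)$ has four nonzero degrees. For an $X$ fault $(x_0,x_1,x_2)\in F_1T_0\oplus F_0T_1\oplus F_{-1}T_2$, the detector equation at $F_{-1}T_1$ is $\partial_Tx_2=\partial_Fx_1$, not $\partial_Tx_2=0$. A $T$-filling therefore creates fresh $F$-syndromes, and no ``fill, then the rest is a short kernel word'' dichotomy is available.

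Nor is $T$ oriented like $V$. Its homology is $H_0(T)\cong R$, carried by the distinguished coordinate at the bottom of the chain complex, so there is no kernel distance of $T$ forcing an $X$-fault remainder onto the slice at $e_0$. Pushing it there would need a weight-bounded inverse of $R\to T$, which the construction does not provide.

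You correctly identify mixed faults as the obstacle and name the right ingredient. But granting joint fillings does not make your sequential argument close. Those fillings also do not follow from two separately chosen PK packages as in your first paragraph: the paper needs a joint local-code selection with product expansion and the cubical estimates behind Proposition~\ref{prop:joint-pk-input}.

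The missing idea is to use the cone structure $W^\circ=\operatorname{Cone}\bigl(E\otimes_R T\to C\otimes_R T\bigr)=B\oplus Q$, with $B=C\otimes_R T$ and $Q=(E\otimes_R T)[1]$, and never reduce to a single time slice.

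\emph{$X$ faults.} For an undetected $X$ fault $(b,q)$, the ancillary part $q$ is a $Q$-cycle. Lemma~\ref{lem:joint-pk-starter} (joint fillings of $V\otimes_R T$ passed through the bounded starter) fills it at constant cost. This leaves a $B$-cycle of weight $O(|b|+|q|)$, controlled directly by the degree-one homological distance $\Omega(g\ell)$ of $C\otimes_R T$.

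\emph{$Z$ faults.} For $Z$ faults $(\beta,q;\zeta)$ the order reverses:
\begin{enumerate}
\item The boundary equation $j^*\beta=\delta_C\zeta$, together with $j^*$ being a cohomology isomorphism, makes $\beta$ a coboundary.
\item The degree-two cochain filling of $C\otimes_R T$ supplies a short primitive $h$. Its coboundary $(\delta_Bh,\Gamma_2^\dagger h;j^*h)$ clears $\beta$ with bounded spill onto $q$ and $\zeta$.
\item The dual half of the starter lemma then clears the remaining ancillary cocycle.
\item What remains is a $Z$ cocycle on the boundary copy (or $\zeta_0'+\zeta_1'$ for the copy test), bounded below by $d_Z(C)$.
\end{enumerate}

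These two properties of $C\otimes_R T$, its degree-one distance and its degree-two cochain filling, are precisely why $C$ must be chosen jointly with $T$.

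Finally, your case split swaps the sectors. $d^X_{\mathrm{state}}$ concerns $X$ faults against surviving $Z$ observables. $d^Z_{\mathrm{state}}$ tests all of $H_1(C)$, measured classes included, and bottoms out at $d_Z(C)$.
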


Now, we explain the key ideas and general approach we use to prove the theorem.
The goal is to show that a short undetectable fault cannot change either a
measurement outcome or a surviving logical observable. We do this by
removing (or deforming) parts of the fault without changing its logical effect, until
what remains is protected by the base code. The essential requirement is
that each removal step increases the weight of the fault by at most a constant
factor. Then, the fault-complex inherits a reduced but constant fraction of the
base code distance.

The PK construction provides the needed removal operations, called
\emph{fillings}~\cite{panteleev2022almostlinear}.
Importantly, we choose the base code and spatial and temporal
complexes together so that these operations also work for errors that
extend across space and time. The starter $P_{\bullet}$ then specifies which logical
measurements to perform. Its sparse couplings ensure that removing one
part of an error creates only a bounded amount of error on the other
parts. This is why the same protection complexes can accommodate
different programs without requiring the starter to inherit their
product structure. Appendix~\ref{sec:joint-pk-construction} establishes
the joint construction, and Lemma~\ref{lem:joint-pk-starter} proves
that its filling estimates survive coupling to the starter.

We apply this approach separately to faults that change the measured
outcomes and faults that change any logical $X$ observable or a surviving
logical $Z$ observable. Each test includes boundary copies of the original base code
to represent its preparation and readout conditions. For surviving
$Z$ observables, the closed completion retracts to the interior without
increasing fault weight. For measured outcomes and logical $X$
correlations, the boundary equations identify the interior coboundary
that can be removed. Each case reduces to a distance bound supplied by
the base code or the joint PK construction. The exact boundary
definitions and logical-effect tests are given in
Appendix~\ref{sec:joint-pk-boundary-definitions}.

These bounds count binary cells and phenomenological faults in the
completed fault complexes. The temporal factor $T$ is a two-dimensional
complex, so the spacetime bound concerns total resource size and does
not specify a number of syndrome-extraction rounds. Physical scheduling
and circuit-level fault propagation require a separate analysis.
The construction establishes asymptotic scaling without estimating
practical numerical constants.

\begin{proof}[Proof of Theorem~\ref{thm:joint-pk-spacetime}]
We call $W^\circ$ the \emph{interior}: it includes both the base-code
and ancillary coordinates, but not the additional preparation and
readout copies of $C$. Write $B=C\otimes_R T$ for its base-code part
and $Q=(E\otimes_R T)[1]$ for its ancillary part. The shift $[1]$
places degree-$j$ ancillary coordinates in degree $j+1$ of the fault
complex. The boundary copies are defined in
\eqref{eq:joint-pk-boundaries}; we explain their roles in each case below.
In each removal below, we add a stabilizer or a fault equivalence:
this changes where the error is supported, but preserves its syndrome
and its effect on every protected logical observable. All additions
are binary, so adding the same error twice cancels it.
Choose a constant $a>0$ such that every filling guarantee used below
applies to errors of weight less than $a d_*$. Decrease $a$ if needed
so that $a d_*$ is also below the base-code distance and the degree-one
distance of $B$ used below.
Let $\Lambda\ge1$ bound the cumulative increase in error weight during the
removal steps.
Starting with a fault of weight less than $a d_*/\Lambda$ therefore keeps
every intermediate error within the required filling thresholds.

\medskip\noindent\textbf{Spatial protection: move the error onto the base code.}
Write a spatial error as $(e,\alpha)$, where $e$ is on the base code and
$\alpha$ is on the auxiliary system. Proposition~\ref{prop:joint-pk-input}(1)
supplies the properties of $V$ used in the removal argument of
Appendix~\ref{si-pk:sec:two-page-proof}.
That argument uses only the filling cutoff and a constant filling cost,
so it applies with the $\rho$ supplied here, which need not equal the
earlier spatial construction's Tanner degree $w$.
It selects a combination of added $X$ stabilizers whose
ancillary support is exactly $\alpha$. Multiplying by them cancels $\alpha$,
while adding some error $\Delta e$ on the base code:
\[
 (e,\alpha)\sim(e+\Delta e,0),\qquad |\Delta e|=O(|\alpha|).
\]
The remaining error has zero syndrome and the same surviving logical
effect, but weight $O(|e|+|\alpha|)$. Below the base-code distance it is
trivial, proving $d_X(F)=\Omega(d_*)$. We use Step 3 of
Appendix~\ref{si-pk:sec:two-page-proof}, before its optional ancillary
replication; a constant-factor reduction in distance is allowed here.

\medskip\noindent\textbf{Surviving $Z$ observables: cancel the ancillary part of an $X$ fault.}
For this test, $W^{\rm cl}$ denotes the \emph{closed} boundary
completion of $W^\circ$, used to protect surviving $Z$ observables.
It adds two copies of $C$, one in the original degrees and one shifted
by one degree, joined as in \eqref{eq:joint-pk-boundaries}.
First reduce a fault of $W^{\rm cl}$ to one on $W^\circ$.
This includes faults on both added copies:
the reduction in Appendix~\ref{sec:joint-pk-boundary-definitions}
preserves their effect on surviving $Z$ observables without increasing
weight. It therefore suffices to analyze the resulting interior fault.

Write an undetected fault on $W^\circ_1=B_1\oplus Q_1$ as $(b,q)$,
where $b$ is the base-code component and $q$ is the ancillary component.
Its zero-syndrome equations are
$\partial_Qq=0$ and $\partial_Bb+\Gamma q=0$, where $\Gamma$ is
the attachment map. The first equation lets us apply
Lemma~\ref{lem:joint-pk-starter}: there is a choice of generators
$h$ with $\partial_Qh=q$ and $|h|=O(|q|)$.
In the full fault complex these generators have support
$(\Gamma_2h,\partial_Qh)$, so adding them gives
\[
 (b,q)+(\Gamma_2h,\partial_Qh)=(b+\Gamma_2h,0).
\]
Thus the ancillary error disappears, leaving only
$b'=b+\Gamma_2h$ on $B_1$. Each generator selected by $h$ acts on
at most a constant number of base-code cells. Since $|h|=O(|q|)$,
the added error $\Gamma_2h$ has weight $O(|q|)$, and hence
$|b'|=O(|b|+|q|)$.

The remaining error $b'$ still has zero syndrome, because we added
a fault equivalence. Proposition~\ref{prop:joint-pk-input}(2) says
that an undetected error on $B_1$ with a nontrivial logical effect
must have weight $\Omega(d_*)$. For a sufficiently short initial
fault, $b'$ is below this distance and cannot flip any surviving
$Z$ observable. Every step preserved those flips, so the original
fault cannot flip them either. This proves
$d_{\rm state}^X=\Omega(d_*)$.

\medskip\noindent\textbf{Measured $X$ outcomes: first cancel the base-code part of a $Z$ fault.}
For measured outcomes, $W^{\rm PPM}$ adds one shifted boundary copy
of $C$ to $W^\circ$; the superscript identifies the PPM-outcome test.
Write a degree-two $Z$ fault as $(\beta,q;\zeta)$: its three components lie on
the base-code part $B^2$, the ancillary part $Q^2$, and $C^1$ on the
boundary copy, respectively. The boundary component has degree one
because that copy is shifted. Let $j:C\to B$ insert the base code at
the distinguished temporal coordinate. Undetectability gives
$\delta_B\beta=0$ and $j^*\beta=\delta_C\zeta$.
The latter equation says that the restriction of $\beta$ to the base
code is logically trivial. The temporal construction preserves these
logical classes by Proposition~\ref{prop:joint-pk-input}(1), so
$\beta$ itself can be removed: it is a coboundary.

Proposition~\ref{prop:joint-pk-input}(2) now supplies generators $h$
with $\delta_Bh=\beta$ and $|h|=O(|\beta|)$. Their full fault
equivalence has three components,
$(\delta_Bh,\Gamma_2^\dagger h;j^*h)$. Adding it cancels $\beta$,
but changes both the ancillary error and the boundary error:
\[
 (\beta,q;\zeta)\sim
 (0,q';\zeta'),\qquad
 q'=q+\Gamma_2^\dagger h,\quad \zeta'=\zeta+j^*h.
\]
Both changes have weight $O(|\beta|)$, since the attachment and
restriction maps have bounded physical weights.

With the base-code component now zero, the remaining detection
equations say $\delta_Qq'=0$ and $\delta_C\zeta'=0$.
Lemma~\ref{lem:joint-pk-starter} gives $q'=\delta_Qk$.
Generators supported on $Q$ contribute only $(0,\delta_Qk;0)$,
so a second removal leaves
\[
 (0,q';\zeta')\sim(0,0;\zeta').
\]
This step changes neither the base-code component nor the boundary
error. We are left with an undetected $Z$ error on the original base
code, of weight $O(|\beta|+|q|+|\zeta|)$ and with the same measured
outcome flips as the original fault. A nonzero flip requires weight
at least $d_Z(C)$, proving $d_{\mathrm{PPM}}^X=\Omega(d_*)$.

\medskip\noindent\textbf{All logical $X$ observables: remove interior faults between two attachments.}
Use the two temporal representatives $e_0,e_1$ and connecting chain
$t_{01}$ defined in Appendix~\ref{sec:joint-pk-boundary-definitions}.
Write an undetected degree-two fault on $W^{\rm copy}$ as
$(\beta,q;\zeta_0,\zeta_1)$. Its boundary equations are
\[
 \delta_B\beta=0,\qquad
 j_a^*\beta=\delta_C\zeta_a\quad(a=0,1),
\]
where $j_a(x)=x\otimes e_a$. Since $j_0^*$ induces an isomorphism
on cohomology, $\beta$ is a coboundary. As in the measured-outcome
case, Proposition~\ref{prop:joint-pk-input}(2) supplies a short
primitive $h$, and the full fault equivalence is
\[
 (\delta_Bh,\Gamma_2^\dagger h;j_0^*h,j_1^*h).
\]
It removes $\beta$ at constant cost, including both boundary changes.
The remaining ancillary cocycle is removed by
Lemma~\ref{lem:joint-pk-starter}, leaving
$(0,0;\zeta'_0,\zeta'_1)$ with $\delta_C\zeta'_a=0$ and
\[
 |\zeta'_0+\zeta'_1|
 \le |\zeta_0|+|\zeta_1|+(\|j_0^*\|+\|j_1^*\|)|h|
 =O(|\beta|+|q|+|\zeta_0|+|\zeta_1|).
\]
The observable cycle associated to a bare logical representative $x$
is $(x\otimes t_{01};x;x)$, with its first component in $B_2$.
After removal its pairing is
$\langle\zeta'_0+\zeta'_1,x\rangle$.
A nonzero pairing therefore requires a bare-code cocycle of weight
at least $d_Z(C)$. This proves $d_{\rm state}^Z=\Omega(d_*)$.
The constant loss includes the two attachment norms; an exact lower
bound of $d_Z(C)$ is not asserted.

\medskip\noindent\textbf{Resources.}
Each of $V,T$ has $O(\ell^2)$ orbit coordinates. Forming $E$ therefore
costs $O(\ell^2N_P)$ cells, attaching $C$ adds $N_C$, and temporal
thickening multiplies the count by $O(\ell^2)$. The boundary copies
add $O(N_C)$. Products over $R$ share the group coordinate and introduce
no extra factor of $g$. Since $\ell=\Theta(\log d_*)$, the claimed
spatial and total spacetime overheads follow.
\end{proof}

\subsection{Supporting constructions and proofs}
\label{sec:asymptotic-supporting-results}

\subsubsection{The PK construction input}
\label{si-pk:sec:pk-input}
\label{si-pk:sec:construction-parameters}

We now establish the construction inputs used in the preceding proofs.
Taking the lifted product of two two-term factors gives a three-term
transverse complex: its middle map supplies consistency checks, while
its top map has both large kernel distance and short fillings of small
valid syndromes. We first state this package, then give the full-factor
expansion input, prove that coordinate shortening preserves it, and
establish compatible invertible coordinate minors.
Appendix~\ref{sec:joint-pk-construction} supplies the additional joint
construction needed for spatial and temporal protection.

In this two-factor construction, $G$ is a finite abelian $2$-group of
order $g\ge2$. The graph degree $w$ and local constituent matrices are
chosen once and kept fixed as $g$ and the transverse size grow.

\begin{proposition}[Two-factor transverse construction]
\label{prop:pk-transverse-package}
For $\sqrt m$ at least a sufficiently large fixed multiple of
$\max\{1,\log g\}$, there is a based free, bounded-degree complex
\[
 V_0\xrightarrow{H_1}V_{-1}\xrightarrow{H_2}V_{-2}
\]
with $O(m)$ orbit coordinates in each degree, whose only homology is
$H_0(V)=\ker H_1=Rq$, with first coordinate $q_1=1$.
Evaluation at the first coordinate is a chain-homotopy equivalence to $R$ in degree
zero. There are constants $c_1,c_2>0$, depending only on the fixed
local construction, such that
\[
 d(\ker H_1)\ge c_1gm,\qquad
 H_2b=0,\quad |b|<c_2g\sqrt m
 \quad\Longrightarrow\quad
 \exists z:\ H_1z=b,\quad |z|\le w|b|.
\]
The row and column weights are bounded independently of $g,m$.
\end{proposition}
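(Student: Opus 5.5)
The plan is to build $V$ as the lifted product over $R=\mathbb F_2[G]$ of two Panteleev--Kalachev Tanner-code factors, each shortened to a split surjection with rank-one kernel. Concretely, fix the degree-$w$ Tanner graph and local codes once. Take a PK lift of this graph over the abelian $2$-group $G$ with $n_0=\Theta(\sqrt m)$ orbit coordinates. This gives a classical QC parity-check matrix $A\in R^{r\times n}$ with the following properties, all from \cite[Lemma~1, Proposition~1]{panteleev2022asymptotically}:
\begin{itemize}
\item the row and column weights are bounded;
\item the kernel distance is $d_A=\Omega(g\sqrt m)$;
\item small valid syndromes of the transpose complex admit fillings with $|z|\le w|b|$.
\end{itemize}
The last property holds provided $\sqrt m\gtrsim\log g$, which is where that hypothesis enters, through the expansion of the lift.

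Next, select an invertible maximal minor of $A$ over $R$ and adjoin one extra column to get $A'$. Then $A'$ is a split surjection with $\ker A'=Rq_A$, and we normalize so that $(q_A)_1=1$. Do the same for a second factor $B'$. Define $H_1,H_2$ exactly as in Step 1 of the proof of Theorem~\ref{si-pk:thm:two-factor-spatial-distance}.

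Exactness follows from the Künneth formula for lifted products of free complexes over the commutative ring $R$. Since each factor has homology only $R$ in degree zero, and that homology is free, the product has homology $R\otimes_R R=R$ in degree $0$. It is generated by $q=q_A\otimes q_B$, which has first coordinate $1$. Splitting of each factor gives explicit contracting homotopies, and tensoring them produces the chain-homotopy equivalence with $R$ given by evaluation at the first coordinate.

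The kernel distance bound $d(\ker H_1)=d(Rq)\ge d_Ad_B/g=\Omega(gm)$ comes from a weight argument for lifted products. Any nonzero element $fq$ has rows that are shifts of $fq_A$-type words, and its support intersects $\Omega(d_B/g)$ columns in a nonzero $A$-codeword. I would state this as a direct counting lemma.

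The filling estimate is the main obstacle. Take $b=(u,v)$ with $H_2b=0$ and $|b|<c_2g\sqrt m$. I would first try to reduce to the PK product-expansion filling of the full (unshortened) product complex, which gives some $z$ with $H_1z=b$ and $|z|\le w|b|$. I would then argue that $z$ can be chosen supported on the retained rectangle. If part of $z$ lay outside the rectangle, restricting to a removed coordinate line would give a nonzero word of a constituent kernel. Its weight would be at most $w|b|<d_A,d_B$ once $c_2$ is small, which is impossible, so the filling can be corrected to zero there. Converting PK's block-weight estimate into binary weight is the explicit-$\rho$ step, where the factor $w$ appears. Compatibility of the chosen minors with these restrictions requires the extra coordinate-minor argument. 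I would obtain it by choosing the minor on information coordinates of the local codes, then using a counting argument over the $2$-group algebra that lets a random lift avoid singular minors.

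The final bookkeeping is routine. Each degree has $O(n_0^2)=O(m)$ orbit coordinates, row and column weights are bounded by products of the factor weights, and the constants $c_1,c_2$ depend only on $w$ and the local codes.
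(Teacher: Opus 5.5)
Most of your outline coincides with the paper's argument. The steps that match are:
\begin{itemize}
\item PK Tanner factors lifted over $G$;
\item shortening each factor to an invertible maximal minor plus one extra column, so that $\ker A'=Rq_A$ with $(q_A)_1=1$;
\item splitting and K\"unneth for exactness and for the homotopy equivalence with $R$;
\item the $d_Ad_B/g$ counting bound on $d(\ker H_1)$;
\item the filling, obtained by extending by zero, applying the full-product PK filling, and discarding the parts outside the retained rectangle because they are constituent kernel words of weight below $d_A,d_B$.
\end{itemize}
Those steps are sound.

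The gap is the existence of the invertible maximal minors. Everything homological in the statement rests on them: $H_0(V)=Rq$, $q_1=1$, and the equivalence with $R$. The mechanism you propose cannot produce them.

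Because $G$ is a $2$-group, the augmentation ideal of $R=\mathbb F_2[G]$ is nilpotent. So a square matrix over $R$ is invertible exactly when its augmentation (every group element sent to $1$) is nonsingular over $\mathbb F_2$. The augmentation of a lifted Tanner matrix is the Tanner matrix of the base graph, which does not depend on the voltages. Hence no random lift can turn a singular minor into an invertible one. The question is purely whether the base matrices $A$ and $T$ (with $B=T^{\dagger}$) have full row, respectively column, rank over $\mathbb F_2$.

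Choosing the minor on local information coordinates does not settle this either. Each edge column meets the checks at both of its endpoints, so the global determinant is not a product of local minors and can vanish over $\mathbb F_2$.

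The paper puts the randomness in the local coordinate orderings instead (Lemma~\ref{si-pk:lem:coordinate-minors}):
\begin{itemize}
\item Independent local permutations at each base vertex preserve every PK local property.
\item With the weight-enumerator bound \eqref{si-pk:eq:local-membership}, local distance and spectral mixing force any augmented row dependence of $A$ (or kernel word of $T$) to have $\Omega(N)$ active vertices.
\item A union bound then gives full augmented rank with positive probability.
\item Nilpotence of the augmentation ideal makes the lifted minors units.
\end{itemize}
Your local-minor idea resembles the Euler-orientation argument of Lemma~\ref{lem:joint-pk-square-unit}, but that argument needs half-edge scalings over a field with $|K|>2$, which binary local codes do not have.

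A smaller omission: for non-cyclic abelian $G$, an expanding lift with $\Theta(\sqrt m)\gtrsim\log g$ base vertices is not supplied by PK directly. The paper extends the cyclic random-lift theorem by a union bound over character blocks, and you need to state or cite something of that form.
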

The proposition follows from the full-factor input, rectangular
shortening, and Lemma~\ref{si-pk:lem:transverse} below. It is a property
of the transverse complex alone; using it in a fault complex still
requires identifying the relevant fault sector and boundary conditions.

Write $R=\mathbb F_2[G]$. The parameter $m$ measures transverse overhead;
it does not denote the base-code length. In the input below, each
factor has $\Theta(\sqrt m)$ orbit coordinates, so their product has
$\Theta(m)$ orbit coordinates. Constant rounding of dimensions is
implicit.

\begin{proposition}[Compatible full factors: construction input]
\label{si-pk:prop:pk-input}
For $\sqrt m$ at least a sufficiently large constant times
$\max\{1,\log g\}$, there are two sparse $R$-linear maps $A,B$, each
with $\Theta(\sqrt m)$ rows and columns and more columns than rows,
with the following simultaneous properties.
\begin{enumerate}
\item Each map has an invertible maximal square coordinate minor.
\item Their classical kernel distances satisfy
\[
 d_A,d_B\ge c_{\mathrm{dist}}g\sqrt m.
\]
\item Every boundary $(AY,YB^{\mathsf T})$ of total physical weight
less than $\tau_{\mathrm{PK}}$ has a filling $Z$ with the same boundary
and
\[
 |Z|\le\rho\bigl(|AY|+|YB^{\mathsf T}|\bigr),
 \qquad \tau_{\mathrm{PK}}\ge c_{\mathrm{PK}}g\sqrt m.
\]
\end{enumerate}
Here $c_{\mathrm{dist}},c_{\mathrm{PK}}>0$ and the physical degree
bounds depend on the fixed local construction, independently of
$g,m$ and the starter. One may take $\rho=w$, where $w$ is its fixed Tanner graph degree.
The transpose in these array formulas only indicates placement of
coordinates; any group-ring adjoint used to define $B$ already includes
the inversion of group elements.
\end{proposition}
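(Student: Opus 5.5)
We obtain $A$ and $B$ from the Panteleev--Kalachev lifted-product machinery~\cite{panteleev2022asymptotically}, applied to a fixed base Tanner graph and then lifted by $G$. Fix a $w$-regular Tanner graph template and local codes $C_A,C_B$ of length $w$. Their dual pair is chosen to have the product-expansion (robust-testability) property. Over $R=\mathbb F_2[G]$, with $G$ an abelian $2$-group of order $g$, PK show that a random $G$-lift of a $\Delta$-regular base graph on $\Theta(\sqrt m)$ vertices is, with high probability, a spectral expander whose parameters are uniform in $g$. The required base-graph size grows only logarithmically in $g$, which is the source of the hypothesis $\sqrt m\gtrsim\log g$. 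The quasi-cyclic Tanner codes of these lifted graphs give $R$-matrices $A,B$ with $\Theta(\sqrt m)$ orbit rows and columns. Choosing the local code rates above $1/2$ gives more columns than rows. Bounded physical weights then follow because $w$ and the local matrices are fixed.

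Items (2) and (3) I would take from the PK analysis. Tanner-code expansion (a Sipser--Spielman type argument on the lifted expander) gives a linear classical distance $d_A,d_B\ge c_{\mathrm{dist}}\,g\sqrt m$ in binary weight, since the lifted graph has $g\sqrt m$ vertices. For (3), I would cite PK's Lemma~1 and Proposition~1, which give locally minimal fillings for boundaries of weight below a linear threshold. These bounds are stated in block weight. Lemma~\ref{si-pk:lem:explicit-rho} converts them to binary weight with cost $\rho=w$. Each filling coordinate is charged to the boundary entries of its incident local view, and there are at most $w$ of these. I would then check that both the threshold constant and $\rho$ depend only on the expansion constant, the product-expansion constant, and $w$, and not on $g$ or $m$.

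Item (1), the invertible maximal square coordinate minor, is the step I expect to be the main obstacle. Over $R$, full row rank does not by itself give a minor that is invertible over the ring, because $R$ is local for a $2$-group: its units are exactly the elements with augmentation $1$. The plan is to reduce modulo the augmentation ideal. A square minor $M$ is invertible over $R$ if and only if its image $\varepsilon(M)$ over $\mathbb F_2$ is invertible. Here $\varepsilon(M)$ is the corresponding minor of the base (unlifted) parity-check matrix, since $\varepsilon$ sends every group element to $1$. So it suffices to ensure that the base $\mathbb F_2$ matrix has full row rank. That base matrix is the Tanner matrix of the base graph with the fixed local codes, so I would choose the base graph and local-code labelling so that it does. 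One route is to require the base Tanner code to have dimension exactly $n-r$, for instance by picking locally systematic constituent codes on a spanning-tree-compatible orientation. Another is to use the freedom in PK's random construction and restrict to base graphs where this holds; this is a constant-probability or deterministic condition independent of the lift. Compatibility with (2) and (3) is automatic, because those come from the lift while (1) depends only on the base. Since the base graph is fixed up to $\Theta(\sqrt m)$ size and does not depend on $g$, the two requirements do not interact. If this proves delicate, the fallback is to append a few unit columns to the base matrix. That preserves sparsity and enlarges the kernel only by a bounded amount, and I would then re-verify the distance in (2).
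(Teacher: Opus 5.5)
Your architecture matches the paper's. You use PK Tanner factors on a random $G$-lift of a base graph with $\Theta(\sqrt m)$ vertices. You cite PK's Lemma~1 and Proposition~1 for the fillings, with the block-to-binary conversion giving $\rho=w$. For item~(1) you reduce modulo the nilpotent augmentation ideal, so a square minor is a unit over $R$ exactly when its augmentation is nonsingular over $\mathbb F_2$.

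The gap is everything after that reduction. You never show that the augmented base matrices of \emph{both} factors can be made full rank while keeping the PK local conditions. That is the real content of item~(1), and it is not automatic:
\begin{itemize}
\item A row dependency of $\varepsilon(A)$ is a nonzero family of local patterns in $\im h^{\mathsf T}$ that cancel on every edge. Whenever the local code $\ker h$ is even, $\mathbf 1\in\im h^{\mathsf T}$, and summing the corresponding rows over all vertices gives zero, for every graph and every labelling. The second factor has the analogous obstruction; in the paper's orientation, $\varepsilon(T)$ has nonzero kernel whenever $\mathbf 1\in\ker h'$. So an extra condition on the local codes is needed.
\item Your ``locally systematic codes on a spanning-tree-compatible orientation'' cannot give triangularity. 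The pivot column of each row is also met by some row at the other endpoint of that edge, since a local code of distance at least two has no zero column in $h$. The induced dependency digraph on rows therefore has out-degree at least one everywhere and contains a cycle. The restricted square matrix is then $I+N$ with $N$ not nilpotent, and its nonsingularity over $\mathbb F_2$ is the original question. The Euler-orientation argument of Lemma~\ref{lem:joint-pk-square-unit} needs half-edge scalings over a field with $|K|>2$, which you do not have here.
\item The ``constant-probability'' alternative is asserted without proof.
\item Appending unit columns changes $\ker A$ and the whole product complex. Neither the distance bound nor PK's filling estimate then applies without a new argument.
\end{itemize}

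Your reason that (1) and (2)--(3) ``do not interact'' is also wrong as stated, since both depend on the base local codes. The decoupling the paper uses is different. The PK local properties and the lift spectrum are unchanged by independent local coordinate permutations at each base vertex, repeated along its lift orbit, and these permutations are the free parameter for~(1). Lemma~\ref{si-pk:lem:coordinate-minors} then does three things:
\begin{itemize}
\item It selects local codes that satisfy, jointly with the PK conditions, $\Pr[x\in\pi U],\Pr[x\in\pi U']\le w^3 2^{-(1-\eta)w}$ for every nonzero $x$. This uses Markov's inequality and a union bound over random subspaces, and in particular excludes $\mathbf 1$.
\item It shows, via local distance and spectral mixing, that any row dependency of $\varepsilon(A)$ or kernel word of $\varepsilon(T)$ has at least $\delta N/2$ active vertices.
\item It union-bounds over the active set $S$ and its at most $2^{w|S|/2}$ edge sets, giving failure probability $2^{-\Omega(N)}$ for both factors at once.
\end{itemize}
You need an argument of this kind.

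Two smaller points. First, the paper takes $B=T^\dagger$, with $T$ the tall Tanner map from a local code of rate below $1/2$, so that the product is PK's $A\otimes T^*$, to which their Lemma~1 and Proposition~1 apply as cited. Making both factors Tanner parity-check maps of rate above $1/2$, as you do, changes the orientation. Second, the random-lift bound is for cyclic lifts, so general abelian $2$-groups need the character-by-character union bound given in the paper.
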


\begin{proof}[Construction ingredients]
Fix local rate intervals strictly below and above $1/2$, respectively.
First fix a spectral constant $L$ from the random cyclic-lift
bound~\cite[arXiv Theorem~2]{agarwal2019grouplifts}, for the fixed
$O(\sqrt w)$ base-graph spectral constant below, and put $\mu=L+1$.
Then choose a sufficiently large fixed degree $w$ and binary local
matrices $h,h'$ using
\cite[Proposition~1]{panteleev2022asymptotically}, including the
weight-enumerator conditions of
Lemma~\ref{si-pk:lem:coordinate-minors}.
The degree and local codes are independent of $g$ and $m$.

Let $N_0=\Theta(\sqrt m)$ be even and sufficiently large.  The
graph construction in the proof of
\cite[Proposition~1]{panteleev2022almostlinear} supplies a simple
$w$-regular graph $\Gamma_0$ on $N_0$ vertices whose nonleading
eigenvalues have absolute value $O(\sqrt w)$. Enlarge their bound
$\lambda_0$ to at least $\sqrt w$. For sufficiently large $w$ it
satisfies $\lambda_0\le w/\log w$, the additional hypothesis for
the cited additive $O(\sqrt w)$ bound on new lift eigenvalues.
For $N_0\ge C_w\max\{1,\log g\}$, this graph has a $G$-lift
$\widehat\Gamma$ with nonleading eigenvalues bounded by $L\sqrt w$.
Here is the extension from cyclic to arbitrary finite abelian $G$.
For a nontrivial character $\chi$ of $G$ with image of order $k_\chi$, independently
uniform edge voltages induce independently uniform $k_\chi$th roots of
unity in its character block.  This block has the distribution of one
nontrivial block of a random cyclic $k_\chi$-lift.  The cyclic-lift
theorem bounds its failure probability by
$k_\chi\exp(-cN_0/w^2)$.  A union bound over characters therefore
bounds the probability of any bad block by
$g^2\exp(-cN_0/w^2)$.  Increasing $C_w$ makes this probability less
than one.  No independence between character blocks is needed.
The theorem's additional size condition
$w\le\sqrt{N_0/(3\log N_0)}$ is absorbed in the fixed lower bound
on $N_0$.

Set $N=gN_0$ and $a=N/\sqrt w$.  Spectral mixing gives
$|E(S,T)|\le\mu\sqrt w\sqrt{|S||T|}$ whenever
$|S|,|T|\le a$.  Thus the precise graph hypothesis of
\cite[Proposition~1]{panteleev2022asymptotically} holds, and its
locally minimal one-cycle distance is at least
\[
 a/(2w)=gN_0/(2w^{3/2})=\Omega_w(g\sqrt m).
\]
Its Tanner factors are the wide map $A$ from $h$ and $B=T^\dagger$,
where $T$ is the tall map from $h'$.  The local code and dual-code
distance conditions imply the expansion bounds of
\cite[Lemma~3]{panteleev2022almostlinear}; their kernel-distance
consequence, noted before Proposition~1 of that reference, gives
$d(\ker A),d(\ker T^\dagger)=\Omega_w(gN_0)$.
Both Tanner factors have $\Theta_w(N_0)$ rows and columns.
The binary local-descent argument in
Lemma~\ref{si-pk:lem:explicit-rho} then supplies the stated filling
cutoff and cost $\rho=w$.

Finally choose independent local coordinate permutations as in
Lemma~\ref{si-pk:lem:coordinate-minors}.  The PK local properties
hold for every such permutation choice, and the lemma gives simultaneous
full augmentation ranks with positive probability on the fixed base
graph.  Since $G$ is an abelian $2$-group, nonsingular augmented minors
are invertible over $R$.  This proves all three properties simultaneously.
\end{proof}

\paragraph*{Construction parameters and the explicit filling constant}
Choose a fixed $w$-regular graph construction and full-row-rank binary
local matrices $h\in\mathbb F_2^{r\times w}$ and
$h'\in\mathbb F_2^{r'\times w}$, with fixed rates
$r/w<1/2<r'/w$. The full factors are the wide Tanner map $A$
and $B=T^\dagger$, where $T$ is the tall Tanner map from $h'$.
The local pair must satisfy the PK product-expansion and distance
conditions, together with the coordinate-minor construction below.
These choices precede taking the asymptotic family.

\begin{center}
\begin{tabular}{c p{0.77\linewidth}}
Parameter & Meaning and dependence\\\hline
$w$ & Fixed graph degree and local code length; chosen sufficiently large.\\
$g$ & Order of the symmetry group already acting on the starter.\\
$m$ & Transverse product size in orbit coordinates; each factor has
$\Theta_w(\sqrt m)$ coordinates.\\
$\rho$ & Binary filling-cost bound: choose $\rho=w$, unchanged by shortening.\\
$\tau$ & Shortened filling threshold
$\min\{\tau_{\mathrm{PK}},d_A/w,d_B/w\}$.\\
\end{tabular}
\end{center}

\begin{lemma}[Binary filling cost at most the local degree]
\label{si-pk:lem:explicit-rho}
For the full factors above, every top boundary $b$ with
$|b|<\tau_{\mathrm{PK}}$ has a filling $z$ with
$|z|\le r'|b|\le w|b|$, in the physical binary bases.
\end{lemma}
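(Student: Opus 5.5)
The filling will be built by a local descent argument in the binary Tanner-code picture, and the filling cost will be read off from how much each descent step adds.

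\textbf{Setting.} The top map of the lifted product of $A$ and $B=T^\dagger$ sends $Y\mapsto (AY,YB^{\mathsf T})$. After binary expansion, a coordinate of $Y$ is a pair (left vertex or edge of $\widehat\Gamma$, right coordinate). A top boundary $b=(b_A,b_B)$ lives on these two rectangles.

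\textbf{Step 1: take the PK filling.} Property~3 of Proposition~\ref{si-pk:prop:pk-input} is exactly \cite[Proposition~1]{panteleev2022asymptotically} applied to the expanded graph. It gives a filling $Y$ with $\partial Y=b$ whenever $|b|<\tau_{\mathrm{PK}}$, with cost measured in blocks. The task is to convert this block cost into binary weight with constant $r'$.

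\textbf{Step 2: make the filling locally minimal.} Choose $Y$ so that its binary weight is minimal in its coset modulo $\ker(\partial)$. Following the PK descent, argue on the support rectangles. Each nonzero block of $Y$ is a column indexed by a vertex. Through the tall local matrix $h'$ (which has $r'$ rows per local view), that column contributes to the $B$-side syndrome at that vertex.

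\textbf{Step 3: charge each filling bit to a syndrome bit.} The claim is that, because $h'$ has full row rank, each syndrome bit is produced by at most $r'$ bits of a minimal filling. Concretely, restrict to one local view of the vertex. The local component of $Y$ lies in a space whose image under $h'$ is injective modulo the local kernel. Any local correction can therefore be reduced to a representative of weight at most $r'$ times the weight of its local image. Summing these local bounds over vertices gives $|Y|\le r'|b|$. The second inequality, $r'\le w$, is immediate since $h'$ has $w$ columns and rank $r'$.

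\textbf{Main obstacle.} Gluing the local reductions in Step 3 is the hard part. Modifying $Y$ at one vertex by a local kernel element may change its boundary contribution at neighbouring vertices. That would break the requirement that the boundary stays exactly $b$.

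My first approach is to perform every local modification only by elements of the global kernel $\ker\partial$. Sparse global kernel words exist near any low-weight region by the expansion statement in \cite[Lemma~3]{panteleev2022almostlinear}, below the threshold $\tau_{\mathrm{PK}}$. Each such modification strictly decreases weight until the local bound holds everywhere. Termination then follows from monotone decrease in binary weight.

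If this fails, I would fall back on the explicit formula for the PK filling. Its coefficients are supported on the $r'$ check neighbours of each syndrome position, and that would give the bound directly.
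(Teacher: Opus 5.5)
There is a genuine gap, and it sits exactly where you locate the ``main obstacle.''

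\textbf{Step 3 does not follow from full row rank.} Full row rank of $h'$ only tells you that a nonzero correction block has a nonzero local contribution. That contribution can be cancelled by neighbouring blocks, so a block of a filling need not be charged to any bit of $b$. Controlling such cancellations is precisely the global content of the Panteleev--Kalachev product-expansion argument, and gluing local reductions cannot recover it.

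\textbf{Your proposed repair does not work.} The kernel of the top map $Y\mapsto(AY,YB^{\mathsf T})$ consists of arrays with columns in $\ker A$ and rows in $\ker B$. By the argument of Lemma~\ref{si-pk:lem:transverse}, every nonzero kernel word therefore has weight at least $d_Ad_B/g=\Omega(gm)$. Kernel words are never local objects, and the expansion input of \cite[Lemma~3]{panteleev2022almostlinear} yields large kernel distance, not sparse kernel words near a low-weight region. So no weight-decreasing local modification by kernel words is available. Choosing a filling of minimal binary weight in its coset gives no bound unless you exhibit some short filling. The fallback ``explicit formula'' is also not available: the PK short filling is obtained existentially, via local minimality and product expansion.

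\textbf{The missing idea is that no local charging is needed.} The second assertion of \cite[Lemma~1]{panteleev2022asymptotically} says the following. Below the locally minimal distance, any preimage $c$ of $b$ lies within block distance $|b|_{\rm blk}$ of $\ker\partial$. Subtracting a nearest kernel word gives a filling $z$ with $|z|_{\rm blk}\le|b|_{\rm blk}$. The block cost here is exactly one, which is why the constant is $r'$ and not a multiple of it.

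In the local-system representation of $A\otimes T^*$, each top correction block is $\mathbb F_2^{r'}$. Hence
\[
|z|\le r'|z|_{\rm blk}\le r'|b|_{\rm blk}\le r'|b|\le w|b|.
\]
The same inequality $|b|_{\rm blk}\le|b|$ shows that the physical cutoff $|b|<\tau_{\mathrm{PK}}$ implies the block cutoff. Your Step~1 already has the right input; the proof ends with this one-line block-to-bit conversion.

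Also cite PK directly rather than Property~3 of Proposition~\ref{si-pk:prop:pk-input}. That proposition obtains its binary cost $\rho=w$ from the present lemma, so citing it here is circular.
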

\begin{proof}
In the local-system representation of $A\otimes T^*$, each top
correction block is $\mathbb F_2^{r'}$
\cite[Sec.~2.3]{panteleev2022asymptotically}. Write $|\cdot|_{\rm blk}$ for the number
of nonzero local blocks. Choose any $c$ with $\partial c=b$.
The second assertion of \cite[Lemma~1]{panteleev2022asymptotically}
gives $d_{\rm blk}(c,\ker\partial)\le |b|_{\rm blk}$ below the
locally minimal distance. Subtract a nearest kernel word to obtain a
filling $z$ with $|z|_{\rm blk}\le |b|_{\rm blk}$. Each filling block
has at most $r'$ binary coordinates, so
\[
 |z|\le r'|z|_{\rm blk}\le r'|b|_{\rm blk}\le w|b|.
\]
The assumed physical cutoff implies the block cutoff since
$|b|_{\rm blk}\le|b|$. The block counts are per lifted cell, not per
group orbit, so there is no factor of $g$ in this conversion.
\end{proof}

Rectangular shortening below $d_A/w,d_B/w$ keeps this filling inside
the retained coordinates (Lemma~\ref{si-pk:lem:shortening}); it introduces
no extra factor in $\rho$. Taking $\rho=w$ throughout is therefore valid.

\paragraph*{Dependence on the fixed local construction.}
All asymptotic constants may depend on $w$, the chosen local codes,
and the fixed graph-expansion constants. In particular, the theorem
does not give a uniform resource prefactor as $w$ varies. The cited
existence argument selects sufficiently large $w$; it does not certify
a small numerical degree. Indeed, with its auxiliary choices, one
of its sufficient inequalities, $s>2\lambda''$ in Eq.~(12), becomes
\[
 w^{1/6}>16\mu^2(\ln w+2),
\]
where $\mu\sqrt w$ is the graph edge-expansion parameter
\cite[proof of Proposition~1, Eq.~(12)]{panteleev2022asymptotically}.
This illustrates the conservatism of that proof, not a necessary
lower bound on degrees of useful constructions. A practical small-$w$
claim requires explicit local matrices and verified expansion and
rank conditions; no such numerical instance is certified here.

\subsubsection{A transverse complex with cheap valid-syndrome fillings}
\label{si-pk:sec:transverse}

\paragraph*{Shortening preserves the filling estimate}
Retain an invertible maximal minor and one extra column in each full
factor. Denote the restricted maps by $A',B'$. They are surjective and
have free rank-one kernels. Ordering the extra column first gives
\[
 \ker A'=Rq_A,\qquad \ker B'=Rq_B,
 \qquad (q_A)_1=(q_B)_1=1.
\]
This follows by solving for the minor coordinates given the first
coordinate. Column restriction preserves sparsity and cannot decrease
classical kernel distance.

\begin{lemma}[Rectangular shortening preserves short fillings]
\label{si-pk:lem:shortening}
Every boundary of the shortened product has a filling of weight at
most $\rho$ times its weight, provided its weight is less than
\[
 \tau=\min\{\tau_{\mathrm{PK}},d_A/\rho,d_B/\rho\}
       =\Omega(g\sqrt m).
\]
\end{lemma}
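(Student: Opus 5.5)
The plan is to take an arbitrary small boundary of the shortened product, fill it in the full product using Proposition~\ref{si-pk:prop:pk-input}(3), and then show that this filling already lives on the retained rectangle. Write the full product with top map $Z\mapsto(AZ,ZB^{\mathsf T})$. The shortened product keeps only the columns of $A$ in a set $S_A$ (an invertible maximal minor plus one extra column) and, likewise, the columns of $B$ in a set $S_B$. Shortened chains are then exactly the full chains supported on $S_A\times S_B$. Consequently, a boundary $b=(A'Y,Y(B')^{\mathsf T})$ of the shortened product, extended by zero, is a boundary of the full product with the same physical weight.

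First, since $|b|<\tau\le\tau_{\mathrm{PK}}$, Proposition~\ref{si-pk:prop:pk-input}(3) supplies a full filling $Z$ with boundary $b$ and $|Z|\le\rho|b|$. Second, we compare $Z$ with the restricted chain $Y$ (also extended by zero). The difference $K=Z+Y$ has zero boundary in the full product, so $AK=0$ and $KB^{\mathsf T}=0$. Hence every column slice of $K$, taken over the $B$-index, is a word in $\ker A$, and every row slice, taken over the $A$-index, is a word in $\ker B$.

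Third, we examine a slice that meets the discarded coordinates. Take a $B$-index $\beta\notin S_B$. Then $Y$ vanishes on that slice, so $K$ restricted to it equals $Z$ restricted to it. That slice is a word of $\ker A$ with weight at most $|Z|\le\rho|b|<d_A$, so it must vanish. By the same argument with the roles exchanged, using $|Z|<d_B$, $Z$ also vanishes on every $A$-index outside $S_A$. Therefore $Z$ is supported on $S_A\times S_B$, so it is a valid filling in the shortened product. It has the same boundary as $b$, because the boundary equations restricted to the retained coordinates are unchanged, and it satisfies $|Z|\le\rho|b|$. Finally, $\tau=\Omega(g\sqrt m)$ follows from $\tau_{\mathrm{PK}}\ge c_{\mathrm{PK}}g\sqrt m$, $d_A,d_B\ge c_{\mathrm{dist}}g\sqrt m$, and the fact that $\rho=w$ is fixed.

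The step I expect to need the most care is the slicing in the third step. A slice at a fixed $B$-coordinate means fixing one binary coordinate of the $B$-factor, that is, an orbit coordinate together with a group element. The product, however, is balanced over $R$, so binary coordinates of the product are not literally pairs of binary coordinates of the two factors. I would handle this by expanding $Z$ over orbit coordinates, $Z_{ij}\in R$. The condition $AK=0$ says that for each $j$ the column $(K_{ij})_i\in R^{\text{cols}(A)}$ lies in $\ker A$, and its binary weight is at most $|K|$. The kernel distance $d_A$ is measured in exactly this binary weight, so for $j\notin S_B$, $K_{\cdot j}=Z_{\cdot j}$ forces $Z_{\cdot j}=0$. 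Working orbit coordinate by orbit coordinate in this way avoids any subtlety from the group coordinate being shared between the two factors.
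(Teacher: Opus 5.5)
Your proof is correct and follows the paper's argument. You extend by zero, fill in the full product via Proposition~\ref{si-pk:prop:pk-input}(3), and then use $|Z|\le\rho|b|<d_A,d_B$ to force every slice of $Z$ outside the retained rectangle to vanish. The detour through $K=Z+Y$ is harmless but unnecessary: the zero-extended boundary already vanishes on the discarded slices, so those slices of $Z$ are directly words in $\ker A$ or $\ker B$. This is how the paper argues.
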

\begin{proof}
Extend a shortened input array by zero to the full rectangle. Its
boundary has unchanged physical weight, so
Proposition~\ref{si-pk:prop:pk-input} supplies a full-product filling $Z$.
Outside the retained columns, the first boundary component is zero;
hence every such column of $Z$ lies in $\ker A$. Since
$|Z|\le\rho|\text{boundary}|<d_A$, each is zero. Likewise every row
outside the retained rows lies in $\ker B$ and is zero because
$|Z|<d_B$. Thus $Z$ is supported on the retained rectangle and fills
the shortened boundary at the same cost.
\end{proof}

\paragraph*{The two-factor transverse complex}
Take the lifted tensor product of the shortened two-term factors
\cite[Sec.~1.2 and Appendix~B]{panteleev2022asymptotically}. If the shortened factors have $m_A,m_B$ columns, define
\begin{align*}
 V_0&=R^{m_Am_B},\\
 V_{-1}&=R^{(m_A-1)m_B}\oplus R^{m_A(m_B-1)},\\
 V_{-2}&=R^{(m_A-1)(m_B-1)},
\end{align*}
where $m_A,m_B=\Theta(\sqrt m)$. When these arrays are written as
vectors, coordinate $(1,1)$ is the first coordinate. On arrays, set
\[
 H_1Z=(A'Z,Z(B')^{\mathsf T}),\qquad
 H_2(U,V)=U(B')^{\mathsf T}+A'V.
\]
These maps are sparse, and $H_2H_1=0$ in characteristic two.

\begin{lemma}[Transverse exactness, distance, and filling]
\label{si-pk:lem:transverse}
The transverse complex satisfies
\[
 \ker H_2=\im H_1,\qquad
 \ker H_1=Rq,\qquad q=q_Aq_B^{\mathsf T},\quad q_{1,1}=1.
\]
Its kernel distance $d_H$ and valid-syndrome fillings obey
\[
 d_H\ge d_A d_B/g=\Omega(gm),
\]
\[
 H_2b=0,\quad |b|<\tau
 \quad\Longrightarrow\quad
 \exists z:\ H_1z=b,\quad |z|\le\rho|b|.
\]
\end{lemma}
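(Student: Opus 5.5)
The plan treats $V$ as the lifted tensor product of the two-term complexes $A':R^{m_A}\to R^{m_A-1}$ and $B':R^{m_B}\to R^{m_B-1}$ over the commutative ring $R$, and reads off each property from the factors. The three claims are handled in order: exactness and the kernel, then the kernel distance, then the fillings.

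\emph{Exactness and kernel.} Both shortened factors are surjective with free rank-one kernels $Rq_A$ and $Rq_B$. Because $A'$ and $B'$ are split surjections, each factor complex is chain-homotopy equivalent to $R$ concentrated in degree zero, included via $q_A$ or $q_B$. The splittings come from the invertible maximal minors: solve for the minor coordinates and set the extra coordinate to zero. Tensoring over $R$ preserves homotopy equivalences, so $V\simeq R\otimes_R R=R$ in degree zero, generated by $q_Aq_B^{\mathsf T}$. Hence $\ker H_2=\im H_1$ and $\ker H_1=Rq$. Concretely, $A'Z=0$ forces every column of $Z$ to lie in $Rq_A$, so $Z=q_Ay^{\mathsf T}$, and then $ZB'^{\mathsf T}=0$ forces $y\in Rq_B$. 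The normalization $q_{1,1}=(q_A)_1(q_B)_1=1$ is immediate.

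\emph{Distance.} Take a nonzero $rq$ with $r\in R$. I would argue that $rq_Aq_B^{\mathsf T}$ has some nonzero column. If $rq_A$ had no nonzero entry, then $(rq_A)_1=r=0$; so some column $(rq_B)_j\,q_A$ with $(rq_B)_j\neq0$ is nonzero. The obstacle is that multiplying by a scalar in $R$ can collapse weight, so I would not use the column structure alone. Instead, I would note that every nonzero column of $Z$ lies in $\ker A'$, so its weight is at least $d_A$. The number of nonzero columns is at least $|\text{support of some nonzero row}|/g$ in group-orbit terms, since each row lies in $\ker B'$ and has binary weight at least $d_B$ spread over orbit coordinates of size $g$. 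This gives at least $d_B/g$ nonzero orbit columns, each of binary weight at least $d_A$, hence $d_H\ge d_Ad_B/g$. With $d_A,d_B=\Omega(g\sqrt m)$ this is $\Omega(gm)$.

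\emph{Fillings.} This step is essentially Lemma~\ref{si-pk:lem:shortening}, applied in the product's own coordinates. A top boundary $b=(U,V)$ with $H_2b=0$ and $|b|<\tau$ lies in $\ker H_2=\im H_1$, so it is a valid boundary $(A'Y,YB'^{\mathsf T})$. Extending by zero, it is a boundary of the full product, and Proposition~\ref{si-pk:prop:pk-input}(3) gives a filling $Z$ with $|Z|\le\rho|b|<\min\{d_A,d_B\}$. The shortening lemma confines $Z$ to the retained rectangle, and this yields $H_1z=b$ with $|z|\le\rho|b|$. The main care point is to check that the first-component equation restricted to removed columns reads $AZ_{\text{col}}=0$, and likewise for removed rows, exactly as in that lemma.
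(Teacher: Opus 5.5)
Your proposal is correct and follows the paper's proof essentially step for step. Exactness and $\ker H_1=Rq$ come from splitting each surjection into its rank-one kernel plus a contractible piece. The distance bound comes from a nonzero row in $\ker B'$, which has at least $d_B/g$ nonzero group-ring entries, each marking a nonzero column of $\ker A'$ with weight at least $d_A$; the paper uses the first row $tq_B^{\mathsf T}$ where you use any nonzero row. The fillings come from exactness combined with the rectangular shortening lemma.
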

\begin{proof}
Both factors are split surjections with free rank-one kernels. Each
splits as its kernel plus a contractible two-term complex, so their
tensor product has only the top kernel $Rq$, with $q_{1,1}=1$.
This proves middle exactness; Lemma~\ref{si-pk:lem:shortening} then gives
short fillings of every valid syndrome.

For the weight bound, a nonzero top word is $tq_Aq_B^{\mathsf T}$.
Its first row has weight at least $d_B$, hence at least $d_B/g$
nonzero group-ring entries. Each corresponding column is a nonzero
$A'$ kernel word of weight at least $d_A$. Thus $d_H\ge d_Ad_B/g$.
\end{proof}

\subsubsection{Compatible invertible coordinate minors}
\label{si-pk:app:coordinate-minors}

\begin{lemma}[Coordinate minors without changing the PK local properties]
\label{si-pk:lem:coordinate-minors}
The local matrices and coordinate orderings in the PK construction
can be chosen so that the wide Tanner factor $A$ and the adjoint
$B=T^{\dagger}$ of the tall Tanner factor both have invertible maximal
coordinate minors over $R=\mathbb F_2[G]$.
\end{lemma}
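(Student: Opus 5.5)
We reduce invertibility over $R=\mathbb F_2[G]$ to a binary rank condition and then obtain that rank condition by a random choice of coordinate orderings. Since $G$ is an abelian $2$-group, $R$ is a local ring with maximal ideal the augmentation ideal, and its residue field is $\mathbb F_2$ via the augmentation $\varepsilon:R\to\mathbb F_2$, $\sum_g f_g g\mapsto\sum_g f_g$. Hence an element of $R$ is a unit iff its augmentation is $1$. Likewise a square matrix over the commutative ring $R$ is invertible iff its determinant is a unit, i.e.\ iff its augmented image $\varepsilon(M)$ is nonsingular over $\mathbb F_2$. It therefore suffices to choose the constructions so that $\varepsilon(A)$ and $\varepsilon(B)$ have full row rank over $\mathbb F_2$. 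A full-rank matrix has a nonsingular maximal coordinate minor, and the corresponding minor of the lifted matrix is then invertible.

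Next we identify the augmented matrices. Augmentation sends every voltage to $1$, so $\varepsilon(A)$ is the binary Tanner map of the fixed base graph $\Gamma_0$ with local code $h$. The same holds for $\varepsilon(T)$ with $h'$. For $\varepsilon(B)=\varepsilon(T^\dagger)=\varepsilon(T)^{\mathsf T}$, note that the adjoint inverts group elements, which is invisible after augmentation. The lift voltages and spectral bounds are untouched by this reduction. We may therefore fix $\Gamma_0$ and the local matrices $h,h'$ as in Proposition~\ref{si-pk:prop:pk-input}, and vary only the local coordinate orderings. These are the bijections at each vertex between the incident edges and the $w$ columns of $h$. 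The PK local properties (product expansion, distance, rates) are invariant under column permutations of $h$ and $h'$, so every such choice keeps the conclusions already established.

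It remains to show that, for independent uniformly random local orderings, the binary maps $\varepsilon(A)$ (rows $N_0r$, columns $N_0w/2$) and $\varepsilon(T)^{\mathsf T}$ both have full row rank with probability bounded away from zero. By a union bound, it is enough that each fails with probability less than $1/2$. Full row rank of $\varepsilon(A)$ is equivalent to the left kernel being trivial: no nonzero $y\in(\mathbb F_2^r)^{N_0}$ has $y^{\mathsf T}\varepsilon(A)=0$. Such a $y$ assigns to each vertex $v$ a dual codeword $c_v=y_v^{\mathsf T}h\in C(h)^\perp$, placed on its incident edges by the random ordering. Each edge must receive equal bits from its two endpoints. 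We bound the expected number of such $y$ by a first-moment computation, grouping the configurations by the weight profile of the $c_v$. This is where the weight-enumerator conditions referenced in the construction are imposed. We require that $C(h)^\perp$ has minimum distance linear in $w$ and a weight distribution close to binomial. Then the probability that random orderings align a given profile decays like a product of hypergeometric terms. The rate condition $r/w<1/2$ makes $2^{rN_0}$ times these alignment probabilities summable and small for large fixed $w$. For $\varepsilon(T)^{\mathsf T}$, the analogous condition is that $\varepsilon(T)$ has full column rank, so we consider nonzero $x\in\mathbb F_2^{E}$ with local views in $C(h')^\perp$ at every vertex. Here we use $r'/w>1/2$, so that $\dim C(h')^\perp=w-r'<w/2$, and again apply the first-moment bound.

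The main obstacle is this first-moment estimate. Low-weight configurations, supported on few vertices, must be handled by the expansion of $\Gamma_0$ together with the dual distance. High-weight configurations need the near-binomial enumerator so that the entropy term does not beat the alignment penalty. I would first try to borrow the standard random-Tanner-code rank argument, as used for Sipser--Spielman or Gallager ensembles, splitting at a small linear threshold in the number of active vertices. If the fixed graph makes the counting awkward, a fallback is to permit a constant number of extra shortened columns, i.e.\ to pass to a full-rank submatrix. This suffices because only an invertible maximal minor is needed downstream.
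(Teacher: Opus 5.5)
You follow the paper's route, but the proposal stops before the estimate that carries the lemma. The parts that match are these: invertibility over the local ring $R$ is detected by augmentation; $\varepsilon(A)$ and $\varepsilon(T)$ are the Tanner matrices of the fixed base graph; and only the vertexwise coordinate permutations are randomized (repeated through each lift orbit so the maps stay $R$-linear), which leaves the PK local properties unchanged.

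What is missing is the estimate you defer as the ``main obstacle''. Two pieces of it are not supplied.

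\textbf{Compatibility of the local-code conditions.} The lemma asserts that the \emph{local matrices} can be chosen. So you must show that your weight-enumerator requirement on $U=\im h^{\mathsf T}$ and $U'=\ker h'$ can hold together with the PK product-expansion and distance conditions. The paper does this as follows. A uniform $r$-dimensional subspace has $\mathbb E A_j\le 2^{r-w+1}\binom wj$. Markov and a union bound over $j$ then give $A_j\le w^3 2^{r-w}\binom wj$ for all $j\ge1$, except with probability $2/w^2$. Meanwhile the PK conditions hold with probability tending to one, so both can be met.

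\textbf{A quantitative first moment.} It is cleanest to sum over witness edge sets $F$ rather than over $y$. Full local row rank makes $y$ a function of $F$, and independence of the permutations factorizes the probability over active vertices. The argument then has three inputs:
\begin{itemize}
\item The enumerator bound gives $\Pr[x\in\pi U]=A_{|x|}/\binom{w}{|x|}\le w^3 2^{-(1-\eta)w}$ uniformly over nonzero local patterns, where $\eta<1/2$ bounds both local dimensions.
\item Local distance $\delta w$ and spectral mixing, $\delta w|S|\le 2|E(S)|\le (w|S|/N+\lambda_0)|S|$, deterministically force every active set to satisfy $|S|\ge\delta N/2$. This is the small-support exclusion you anticipated. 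It is essential, because at fixed $w$ the number of small sets would otherwise swamp the per-vertex penalty as $N$ grows.
\item At most $2^{w|S|/2}$ edge sets have active set $S$.
\end{itemize}
Summing over at most $2^N$ sets $S$ gives failure probability $2^{N}\,2^{-[(1/2-\eta)w-3\log_2 w]\delta N/2}=2^{-\Omega(N)}$ for fixed large $w$. The same argument applies to both $\varepsilon(A)$ and $\varepsilon(T)$.

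Your fallback would not rescue a failure. Deleting columns of a wide matrix cannot raise its row rank, so if $\varepsilon(A)$ is row-deficient then every maximal minor is singular. You would have to delete local checks instead, which changes the Tanner code rather than selecting a minor of it. The later shortening step assumes the full factor already has an invertible maximal minor.

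There is also a notational slip. $\ker\varepsilon(T)$ consists of edge vectors with local views in $\ker h'$, of dimension $w-r'$. Calling this $C(h')^\perp$ clashes with your convention $C(h)^\perp=\im h^{\mathsf T}$, though your dimension count is the right one.
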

\begin{proof}
Choose full-row-rank local matrices $h,h'$ of width $w$, with
row counts respectively below and above $w/2$ by fixed fractions.
Thus $A$ is wide and $T$ is tall. The two local spaces
\[
 U=\im h^{\mathsf T},\qquad U'=\ker h'
\]
have dimensions at most $\eta w$ for a fixed $\eta<1/2$. We choose
the local pair with the product-expansion and distance properties
of \cite[proof of Proposition~1]{panteleev2022asymptotically}, and with the following
additional property: for each nonzero binary word $x$ and a uniform
coordinate permutation $\pi$,
\begin{equation}
 \Pr[x\in\pi U],\ \Pr[x\in\pi U']
 \le w^3 2^{-(1-\eta)w}.
 \label{si-pk:eq:local-membership}
\end{equation}
To see that these requirements are simultaneous, let $A_j$ count
weight-$j$ words in a uniform $r$-dimensional subspace. Then
$\mathbb E A_j\le 2^{r-w+1}\binom wj$. Markov's inequality and a union
bound over $1\le j\le w$ show that
$A_j\le w^3 2^{r-w}\binom wj$ for all $j$ except with probability
at most $2/w^2$. Apply this to both local spaces. The PK local
conditions hold with probability tending to one for these fixed rate
intervals, so the intersection is nonempty for sufficiently large
fixed $w$. The enumerator bound implies \eqref{si-pk:eq:local-membership}.
In particular, both spaces can also have distance at least $\delta w$
for a fixed $\delta>0$.

Let the base graph have $N=\Theta(\sqrt m)$ vertices, degree $w$,
and nontrivial spectral bound $\lambda_0=O(\sqrt w)$. Independently
permute the local coordinates for $h$ and $h'$ at each base vertex,
and repeat each choice throughout that vertex's lift orbit. Such
permutations preserve the local product and distance properties.
Augmentation, which sends every group element to $1$, recovers the
Tanner matrices on this base graph.

A nonzero row dependence of the augmented $A$ determines a nonempty
edge set whose pattern at every active vertex lies in its permuted
$U$. A nonzero kernel word of the augmented $T$ gives the same type
of event with $U'$. Full local row rank ensures that a nonzero row
dependence gives a nonempty edge set. If $S$ is its active vertex set,
local distance and spectral mixing give
\[
 \delta w|S|\le 2|E(S)|
 \le\left(w\frac{|S|}{N}+\lambda_0\right)|S|.
\]
Hence $|S|\ge(\delta-\lambda_0/w)N\ge\delta N/2$ after fixing $w$
sufficiently large.
For a fixed $S$, there are at most $2^{w|S|/2}$ possible edge sets.
Independence of the local permutations and
\eqref{si-pk:eq:local-membership} therefore bound the probability of a
witness with exactly $S$ as its active
vertex set by
\[
 2^{w|S|/2}\bigl(w^3 2^{-(1-\eta)w}\bigr)^{|S|}
 =2^{-[(1/2-\eta)w-3\log_2w]|S|}.
\]
There are at most $2^N$ choices of $S$. Taking $w$ sufficiently large
makes the total failure probability $2^{-\Omega(N)}$. A further
union bound handles both $A$ and $T$, so both required augmentation
ranks hold simultaneously for sufficiently large $N$.

Finally, write $G=\prod_i C_{2^{a_i}}$. Its group ring is
\[
 R\cong\mathbb F_2[t_1,\ldots,t_b]/
 (t_1^{2^{a_1}},\ldots,t_b^{2^{a_b}}),\qquad b=\text{number of cyclic factors}.
\]
The augmentation ideal is nilpotent. A square minor with nonsingular
augmentation therefore has unit determinant in $R$ and is invertible.
Full row rank of augmented $A$ and full column rank of augmented $T$
supply such minors for $A$ and $B=T^{\dagger}$, respectively.
All choices preserve the previously fixed PK local properties.
\end{proof}

\subsubsection{Boundary definitions for the joint distance theorem}
\label{sec:joint-pk-boundary-definitions}

For Theorem~\ref{thm:joint-pk-spacetime}, use
$E=P\otimes_R V$, $F=\operatorname{Cone}(E\to C)$ and
$W^\circ=F\otimes_R T$.
Logical faithfulness means that inclusion identifies
$H_1(F)=H_1(C)/J$, where $J=\im\gamma_{*,1}$ includes all measured
classes, including residual measurements. In particular,
$\gamma_{*,0}:H_0(P)\to H_0(C)$ is injective. Neither $P_0=C_0$ nor
an expansion assumption on $P$ is required. The base-code family has $O(\log^2 g)$ physical orbits.

\paragraph*{Boundary copies of the base code.}
Let $e_0\in T_0$ be the distinguished physical coordinate. Choose
$e_1\ne e_0$ representing the same generator of $H_0(T)$, so that
$e_0+e_1=\partial_Tt_{01}$ for some $t_{01}\in T_1$.
This is equality of classes in the free $R$-module $H_0(T)$:
a nontrivial group translate of $e_0$ represents a different class.
Require that the $R$-linear maps $R\to T_0$, $1\mapsto e_a$, have
uniformly bounded physical row and column weights. This is possible:
a single physical basis vector $t_{01}$ with nonzero boundary gives
$e_1=e_0+\partial_Tt_{01}$ of bounded weight. More generally the
argument allows any such sparse homologous representatives, without
a bound on the weight of the connecting chain. The second representative
need not be a single coordinate; no separation in extraction rounds is
claimed by this algebraic construction.
Write $j_a(x)=x\otimes e_a$ in $B=C\otimes_R T$, and let
$i_a:C\to W^\circ$ be its composition with the base inclusion.
Put $i=i_0$. The three tests of the protected observables use
\begin{align}
 W^{\rm cl}&=\operatorname{Cone}
       (C\xrightarrow{(i,1)}W^\circ\oplus C),\nonumber\\
 W^{\rm PPM}&=\operatorname{Cone}(C\xrightarrow{i}W^\circ),\qquad
 W^{\rm copy}=\operatorname{Cone}
       (C\oplus C\xrightarrow{(i_0,i_1)}W^\circ).
 \label{eq:joint-pk-boundaries}
\end{align}
These protect surviving $Z$ observables, measured $X$ outcomes, and
all logical $X$ observables, respectively. For a measured representative
$x$, choose $y\in W^\circ_2$ with $\partial y=ix$; its answer cycle is
$(y;x)$. The test of all logical $X$ observables uses
$(x\otimes t_{01};x;x)$, with representatives of a basis of the full
$H_1(C)$, including measured and residual classes. It has interior
support: $\partial(x\otimes t_{01})=i_0x+i_1x$ for a cycle $x$.
These copied observables test input--output $X$ correlations, including
for measured classes; they do not assert preservation of the quantum
information removed by measurement. All degrees are retained, including
$W^\circ_1=F_1T_0\oplus F_0T_1\oplus F_{-1}T_2$.

The superscripts distinguish the closed ($\rm cl$), measured-outcome
($\rm PPM$), and copied-observable ($\rm copy$) boundary tests.
The interior $W^\circ$ contains all base-code and ancillary coordinates
before these boundary copies are added. Explicitly, their degree-$j$
coordinate spaces are
\[
 \begin{aligned}
 W^{\rm cl}_j&=W^\circ_j\oplus C_j\oplus C_{j-1},\\
 W^{\rm PPM}_j&=W^\circ_j\oplus C_{j-1},\\
 W^{\rm copy}_j&=W^\circ_j\oplus C_{j-1}\oplus C_{j-1}.
 \end{aligned}
\]
Thus a degree-$j$ chain of $W^{\rm cl}$ is written $(e,c,\eta)$,
with $e\in W^\circ_j$ on the interior, $c\in C_j$ on the unshifted
boundary copy, and $\eta\in C_{j-1}$ on the shifted boundary copy.
In particular, a degree-one fault has $c\in C_1$ and $\eta\in C_0$;
these are different sets of boundary coordinates.
Here $i$ is the coordinate inclusion of the original base code into
the interior at the distinguished temporal coordinate. The differential is
\[
 \partial(e,c,\eta)=
 (\partial_{W^\circ}e+i(\eta),\ \partial_Cc+\eta,\ \partial_C\eta).
\]
For $W^{\rm cl}$, the reduction to the interior is the chain map
$r(e,c,\eta)=e+i(c)$, with section $s(e)=(e,0,0)$.
The homotopy $H(e,c,\eta)=(0,0,c)$ satisfies
$1+sr=\partial H+H\partial$, so the reduction preserves logical
classes. Thus $W^{\rm cl}$ is a mapping cylinder of $i$, with a
contractible pair of added copies. Since $i$ inserts physical
coordinates, $r$ does not increase weight. Taking the retained $Z$ observables on $W^{\rm cl}$ to be
the pullbacks of the interior observables under $r$ makes their
pairings with faults identical before and after reduction.

\paragraph*{Identifying the protected classes.}
The root inclusion $j:C\to B=C\otimes_R T$ induces an isomorphism
on homology and cohomology, since the distinguished coordinate
$R\to T$ is a chain-homotopy equivalence. Logical faithfulness then
identifies $H_1(W^\circ)$ with $H_1(C)/J$. Thus the degree-one
classes of $W^{\rm cl}$, through its explicit retraction to $W^\circ$,
are exactly the surviving $X$ error classes, paired against
$J^\perp\subseteq H^1(C)$.

For $W^{\rm PPM}$, the long exact sequence of the cone contains
\[
 H_2(W^{\rm PPM})\longrightarrow H_1(C)
 \xrightarrow{i_*}H_1(W^\circ).
\]
The first map sends the class of $(y;x)$ to $[x]$, and its image
is $\ker i_*=J$. Therefore an answer cycle exists for every measured
class. Fix a lift $(y;x)$ for each chosen independent measured
representative. Different lifts may differ by an interior cycle;
the distance argument in Theorem~\ref{thm:joint-pk-spacetime} works for every such choice. In particular,
a cocycle supported only on the boundary, $(0;\zeta)$, evaluates on
this answer cycle as $\langle\zeta,x\rangle$.

For $W^{\rm copy}$, the assignment
$[x]\mapsto[(x\otimes t_{01};x;x)]$ is an injection from the full
$H_1(C)$ into $H_2(W^{\rm copy})$. It is well-defined: replacing
$x$ by $x+\partial_Cu$ changes the cycle by the boundary of
$(u\otimes t_{01};u;u)$. It is injective because the projection to
the two shifted copies sends its class to $([x],[x])$ in
$H_1(C)\oplus H_1(C)$. Thus no measured or residual logical class
is lost from this observable test.

\paragraph*{Why the measured-outcome boundary permits removal.}
In the decomposition $W^\circ=B\oplus Q$, the inclusion of $C$
is $(j,0)$. An undetected degree-two cochain
$(\beta,q;\zeta)$ on $W^{\rm PPM}$ obeys
\[
 \delta_B\beta=0,\qquad
 \delta_Qq+\Gamma_3^\dagger\beta=0,\qquad
 \delta_C\zeta+j^*\beta=0.
\]
Hence $j^*[\beta]=0$. Since $j^*:H^2(B)\to H^2(C)$ is an
isomorphism, $[\beta]=0$. This is an existence assertion about a
primitive; it does not assert that a homotopy inverse has bounded
weight. The short primitive used in the proof is supplied separately
by Proposition~\ref{prop:joint-pk-input}(2).
A degree-one cochain $(h,0;0)$ has coboundary
\[
 (\delta_Bh,\Gamma_2^\dagger h;j^*h),
\]
whereas $(0,k;0)$ has coboundary $(0,\delta_Qk;0)$.
These formulas justify both removals, including their boundary effects.
For $W^{\rm copy}$ the same equations hold with one boundary equation
for each $j_a$, and the first equivalence has boundary components
$(j_0^*h,j_1^*h)$. Both maps have bounded norms by the attachment
hypothesis. Thus the removal includes faults on the interior as well
as on either boundary copy.

\subsubsection{Weight control under chain equivalences}

\begin{lemma}[Weight bounds under local chain equivalences]
\label{lem:pk-bounded-equivalence}
Let $X,Y$ be based binary chain complexes and let
$f:X\to Y$, $\bar f:Y\to X$ be chain maps satisfying
\[
 1_X+\bar f f=\partial_X H+H\partial_X.
\]
Write $\|L\|$ for the maximum column weight of a binary map $L$.
Fix a degree $k$ and constants $\alpha,\beta,L\ge0$ with
$\|f_k\|\le \alpha$, $\|\bar f_{k+1}\|\le \beta$, and $\|H_k\|\le L$,
where $\alpha\ge1$.
Suppose every degree-$k$ cycle of $Y$ of weight below $\tau$
has a filling of weight at most $\rho$ times its weight.
Then every degree-$k$ cycle $z$ of $X$ with $|z|<\tau/\alpha$
has a filling of weight at most $(\alpha\beta\rho+L)|z|$.
The same assertion holds when both filling hypotheses and conclusions
are restricted to cycles that are boundaries.
Moreover, writing $d_k$ for homological distance,
\[
 d_k(X)\ge d_k(Y)/\|f_k\|,
\]
whenever $H_k(X)\ne0$.

The corresponding statements for cochains follow by using $\bar f^\dagger$
as the forward map, $f^\dagger$ as the return map, and $H^\dagger$
as the homotopy. Explicitly, a degree-$k$ cochain filling bound
$(\tau,\rho)$ for $Y$ transfers to $X$ with cutoff $\tau/\alpha'$ and cost
$\alpha'\beta'\rho+L'$, where
\[
 \alpha'=\max\{1,\|\bar f_k^\dagger\|\},\qquad
 \beta'=\|f_{k-1}^\dagger\|,\qquad
 L'=\|H_{k-1}^\dagger\|.
\]
Thus bounded row as well as column weights are required when both
chain and cochain estimates are transferred.
\end{lemma}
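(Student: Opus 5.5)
The plan is to transport a filling problem in $X$ to $Y$ with $f$, solve it there, bring the filling back with $\bar f$, and then correct the discrepancy using the homotopy identity $1_X+\bar f f=\partial_X H+H\partial_X$. Throughout, the weight bounds use only the elementary fact that a binary map $L$ with maximum column weight $\|L\|$ satisfies $|Lv|\le\|L\|\,|v|$.

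First I will prove the filling statement. Let $z\in X_k$ be a cycle with $|z|<\tau/\alpha$. Because $f$ is a chain map, $f_kz$ is a degree-$k$ cycle of $Y$, and its weight satisfies $|f_kz|\le\alpha|z|<\tau$. The hypothesis on $Y$ then gives $u\in Y_{k+1}$ with $\partial_Y u=f_kz$ and $|u|\le\rho\alpha|z|$. Next I apply the homotopy identity to $z$, using $\partial_X z=0$:
\[
 z=\bar f_k f_k z+\partial_X H_k z=\bar f_k\partial_Y u+\partial_X H_k z=\partial_X\bigl(\bar f_{k+1}u+H_k z\bigr).
\]
The last equality uses that $\bar f$ is a chain map, and all signs are irrelevant over $\mathbb F_2$. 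The filling $\bar f_{k+1}u+H_kz$ then has weight at most $\beta\rho\alpha|z|+L|z|$, which is the claimed bound.

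For the restricted version, suppose $z$ is a boundary in $X$. Then $f_kz$ is a boundary in $Y$, so the restricted hypothesis on $Y$ applies, and the same computation goes through unchanged.

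For the distance statement, let $z$ be a cycle representing a nontrivial class in $H_k(X)$. The chain equivalence identity makes $\bar f_*f_*=1$ on homology. Hence $f_kz$ is nontrivial in $H_k(Y)$, which gives $d_k(Y)\le|f_kz|\le\|f_k\||z|$. Rearranging yields $d_k(X)\ge d_k(Y)/\|f_k\|$. If $\|f_k\|=0$, then $f_k=0$, which would force $[z]=0$; this contradicts nontriviality, so the division is meaningful whenever $H_k(X)\ne0$.

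The cochain statement follows by dualizing the chain argument. Transposing the identity gives
\[
 1+f^\dagger\bar f^\dagger=\delta H^\dagger+H^\dagger\delta .
\]
This is the same kind of identity as before, with $\bar f^\dagger$ as the forward map, $f^\dagger$ as the return map, and $H^\dagger$ as the homotopy. The degree-$k$ cochain argument then uses $\bar f_k^\dagger$ going forward, a return map from degree $k-1$ cochains of $Y$, namely $f_{k-1}^\dagger$, and the homotopy component $H_{k-1}^\dagger$ lowering the degree. Repeating the chain computation gives cutoff $\tau/\alpha'$ and cost $\alpha'\beta'\rho+L'$. Column weights of transposes are row weights of the original maps, which is why both row and column bounds are required when chain and cochain estimates are both transferred.

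The only real care needed is the index bookkeeping in this dual case: I need to check which degree each transposed component lives in, so that the norms match exactly the $\alpha',\beta',L'$ listed in the statement. The max with $1$ in $\alpha'$ guards the cutoff when the forward map has zero norm.
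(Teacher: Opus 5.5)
Your proposal is correct and follows the paper's proof essentially step for step. You push the cycle forward by $f$, fill it in $Y$, pull the filling back by $\bar f$, correct with $H_kz$ via the homotopy identity, deduce the distance bound from $\bar f_*f_*=1$, and handle cochains by transposing the identity. Your explicit degree check in the dual case ($\bar f_k^\dagger$ forward, $f_{k-1}^\dagger$ back, $H_{k-1}^\dagger$ as the correction) is a useful detail that the paper leaves to ``the same argument.''
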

\begin{proof}
For a cycle $z$, the chain-map property makes $fz$ a cycle and
$|fz|\le \alpha|z|<\tau$. Choose $v$ with $\partial_Yv=fz$
and $|v|\le\rho|fz|$. The homotopy identity gives
\[
 z=\partial_X(\bar f v+Hz),\qquad
 |\bar f v+Hz|\le(\alpha\beta\rho+L)|z|.
\]
If $z$ is a boundary, then $fz$ is a boundary too, so the same
argument uses only the valid-boundary hypothesis.
Since $\bar f_*f_*=1$, a nonzero class of $X$ maps to a nonzero class of
$Y$; consequently $d_k(Y)\le|fz|\le\|f_k\||z|$.
Taking adjoints of the homotopy identity yields
$1_X+f^\dagger \bar f^\dagger=H^\dagger\delta_X+\delta_XH^\dagger$,
which proves the cochain assertions by the same argument.
\end{proof}

\noindent
For a fixed number of tensor factors, tensoring these maps with
identities preserves their weight bounds. For equivariant maps on based free $R$-modules, the same bounds hold
for balanced products over $R$, with physical binary weights. Tensoring
the homotopy identity is valid in characteristic two: the cross terms cancel.
Composing a fixed number of such equivalences therefore changes
cutoffs and filling costs by constants only. In particular, no inverse
of a global Tanner matrix is needed for this metric transfer.

\subsubsection{Joint factors for spatial and temporal protection}
\label{sec:joint-pk-construction}

The temporal theorem requires simultaneous estimates for $C\otimes T$
and $V\otimes T$. We supply these by choosing the base-code factors and
four transverse factors jointly. This is a separate construction from
arbitrary coordinate shortening above; temporal compatibility does not
follow from the spatial properties alone. All weights below are physical
binary weights, and adjoints include inversion of group elements.
The extension field used to select local codes is a construction tool:
the resulting $C,V,T$ are complexes over $R=\mathbb F_2[G]$ and the
starter and attachment maps may be arbitrary bounded binary $R$-linear
maps satisfying the logical-faithfulness hypotheses of
Theorem~\ref{thm:joint-pk-spacetime}.
Here we use cyclic groups to match the cyclic-lift realization of the
commuting-direction geometry; the separate spatial theorem permits
arbitrary abelian $2$-groups.
Throughout this joint construction and its technical details, citations
to Kalachev--Panteleev~\cite{kalachev2025extendable_doi} use arXiv v2;
citations to Dinur--Lin--Vidick~\cite{dinur2024cubical_doi} use arXiv v3.

\begin{proposition}[Joint PK construction]
\label{prop:joint-pk-input}
For sufficiently large cyclic groups $G$ of power-of-two order $g$,
let $R=\mathbb F_2[G]$ and $\ell=\Theta(\log g)$.
There exist a binary base code $C$ and two-factor complexes $V,T$,
in degrees $0,-1,-2$ and $2,1,0$, respectively, with the following properties.
All physical row and column weights are bounded. Each transverse factor
has $O(\ell)$ orbit coordinates, and $N_C=\Theta(g\ell^2)$.
For constants $c,\rho>0$ independent of $g$, put $\tau=cg\ell$.
\begin{enumerate}
\item Both base-code Pauli distances are $\Omega(g\ell)$.
Evaluation at the first coordinate identifies $V$ up to chain homotopy
with $R$ in degree zero, with $\ker\partial_0^V=Rq$ and $q_1=1$.
Its kernel distance is $\Omega(g\ell^2)$, and each chain-degree-$-1$
cycle of weight below $\tau$ has a filling of weight at most $\rho$
times its own weight. Inclusion of the distinguished physical
coordinate $R\to T$ is a chain-homotopy equivalence.
\item $B=C\otimes_R T$ has degree-one homological distance
$\Omega(g\ell)$. Every degree-two coboundary $\beta$ with
$|\beta|<\tau$ has a degree-one primitive $h$ satisfying
$\delta_Bh=\beta$ and $|h|\le\rho|\beta|$.
\item In $U=V\otimes_R T$, every cycle of weight below $\tau$ in
chain degree $-1$ or $0$, and every cocycle of weight below $\tau$
in cochain degree $0$ or $1$, has a filling of weight at most $\rho$
times its own weight. Although $H_0(U)\cong R$ is nonzero, the
cutoff implies that every degree-zero cycle below $\tau$ is a boundary:
$d_0(U)\ge\tau$. Dually the degree-zero cohomological distance is
also at least $\tau$; neither assertion requires $U$ to be acyclic.
\end{enumerate}
The constants and the choice of $C,V,T$ are independent of the surgery
program. The constant $\rho$ need not equal the Tanner degree $w$ of
the earlier spatial construction.
\end{proposition}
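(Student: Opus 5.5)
The plan is to build all six classical factors at once: two for $C$, two for $V$ and two for $T$. They come from a single collection of local codes that are jointly product-expanding, lifted along one cyclic cover. I would then derive the three items from one multi-dimensional filling theorem by restricting to the relevant sub-products. Concretely, I would fix a sufficiently large degree $w$. I would then choose six local codes $h^{(1)},\dots,h^{(6)}\subseteq\mathbb F_2^w$ that are product-expanding in the sense of Kalachev--Panteleev~\cite{kalachev2025extendable_doi}, together with their duals. Random codes over an extension field, pushed down to $\mathbb F_2$, supply such a collection with the required dimension and distance.

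As base graph I would take a constant-degree expander on $N_0=\Theta(\ell)$ vertices, $\ell=\Theta(\log g)$. I would lift it by a random $G$-voltage assignment, using the same character-block union bound as in the proof of Proposition~\ref{si-pk:prop:pk-input}. This keeps all nontrivial eigenvalues $O(\sqrt w)$ as long as $N_0\gtrsim\log g$, which is exactly why each factor has $O(\ell)$ orbit coordinates. The Dinur--Lin--Vidick cubical geometry~\cite{dinur2024cubical_doi} with commuting cyclic directions then realizes the six Tanner factors as commuting maps over $R$. From this I would set
\[
 C=A_1\otimes_R A_2,\qquad V=A_3\otimes_R A_4,\qquad T=A_5\otimes_R A_6 .
\]
Here $C$ has $N_C=\Theta(g\ell^2)$ cells. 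In addition, $V$ and $T$ are shortened to split surjections with rank-one kernels, as in Lemma~\ref{si-pk:lem:transverse}, with $q_1=1$ and invertible coordinate minors supplied by Lemma~\ref{si-pk:lem:coordinate-minors}.

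With this in place, item (1) is obtained as follows. The base-code distances follow from the two-dimensional PK distance bound for $C$. The properties of $V$ (homotopy equivalence to $R$, kernel distance $\Omega(g\ell^2)$ via $d_Ad_B/g$, and short fillings with $\tau=\Omega(g\ell)$) are Lemma~\ref{si-pk:lem:transverse} together with Lemma~\ref{si-pk:lem:shortening}, now with $\sqrt m=\Theta(\ell)$. The homotopy equivalence $R\to T$ follows in the same way from split surjectivity.

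For items (2) and (3), I would apply the $t$-dimensional local-minimality and filling theorem of Kalachev--Panteleev to the four-factor products $A_1\otimes A_2\otimes A_5\otimes A_6$ and $A_3\otimes A_4\otimes A_5\otimes A_6$. Product expansion of the six local codes gives linear-cost (co)fillings below a cutoff proportional to the lifted expander size, namely $\Omega(gN_0)=\Omega(g\ell)$. I would then pass from the full products to the shortened $T$- and $V$-factors. Here the rectangular shortening argument of Lemma~\ref{si-pk:lem:shortening} is used, generalized to one shortened direction at a time: any filling of weight below the factor kernel distance must vanish outside the retained coordinates. For the contractible directions, Lemma~\ref{lem:pk-bounded-equivalence} moves estimates across the bounded-weight chain equivalences, at constant loss. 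For $B=C\otimes_R T$, the degree-one distance $\Omega(g\ell)$ follows from $d_k(X)\ge d_k(Y)/\|f_k\|$ applied to $j:C\to B$ and its bounded inverse. The short primitive of degree-two coboundaries is the cochain version of the same transfer. The claims $d_0(U)\ge\tau$ and the dual statement are immediate once fillings exist below $\tau$.

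The main obstacle is making the six factors simultaneously product-expanding and compatible with lifting. The cited theorems are phrased for a fixed dimension over one group, so I must check three things: that random local codes satisfy the $4$-fold product expansion required for each sub-product of interest; that the shortened (rank-one kernel) factors retain it; and that the filling cutoff counts physical rather than orbit weight, so that no stray factor of $g$ appears. I would first attempt to deduce everything from a single $6$-dimensional statement and restrict to sub-products. If restriction fails to preserve local minimality, I would instead verify the two needed $4$-dimensional products directly.
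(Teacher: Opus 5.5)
\textbf{There is a genuine gap.} Your top-level architecture matches the paper's: six jointly selected local families over an extension field, random cyclic lifts of base graphs with $\ell=\Theta(\log g)$ vertices, the abelian-lift cubical geometry, and four-direction products $C\otimes_R T$ and $V\otimes_R T$ whose component fraction $\ell^{1-t}$ gives the cutoff $\Omega(g\ell)$. The gap is in passing from those cubical products to the actual $C,V,T$, where your plan repeatedly assumes weight-bounded equivalences that do not exist.

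\emph{Item~(2).} You transfer from $C$ to $B$ through $j$ and ``its bounded inverse.'' But a chain map $T\to R$ into degree zero must vanish on $\operatorname{im}\partial_1^T$. So, with $T$ oriented as the statement requires, the inverse of the coordinate inclusion is $1_C\otimes\phi$, where $\phi$ is the cokernel functional with $\phi(e_0)=1$. Its translates are nonzero degree-zero cocycles of $T$. By the same balanced-product bound that gives the kernel distance of $V$, they have weight $\Omega(g\ell^2)$, carried on only $O(\ell^2)$ orbit coordinates. Hence some entry of $\phi$ has weight $\Omega(g)$, and Lemma~\ref{lem:pk-bounded-equivalence} yields at best $d_1(B)\ge d_1(C)/\Omega(g)$. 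The cochain transfer fails for the same reason, since it needs $\|f^\dagger\|$ bounded. Your ``bounded-weight chain equivalences'' for the contractible parts of split surjections fail likewise, because they require inverting the unit minor. The paper uses $V\simeq R$ and $T\simeq R$ only to identify homology. It obtains both the distance and the short primitive in item~(2) directly from Lemma~\ref{lem:joint-pk-global-fillings}, applied to the four-direction representative of $C\otimes T$.

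\emph{Shortening and orientation.} In Lemma~\ref{si-pk:lem:shortening} the filling lives in the top degree, so components outside the retained rectangle are forced into a factor kernel and vanish by weight. Item~(3) needs fillings of $U$ in non-top cubical degrees of a four-direction complex. There fillings are defined only up to boundaries and may use non-retained coordinates, and no bounded retraction onto the shortened subcomplex exists. Your orientations are also off. Split surjections put the homology of $T$ in degree~$2$, so inclusion of a degree-zero coordinate is not an equivalence. A product of two surjective factors has $H_1=0$, so $C$ needs an adjoint factor.

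\emph{How the paper avoids this.} It does not shorten in the joint construction. Each transverse factor is a full Tanner map with $w/2$ local rows whose square matrix $N_0$ is a unit (Lemma~\ref{lem:joint-pk-square-unit}, via an Euler-orientation determinant argument and half-edge scalings over $K$). One extra local row is added at one endpoint orbit, and a single binary bit of it is kept. $T$ uses $N$, $V$ uses $N^\dagger$, and $C=\operatorname{Res}(A\otimes M^\dagger)$. Adjoint Tanner maps $\Pi h^\dagger$ are converted to dual-code Tanner maps $kI$ by a local bounded homotopy equivalence before the cubical estimates are applied. The kernel distance of $V$ comes from a spectral-mixing bound for $\ker N^\dagger$ on the double cover. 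It does not come from Lemma~\ref{si-pk:lem:transverse}, whose binary local codes are not the jointly selected ones.

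\emph{Binary restriction.} ``Random codes over an extension field, pushed down to $\mathbb F_2$'' is exactly where an argument is needed. Product expansion over $K$ must be transferred to separately restricted binary packets, their Frobenius twists, and intermediate flags, uniformly in the external coefficient space. That is the content of Lemma~\ref{lem:joint-pk-binary-flags}.
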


\begin{proof}[Construction and proof]
\leavevmode\par
\paragraph*{Choose all local codes together.}
Choose six local families: two for the base $A\otimes M^\dagger$,
two for $V$, and two for $T$. The products
$A\otimes M^\dagger\otimes T$ and $V\otimes T$ determine a fixed
finite list of tuple types, including subtuples, duals and local flag
endpoints, with at most four directions in each tuple.
After choosing $K=\mathbb F_{2^{s_K}}$, separate binary restriction
adds at most $s_K^{s-1}$ relative Frobenius twists for each type of
arity $s$. Random constituent codes satisfy all these requirements
simultaneously for sufficiently large fixed $s_K$
\cite{kalachev2025extendable_doi}. The parameter order and union bound
are given in Appendix~\ref{sec:joint-pk-technical-details}.
Choose $A,M$ to be wide, full-row-rank Tanner maps at fixed
nondegenerate rates, with unit maximal minors and nonzero kernels.
Their tensor product therefore has nonzero encoded homology.

\paragraph*{Establish binary local robustness.}
Lemma~\ref{lem:joint-pk-binary-flags} transfers the selected expansion
to separately restricted binary coefficient packets and their
intermediate flags, including the duals, uniformly over external
coefficient spaces. Lemma~\ref{lem:joint-pk-local-degrees} then gives
robustness in every local degree, with constants depending only on the
fixed rates and arity.

\paragraph*{Choose physical attachment coordinates.}
For each transverse direction take a tall near-square map
$N=[N_0;\varpi]$, with $N_0$ square and a unit, and with $\varpi$ the
extra output row. Use $N^\dagger$ for a spatial factor
and $N$ for a temporal factor. The extra coordinate normalizes the
kernel or cokernel without applying $N_0^{-1}$ to a physical error.
Lemma~\ref{lem:joint-pk-square-unit} below supplies these square
units by an Euler-orientation determinant argument compatible with
the selected local expansion conditions.

After restricting the local coefficients to binary, retain one bit
of the extra output packet at one chosen endpoint vertex orbit. This
leaves a single physical binary attachment coordinate.
The binary robustness established above applies to this flag and its dual.

\paragraph*{Lift and obtain the metric estimates.}
Choose the random cyclic lifts used in
Proposition~\ref{si-pk:prop:pk-input}, with $\ell=\Theta(\log g)$
vertices per base graph and normalized spectral bound $O(w^{-1/2})$
\cite[arXiv Theorem~2]{agarwal2019grouplifts}.
Apply Lemma~\ref{lem:joint-pk-global-fillings} to the ordinary cubical
representatives of these binary products. It gives both chain and
cochain cutoffs $\Omega(g\ell)$ with constant filling costs: the
$t$-direction product has $\Theta(g\ell^t)$ cells and component
fraction $\Theta(\ell^{1-t})$.
Table~\ref{tab:joint-pk-degree-dictionary} lists the degrees used;
all filling applications are in interior cubical degrees. In particular,
the nonzero homology and cohomology in paper degree zero of $U$ occur
in cubical degree two. Their distances follow directly from
Lemma~\ref{lem:joint-pk-global-fillings}, followed by the bounded
local transfers below. The algebraic equivalences $V\simeq R$ and
$T\simeq R$ are used only to identify homology; their global inverses
need not preserve weight. The two-factor estimates similarly give
both base Pauli distances.
For each normalized spatial factor $N^\dagger$, its local image
$U_{\rm bit}\subset U_+$ has block distance at least $\delta w$
for a fixed $\delta>0$, by the one-direction subtuple condition.
A nonzero vector in $\ker N^\dagger$ assigns matching edge packets
to adjacent active vertices. Local injectivity and distance force each
active vertex to meet at least $\delta w$ active edges. To see the
resulting global bound explicitly, let $n_\Gamma=g\ell$ be the number
of vertices of the underlying lift, and let $S_L,S_R$ be the active
sets in the two halves of its double cover. With $s_a=|S_a|$ and
normalized absolute spectral bound $\lambda<\delta$, mixing gives
\[
 \delta(s_L+s_R)
 \le 2\left(\frac{s_Ls_R}{n_\Gamma}
              +\lambda\sqrt{s_Ls_R}\right).
\]
Both sets are nonempty. Since $s_L+s_R\ge2\sqrt{s_Ls_R}$, this implies
$\sqrt{s_Ls_R}\ge n_\Gamma(\delta-\lambda)=\Omega(g\ell)$.
Each active vertex requires a nonzero binary coefficient, giving
$d(\ker N^\dagger)=\Omega(g\ell)$. Half-edge scalings preserve
these supports. These classical distances give
$d(\ker\partial_0^V)=\Omega(g\ell^2)$, since their balanced-product
kernel distance is at least the product of those distances divided by
$g$: unfolding a balanced kernel word gives a binary matrix whose
nonzero rows and columns belong to the two factor codes, while
repeating each physical cell exactly $g$ times.
Thus $V$ has $O(\ell^2)$ orbit coordinates, corresponding to
$m=\Theta(\ell^2)$ in the spatial removal argument of
Appendix~\ref{si-pk:sec:two-page-proof}. Here its filling cutoff
$\Omega(g\ell)$ comes from Lemma~\ref{lem:joint-pk-global-fillings},
with a fixed cost $\rho=O(1)$. Step~3 of that argument then gives
$\tau_{\rm clean}=\Omega(g\ell)$.

\paragraph*{Pass to adjoint factors.}
We also need a local conversion, since the descent theorem does not
directly concern arbitrary adjoint Tanner matrices. For a Tanner map
$M=hI$, let $L$ be its half-edge space. The edge and local-code exact
sequences are
\[
 0\to E\xrightarrow{I}L\xrightarrow{\Pi}E\to0,
 \qquad
 0\to Y\xrightarrow{h^\dagger}L\xrightarrow{k}Z\to0,
\]
where $\Pi=I^\dagger$ and $k$ checks the dual local code. Choose
local splittings $I\lambda+j\Pi=1$ and
$h^\dagger v+uk=1$. Exactness gives $\Pi I=0$ and $kh^\dagger=0$;
the splittings consequently satisfy
\[
 \lambda I=1_E,\quad \Pi j=1_E,\quad
 vh^\dagger=1_Y,\quad ku=1_Z,\quad \lambda j=0,\quad vu=0.
\]
The two-term maps
$M^\dagger=\Pi h^\dagger$ and $kI$ are chain-homotopy equivalent:
the chain maps, in degrees one and zero, are
\[
 (\lambda h^\dagger,kj),\qquad (vI,\Pi u),
\]
with homotopies $vj$ and $\lambda u$. These identities follow by
substitution using the splitting and exactness identities above.
Every map acts within an edge or one fixed-width local block, so
its physical row and column weights are bounded, also for binary
intermediate flags. Apply the conversion separately in each factor,
then tensor the resulting maps and homotopies.
Lemma~\ref{lem:pk-bounded-equivalence} transfers the ordinary cubical estimates to the mixed
orientations used here, with constant losses in cutoff and filling
cost. The homotopy terms contribute only bounded multiples of the
original fault weight.

\paragraph*{Transfer to the actual binary base.}
Define the actual binary base as
$C=\operatorname{Res}(A\otimes_{K[G]}M^\dagger)$, where $\operatorname{Res}$
denotes restriction of scalars from $K$ to $\mathbb F_2$.
The unit minors split $A$ as a surjection and $M^\dagger$ as an
injection with free cokernel. Consequently,
\[
 H_1(C)=\operatorname{Res}\!\left(
 \ker A\otimes_{K[G]}\operatorname{coker}M^\dagger\right)\ne0,
 \qquad H_0(C)=H_2(C)=0.
\]
Since $B=C\otimes_R T\simeq C$, also $H_2(B)=H^2(B)=0$;
the latter follows by binary linear duality. Thus every degree-two
cocycle of this particular $B$ is a coboundary. A short primitive
still requires the metric estimate in item~2, not merely this
homology calculation.
Separability of $K/\mathbb F_2$ makes $C$ a cellwise
chain summand of
$C^{\rm big}=\operatorname{Res}(A)\otimes_{\mathbb F_2[G]}\operatorname{Res}(M^\dagger)$:
there are bounded binary maps $\iota,p$ with $p\iota=1$.
They act within the coefficient packets and commute with the binary
temporal factors. A nonzero homology class stays nonzero under
$\iota$; a nonzero cohomology class stays nonzero under $p^\dagger$,
since $\iota^\dagger p^\dagger=1$. Thus both distances transfer with
a fixed loss. For a short valid coboundary, lift by $p^\dagger$,
fill in the larger product, and
project the primitive by $\iota^\dagger$; this transfers the filling
bound with a fixed loss as well. This establishes the estimates for
the original binary $C\otimes T$. The starter and its attachment
are chosen afterward as bounded $\mathbb F_2[G]$-linear maps,
subject to the logical-faithfulness hypotheses of
Theorem~\ref{thm:joint-pk-spacetime}.

All local parameters are fixed before $g$ grows. Taking the minimum
of the finitely many cutoffs and enlarging their constant costs proves
Proposition~\ref{prop:joint-pk-input}. The conclusions hold uniformly over the
bounded programs because those programs enter only through forward
attachment maps and $D,D^\dagger$ in Lemma~\ref{lem:joint-pk-starter}.

\end{proof}

\begin{table}[htbp]
\caption{Degrees used in Proposition~\ref{prop:joint-pk-input}.
The cubical representatives are obtained by the bounded local
equivalences in the proof; the estimates for $C$ also use the binary
summand transfer. Cubical chain degree counts edge directions.
The complex $V$ has a shift of $-2$, so a degree of $U=V\otimes T$
is increased by two in the cubical convention. The stronger kernel
bound for $V$ follows from the two-factor distance argument in the proof.}
\label{tab:joint-pk-degree-dictionary}
\centering
\small
\setlength{\tabcolsep}{8pt}
\begin{tabular}{llll}
\hline\hline
Estimate & Paper degree & Dimension $t$ & Cubical degree\\
\hline
$C$: both distances & $1$ & $2$ & $1$\\
$V$: chain filling & $-1$ & $2$ & $1$\\
$V$: kernel distance & $0$ & $2$ & $2$\\
$C\otimes T$: chain distance & $1$ & $4$ & $1$\\
$C\otimes T$: cochain filling & $2$ & $4$ & $2$\\
$U$: chain fillings & $-1,0$ & $4$ & $1,2$\\
$U$: cochain fillings & $0,1$ & $4$ & $2,3$\\
\hline\hline
\end{tabular}
\end{table}

\subsubsection{Technical details for the joint construction}
\label{sec:joint-pk-technical-details}

The local coefficient field is fixed throughout the construction. The
following arguments explain why selecting this field does not introduce
a growing degree, why the attachment coordinate can be a single binary
coordinate, and why the estimates survive the adjoint factors.

\begin{lemma}[A square Tanner map with a unit augmentation]
\label{lem:joint-pk-square-unit}
Let $\mathcal H$ be a finite loopless $w$-regular multigraph, with $w$
even, and put $r_0=w/2$. At each vertex choose an $r_0\times w$
matrix over a field $K$ of characteristic two, every $r_0$-column minor
of which is nonsingular. If $|K|>2$, nonzero scalings of its half-edge
columns can be chosen so that the square Tanner matrix $N_0$ has
nonsingular augmentation. These scalings can be repeated throughout
each lift orbit. For every lift by a cyclic group of power-of-two order,
$N_0$ is consequently a unit over $K[G]$.

Deleting any collection of local rows gives a wide Tanner matrix with a
unit maximal coordinate minor. Adding one independent local row at one
chosen endpoint vertex orbit gives a tall map $N=[N_0;\varpi]$. After scalar restriction to
$\mathbb F_2$, retaining a single binary coordinate of this extra output
gives a map with exactly one more binary output orbit than input orbits.
\end{lemma}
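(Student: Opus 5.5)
The approach is to reduce everything to the augmentation, i.e.\ the base Tanner matrix on $\mathcal H$, and show that its determinant is a nonzero polynomial in the half-edge scalings. Give each half-edge $(e,u)$, with $u$ an endpoint of $e$, an indeterminate scaling $s_{e,u}$. Since augmentation sends every group element to $1$, the augmented lifted $N_0$ is the base Tanner matrix $N_0(\mathcal H)$. Its rows are indexed by pairs (vertex $v$, local row $i\le r_0$) and its columns by edges. Because $\mathcal H$ is loopless, the column of $e=\{u,v\}$ is $s_{e,u}$ times the $h_u$-column of that half-edge in the rows of $u$, plus $s_{e,v}$ times the $h_v$-column in the rows of $v$. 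The matrix is square, since $r_0|V|=w|V|/2=|E|$. Also $\det N_0(\mathcal H)$ has degree at most one in each variable, and every monomial has the form $\prod_e s_{e,\sigma(e)}$ for an assignment $\sigma$ of each edge to one of its endpoints.

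\textbf{Key step (nonvanishing).} Expanding the determinant column by column, the coefficient of $\prod_e s_{e,\sigma(e)}$ is the determinant of the matrix in which column $e$ is supported only on the rows of $\sigma(e)$. This matrix is block-structured by vertex, so the coefficient is nonzero only if $|\sigma^{-1}(v)|=r_0$ for every $v$. In that case it equals $\pm\prod_v\det\bigl(h_v|_{\sigma^{-1}(v)}\bigr)$, which is nonzero by the assumption that every $r_0$-column minor of $h_v$ is nonsingular. A balanced assignment exists: $w$ is even, so every connected component of $\mathcal H$ has an Euler circuit. Orienting along it gives in-degree $w/2=r_0$ at every vertex, and we assign each edge to its head. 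Distinct assignments give distinct monomials, so no cancellation occurs, and the determinant is a nonzero multilinear polynomial. Fixing variables one at a time, each step faces a nonzero polynomial of degree at most one in the current variable. Such a polynomial has at most one root, so when $|K^\times|\ge2$, i.e.\ $|K|>2$, some nonzero value keeps it nonzero. This produces nonzero scalings with $\det N_0(\mathcal H)\ne0$. Repeating the scalings along lift orbits makes them $G$-equivariant, so the augmentation of the lifted matrix is unchanged.

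\textbf{Unit over $K[G]$.} For $G$ cyclic of order $2^a$ with generator $x$, we have $(x+1)^{2^a}=x^{2^a}+1=0$ in characteristic two. Hence the augmentation ideal is nilpotent and $K[G]$ is local. It follows that $\det N_0\in K[G]$, whose augmentation is nonzero, is a unit, so $N_0$ is invertible, as in the proof of Lemma~\ref{si-pk:lem:coordinate-minors}.

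\textbf{Remaining assertions.} Deleting local rows gives a row-subset of a nonsingular augmented matrix, so the result has full row rank. It therefore has a nonsingular maximal square coordinate minor at augmentation level. The corresponding lifted coordinate minor has this augmentation and is a unit by the same nilpotence argument. Adding one independent local row at one endpoint orbit gives $N=[N_0;\varpi]$ with $N_0$ still a unit minor. Restriction of scalars from $K$ to $\mathbb F_2$ preserves unit blocks, since the inverse restricts as well. It replaces each $K$-orbit by $s_K$ binary orbits, so $\varpi$ contributes $s_K$ binary output orbits, of which we retain one. The result has $s_K|E|+1$ output orbits against $s_K|E|$ input orbits.

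I expect the main obstacle to be the monomial-coefficient computation: checking that the column-by-column expansion really factors into per-vertex minors, and that distinct assignments cannot cancel. The Euler-orientation existence and the field-size count are then routine.
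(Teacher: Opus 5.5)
Your proposal is correct and follows the paper's proof: half-edge indeterminates, an Euler orientation whose head-assignment monomial has a nonzero product of local $r_0$-column minors as coefficient, and nonvanishing of a multilinear polynomial on $(K^\times)^a$ for $|K|>2$. The remaining steps also match: nilpotence of the augmentation ideal of $K[C_{2^a}]$, row-subsets of the nonsingular augmented matrix for the wide case, and restriction of scalars for the tall case. Your extra observation that unbalanced assignments contribute zero is not needed, since isolating the Euler monomial suffices, but it is correct and harmless.
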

\begin{proof}
Orient an Euler circuit in each component of $\mathcal H$. Every vertex
then receives exactly $r_0$ edges. Introduce one independent indeterminate
$s_{v,e}$ for every half-edge, multiplying the local column at $(v,e)$.
In the determinant of the augmented Tanner matrix, group terms according
to which endpoint receives each edge column. The coefficient of the
monomial that assigns every edge to its head in the Euler orientation is
the product of the corresponding $r_0$-column local determinants.
It is nonzero. Thus the determinant polynomial is nonzero and has degree
at most one in each indeterminate. A nonzero polynomial of degree at most
one in each variable cannot vanish on $(K^\times)^a$ when $|K^\times|>1$:
this follows by induction on the number of variables, viewing it as a
linear polynomial in the last variable. Nonzero half-edge scalings with
nonzero determinant therefore exist, regardless of the number of vertices.

For $G=C_{2^a}$, the ideal $(x-1)$ in
$K[G]=K[x]/((x-1)^{2^a})$ is nilpotent. Nonzero augmented determinant
therefore implies a unit determinant. The chosen scalings change local
codes only by coordinatewise invertible maps, so preserve all their
support-based expansion estimates and nested inclusions.

Every subset of the rows of the nonsingular augmented matrix is linearly
independent. An ordinary pivot minor of that rectangular matrix has
nonzero determinant over $K$, and its lifted minor is a unit by the same
argument. Restriction of scalars preserves the square unit $N_0$.
Adding one binary output row preserves its full column rank and gives
free cokernel of rank one, represented by the extra physical output.
No inverse of $N_0$ is applied to a physical fault.
\end{proof}

For the base-code factors, choose the smaller local row spaces as nested
subspaces of the square local row spaces in this lemma. Thus their rates
can stay a fixed distance below $1/2$, while the lemma supplies their unit
maximal minors. For transverse factors use the square row space and its
one-row extension. All endpoint row spaces, their duals, and their
required tuples are included in the joint local selection.

\paragraph*{Block weights for the binary local codes.}
For each direction $i$, let $V_i$ be a finite-dimensional binary space
and let $L_i\subset V_i^{[w]}$. A word on the $s$-dimensional grid has
one coefficient in $\bigotimes_i V_i\otimes E$ at every grid point,
where $E$ is any additional binary coefficient space. Its block weight
counts nonzero grid points. Write $|y_i|_i$ for the number of nonzero
lines parallel to direction $i$. We call a family uniformly
$\eta$-expanding in block weight if every word in the sum of its
directional code spaces has a decomposition $y=\sum_i y_i$ satisfying
\[
 w\sum_i |y_i|_i\le\eta^{-1}|y|_{\rm blk},
\]
for every choice of $E$. This definition does not depend on the dimensions
of the coefficient spaces.

\begin{lemma}[Binary restriction and intermediate flags]
\label{lem:joint-pk-binary-flags}
Fix an arity bound $s\le4$. Suppose the required full-field local tuples,
all their subtuples, and their relative Frobenius twists have product
expansion at least $\varepsilon>0$. Their separately restricted binary
codes are uniformly expanding in block weight, with constants depending
only on $s,\varepsilon$. The same conclusion holds after taking arbitrary
binary subcodes in any of their directions.

Consequently, if the two nested endpoint row spaces are
$U_0\subset U_+$, every intermediate binary space
$U_0\subset U_{\rm bit}\subset U_+$ has the needed expansion both
for $U_{\rm bit}$ and for its binary dual, provided the joint selection
includes the endpoints $U_+$ and $U_0^\perp$ in every required tuple.
The constants are independent of the extension degree and of $w$.
\end{lemma}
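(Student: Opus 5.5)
The plan is to reduce both assertions to a single \emph{packet-level} statement and then prove that statement by a direct decomposition argument. We realize a restricted binary code as the image of a full-field code under a fixed binary map. Choose a basis of $K=\mathbb F_{2^{s_K}}$ over $\mathbb F_2$. Since $K/\mathbb F_2$ is separable, $K\otimes_{\mathbb F_2}\overline{\mathbb F}_2\cong\prod_{\sigma}\overline{\mathbb F}_2$, with one factor for each Frobenius embedding $\sigma$.

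Under this splitting, a separately restricted binary code $\operatorname{Res}(L_i)\subset V_i^{[w]}$ becomes, after extending scalars, the direct sum of the twisted codes $L_i^{\sigma}$. Take an $s$-dimensional grid word with coefficients in $\bigotimes_i V_i\otimes E$ and tensor it with $\overline{\mathbb F}_2$. The word then decomposes into components indexed by tuples $(\sigma_1,\dots,\sigma_s)$. Up to a global Frobenius twist, which preserves support and therefore block weight, each such component is a word of the relatively twisted tuple $(L_1^{\sigma_1\sigma_1^{-1}},\dots,L_s^{\sigma_s\sigma_1^{-1}})$.

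The main obstacle is that this splitting mixes coefficients within a grid point. Block weight counts nonzero grid points, so each component has block weight at most $|y|_{\rm blk}$. Product expansion of the twisted tuple, which is assumed with constant $\varepsilon$, gives a decomposition of each component. We sum these decompositions and project back to $\mathbb F_2$ using the splitting idempotents, which act within a grid point or a line packet.

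A naive sum incurs a factor equal to the number of tuples, roughly $s_K^{s-1}$. That factor depends on the extension degree, which the statement forbids. The first thing I would try is to avoid summing block weights over components. Instead, I would argue line by line: a line parallel to direction $i$ is nonzero in the combined decomposition only if it is nonzero in some component. I would then choose the decompositions \emph{coherently}, taking for every component the canonical minimizer produced by the expansion proof. With that choice the supports of the $y_i$ are controlled by the support of $y$ itself rather than by the number of components. Concretely, I would use the robustness formulation in which the decomposition is built from lines meeting the support of $y$, so that $|y_i|_i$ for the combined word is bounded by the union of the component bounds. The union is again controlled by $\eta^{-1}|y|_{\rm blk}/w$ with $\eta=\eta(s,\varepsilon)$.

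Uniformity in $E$ is automatic, because all maps act on $E$ as the identity and block weight ignores coefficient dimension. Binary subcodes in any direction are handled the same way. A subcode $L'_i\subset\operatorname{Res}(L_i)$ gives a word of the subcode sum that is also a word of the larger sum. We decompose it there and then project each $y_i$ onto the $L'_i$-direction. This projection is possible because the lines of $y_i$ must agree with $y$ modulo the other directions. I would handle it through the subtuple condition, by induction on $s\le4$: peel off one direction, apply the $(s-1)$-ary statement to the residual, and absorb the loss into $\eta$.

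For the flag consequence, $U_0\subset U_{\rm bit}\subset U_+$ means $U_{\rm bit}$ is a binary subcode of $\operatorname{Res}(U_+)$, so the subcode clause applies directly. Dually, $U_{\rm bit}^\perp$ is a subcode of $U_0^\perp$, which is included as an endpoint. Both flags inherit constants depending only on $s$ and $\varepsilon$, and these constants are independent of $s_K$ and of $w$.
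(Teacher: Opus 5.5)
\textbf{Gap in the step that removes the $s_K^{s-1}$ factor.} Your reduction to Frobenius-twisted components matches the paper. So do the subcode step (project along a complement, then eliminate the residual slice by slice with the $(s-1)$-ary expansion and external space $V_1\otimes E$) and the flag consequence. The gap is the step you yourself flag as the main obstacle.

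Product expansion is only an inequality hypothesis, so there is no ``canonical minimizer produced by the expansion proof'' to choose coherently. Separately chosen decompositions of different components need not share lines, so you have not justified that the union of their line sets is bounded by a single component's bound. Your concrete fallback, building every decomposition from lines that meet $\operatorname{supp} y$, gives the wrong count. Each grid point lies on exactly one line per direction, so this yields only $\sum_i|y_i|_i\le s|y|_{\rm blk}$, i.e.\ $w\sum_i|y_i|_i\le sw|y|_{\rm blk}$. That constant degrades with $w$, which the statement explicitly forbids.

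\textbf{The missing idea: one common support closure.} The splitting is an invertible change within each grid cell, so $\operatorname{supp} y$ is exactly the union of the component supports. Let $M$ be the $\varepsilon$-closure of $\operatorname{supp} y$: repeatedly add any line having at least $\varepsilon w$ points already in the set. Closure is monotone in its initial set, and an $\varepsilon$-closed set is also closed for every larger threshold. Hence this single $M$ is admissible for every twisted tuple at once.

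The support-closure results of Kalachev--Panteleev (their Lemmas 17--19) then give $|M|\le c_s|y|_{\rm blk}$ with $c_s=((2^s+1)/\varepsilon)^s$. They also show that every component can be generated by directional codewords on lines contained in $M$. Merging all component contributions line by line, each direction uses at most $|M|/w$ lines. Therefore $w\sum_i|y_i|_i\le s|M|$ and $\eta=(sc_s)^{-1}$, independent of $s_K$ and $w$. The essential change is to count lines \emph{contained in} a common set of size $O(|y|_{\rm blk})$, rather than lines \emph{meeting} the support.

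\textbf{A smaller point.} Extending scalars to $\overline{\mathbb F}_2$ requires product expansion of the twisted tuples over $\overline{\mathbb F}_2$. That is not the hypothesis, and proving it would need the same uniform argument. Using $K\otimes_{\mathbb F_2}K\cong\prod_\sigma K$ instead keeps every component over $K$, where the hypothesis applies directly.
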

\begin{proof}
Separability decomposes a tensor product of separately restricted
$K=\mathbb F_{2^{s_K}}$ coefficient packets into Frobenius-twisted $K$
components. The decomposition is an invertible change within each grid
cell. Further expanding $E$ in a basis, the block support is therefore
exactly the union of the supports of the resulting scalar components.

For a sum-code word $y$, take the $\varepsilon$-closure $M$ of this
union. Closure is monotone in its initial set: repeatedly add a line
when at least $\varepsilon w$ of its cells are present. Moreover,
an $\varepsilon$-closed set is $\rho'$-closed for every
$\rho'\ge\varepsilon$. Thus the same $M$ is closed for the expansion
constant of every scalar tuple, and each component supported in $M$
is generated by lines contained in $M$. The support-closure results
\cite[Lemmas 17--19]{kalachev2025extendable_doi} show that
\[
 |M|\le c_s|y|_{\rm blk},\qquad
 c_s=\bigl((2^s+1)/\varepsilon\bigr)^s,
\]
and every scalar component can be generated using directional codewords
on lines contained in this same $M$. Combine all component contributions
on each physical line. There are at most $|M|/w$ such lines in any one
direction. Hence $w\sum_i|y_i|_i\le s|M|$, independently of the number
of scalar components. This proves the first assertion with
$\eta_s=(s c_s)^{-1}$. It is essential here to use one common support
closure, rather than summing the weights of scalar components.

For completeness the subcode argument
\cite[Lemma 10]{kalachev2025extendable_doi} also works with these
block alphabets. It suffices to replace $L_1$ by $L'_1\subset L_1$.
Choose $L_1=L'_1\oplus A$ and apply the old decomposition to $y$.
Project its first directional summand line by line as $y_1=y'_1+y_A$.
Neither projection activates a new line, and
$|y_A|_{\rm blk}\le w|y_1|_1$. Since $y$ belongs to the new sum code,
the tensor-intersection identity gives
$y_A\in A\otimes(L_2\boxplus\cdots\boxplus L_s)\otimes E$.
On every first-coordinate slice, apply the uniform expansion of the
remaining tuple, with external coefficient space $V_1\otimes E$.
Adding those decompositions to the other directional summands eliminates
$y_A$ at a cost of at most $|y_A|_{\rm blk}/\eta_{\rm rest}$ in the
quantity $w\sum_i|y_i|_i$. Thus the new expansion constant is at least
\[
 \frac{\eta_s}{1+\eta_{\rm rest}^{-1}}.
\]
Repeat for at most $s$ directions, retaining the minimum constants over
all subtuples. One explicit uniform choice is obtained by setting
\[
 \xi_0=\min_{1\le j\le s}(j c_j)^{-1},\qquad
 \xi_{r+1}=\frac{\xi_r^2}{1+\xi_r}\quad(0\le r<s).
\]
At each replacement both the current tuple and its remaining subtuple
have constant at least $\xi_r$, so $\xi_s$ works after all replacements.
This proves uniformity over all external spaces too.
Finally $U_{\rm bit}\subset U_+$ and
$U_{\rm bit}^{\perp}\subset U_0^{\perp}$, so the two endpoint choices
give the primal and dual claims. Binary duals are taken in trace-dual
packet bases; changing such bases preserves block support.
\end{proof}

The half-edge scalings used in
Lemma~\ref{lem:joint-pk-square-unit} are compatible with this selection.
A coordinate permutation permutes the binary edge packets, and a
nonzero field scaling acts invertibly within each packet. Tensoring these maps gives a
support-preserving isomorphism of each local complex; the inverse
transpose gives the corresponding isomorphism for its dual. The
endpoint inclusions defining a flag are preserved as well. These
are linkwise statements: scalings at opposite ends of an edge may
differ, so no global isomorphism to the unscaled complex is asserted.

\begin{lemma}[Local fillings in every degree]
\label{lem:joint-pk-local-degrees}
Let $j_i:W_i\hookrightarrow V_i^{[w]}$ be injective binary maps whose
image tuples and subtuples satisfy Lemma~\ref{lem:joint-pk-binary-flags}.
In the tensor product of the two-term cochain complexes $j_i$, every
local coboundary $y$ of positive degree has a primitive of block weight
at most $|y|_{\rm blk}/(\kappa w)$, where $\kappa>0$ depends only on
the arity and the block expansion constants. This remains true after
tensoring with every external coefficient space $E$.
\end{lemma}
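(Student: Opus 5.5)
The plan is to induct on the arity $s$, peeling off one direction at a time with a Koszul/cone argument. Uniform block expansion from Lemma~\ref{lem:joint-pk-binary-flags} supplies the base case and the splitting needed at each step. Write $K_i$ for the two-term cochain complex $W_i\xrightarrow{j_i}V_i^{[w]}$. The local complex is then $K_1\otimes\cdots\otimes K_s\otimes E$. In degree $p$ its cochains are sums over $p$-subsets $S\subseteq\{1,\dots,s\}$ of grid words with $V_i^{[w]}$ in the directions of $S$ and $W_i$ in the others. I will use the block weight over the active directions, counting a $p$-cell by its nonzero grid points.

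For $s=1$ the claim is immediate. A degree-one coboundary $y=j_1(u)$ is a single line word in $L_1=\im j_1$. Taking the unique primitive, injectivity gives $|u|_{\rm blk}\le 1$ line whenever $y\ne0$. Since $y\in L_1$ is nonzero and $L_1$ has block distance $\ge\delta w$, we also have $|y|_{\rm blk}\ge\delta w$. Hence $|u|\le|y|_{\rm blk}/(\delta w)$, so $\kappa=\delta$ works. The distance bound should follow from the one-direction subtuple condition, as used in the spatial-factor argument above.

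For the inductive step, split the complex as the cone of $j_s$ tensored with $K'=K_1\otimes\cdots\otimes K_{s-1}$. A degree-$p$ coboundary $y$ decomposes as $(y',y'')$, where $y''$ is the part active in direction $s$. I first treat the top-degree component. When $p=s$, $y$ lies in the sum code $L_1\boxplus\cdots\boxplus L_s$ written on the full grid. Block expansion then gives $y=\sum_iy_i$ with $w\sum_i|y_i|_i\le\eta^{-1}|y|_{\rm blk}$. Each $y_i$ is a union of lines in $L_i$. Applying $j_i^{-1}$ linewise yields a primitive whose block weight is bounded by the number of active lines, which is at most $|y|_{\rm blk}/(\eta w)$.

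For intermediate degrees $0<p<s$ I intend the standard descent. First use the expansion of the subtuple $\{L_i:i\in S\}$ on each $S$-component, with external space $\bigotimes_{i\notin S}W_i\otimes E$. This is allowed because Lemma~\ref{lem:joint-pk-binary-flags} holds uniformly over external coefficients. The result is a partial primitive that kills the component on the maximal active sets. The residual is again a coboundary but is supported on fewer active directions. After at most $s$ iterations, with the arity fixed, the constants multiply to some $\kappa(s,\eta)$.

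I expect the main obstacle to be the cocycle-to-coboundary compatibility inside this descent. After subtracting a partial primitive, the residual must still be a coboundary in the complex with direction $s$ removed, and its block weight must be controlled by $|y|_{\rm blk}$ rather than by $w$ times it. Here I would use injectivity of $j_s$ and exactness of the cone sequence $0\to K'\otimes V_s^{[w]}\to\mathrm{Cone}\to K'\otimes W_s[1]\to0$. To keep the bookkeeping in block weight, I would use a common $\varepsilon$-closure of the support across components, as in the proof of Lemma~\ref{lem:joint-pk-binary-flags}, rather than adding weights component by component. External coefficients $E$ simply pass through every step, since all estimates are uniform in $E$.
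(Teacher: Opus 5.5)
The top-degree step is correct: a top-degree coboundary lies in the sum code, block product expansion splits it into directional pieces, and applying $j_i^{-1}$ line by line gives one primitive face per active line. The base case is also fine.

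\textbf{The gap is in your descent for the intermediate degrees $0<p<s$.}

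\emph{Why the per-component filling fails.} Filling each component $y_S$ separately with the expansion of $\{L_i\}_{i\in S}$ does not yield a primitive. Assemble the pieces $j_i^{-1}y_{S,i}$ into a $(p-1)$-cochain $u'$ on the faces $T=S\setminus\{i\}$. Then $\delta u'$ sends each face on $T$ to \emph{every} $T\cup\{k\}$ with $k\notin T$, not just to $S$. So independently chosen decompositions of the various $y_S$ create cross terms on other $p$-sets, i.e.\ in the same degree. Your claim that the residual is ``supported on fewer active directions'' cannot hold, since every face of a degree-$p$ cochain has exactly $p$ active directions. Killing one component re-pollutes the others, and nothing decreases, so the iteration need not terminate. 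A common $\varepsilon$-closure of supports controls weights but does not address this incoherence. This is the obstacle you flag at the end, and the proposal leaves it unresolved.

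\emph{What is missing.} You need to fill the two halves of the cone with the induction hypothesis, using exactness rather than further expansion.
\begin{enumerate}
\item Split $y=(y_0,y_1)$ according to whether a face omits or contains direction $s$. The differential is then $(a,b)\mapsto(\delta'a,\ \delta'b+j_sa)$.
\item Since $y=\delta u$, $y_0$ is a coboundary of $K'\otimes W_s\otimes E$. The induction hypothesis, applied with external space $W_s\otimes E$, gives $u_0$ with $|u_0|_{\rm blk}\le|y_0|_{\rm blk}/(\kappa_0 w)$.
\item Subtracting $\delta(u_0,0)$ leaves $(0,b)$ with $b=y_1+j_su_0$. The $1/w$ in the primitive bound is exactly what answers your worry about a factor of $w$: $|b|_{\rm blk}\le|y_1|_{\rm blk}+w|u_0|_{\rm blk}\le|y_1|_{\rm blk}+|y_0|_{\rm blk}/\kappa_0$.
\item Since $(0,b)$ is a cocycle, $\delta'b=0$. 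So $b$ consists of $w$ slices, each a degree-$(p-1)$ cocycle of $K'$ with external space $V_s\otimes E$.
\item Because every $j_i$ is injective, each $K_i$ has cohomology only in degree one. By K\"unneth, $K'$ is therefore exact below its top degree $s-1$. As $p-1<s-1$, the slices are coboundaries, or zero when $p=1$.
\item A second application of the induction hypothesis fills them by $v$ with $|v|_{\rm blk}\le|b|_{\rm blk}/(\kappa_0 w)$. Then $(u_0,v)$ is a primitive of weight at most $(1+\kappa_0^{-1})|y|_{\rm blk}/(\kappa_0 w)$.
\end{enumerate}
The cone sequence you cite is always exact. The input that makes the remainder fillable is the acyclicity of $K'$ below its top degree, which your proposal never states or uses.
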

\begin{proof}
At the top degree this is precisely block product expansion: each
nonzero primitive face is one active directional line. We induct on the
number of directions to treat lower degrees, retaining arbitrary $E$.
Split a coboundary $y$ into faces omitting the last direction ($y_0$)
and faces containing it ($y_1$). The first is a coboundary in the smaller
complex with external coefficient space $W_{\rm last}\otimes E$.
Fill it by $u_0$ with
$|u_0|\le |y_0|/(\kappa_0 w)$ and subtract $\delta u_0$.
The remainder consists of $w$ closed slices of the smaller complex,
with external space $V_{\rm last}\otimes E$.

Each injection $j_i$ has cohomology only in degree one, so their tensor
product is exact below its top degree. Therefore these slices are
coboundaries in every lower-degree case under consideration; in degree
zero they vanish. Fill the slices using the induction hypothesis.
The last-direction part of $\delta u_0$ has weight at most $w|u_0|$.
The total primitive consequently has weight at most
\[
 \frac{|y_0|}{\kappa_0w}
 +\frac{|y_1|+w|u_0|}{\kappa_0w}
 \le\frac{1+\kappa_0^{-1}}{\kappa_0w}
       (|y_0|+|y_1|).
\]
Taking the minimum of $\kappa_0^2/(1+\kappa_0)$ and the top-degree
constant closes the induction. This argument is adapted from
\cite[Lemma~5.13]{dinur2024cubical_doi}. Uniformity over external
coefficient spaces is supplied here directly by
Lemma~\ref{lem:joint-pk-binary-flags}, so the separate collective-robustness
step of Proposition~5.14 in that reference is not needed.
No coefficient dimension occurs in the bound.
\end{proof}

\begin{corollary}[Uniform local robustness]
\label{cor:joint-pk-local-robustness}
Under Lemma~\ref{lem:joint-pk-local-degrees}, a cochain $z$ that is
minimal modulo local coboundaries, below the top local degree, obeys
$|\delta z|_{\rm blk}\ge\kappa w|z|_{\rm blk}$, uniformly over
external binary coefficient spaces.
\end{corollary}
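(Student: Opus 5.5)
The plan is to combine the filling bound of Lemma~\ref{lem:joint-pk-local-degrees} with exactness of the local complex below its top degree, and then use minimality of $z$. Fix an external binary coefficient space $E$. Let $z$ be a local cochain of degree $p<s$, where $s$ is the number of directions. Assume $z$ is minimal modulo local coboundaries, meaning $|z|_{\rm blk}\le|z+\delta v|_{\rm blk}$ for every local cochain $v$ of degree $p-1$ (with $v=0$ when $p=0$).

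First, $y:=\delta z$ is a local coboundary of degree $p+1\ge1$, so it has positive degree. Lemma~\ref{lem:joint-pk-local-degrees}, applied with the same external space $E$, gives a primitive $u$ of degree $p$ with
\[
 \delta u=\delta z,\qquad |u|_{\rm blk}\le\frac{|\delta z|_{\rm blk}}{\kappa w}.
\]
If $\delta z=0$, take $u=0$.

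Second, $z+u$ is a local cocycle of degree $p<s$. As noted in the proof of Lemma~\ref{lem:joint-pk-local-degrees}, each injection $j_i$ has cohomology only in degree one. Hence the tensor product complex, also after tensoring with $E$, has cohomology only in the top degree $s$. Therefore $z+u=\delta v$ for some $v$ when $p\ge1$, and $z+u=0$ when $p=0$, because the degree-zero map is injective.

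Third, write $z=u+\delta v$. Minimality then gives
\[
 |z|_{\rm blk}\le|z+\delta v|_{\rm blk}=|u|_{\rm blk}\le\frac{|\delta z|_{\rm blk}}{\kappa w},
\]
which is the claimed inequality $|\delta z|_{\rm blk}\ge\kappa w|z|_{\rm blk}$. In the degenerate case $\delta z=0$ we get $u=0$, so $z$ is itself a coboundary and minimality forces $z=0$, consistent with the bound. The constant $\kappa$ depends only on the arity and the block expansion constants, and the lemma holds for every $E$. The estimate is therefore uniform over external coefficient spaces.

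The only step that needs care is the exactness claim in the second step. It requires the tensor product of two-term injective complexes to be acyclic below the top degree for every coefficient space $E$. This follows from the Künneth formula over $\mathbb F_2$, since each factor is a complex of vector spaces with cohomology concentrated in degree one.
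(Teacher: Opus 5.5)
Your proposal is correct and follows the paper's proof. You take the short primitive $u$ of $\delta z$ from Lemma~\ref{lem:joint-pk-local-degrees}, use exactness below the top degree to write $z+u$ as a coboundary (zero in degree zero), and conclude $|z|_{\rm blk}\le|u|_{\rm blk}$ by minimality; your K\"unneth justification only spells out the exactness that the paper derives from the injectivity of the $j_i$.
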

\begin{proof}
Choose a primitive $u$ of $\delta z$ with
$|u|_{\rm blk}\le |\delta z|_{\rm blk}/(\kappa w)$.
Local exactness makes $z-u$ a coboundary (zero in degree zero).
Minimality gives $|z|_{\rm blk}\le |u|_{\rm blk}$.
\end{proof}

\paragraph*{Order of parameters and simultaneous local selection.}
Fix the maximum arity four and rate intervals compactly contained in
$(0,1)$. The random-code theorem
\cite[Theorem~2 and Corollary~3]{kalachev2025extendable_doi}
supplies an endpoint product-expansion constant depending only on
fixed upper bounds for the rates and dual rates, and on arity.
The preceding lemmas and corollary give a local robustness constant independent
of $w,s_K$. First choose a sufficiently large fixed even $w$ for the
spectral requirement below. Then choose a sufficiently large fixed $s_K$.
Draw a largest flag space uniformly, then each smaller space uniformly
inside its predecessor. Every endpoint is uniform among subspaces of
its dimension, by invariance under invertible changes of coordinates.
Choose the direction families independently. Each required tuple uses
at most one endpoint from each family; relative Frobenius twists preserve
these marginals and independence. Dependence between different tuples
is harmless for the union bound.
For arity $s\le4$, Theorem~2 bounds failure by
$w^s2^{w^s-s_K+1}$. There are at most $s_K^{s-1}$ relative twists
per endpoint tuple. Thus the joint failure probability is at most
$C(w)s_K^3 2^{-s_K}$, where $C(w)$ can be of order
$w^4 2^{w^4}$ times the fixed number of endpoint choices.
Taking $s_K>w^4+O(\log w+\log s_K)$ suffices for this part of the
selection. The resulting fixed alphabet and binary degree constants
can be extremely large; this is an existence proof, not a practical
parameter estimate. Include all $r_0$-column nonsingular-minor conditions
for each square local family in Lemma~\ref{lem:joint-pk-square-unit}.
At fixed $w$ these are finitely many additional events, with total
failure probability $O_w(2^{-s_K})$. The union bound selects the flag
and minor properties together, without conditioning the flag distribution
on a minor event. Half-edge scalings from
Lemma~\ref{lem:joint-pk-square-unit} preserve every selected property.
Only after these constants have been fixed do $g$ and $\ell$ grow.

\paragraph*{The lift geometry and the cutoff.}
Use the abelian-lift geometry of
\cite[Section 3.5.2]{dinur2024cubical_doi}, with cyclic lift group $G$
and $\ell=\Theta(\log g)$ vertices in an even-degree nonbipartite
loopless base graph. Use the same random cyclic-lift existence theorem as in
Proposition~\ref{si-pk:prop:pk-input}
\cite[arXiv Theorem~2]{agarwal2019grouplifts}. Its size and spectral
hypotheses were checked there. This gives $\ell=\Theta(\log g)$
and normalized spectral parameter $\lambda=O(w^{-1/2})$; no explicit
polynomial-time lift construction is required here. For the adjacency
matrix of each underlying lift $\widehat\Gamma$, this means
\[
 \left\|w^{-1}A_{\widehat\Gamma}|_{\mathbf1^\perp}\right\|_2
 \le\lambda<1.
\]
Thus the lift is connected and nonbipartite, and the bound controls
both signs of its nonleading eigenvalues.
Decompose the base edges into oriented $2$-factors, giving permutation
ports and their inverses; the voltages lift these ports to permutations.
Ports retain their labels even if two permutations coincide, so
incidence counts are unchanged.
The physical Tanner graph is the bipartite double cover, with its
endpoint bit retained among the coordinates. The spectral parameter
below belongs to the underlying nonbipartite lift, not to the absolute
spectrum of its double cover, which has the eigenvalue $-1$.
The double cover has $2\ell$ vertex orbits and $w\ell$ edge orbits;
$w/2$ local rows at each vertex therefore still give a square $N_0$.
Add the extra local row at one chosen endpoint vertex orbit and retain
one binary output bit, as in Lemma~\ref{lem:joint-pk-square-unit}.
Half-edge scalings change only the local coefficient maps, leaving
these permutation ports unchanged.

For $t\le4$ directions, write
$\mathcal S=G\times S_1\times\cdots\times S_t$, where $S_i$ is the
factor's base-vertex set and $|S_i|=\ell$. This is the coordinate set
called $G$ in Ref.~\cite{dinur2024cubical_doi}; it excludes the cubical
endpoint bits. A move in direction $i$ changes only $S_i$ and multiplies
the shared group coordinate by the corresponding voltage. Moves from
different directions commute because our lift group $G$ is abelian.
The direction-$i$ graph on $\mathcal S$ consists of connected copies
of $\widehat\Gamma_i$, each with $g\ell$ vertices. Define
\[
 r_{\rm comp}=\min_{i,\,\mathcal K}
       \frac{|\mathcal K|}{|\mathcal S|}
       =\ell^{1-t},
 \qquad
 |X(k)|=\binom tk2^{t-k}w^k|\mathcal S|,
\]
where $\mathcal K$ ranges over directional connected components on
$\mathcal S$. This is exactly the normalization of $r$ in
Ref.~\cite[Definition~6.4]{dinur2024cubical_doi}. The $2^t$ endpoint
choices appear in $|X(0)|$, not in the denominator defining
$r_{\rm comp}$. In particular, $r_{\rm comp}|X(k)|=\Theta(g\ell)$
for fixed $w,t$.

\paragraph*{Global distance and fillings from binary links.}
We state the extension of the cubical estimates separately from the
local-code selection.

\begin{lemma}[Cubical estimates with binary coefficient packets]
\label{lem:joint-pk-global-fillings}
Let $X$ be a $t$-direction cubical complex on the lift geometry above,
with $2\le t\le4$. In each direction $i$, fix an edge packet $V_i$.
At a factor vertex $u$, let the local cochain map be an injection
$j_{i,u}:W_{i,u}\hookrightarrow V_i^{[w]}$; it depends only on that
factor vertex, including its endpoint bit. Assume that every link
tuple and its dual have the uniform local robustness supplied by
Corollary~\ref{cor:joint-pk-local-robustness}, with constant at least
$0<\kappa\le1$. Suppose each direction graph is a disjoint union of
components of size at least $r_{\mathrm{comp}}|\mathcal S|$ on the
coordinate set $\mathcal S$ defined above,
with normalized nonleading eigenvalues bounded in magnitude by
$\lambda$. Require $\lambda C_1<1$, in every degree used below, where for $k<t$ we use
\[
 C_1=t^2 2^{t^2+3t}\kappa^{-(k+1)},\qquad C_2=2^{2t}w^tC_1.
\]
These are the uniform-robustness constants of
Ref.~\cite[Proposition~7.1, arXiv v3]{dinur2024cubical_doi}.
Let $\widetilde X$ replace each local image by its binary orthogonal
complement in the same edge packets. In block weight:
\begin{enumerate}
\item For $0\le k<t$, both $X$ and $\widetilde X$ satisfy
\[
 d_{\rm coloc}(k)\ge
 \frac{1-\lambda C_1}{C_2}\,r_{\mathrm{comp}}|X(k)|.
\]
Here $C_1$ may depend on $k,t,\kappa$, but not on packet dimensions.
\item For $1\le k<t$, every cocycle $x$ with
$|x|<d_{\rm coloc}(k)$ has a primitive $y$ with
$\delta y=x$ and $|y|\le L_t|x|$, where $L_t=(2w)^t$.
\item For $1\le k<t$, every cycle $x$ with
\[
 |x|<(2wt)^{-t}\widetilde d_{\rm coloc}(t-k)
\]
has a filling $z$ with $\partial z=x$ and
\[
 |z|\le C_{\rm fill}(t,w)|x|,\qquad
 C_{\rm fill}(t,w)=2L_t(t2^{2t}w^{t+1})^{t-1}.
\]
For $k=t$, every nonzero cycle has weight at least
$(2wt)^{-t}\widetilde d_{\rm coloc}(0)$.
\end{enumerate}
Here $d_{\rm coloc}(k)$ is the minimum block weight of a nonzero
cocycle that cannot be reduced by a coboundary from any one vertex's
upward neighborhood. In degree zero there are no incoming
coboundaries, so every cochain is locally co-minimal and
$d_{\rm coloc}(0)$ is the minimum weight of a nonzero zero-cocycle.
The tilde denotes this distance for $\widetilde X$.
\end{lemma}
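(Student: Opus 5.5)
The plan is to reduce all three assertions to the cubical-complex estimates of Dinur--Lin--Vidick, Ref.~\cite[Propositions~6.x and 7.1]{dinur2024cubical_doi}. Their setting uses a common field alphabet, so the work is to check that their proofs go through with binary coefficient packets and block weight. The inputs their argument needs are uniform local robustness of every link, spectral expansion of each directional graph, and the component-size normalization $r_{\rm comp}$. These are supplied, respectively, by Corollary~\ref{cor:joint-pk-local-robustness} (uniform over external coefficient spaces), by the hypothesis $\lambda C_1<1$, and by the lift geometry set up above. Because the robustness constant $\kappa$ does not depend on packet dimensions, all constants $C_1,C_2,L_t,C_{\rm fill}$ will depend only on $t,w,\kappa$.

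\textbf{Step 1 (co-local distance).} Take a nonzero locally co-minimal $k$-cocycle $x$ of $X$ and follow the DLV small-set argument. Split its support into "heavy" vertices, where the link restriction of $x$ is large, and "light" vertices. At a light vertex, local robustness forces the local coboundary to be large, and this contradicts $\delta x=0$ unless the support is concentrated. Expander mixing on the directional components then shows that a concentrated set of size below $\frac{1-\lambda C_1}{C_2}r_{\rm comp}|X(k)|$ cannot close up. The binary-packet adaptation counts nonzero grid cells instead of field symbols, and uses robustness with external space $E$ equal to the other directions' packets. For $\widetilde X$, the same argument applies with the dual link tuples, which are covered because Lemma~\ref{lem:joint-pk-binary-flags} includes the endpoint duals.

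\textbf{Step 2 (cochain fillings).} For a cocycle $x$ below $d_{\rm coloc}(k)$, I will run the greedy local reduction. Repeatedly subtract a local coboundary $\delta u_v$ from one vertex's upward neighborhood whenever this lowers the weight. Each such move has block weight at most $(2w)^t$, the size of an upward star, and weight strictly decreases, so the number of moves is at most $|x|$. The process stops at a locally co-minimal cocycle of weight below $d_{\rm coloc}(k)$, which must therefore be zero. Summing the moves gives the primitive $y$ with $|y|\le L_t|x|$.

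\textbf{Step 3 (chain fillings via duality).} This is the main obstacle. DLV obtain chain-side estimates from cochain estimates on the dual complex through a homological-to-cohomological transfer with a polynomial loss. I will follow their route. A small cycle $x$ of $X$ in degree $k$ corresponds, under the binary trace pairing in each packet, to a small cocycle of $\widetilde X$ in degree $t-k$; the orthogonal-complement construction is exactly what makes this pairing a chain-level duality. The filling is then built direction by direction by peeling off one direction at a time, and each peel costs a factor $t2^{2t}w^{t+1}$ in weight and $2wt$ in cutoff. This yields the thresholds $(2wt)^{-t}\widetilde d_{\rm coloc}(t-k)$ and the cost $C_{\rm fill}$. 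The delicate point is that the peeling needs exactness of the local complexes below top degree with arbitrary external coefficients, and that the trace-dual basis change preserves block support; both have already been established above.

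The case $k=t$ is the degenerate end of the same peeling, where no filling exists and the argument instead bounds nonzero cycles below by $\widetilde d_{\rm coloc}(0)$.
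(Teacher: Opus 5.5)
Your Step~2 matches the paper's argument, and Step~1 correctly defers to DLV with robustness uniform over external packets. One correction there: the mechanism you would actually be adapting is not a heavy/light split. It is the inequality $\kappa w|x|_{X_{\ge v}(k)}\le|x|_{\mathrm{Nb}_v(k)}+|x|_{\mathrm{Op}_v(k)}$, imposed at every face $v$ of dimension $j\le k$. It comes from link minimality, robustness and $\delta x=0$, by counting nonzero packets. One then iterates it down to $j=0$ and averages over the walks $W^{(k,j)}$ using expander mixing.

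The genuine gap is in Step~3. For $k<t$ there is no chain-level duality that turns a $k$-cycle of $X$ into a $(t-k)$-cocycle of $\widetilde X$. Per direction, the chain side of $X$ is the surjection $j_{i,u}^{\mathsf T}:V_i^{[w]}\to W_{i,u}^*$. The cochain side of $\widetilde X$ is instead the injection $(\operatorname{im}j_{i,u})^\perp\hookrightarrow V_i^{[w]}$. These are the two ends of the exact sequence $0\to(\operatorname{im}j_{i,u})^\perp\to V_i^{[w]}\to W_{i,u}^*\to0$, not isomorphic complexes. Already for $t=2$, $k=1$, a $1$-chain of $X$ on an edge of direction $i$ carries the quotient factor $W_{i'}^*$ in the other direction $i'$. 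A $1$-cochain of $\widetilde X$ carries the subspace $(\operatorname{im}j_{i'})^\perp$ there instead. The two sides agree only on the top local kernel $\bigotimes_i\ker j_i^{\mathsf T}=\bigotimes_i(\operatorname{im}j_i)^\perp$. That is why your identification is right exactly for $k=t$. Had it held for $k<t$, item~3 would follow essentially directly from item~2, with cost $L_t$, cutoff $\widetilde d_{\rm coloc}(t-k)$, and no polynomial loss.

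What is missing is the local-view construction of DLV's Proposition~8.1, which the paper adapts to packets. The forward pass runs as follows:
\begin{itemize}
\item Start from vertex views $x^{(0)}$ with $|x^{(0)}|\le2^t|x|$.
\item Fill the views locally, using exactness below top degree of the tensored surjections $j^{\mathsf T}$ (with arbitrary external packets).
\item Compare neighbouring local primitives with $\Delta$ to get the next obstruction, and repeat. Each round grows the weight by at most $a=wt$.
\end{itemize}
Only the obstruction $x^{(t-k)}$ obtained after $t-k$ rounds lies in the top local kernel, so only it decodes to $\widetilde x\in C^{t-k}(\widetilde X)$. Showing $\widetilde\delta\widetilde x=0$ requires decoding and re-encoding to preserve block support and to commute with the comparison maps. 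This uses that $\pi_i$ is constant along the other directions.

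For the backward pass, fill $\widetilde x$ by item~2 in $\widetilde X$, in place of DLV's global soundness inverse. Re-encode the result to $u^{(s)}$ with $\Delta u^{(s)}=x^{(s+1)}$ and $\partial_Lu^{(s)}=0$. Then run the reconstruction backward, using at each level a comparison primitive of weight at most $b|\widehat z^{(j+1)}|$ with $b=2^{2t}w^t$, and finally stitch the consistent views.

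The cutoff $(2wt)^{-t}$ comes from $|\widetilde x|\le a^{t-k}2^t|x|$ on the forward pass, and $C_{\rm fill}=2L_t(ab)^{t-1}$ comes from the backward pass. Your per-peel factors $t2^{2t}w^{t+1}$ and $2wt$ are read off the statement rather than produced by any step you describe.
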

\begin{proof}
\emph{Local coefficients and co-minimal distance.}
For a face $f$ missing direction $i$, write $\pi_i(f)$ for its
direction-$i$ factor vertex. Along any other direction this vertex
is unchanged. Consequently, the two paths around a square use the
same local maps on different tensor factors; they commute and cancel
in characteristic two. An upward link at $f$ is exactly the tensor
complex of $(j_{i,\pi_i(f)})_{i\notin\operatorname{type}(f)}$, with
the active edge packets as an external coefficient space. This is
the packet version of Ref.~\cite[Lemma 5.2]{dinur2024cubical_doi}.
The local exactness statements of its Lemmas 5.3--5.4 follow by
tensoring the surjections $j_{i,u}^{\mathsf T}$ over $\mathbb F_2$;
their top kernels are the tensor products of the local dual images.

We detail the extension of its Claims 7.4--7.6. Let $x$ be a nonzero
locally co-minimal $k$-cocycle and $\mathcal A$ its active faces.
For $v\in X(j)$, $j\le k$, let $\mathrm{Nb}_v(k)$ and
$\mathrm{Op}_v(k)$ be the $k$-faces sharing a $(k+1)$-coface with
$v$ that, respectively, meet but do not contain $v$, or are disjoint
from $v$ (Definition~6.7 of that reference).

\emph{(a) Link minimality (Claim 7.4).}
A primitive supported above $v$ changes only faces above $v$.
Extending it by zero therefore shows that $x|_{X_{\ge v}(k)}$ is
minimal in its link. For $j=k$, link degree zero has no incoming
coboundaries.

\emph{(b) Packet robustness and cancellation (Claim 7.5).}
Corollary~\ref{cor:joint-pk-local-robustness} and $\delta x=0$ give
\[
 \kappa w|x|_{X_{\ge v}(k)}
 \le |\delta_{\rm link}(x|_{X_{\ge v}(k)})|
 \le |x|_{\mathrm{Nb}_v(k)}+|x|_{\mathrm{Op}_v(k)}.
\]
The second inequality counts nonzero packets: every nonzero output
needs an outside contribution, and each outside face contributes to
at most one such output. Varying packet dimensions does not change
this incidence bound.

\emph{(c) Descent through faces (Claim 7.6).}
Lemma~6.9's covering multiplicities obey $a_{k,j}\le2^t$.
Iterating (b) until $j=0$, where $\mathrm{Nb}_v(k)$ is empty, gives
\[
 |x(f)|\le\sum_{j=0}^k\sum_{v\preceq f,\,\dim v=j}
 \frac{2^{t(k-j)}}{(\kappa w)^{k-j+1}}
 |x|_{\mathrm{Op}_v(k)}.
\]

\emph{(d) Geometric averaging.}
Let $W^{(k,j)}$ choose a $j$-subface, take one step in a missing
direction, and choose an incident $k$-face (Definition~6.6).
Claim~6.11 converts (c), summed over $f\in\mathcal A$, into
\[
 |\mathcal A|\le\sum_{j=0}^k b_j
 \langle1_{\mathcal A},W^{(k,j)}1_{\mathcal A}\rangle,\qquad
 b_j=\frac{\binom kj\binom{t-j}{k-j}(t-j)2^{(t+1)(k-j)}}
 {\kappa^{k-j+1}}.
\]
Lemma~6.12 bounds each inner product by
\[
 \lambda|\mathcal A|+
 \binom tj2^{t-j-1}w^j\frac{|\mathcal A|^2}
 {r_{\rm comp}|X(k)|}.
\]
Only the directional walks enter this bound, so the double-cover
geometry and its endpoint bits are already included. Since
$\sum_jb_j\le C_1$ and
$\sum_jb_j\binom tj2^{t-j-1}w^j\le C_2$, we obtain
\[
 (1-\lambda C_1)|\mathcal A|
 \le C_2\frac{|\mathcal A|^2}{r_{\rm comp}|X(k)|}.
\]
This proves item 1. Applying the same argument to the dual link
injections proves it for $\widetilde X$. All coefficient-dependent
steps were confined to (a)--(b); the bounded adjoint conversions are
applied separately after this lemma.

\emph{Short cochain fillings.}
A nonzero cocycle below $d_{\rm coloc}(k)$ is not locally co-minimal.
Thus adding a coboundary supported in the upward neighborhood of
one vertex strictly reduces its integer block weight. At most
$|x|$ such steps reduce it to zero. Each primitive occupies at most
\[
 \max_v|X_{\ge v}(k-1)|
 =\binom{t}{k-1}w^{k-1}\le L_t
\]
faces, so the sum of the primitives proves item 2. This also bounds
cohomological distance. The argument uses local minimality rather
than the global soundness constant.

\emph{Short chain fillings.}
Use the local-view construction of
Ref.~\cite[Proposition 8.1]{dinur2024cubical_doi}.
Its auxiliary differential $\Delta$ compares views on incident
faces, whereas $\partial_L$ applies the local chain boundary.
The exactness and commutation identities in Lemmas 8.4--8.5 remain
valid with packets: for a fixed upper face the inactive coefficient
spaces agree, and the comparison maps are identities on those
spaces. Local fillings use only the exactness just established.
Put $a=wt$, $b=2^{2t}w^t$, and $s=t-k-1$.
Starting from a cycle $x$, the construction gives local views and
local primitives satisfying
\[
 |x^{(0)}|\le2^t|x|,\qquad
 |z^{(j)}|\le a^j|x^{(0)}|,\qquad
 |x^{(j+1)}|\le a^{j+1}|x^{(0)}|.
\]
If it does not terminate earlier, its last obstruction is
$x^{(t-k)}$. Decode its local tensor-kernel coefficients to obtain
a cochain $\widetilde x\in C^{t-k}(\widetilde X)$. Decoding and
re-encoding are inverse isomorphisms on each nonzero coefficient
block and preserve block support. They commute with comparisons
because $\pi_i$ is constant across every other direction. The
calculation of Claim 8.8 therefore gives
\[
 \widetilde\delta\widetilde x=0,\qquad
 |\widetilde x|=|x^{(t-k)}|
 \le a^{t-k}2^t|x|<\widetilde d_{\rm coloc}(t-k).
\]
Apply item 2 in the dual complex to obtain
$\widetilde\delta\widetilde u=\widetilde x$ with
$|\widetilde u|\le L_t|\widetilde x|$. Re-encoding gives $u^{(s)}$
with $\Delta u^{(s)}=x^{(s+1)}$ and $\partial_Lu^{(s)}=0$.
Thus $\widehat z^{(s)}=z^{(s)}+u^{(s)}$ has zero final obstruction
and still has local boundary $x^{(s)}$.

The reconstruction in Claim 8.7 gives
\[
 \begin{aligned}
 |\widehat z^{(s)}|&\le(a^s+L_ta^{s+1})|x^{(0)}|,\\
 |\widehat z^{(j)}|&\le a^j|x^{(0)}|+b|\widehat z^{(j+1)}|.
 \end{aligned}
\]
The second line applies for $0\le j<s$: geometric exactness supplies
a comparison primitive of weight at most $b|\widehat z^{(j+1)}|$,
whose local boundary corrects $z^{(j)}$. Stitching the consistent
vertex views costs at most $2^tw^t$. Hence
\[
 \begin{aligned}
 |z|&\le b\left[\sum_{j=0}^s(ab)^j+L_ta(ab)^s\right]|x|\\
 &\le 2L_t(ab)^{t-1}|x|.
 \end{aligned}
\]
Here $s\le t-2$, $ab\ge2$, and $L_ta\ge2$. An earlier termination
omits the last correction and satisfies the same bound. This proves
item 3 without the inverse global soundness factor in Claim 8.8.
If $k=t$, decoding the initial views instead gives a dual degree-zero
cocycle of weight at most $2^t|x|$. Below the stated cutoff it must
vanish, since degree zero has no incoming coboundaries. Injectivity
of re-encoding then forces $x=0$, proving the top-degree assertion.
\end{proof}

For the selected local families, $C_1$ depends only on the fixed
rates and arity. Choose $w$ so that $\lambda C_1<1/2$ in every
required degree, then choose $s_K$ as above. The lemma gives cutoffs
$\Omega(r_{\mathrm{comp}}|X(k)|)=\Omega(g\ell)$ in both orientations.
If $D_{\rm pkt}$ bounds the binary dimension of a face coefficient
space, then
\[
 |x|_{\rm blk}\le |x|_{\rm bit}
       \le D_{\rm pkt}|x|_{\rm blk}.
\]
Thus the block cutoffs are also valid binary cutoffs, and the filling
costs grow by at most $D_{\rm pkt}$. This is a constant once $w,s_K,t$
are fixed. Lemma~\ref{lem:pk-bounded-equivalence} then transfers the
chain and cochain estimates through the local adjoint conversions;
the binary summand maps in the proof of
Proposition~\ref{prop:joint-pk-input} transfer them to the actual base.

\subsubsection{Passing the filling bounds through the starter}
\begin{lemma}[The starter costs only a constant]
\label{lem:joint-pk-starter}
Let $U$ be supplied by Proposition~\ref{prop:joint-pk-input} and
$d_*=g\ell$. For any bounded-degree two-term starter
$P=(P_1\xrightarrow{D}P_0)$ over based free $R$-modules, put
$Q=(P\otimes_R U)[1]$, with $(X[1])_j=X_{j-1}$.
Every short $Q_1$ cycle has a constant-cost $Q_2$ filling, and every
short degree-two $Q$ cocycle is a coboundary. Their cutoffs are
$\Omega(d_*)$ and the constants depend on $P$ only through its degree
bound.
\end{lemma}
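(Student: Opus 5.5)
The plan is to reduce both estimates to the filling properties of $U$ in Proposition~\ref{prop:joint-pk-input}(3), applied separately to each free orbit coordinate of $P$. The starter's differential $D$ is then used as a bounded correction. Write $Q=(P\otimes_R U)[1]$ as the mapping cone of $D\otimes 1:P_1\otimes U\to P_0\otimes U$. A chain in degree $j$ of $Q$ is a pair $(x_1,x_0)$ with $x_1\in P_1\otimes U_{j-2}$ and $x_0\in P_0\otimes U_{j-1}$, and the differential is
\[
 \partial(x_1,x_0)=\bigl((1\otimes\partial_U)x_1,\ (D\otimes1)x_1+(1\otimes\partial_U)x_0\bigr).
\]
Since $P_1,P_0$ are based free, $P_i\otimes_R U\cong U^{b_i}$ with physical weights adding over orbit coordinates. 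Each filling statement for $U$ in a given degree therefore holds coordinatewise for $P_i\otimes U$, with the same cutoff $\tau$ and cost $\rho$.

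\textbf{Chain fillings.} For a $Q_1$ cycle $(x_1,x_0)$ with $x_1\in P_1\otimes U_{-1}$ and $x_0\in P_0\otimes U_0$, I proceed in two steps.
\begin{enumerate}
\item The first component gives $\partial_U x_1=0$ in degree $-1$. If $|x_1|<\tau$, item 3 yields $y_1\in P_1\otimes U_0$ with $\partial_U y_1=x_1$ and $|y_1|\le\rho|x_1|$.
\item Subtracting $\partial(y_1,0)=(x_1,(D\otimes1)y_1)$ leaves $(0,x_0')$, where $x_0'=x_0+(D\otimes1)y_1$. Its weight is at most $|x_0|+w_D\rho|x_1|$, with $w_D$ the column weight bound of $D$. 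This remainder is a degree-zero cycle of $P_0\otimes U$. By the statement $d_0(U)\ge\tau$ in item 3, together with the degree-zero filling, it has a filling $y_0$ of cost $\rho$, provided $|x_0'|<\tau$.
\end{enumerate}
The total filling $(y_1,y_0)$ costs at most $(\rho+\rho^2 w_D+\rho)$ times $|x|$. The cutoff is $\tau/\max\{1,w_D\rho\}=\Omega(d_*)$.

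\textbf{Cochain statement.} This is dual; I run the same two steps in reverse order. Under the cone coboundary a degree-two cocycle $(\xi_1,\xi_0)$ satisfies $\delta_U\xi_0=0$ on the $P_0$ component. It also satisfies $\delta_U\xi_1=(D^\dagger\otimes1)\xi_0$ on the $P_1$ component.
\begin{enumerate}
\item First fill $\xi_0$ using the cochain fillings of $U$ in cochain degree $0$ or $1$. Subtract the coboundary of the primitive; the correction has weight $O(w_{D^\dagger}\rho|\xi_0|)$. This uses the row-weight bound of $D$.
\item The remaining $P_1$ component is a $U$-cocycle of the other degree. Fill it again, with the degree-zero cohomological distance bound covering the case where the remainder lies in degree zero.
\end{enumerate}
The constants involve $P$ only through $w_D$ and $w_{D^\dagger}$, as claimed.

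The main obstacle is the degree bookkeeping. I must check that each of the two steps lands in a degree where Proposition~\ref{prop:joint-pk-input}(3) provides a filling, and in particular that the nonzero $H_0(U)\cong R$ causes no trouble. This is exactly where the clause that short degree-zero cycles and cocycles are boundaries is needed, so I would verify the shift conventions carefully against Table~\ref{tab:joint-pk-degree-dictionary}.
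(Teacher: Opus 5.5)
Your proposal is correct and follows essentially the same route as the paper. You fill the $P_1\otimes U_{-1}$ component (dually, the $P_0^*U^1$ component) coordinatewise in the free $P$ coordinates, absorb the $D$ (resp.\ $D^\dagger$) correction, and then fill the remainder. That remainder is a degree-zero cycle, resp.\ a cochain-degree-zero cocycle, which is exactly where $d_0(U)\ge\tau$ and its dual enter; this gives cutoff $\tau/\max\{1,w_D\rho\}$ and cost $\rho(1+w_D\rho)$, of which your $(2\rho+w_D\rho^2)$ is a valid, slightly looser version.
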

\begin{proof}
Write a $Q_1$ cycle as $(a,b)\in P_0U_0\oplus P_1U_{-1}$. Its
equations are $\partial_Ub=0$ and $\partial_Ua+Db=0$.
Fill $b=\partial_Uv$ and then fill the cycle $a+Dv=\partial_Uu$.
The primitive $(u,v)$ has weight $O(|a|+|b|)$.
All operations are performed rowwise in the free $P$ coordinates.

Dually, a degree-two cocycle has components
$a\in P_1^*U^0$ and $b\in P_0^*U^1$ satisfying
$\delta_Ub=0$ and $\delta_Ua+D^\dagger b=0$.
Fill $b=\delta_Uv$ and subtract that coboundary. The remaining
$a+D^\dagger v$ is a short $U^0$ cocycle, hence a coboundary.
This clears the second sector as well. Bounded $D,D^\dagger$ keep
all intermediate weights below the required cutoffs after a fixed
reduction of the initial threshold.
Explicitly, if the four $U$ estimates have cutoff $\tau$ and cost
$\rho$, and $w_P$ bounds both $D$ and $D^\dagger$, one may use
cutoff $\tau/\max\{1,w_P\rho\}$ and filling cost
$\rho(1+w_P\rho)$. These constants do not depend on the number of
starter orbits.
\end{proof}

\fi

% ==========================================================================
\ifdefined\appendixonly
\clearpage
\bibliography{refs}

\fi

\end{document}